\newcommand{\dhorline}[3][0]{%
    \tikz[baseline]{\path[decoration={markings,
      mark=between positions 0 and 1 step 4*#3
      with {\node[fill, circle, minimum width=#3, inner sep=0pt, anchor=south west] {};}},postaction={decorate}]  (0,#1) -- ++(#2,0);}}
\newcommand{\dottedhfill}{\noindent\makebox{\dhorline{0.93\linewidth}{0.8pt}}}
\newenvironment{alglist}
  {\begin{list}
    {}
    {\setlength{\labelwidth}{1em}
     \setlength{\labelsep}{0em}
     \setlength{\itemsep}{2pt}
     \setlength{\leftmargin}{1cm}
     \setlength{\rightmargin}{1.2cm}
     \setlength{\itemindent}{0em} 
     
    }
  }
{\end{list}}
\global\long\def\BC{Barber and Cand\`es}
\global\long\def\KR{Katsevich and Ramdas}
\global\long\def\subTDC{_{\text{TDC}}}
\global\long\def\subSSS{_{\text{SSS+}}}
\global\long\def\subAS{_{\text{AS}}}
\global\long\def\supKR{^{\text{KR}}}
\global\long\def\supUB{^{\text{UB}}}
\global\long\def\supSB{^{\text{SB}}}
\global\long\def\supUSB{^{\text{SB/UB}}}
\global\long\def\supUSKR{^{\text{SB/UB/KR}}}
\global\long\def\dmax{\text{$d_{\max}$}}
\global\long\def\til#1{\tilde{#1}}%
\global\long\def\alp{\alpha}
\global\long\def\gam{\gamma}
\global\long\def\lam{\lambda}
\global\long\def\vrp{\varphi}%
\global\long\def\sig{\sigma}%
\global\long\def\N{\mathbb{N}}%
\global\long\def\supsec{Supplementary Section}%
\global\long\def\supfig{Supplementary Figure}%
\theoremstyle{plain}
\newtheorem{Theorem}{Theorem}
\newtheorem{Definition}{Definition}
\newtheorem{Lemma}{Lemma}
\newtheorem{Corollary}[Lemma]{Corollary}
\theoremstyle{remark}
\newtheorem{Remark}{Remark}
\begin{document}

\title{Bounding the FDP in competition-based control of the FDR}

\author{Arya Ebadi$^1$, Dong Luo$^1$, Jack Freestone$^1$, William Stafford Noble$^2$, Uri Keich$^1$\\
$^1$School of Mathematics and Statistics F07, University of Sydney\\
$^2$Departments of Genome Sciences and of Computer Science and Engineering\\
University of Washington\\
}


\maketitle

\begin{abstract}
	Competition-based approach to controlling the false discovery rate (FDR) recently rose to prominence when, generalizing
	it to sequential hypothesis testing, \BC\ used it
	as part of their knockoff-filter. Control of the FDR implies that the, arguably more important, false discovery
	proportion is only controlled in an average sense. We present TDC-SB and TDC-UB that provide upper prediction bounds on the FDP
	in the list of discoveries generated when controlling the FDR using competition. Using simulated and real data we show that,
	overall, our new procedures offer significantly tighter upper bounds than ones obtained using the recently published
	approach of \KR, even when the latter is further improved using the interpolation concept of Goeman et al.
\end{abstract}

\noindent
{\sc Keywords:} False discovery proportion (FDP), Target-decoy competition (TDC), Sequential hypothesis testing, Knockoffs, Peptide Detection.

\section{Introduction}

In the multiple testing problem we consider $m$ null hypotheses $H_1,\dots,H_m$. When $m$ is large we
typically try to maximize the number of discoveries (rejections) while controlling the false discovery rate (FDR).
Introduced by \cite{benjamini:controlling}, the FDR is the expectation of the false discovery proportion (FDP):
$Q=V/(R\vee1)$, where $R$ is the number of rejected hypotheses, of which $V$ are true nulls/falsely rejected, and $x\vee y=\max\{x,y\}$.

Canonical FDR-controlling procedures (e.g., \cite{benjamini:controlling,storey:direct}) reject the hypotheses associated with the $\tau$ most significant
p-values, where $\tau$ is determined from the p-values, $p_1,\dots p_m$, each computed assuming the corresponding $H_i$ holds
(is a true null).

By definition, controlling the FDR only guarantees that we control the FDP in an averaged sense: the expectation is taken
over the true null hypotheses conditional on the false nulls. However, in practice, the scientist typically only has a single
sample and they care more about the actual FDP in their list of discoveries than about its theoretical average.

To address this need, two types of methods for controlling the FDP have been developed. The first, often referred to as
false discovery exceedance (FDX) control, aims at probabilistically controlling the FDP at a specific level, i.e., for a
desired levels of FDP, $\alpha$, and confidence $1-\gamma$, the procedure guarantees that $P(Q>\alpha)\le\gamma$ \cite{guo:generalized}.
The second type offers simultaneous probabilistic bounds for all FDP levels: assuming the hypotheses are ordered by decreasing
significance ($p_1\le p_2\le\dots \le p_m$), the procedure computes $\bar Q_k$, an upper prediction band (also called a confidence
envelope) on $Q_k$, the FDP among the top $k$ hypotheses, so that $P(\exists k:Q_k>\bar Q_k)\le\gamma$ \cite{genovese:exceedance,katsevich:simultaneous}.

Controlling the FDR at level $\alpha$ is very similar to controlling the FDP at the same $\alpha$ with confidence of
$1-\gamma=1/2$ (mean vs. median control). Clearly, increasing the confidence level to, say 0.95, will generally decrease the
number of discoveries. It is therefore understandable that scientists would generally prefer using FDR control than the
stricter FDP control. In such cases, the aforementioned upper prediction bands still offer scientists a way to probabilistically
bound the level of the FDP in their FDR-controlled list of discoveries. Indeed, if the FDR-controlling procedure rejects the
top $\tau$ hypotheses, then $\bar Q_\tau$ provides an upper prediction bound on $Q_\tau$ with $1-\gamma$ confidence.

Our goal in this paper is to provide such bounds on the FDP while using the competition-based approach to controlling the FDR.
Known as target-decoy competition (TDC), the latter was commonly practiced in mass spectrometry analysis since first introduced
by \cite{elias:target}. It later gained significant popularity in the statistics and machine learning community following \BC'
introduction of the knockoff filter for selecting variables in linear regression while controlling the FDR \cite{barber:controlling}.

The strength of that approach is that instead of relying on the p-values $p_i$ that we canonically associate with the
observed/target scores $Z_i$, it only requires as little as a single competing decoy/knockoff null score $\tilde Z_i$ for each
$Z_i$. The decoys are constructed so that for a true null $H_i$, $\tilde Z_i$ and $Z_i$ are exchangeable (e.g., independently drawn from the
same null distribution), independently of all other hypotheses.

While the decoy scores can be used to compute so-called ``one-bit p-values'' (1/2 or 1, depending on whether $Z_i>\tilde Z_i$),
TDC does not directly utilize those p-values as such. Instead, it competes each target score with its corresponding decoy to
define a winning score $W_i=\max\{Z_i,\tilde Z_i\}$, and a label $L_i=\pm1$ indicating whether the higher/winning score was the
target ($L_i=1$) or the decoy ($L_i=-1$). TDC next sorts the scores $W_i$ in decreasing order (higher scores are assumed to be
more significant), and reports the target wins in the top $k\subTDC$ winning scores. The cutoff $k\subTDC$ is determined
using the number of decoy wins to gauge the number of false discoveries (true null target wins), and hence to estimate and
control the FDR. The rationale here is that for a true null $H_i$ ($i\in N$), $L_i$ is equally likely to be $\pm1$, hence the observed
number of decoy wins can be used to conservatively gauge the unobserved number of true null target wins \cite{he:theoretical,barber:controlling}.

Integral to \BC’ knockoff filter, is their Selective SequentialStep+ (SSS+) procedure that generalizes TDC to control the FDR
in the context of sequential hypothesis testing. In that case the hypotheses are preordered, and associated with each is a
p-value $p_i$, such that the true null p-values are iid that stochastically dominate the uniform $(0,1)$ distribution,
independently of the false nulls. Like TDC, SSS+ reports the ``target wins'' (defined as $p_i\le c$, where $c\in(0,1)$
is a parameter) in the top $k\subSSS$ hypotheses. The cutoff, $k\subSSS$, is determined similarly to TDC by using the (scaled) number of ``decoy wins'' ($p_i>c$)
to gauge the number of true null target wins.

SSS+ was further generalized by Lei and Fithian’s Adaptive SeqStep (AS), which introduced a separate parameter $\lambda\in[c,1)$ to define
the decoy wins \cite{lei:power}. Specifically, AS defines $L_i=1$ (``target win'') if $p_i\le c$, and $L_i=-1$ (``decoy win'') if $p_i>
c$. Ties (when $p_i \in (c, \lambda]$) can be randomly broken or discarded, and note that the target-decoy terminology is borrowed here from TDC. Let
$D_k=\sum_{i=1}^k1_{\{L_i=-1\}}$ (the number of decoy wins in the top $k$ hypotheses), and $T_k=\sum_{i=1}^k1_{\{L_i=1\}}$ (the
corresponding number of target wins). AS reports the \emph{target wins} among top $k\subAS$ hypotheses, where the cutoff is
determined by
\begin{equation}
	\label{eq:AS}
	k\subAS = \max\left\{k\in\left\{1,\dots,m\right\} : \frac{D_k+1}{T_k}\frac{c}{1-\lambda} \le\alpha\right\} \vee 0. 
\end{equation}
Lei and Fithian proved that AS controls the FDR in the finite
sample setting of the sequential hypotheses testing, that is, with $V_k$ denoting the number of true null target wins in the
top $k$ hypotheses, and $Q_k=V_k/(T_k\vee1)$, $E(Q_k)\le\alpha$. Moreover, both SSS+ and TDC can be viewed as special cases of
AS: choosing $c=\lambda$ AS reduces to SSS+, and with the scores ordered by $W_i$, $c=\lambda=1/2$, and the p-values
being the aforementioned one-bit p-values, it reduces to TDC.

We recently presented FDP-SD, a procedure that probabilistically controls the FDP in this setup, at a prescribed level $\alpha$
and confidence $1-\gamma$. Our complementary goal here is to
develop upper prediction bands that can be used to
provide an upper prediction bound on the list of discoveries returned by the FDR-controlling AS/SSS+/TDC.

\KR\ recently proposed one such band in Theorem 2 of \cite{katsevich:simultaneous} that we refer to as the KR band. Specifically, they provide an
upper prediction band for $V_k$:
\begin{equation}
	\label{eq:V-KR}
	\bar V_k\supKR := \frac{-\log(\gamma)}{\log\left(1 + \frac{1-\gamma^{B}}{B}\right)}\left(1+B\cdot D_k\right) , 
\end{equation} 
where $B := c/(1-\lambda)$ (note that our $\gamma$ is their $\alpha$). Such a band immediately provides simultaneous bounds
on $Q_k$, the FDP among the target wins in the top $k$ hypotheses:
\begin{equation}
	\label{eq:Q-KR}
	\bar Q_k\supKR = \frac{\bar V_k\supKR}{T_k\vee 1}\wedge 1 .
\end{equation} 

Goeman et al. recently developed a general machinery to improve p-value based bands and specifically demonstrated it on a
p-value based analog of the above KR band \cite{goeman:only}. Their approach involves several steps of which the first,
interpolation, can be trivially implemented to improve the above KR band, by ensuring the implied guaranteed number of
discoveries does not decrease. However, we could not see a practical way to implement the remaining steps in our competition
setup. Instead, we propose two alternative bands, the ``uniform band'' (UB) and the ``standardized band'' (SB). We use simulated and
real data from multiple domains to demonstrate that our bands typically yield
tighter bounds on the FDP than the ones provided by the improved, interpolated KR band.

\section{Upper prediction bands for false discoveries}
\label{sec:UPBs}

We begin with
constructing upper prediction bands that allow us to bound the FDP in any list of top target wins.
Initially deterministic, our bands will later be stochastic.
\begin{Definition}
	A sequence of real numbers $\{\xi_i\}_{i\in I}$ is a $1-\gamma$ upper prediction band for the random process $Z_i$ with $i\in I\subset\N$
	if $P(\exists i\in I\,:\,Z_i > \xi_i)\le \gamma$.
\end{Definition}

Recall that 
$Q_k = V_k/(T_k\vee1)$ is the FDP among the $T_k$ target discoveries in the top $k$ scores
(see Supplementary Table \ref{suptable:definitions} for notations).
$T_k$ is known, so to bound $Q_k$ it suffices to bound $V_k$ by analyzing $D_k$, and
as $D_k$ only varies with a decoy win, we consider $V_k$ only for $k$'s corresponding to decoy wins. That is, we define
$i_d=\min\{i\,:\,D_i=d\}\wedge m$ (index of the $d$th decoy win, or $m$ if $D_m\le d$), and we consider the process
$N_d\coloneqq V_{i_d}$ (number of true null target wins before the $d$th decoy win).

Let $\Delta\coloneqq\{1,2,\dots,d_{\max}\}$, where $d_{\max}\le m$.
The process $\{N_d\,:\,d\in\Delta\}$ itself is not particularly amenable to theoretical analysis because
(a) some of the decoy wins can be attributed to false null hypotheses, and (b) the number of true null hypotheses is finite (why this
is a problem will become clear below).

Therefore, we first create a new process $\{\til N_d\,:\,d\in\Delta\}$ which is defined the same way as $N_d$, except that we modify the data
by (a) forcing each false null to correspond to a target win, and (b) adding infinitely many independent (virtual) true null hypotheses.
More precisely, we define a sequence of pairs $\{(\til W_i,\til L_i)\}_{i=1}^\infty$ as $\til W_i=W_i$ for $i=1,\dots,m$, $\til L_i=L_i$ for $i\in N$,
$\til L_i=1$ for $i\in\{1,\dots,m\}\setminus N$, and for $j=i+m$ we define $\til W_{m+i}\coloneqq W_m-i$ and $\til L_{m+i}\coloneqq {(-1)}^{B_i}$
where $B_i$ are iid Bernoulli($R$) RVs, where
\begin{equation}
	\label{eq:R}
R = \frac{1 - \lambda} {c + 1 - \lambda} = \frac1{1 + B} .
\end{equation}
\begin{Remark}
	\label{rem:wlog_equal}
Throughout this analysis we assume that for a true null $H_i$, $P(L_i = 1) = c$ and $P(L_i = -1) = 1-\lam$
independently of all the winning scores and all other labels. These equalities are stronger than the weak inequalities
$P(L_i = 1) \le c$ and $P(L_i = -1) \ge 1-\lam$ AS requires. We will argue in Remark \ref{rem:wlog_equal2} below that there is no loss of generality in this.
\end{Remark}
\begin{Lemma}
	\label{lemma:dominate}
	$N_d\le \til N_d$ for all $d\in\Delta$.
\end{Lemma}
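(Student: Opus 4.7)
The plan is to exploit two simple monotonicity observations about the modification from $(W_i,L_i)$ to $(\tilde W_i,\tilde L_i)$. First, on the first $m$ indices, the only changes are false-null decoy wins being converted into target wins: for $i\in N$ we have $\tilde L_i=L_i$, while for $i\notin N$ we set $\tilde L_i=+1$, so any $-1$ among the false-null labels is flipped to $+1$ and no new $-1$'s are introduced. Second, the virtual true nulls are appended strictly below $W_m$ in score, so they do not participate in any prefix of length $\le m$ when the extended sequence is sorted by $\tilde W$.

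From the first observation I would deduce the prefix inequality $\tilde D_k \le D_k$ for every $k\le m$. Since both $D$ and $\tilde D$ are non-decreasing counts with unit jumps, this implies $\tilde i_d \ge i_d$, where $\tilde i_d\coloneqq\min\{i : \tilde D_i=d\}$ is the $d$th decoy-win index in the extended sequence (a.s.\ finite, because decoy wins accumulate in the virtual Bernoulli$(R)$ tail). Two cases need to be checked: if $D_m\ge d$, then $\tilde D_{i_d}\le D_{i_d}=d$ while $\tilde D_j\le D_j\le d-1$ for all $j<i_d$, forcing $\tilde i_d\ge i_d$; if instead $D_m<d$, then $i_d=m$ by the $\wedge m$ in its definition, and $\tilde D_m\le D_m<d$ forces $\tilde i_d>m\ge i_d$.

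The second observation, combined with the fact that the true-null indices and labels among $\{1,\dots,m\}$ are unaltered by the modification, yields $\tilde V_k=V_k$ for every $k\le m$. Since $\tilde V$ is non-decreasing in $k$ and $i_d\le m$ always, I can chain
\begin{equation*}
    \tilde N_d \;=\; \tilde V_{\tilde i_d} \;\ge\; \tilde V_{i_d} \;=\; V_{i_d} \;=\; N_d,
\end{equation*}
which is the desired inequality. I do not anticipate a serious obstacle: the argument is pathwise and essentially bookkeeping. The only point that needs care is the truncation boundary $D_m<d$, where $i_d=m$ is forced by definition and must be handled separately from the generic case in which $D$ actually reaches $d$ among the original $m$ hypotheses.
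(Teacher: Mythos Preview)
Your proposal is correct and follows essentially the same approach as the paper: establish $i_d\le \tilde i_d$ from the fact that the modification can only remove decoy wins on $\{1,\dots,m\}$ and appends the virtual hypotheses below, then chain $N_d=V_{i_d}\le \tilde V_{i_d}\le \tilde V_{\tilde i_d}=\tilde N_d$. Your version is more explicit---you spell out the prefix inequality $\tilde D_k\le D_k$, handle the boundary case $D_m<d$ separately, and note the sharper equality $\tilde V_k=V_k$ for $k\le m$ where the paper only asserts $\le$---but the argument is the same.
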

\begin{proof}
Recall that $i_d$
is the index of the $d$th decoy win in our original sequence $(W_i,L_i)$, and let
$\til i_d=\inf\{i\,:\,\til D_i=d\}$ 
be the analogous index relative to the modified sequence $(\til W_i,\til L_i)$. 
Both modifications
increase $\til i_d$ relative to $i_d$ therefore $i_d\le \til i_d$. It follows that
\[
N_d = V_{i_d} \le \til V_{i_d} \le \til V_{\til i_d} = \til N_d .
\]
\end{proof}

\begin{Corollary}
	\label{cor:transitive_upper prediction band}
	If $\{\xi_d\}_{d\in\Delta}$ is a $1-\gamma$ upper prediction band for $\{\til N_d\,:\,d\in\Delta\}$ then it is also an
	upper prediction band for $\{N_d\,:\,d\in\Delta\}$.
\end{Corollary}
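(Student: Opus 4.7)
The plan is to derive the corollary directly from Lemma \ref{lemma:dominate} combined with the definition of an upper prediction band and monotonicity of probability. Since Lemma \ref{lemma:dominate} gives the pointwise (almost sure) domination $N_d \le \tilde N_d$ for every $d \in \Delta$, an upper bound that holds simultaneously for all $\tilde N_d$ will automatically hold simultaneously for all $N_d$.

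Concretely, I would first observe the event inclusion
\[
\{\exists d \in \Delta : N_d > \xi_d\} \subseteq \{\exists d \in \Delta : \tilde N_d > \xi_d\},
\]
which is immediate from Lemma \ref{lemma:dominate}: if $N_d(\omega) > \xi_d$ for some $d$, then $\tilde N_d(\omega) \ge N_d(\omega) > \xi_d$ for the same $d$. Then I would apply monotonicity of probability together with the hypothesis that $\{\xi_d\}_{d\in\Delta}$ is a $1-\gamma$ upper prediction band for $\{\tilde N_d\}$ to conclude
\[
P(\exists d \in \Delta : N_d > \xi_d) \le P(\exists d \in \Delta : \tilde N_d > \xi_d) \le \gamma,
\]
which, by the definition stated just before the lemma, says exactly that $\{\xi_d\}$ is a $1-\gamma$ upper prediction band for $\{N_d : d \in \Delta\}$.

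There is essentially no obstacle here: the work has already been done in constructing the dominating process $\tilde N_d$ and verifying the sample-path inequality in Lemma \ref{lemma:dominate}. The corollary is the routine transfer of a simultaneous tail bound from the larger process to the smaller one, and the proof will be only a few lines long.
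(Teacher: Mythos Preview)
Your proposal is correct and matches the paper's approach: the paper states the corollary without proof, treating it as an immediate consequence of Lemma~\ref{lemma:dominate}, and your argument via the event inclusion and monotonicity of probability is exactly the intended one-line justification.
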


It follows from the last corollary that we can construct an upper prediction band for $\{N_d\,:\,d\in\Delta\}$ by constructing one
for a process with the same distribution as that of $\til N_d$.
We next introduce such an equivalent process that is described slightly more succinctly: consider a sequence $B_i$ of iid Bernoulli($R$) RVs
and define $X_i\coloneqq\sum_{j=1}^i B_j$, $Y_i\coloneqq i-X_i$, and $i_d=\inf\{i\,:\,Y_i=d\}$.
Clearly, the distribution of $U_d\coloneqq X_{i_d}$ is the same as that of $\til N_d$.


Looking for an upper prediction band for the process $\{U_d\,:\,d\in\Delta\}$, we note that the marginal distribution varies with the time/index $d$:
$U_d$ has a negative binomial $NB(d,R)$ distribution so $E(U_d)=Bd$ and $V(U_d)=B(1+B)d$.
Clearly, a band that seeks to effectively bound the process simultaneously for all times, needs to take that variation into account.
One way to do that is to standardize the process, i.e., consider instead
\begin{equation}
\label{eq:std_proc}
    \hat U_d := \frac{U_d - E(U_d)}{\sqrt{V(U_d)}} = \frac{U_d - Bd}{\sqrt{B(1+B)d}} ,
\end{equation}
which is analogous to what Ge and Li used when defining a similar upper prediction band where p-values are available~\cite{ge:control}.
Specifically, with $z_\Delta^{1-\gamma}$ denoting the $1-\gamma$ quantile of $\max_{d\in\Delta}\hat U_d$, we have
\begin{align*}
  P(\exists d\in\Delta\,:\,U_d > z_\Delta^{1-\gamma}\sqrt{B(1+B)d}+Bd)  &= P(\exists d\in\Delta\,:\,\hat U_d > z_\Delta^{1-\gamma}) \\
	&= P(\max_{d\in\Delta}\hat U_d > z_\Delta^{1-\gamma}) \le \gamma .
\end{align*}
It follows that our  ``standardized band'' (SB) defined by
\begin{equation}
\label{eq:std_band}
\xi_d\supSB = \xi_d\supSB(\Delta,\gamma) \coloneqq z_\Delta^{1-\gamma}\sqrt{B(1+B)d}+Bd \qquad d\in\Delta,
\end{equation}
is a $1-\gamma$ upper prediction band for $\{U_d\,:\,d\in\Delta\}$.

Standardization offers one approach to accounting for the variation in the marginal distribution of $U_d$.
Alternatively, we can apply a probability-based normalization to the process. Specifically, we consider
the process $\til U_d\coloneqq G_d(U_d)$, where with
$F_{NB(d,R)}$ denoting the CDF of a $NB(d,R)$ RV
\[
G_d(k)\coloneqq P(NB(d,R)\ge k) = 1 - F_{NB(d,R)}(k-1) .
\] 

To facilitate the construction of an upper prediction band, any normalizing transformation $\vrp(U_d)$ should allow us to efficiently
(a) compute the probability that $\vrp(U_d)$ exceeds (in the case of $\hat U_d$) 
or falls below (in the case of $\til U_d$) a \emph{fixed} bound $u$, and (b) compute an ``inverse'', i.e., find
$U_d$ given $\vrp(U_d)$.
We can address (a) by using Monte Carlo simulations for both $\hat U_d$ and $\til U_d$ (more on that below). 
As for (b), the inverse of $\hat U_d$ is obvious ($U_d=\sqrt{B(1+B)d}\hat U_d+Bd$), and the following lemma shows
how to compute it for $\til U_d$.

\begin{Lemma}
	\label{lemma:inverse_tilU}
	Let $\beta_d^{1-u}$ denote the $1-u$ quantile of the $NB(d,R)$ distribution. Then $\til U_d\le u$ if and only if $U_d>\beta_d^{1-u}$.
\end{Lemma}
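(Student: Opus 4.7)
The plan is to directly unpack both sides of the claimed equivalence using the definitions of $\til U_d$, $G_d$, and the quantile $\beta_d^{1-u}$, and then chain the resulting equivalences. Since $U_d$ and $\beta_d^{1-u}$ are integer-valued while $u\in[0,1]$, the only thing to watch is the strict-versus-nonstrict distinction at the integer boundary for the discrete CDF $F_{NB(d,R)}$.

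First I would rewrite the left-hand side. By the definitions $\til U_d = G_d(U_d)$ and $G_d(k) = 1 - F_{NB(d,R)}(k-1)$,
\[
\til U_d \le u \iff 1 - F_{NB(d,R)}(U_d - 1) \le u \iff F_{NB(d,R)}(U_d - 1) \ge 1 - u .
\]
Next I would invoke the standard definition of the quantile, $\beta_d^{1-u} = \inf\{k\in\N : F_{NB(d,R)}(k) \ge 1-u\}$. Since $F_{NB(d,R)}$ is a nondecreasing step function supported on the nonnegative integers, the set $\{k : F_{NB(d,R)}(k) \ge 1-u\}$ is an upward-closed set of integers, so $F_{NB(d,R)}(k) \ge 1-u$ is equivalent to $k \ge \beta_d^{1-u}$. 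Applying this with $k = U_d - 1$ gives $\til U_d \le u \iff U_d - 1 \ge \beta_d^{1-u}$, and because both sides are integers this is in turn equivalent to $U_d > \beta_d^{1-u}$.

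The main obstacle, such as it is, is simply being precise about the quantile convention and about the integer arithmetic at the boundary; no probabilistic ingredient beyond the definition of $G_d$ and of $\beta_d^{1-u}$ enters the argument. Once the chain of equivalences above is written down the lemma follows immediately, so the proof should be essentially a two- or three-line calculation.
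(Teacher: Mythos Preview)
Your proposal is correct and is essentially the paper's own proof, just traversed in the opposite direction: the paper starts from $U_d>\beta_d^{1-u}$, converts to $U_d-1\ge\beta_d^{1-u}$ via integrality, then to $F_{NB(d,R)}(U_d-1)\ge1-u$ via the quantile definition, and finally to $G_d(U_d)\le u$, whereas you run the same chain of equivalences beginning with $\til U_d\le u$. The ingredients and the care about the strict/nonstrict boundary are identical.
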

\begin{proof}
$U_d$ is an integer valued RV and by its definition, the quantile
\begin{equation}
	\label{eq:UB_quantile}
\beta_d^{1-u} = \min\{i\,:\,F_{NB(d,R)}(i) \ge 1-u\} ,
\end{equation}
is also an integer,   so $U_d>\beta_d^{1-u}$ if and only if $U_d - 1 \ge \beta_d^{1-u}$. Hence,
\[
U_d>\beta_d^{1-u} \iff F_{NB(d,R)}(U_d - 1) \ge 1-u \iff u \ge 1 - F_{NB(d,R)}(U_d - 1) .
\]
Once again using that $U_d\in\N$ we get
\[
U_d>\beta_d^{1-u} \iff G_d(U_d)=P(NB(d,R) \ge U_d) \le u  .
\]
\end{proof}

\begin{Corollary}
	\label{cor:quantiles_upper prediction band}
	\begin{equation}
	P\left(\min_{d\in\Delta}\til U_d \le u \right) = P\left(\exists d\in\Delta\,:\,\til U_d \le u \right) 
		= P\left(\exists d\in\Delta\,:\,U_d > \beta_d^{1-u} \right) .
	\end{equation}
\end{Corollary}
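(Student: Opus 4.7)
The plan is to observe that both equalities are essentially event-level identities that then translate to probability equalities, with Lemma \ref{lemma:inverse_tilU} doing all the real work.

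For the first equality, I would note that for any finite or countable set of real numbers, the minimum is $\le u$ if and only if at least one element is $\le u$. Here $\Delta$ is finite (or at worst countable), so applied pointwise (for each sample point $\omega$) this gives
\[
\{\min_{d\in\Delta}\til U_d \le u\} = \{\exists d\in\Delta\,:\,\til U_d \le u\}
\]
as events, and hence their probabilities coincide.

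For the second equality, I would apply Lemma \ref{lemma:inverse_tilU} coordinate-wise: for each fixed $d\in\Delta$ the lemma gives the event identity $\{\til U_d \le u\} = \{U_d > \beta_d^{1-u}\}$. Taking the union over $d\in\Delta$ on both sides preserves equality, so
\[
\bigcup_{d\in\Delta}\{\til U_d \le u\} = \bigcup_{d\in\Delta}\{U_d > \beta_d^{1-u}\},
\]
which upon taking probabilities yields the claimed identity.

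Since this is just a direct transliteration of the previous lemma from the $\til U_d$ process to the $U_d$ process via the quantile $\beta_d^{1-u}$, there is no real obstacle; the only thing worth being careful about is keeping the direction of the inequalities straight (strict $>$ on the $U_d$ side versus weak $\le$ on the $\til U_d$ side), which is forced by the fact that $U_d$ is integer-valued and $\beta_d^{1-u}$ is defined as the smallest integer whose CDF value reaches $1-u$, exactly as exploited inside the proof of Lemma \ref{lemma:inverse_tilU}.
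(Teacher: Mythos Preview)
Your proposal is correct and matches the paper's own proof, which simply states that the first equality is obvious and the second follows from Lemma~\ref{lemma:inverse_tilU}. You have merely spelled out the event-level unions explicitly, which is exactly the intended argument.
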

\begin{proof}
The first equality is obvious while the second follows from Lemma \ref{lemma:inverse_tilU}.
\end{proof}

What remains is to select $u$ so the left hand side above is as small as desired. Let
\begin{equation}
	\label{eq:c_star}
u_\gamma = u_\gamma(\Delta) \coloneqq \max_{u\in R_\Delta} P\left(\min_{d\in\Delta} \til U_d \le u \right)\le \gamma ,
\end{equation}
where $R_\Delta=\{P(NB(d,R) \ge k)\,:\,k\in\N,d\in\Delta\}$ is the range of values $\{\til U_d\}_{d\in\Delta}$ can attain (it
is easy to see that maximum is attained).
Note that $u_\gamma$ varies not just with $\gamma$ but also with the set $\Delta=\{1,2,\dots,d_{\max}\}$.
Nevertheless, $u_\gamma(\Delta)$ can be precomputed for most ``typical values.''
In \supsec~\ref{supsec:c_star_simulations} we show how we efficiently use Monte Carlo simulations to find numerical approximations of $u_\gamma(\Delta)$
for typical values of $\gamma$ and $d_{\max}$.
We can now define our ``uniform (confidence) band'' (UB):
\begin{Theorem}
	\label{thm:ub}
	Let $u=u_\gamma(\Delta)$ be as in \eqref{eq:c_star} and let $\beta_d^{1-u}$ denote the $1-u$ quantile of the $NB(d,R)$ distribution as in \eqref{eq:UB_quantile}.
	The sequence $\{\xi_d\supUB\coloneqq\beta_d^{1-u}\}_{d\in\Delta}$ is a $1-\gamma$ upper prediction band for $\{N_d\,:\,d\in\Delta\}$.
\end{Theorem}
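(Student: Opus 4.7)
The plan is to chain together the three ingredients that the preceding material has already prepared: (i) Corollary \ref{cor:transitive_upper prediction band}, which lets us transfer a band from the auxiliary process $\{\tilde N_d\}$ to the process of interest $\{N_d\}$; (ii) the observation, made just after Corollary \ref{cor:transitive_upper prediction band}, that $U_d$ and $\tilde N_d$ have the same distribution, so it suffices to work with $\{U_d\,:\,d\in\Delta\}$; and (iii) Corollary \ref{cor:quantiles_upper prediction band}, which rewrites the probability that $U_d$ ever exceeds $\beta_d^{1-u}$ as a probability about the transformed process $\tilde U_d$.

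Concretely, I would first fix $u=u_\gamma(\Delta)$ as defined in \eqref{eq:c_star} and write
\[
P\!\left(\exists d\in\Delta\,:\,U_d > \beta_d^{1-u}\right) = P\!\left(\min_{d\in\Delta}\til U_d \le u\right),
\]
the equality being exactly Corollary \ref{cor:quantiles_upper prediction band}. Next I would invoke the definition \eqref{eq:c_star} of $u_\gamma(\Delta)$: since $u\in R_\Delta$ (the range of values $\{\til U_d\}_{d\in\Delta}$ can attain) and $u$ is chosen as the largest such value for which $P(\min_{d\in\Delta}\til U_d\le u)\le\gamma$, the right-hand side above is at most $\gamma$. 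This shows that $\{\beta_d^{1-u}\}_{d\in\Delta}$ is a $1-\gamma$ upper prediction band for $\{U_d\,:\,d\in\Delta\}$, and hence, by the distributional equivalence $U_d\stackrel{d}{=}\til N_d$, also for $\{\til N_d\,:\,d\in\Delta\}$. Finally, Corollary \ref{cor:transitive_upper prediction band} combined with Lemma \ref{lemma:dominate} transfers this guarantee to $\{N_d\,:\,d\in\Delta\}$, completing the argument.

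The proof itself is genuinely a two-line assembly of earlier results; there is no hard analytical obstacle, since all the non-trivial work has already been done in Lemma \ref{lemma:dominate}, Lemma \ref{lemma:inverse_tilU}, and the surrounding corollaries. The only subtle point worth a remark is to confirm that the maximum in \eqref{eq:c_star} is indeed attained, which is why the definition restricts $u$ to $R_\Delta$: on this (at most countable, and effectively finite for practical purposes) set the sub-level sets of $u\mapsto P(\min_{d\in\Delta}\til U_d\le u)$ are well-behaved, ensuring $u_\gamma(\Delta)\in R_\Delta$ exists. Thus the statement of Theorem \ref{thm:ub} follows immediately once the correspondence $\{U_d>\beta_d^{1-u}\}\iff\{\til U_d\le u\}$ from Lemma \ref{lemma:inverse_tilU} is in hand.
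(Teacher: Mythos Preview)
Your proposal is correct and mirrors the paper's own proof essentially line for line: reduce to $\{U_d\}$ via Corollary~\ref{cor:transitive_upper prediction band} and the distributional identity $U_d\stackrel{d}{=}\til N_d$, then apply Corollary~\ref{cor:quantiles_upper prediction band} together with the defining inequality in~\eqref{eq:c_star}. The extra remarks you add (invoking Lemma~\ref{lemma:dominate} explicitly and noting that the maximum in~\eqref{eq:c_star} is attained) are already handled by the surrounding text, so nothing is missing.
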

\begin{proof}
By Corollary \ref{cor:transitive_upper prediction band} and the fact that the distributions of $\{\til N_d\,:\,d\in\Delta\}$ and $\{U_d\,:\,d\in\Delta\}$
are the same it suffices to show that $\{\beta_d^{1-u}\}_{d\in\Delta}$ is a $1-\gamma$ upper prediction band for $\{ U_d\,:\,d\in\Delta\}$.
The latter follows immediately from Corollary \ref{cor:quantiles_upper prediction band} and \eqref{eq:c_star}.
\end{proof}
Note that any $u$ for which $P\left(\min_{d\in\Delta} \til U_d \le u \right)\le \gamma$ can be used similarly to $u_\gamma(\Delta)$
to create a band as above. The reason we take the maximum in \eqref{eq:c_star} is to maximize the power by getting the tightest band we can.

If $\{\xi_{d}=\xi_{d}^{d_{\max}}\}_{d\in[d_{\max}]}$, where $[k]=\{1,2,\dots,k\}$, is
a $1-\gamma$ upper prediction band for $\{N_{d}\,:\,d\in[d_{\max}]\}$, then we can readily convert it
into an upper prediction band for the number of false target discoveries $\{V_i\,:\,i\in[m]\}$:
\begin{equation}
	\label{eq:band_V}
\bar V_i = \begin{cases} \xi_{D_i} & L_i=-1 \text{ and } D_i\le d_{\max} \\
						\xi_{D_i+1} & L_i=1 \text{ and } D_i+1\le d_{\max} \\
						T_i & \text{otherwise}
	\end{cases} .
\end{equation}
This follows immediately from (i) $V_i\le T_i$ trivially holds, and
(ii) $V_i=N_{D_i}$ if $L_i=-1$, and $V_i\le N_{D_i+1}$ if $D_i+1\le d_{\max}$.

As in \eqref{eq:Q-KR}, a band for $\{V_i\,:\,i\in[m]\}$ is trivially turned into a band for the FDP:
\begin{equation}
	\label{eq:band_Q}
\bar Q_i =  \frac{\bar V_i}{T_i\vee 1}\wedge 1 .
\end{equation}

As pointed out by \KR, upper prediction bands can be used to bound the FDP in any list of top target wins, even when that list is
generated using post hoc analysis, e.g., one that considers multiple rejection thresholds~\cite{katsevich:simultaneous}.
In particular, to control the FDP at level $\alpha$ and confidence $\gamma$ we report the target wins in the top $k_0$ scores,
where
\begin{equation}
\label{eq:FDP_control}
k_0=\max\left\{ i\le m\,:\,L_i=1\,,\,\bar Q_i\le\alpha\right\} \vee 0 ,
\end{equation}
where we included the condition $L_i=1$ because we only report target wins.
Similarly, with $k\subAS$ as in \eqref{eq:AS}, $\bar Q_{k\subAS}$ provides an upper prediction bound on the list of discoveries reported by AS
(we define $\bar Q_0=0$).

\begin{Remark}
	\label{rem:wlog_equal2}
Going back to Remark \ref{rem:wlog_equal}, note that our analysis assumes that for a true null $H_i$, $P(L_i=-1|L_i\ne0)=R$,
where $R$ is defined in \eqref{eq:R}. In the general case $P(L_i=-1|L_i\ne0)\ge R$, which would only make our analysis conservative so our
upper prediction bands are still valid.
\end{Remark}

\section{Setting \dmax}

Comparing \eqref{eq:V-KR} and \eqref{eq:band_V} it is clear that in contrast with the KR band, the effectiveness of our bands depend on \dmax. 
In particular, the penalty for setting it too small is substantial as $\bar Q\supUSB_i=1$ if $D_i>\dmax$. 
At the same time, it is easy to see that our bands are monotone increasing with \dmax\ so our goal is to set it as low as practically possible.

Figure \ref{fig:var_dmax} demonstrates the effect of \dmax\ on both bands $\xi\supUSB$. Specifically, while both increase with \dmax, 
the effect is nowhere as significant as when \dmax\ is exceeded. Thus, for both controlling the FDP, as well as for bounding the FDP 
when controlling the FDR, we choose \dmax\ so as to guarantee it is never exceeded.

\begin{figure}
\centering
\begin{tabular}{ll}
\hspace{-8ex}
\includegraphics[width=3in]{{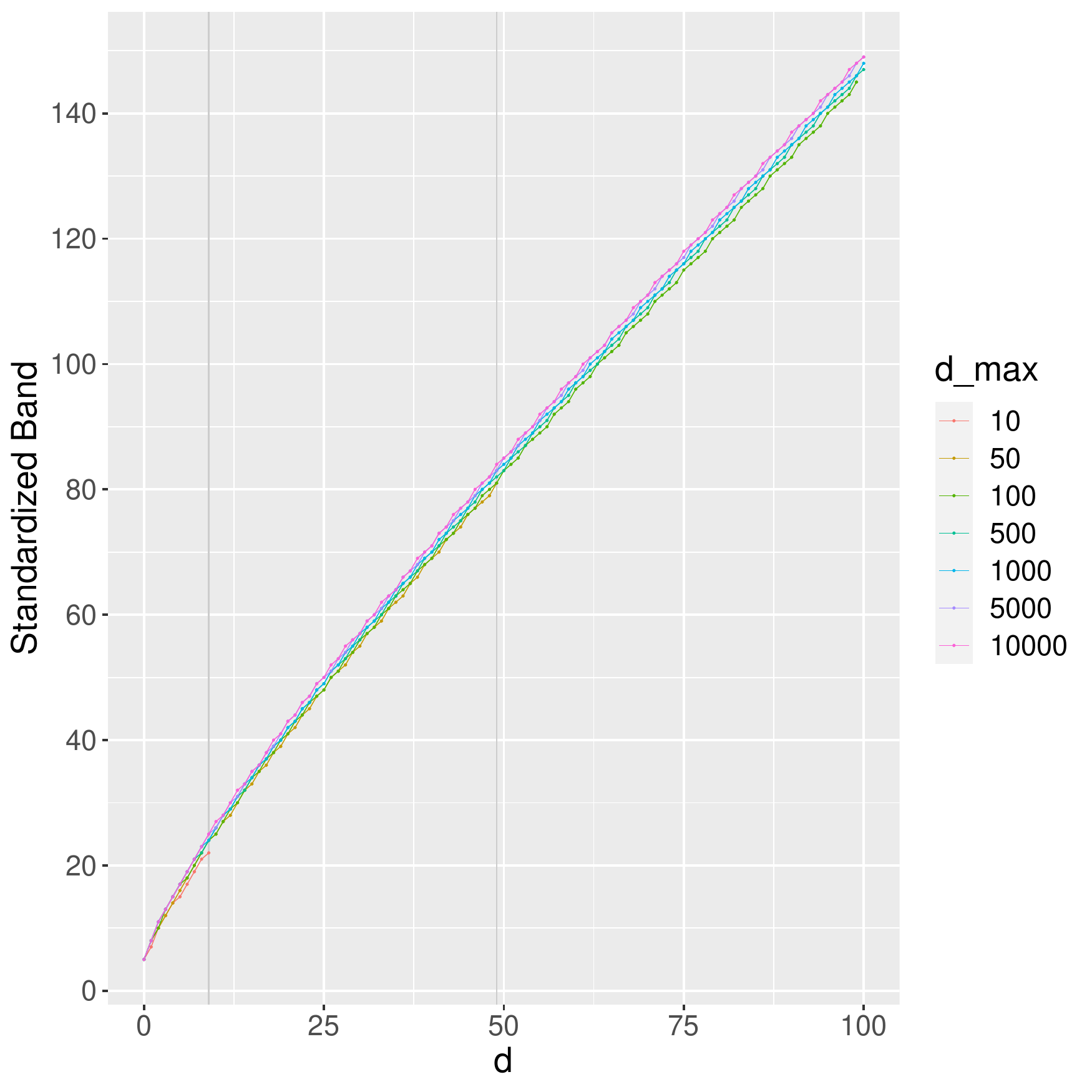}} &
\includegraphics[width=3in]{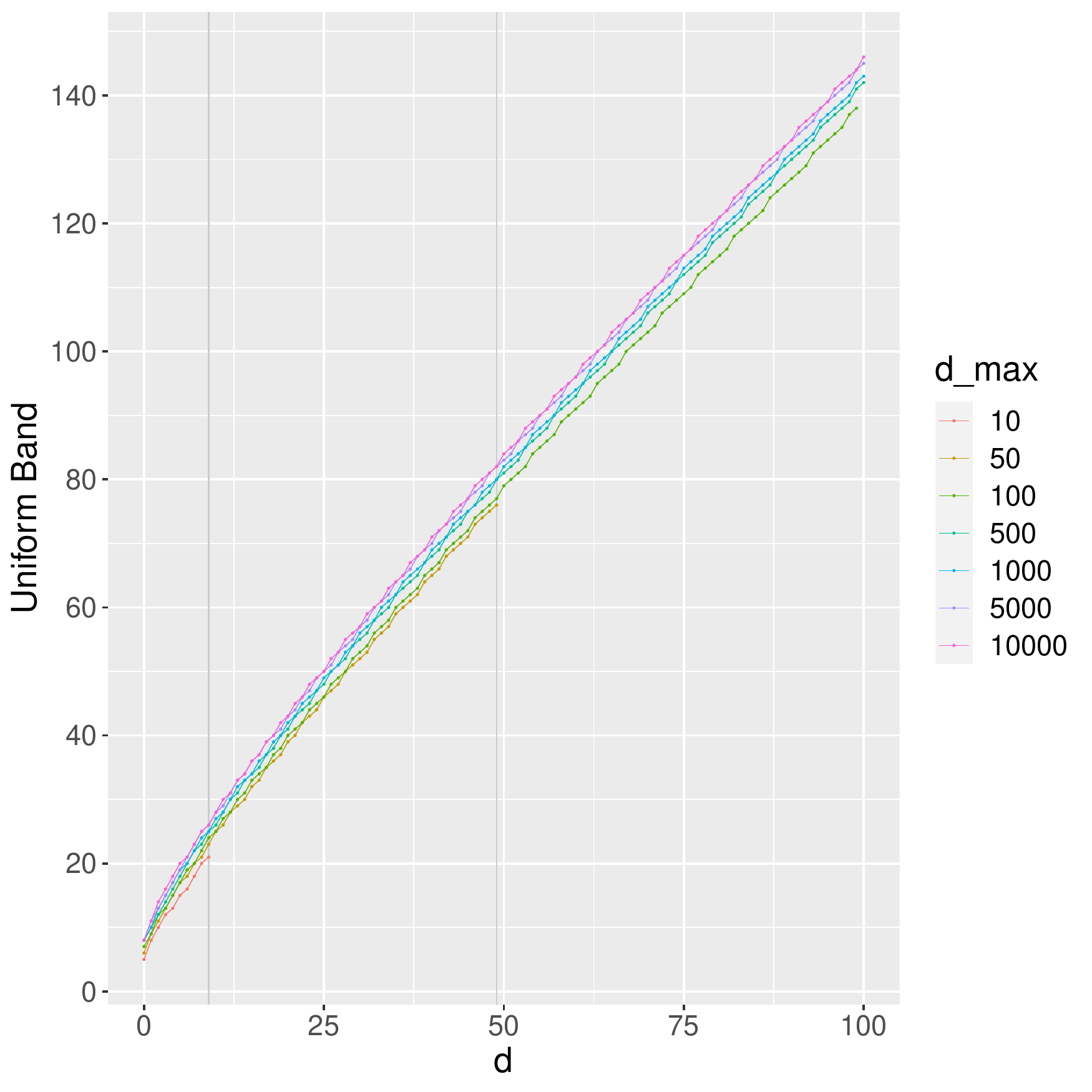}
\end{tabular}
\caption{\textbf{The effect of varying \dmax\ on the uniform and standardized bands.}\label{fig:var_dmax} For $d \in \{1, \ldots, 100\}$
we computed the value of $\xi\supSB$ \eqref{eq:std_band}, the standardized band (left), and $\xi\supUB$ (Theorem \ref{thm:ub}), the uniform band (right)
using $\dmax \in \{10, 50, 100, 500, 1000, 5000, 10000\}$.
Note that in practice the bands are undefined for any $d$ such that $d + 1 \geq \dmax$ (indicated by the grey vertical lines for $\dmax \in \{10, 50\}$).
}
\end{figure}

Starting with controlling the FDP, let $\{\xi_d^{d_0}\}_{d\in[d_{0}]}$
be an upper prediction band for $N_d$ with \dmax\ set to $d_{0}\in[m]$. Define $\xi_{0}^0\coloneqq0$ and let
\begin{equation}
d_{\infty}=\max\{d_{0}\in\{0,1,\dots,m\}\,:\,\xi_{d_{0}}^{d_0}/(m-d_{0}+1)\leq\alpha\}.\label{eq:d_max_FDP_control}
\end{equation}
\begin{Lemma}
	Let $\bar Q_k^{d_0}$ denote an FDP band defined by \eqref{eq:band_V} and \eqref{eq:band_Q} with $\dmax=d_0$.
	If there exist $d_{0}\in[m]$ and $k\in[m]$ such that $L_k=1$ and $\bar Q_k^{d_0}\le\alpha$ then $D_{k}+1\le d_{\infty}$.
\end{Lemma}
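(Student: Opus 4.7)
The plan is to show that $D_k+1$ itself satisfies the condition in \eqref{eq:d_max_FDP_control}; the conclusion $D_k+1\le d_\infty$ then follows directly from the maximum.

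First I would unpack the hypothesis $\bar Q_k^{d_0}\le\alpha$. Because $L_k=1$ the target-win count satisfies $T_k\ge 1$, so $\bar Q_k^{d_0}=(\bar V_k/T_k)\wedge 1$. From \eqref{eq:band_V} there are two sub-cases: either $D_k+1\le d_0$, in which case $\bar V_k=\xi_{D_k+1}^{d_0}$, or $D_k+1>d_0$, in which case $\bar V_k=T_k$ and therefore $\bar Q_k^{d_0}=1$. Assuming $\alpha<1$, the second sub-case contradicts the hypothesis, so we must be in the first: $D_k+1\le d_0$ and $\xi_{D_k+1}^{d_0}\le\alpha T_k$.

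Next I would invoke the monotonicity of the bands in $\dmax$ noted in the paragraph preceding \eqref{eq:d_max_FDP_control}: since $D_k+1\le d_0$, $\xi_{D_k+1}^{D_k+1}\le\xi_{D_k+1}^{d_0}\le\alpha T_k$. Combining this with the trivial inequality $T_k+D_k\le k\le m$, i.e.\ $T_k\le m-D_k=m-(D_k+1)+1$, yields $\xi_{D_k+1}^{D_k+1}/(m-(D_k+1)+1)\le\alpha$. Thus $d_0'=D_k+1$ belongs to the set defining $d_\infty$ in \eqref{eq:d_max_FDP_control}, and the claim follows.

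The main obstacle I anticipate is the monotonicity step, since it is stated only informally in the paper. I would verify it separately for the two bands: for the standardized band $\xi\supSB$, enlarging $\Delta$ can only make $\max_{d\in\Delta}\hat U_d$ stochastically larger, so its $(1-\gamma)$-quantile $z_\Delta^{1-\gamma}$ is non-decreasing in $\dmax$; for the uniform band $\xi\supUB$, a larger $\Delta$ forces a smaller $u_\gamma(\Delta)$ in order to keep $P(\min_{d\in\Delta}\til U_d\le u)\le\gamma$, and hence a larger quantile $\beta_d^{1-u}$. Both conclusions yield $\xi_d^{d_0'}\le\xi_d^{d_0}$ whenever $d_0'\le d_0$, which is exactly what the argument above requires.
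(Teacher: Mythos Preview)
Your proposal is correct and follows essentially the same argument as the paper: both rule out the ``otherwise'' branch of \eqref{eq:band_V} (implicitly using $\alpha<1$), invoke the monotonicity of $\xi_d^{d_0}$ in $d_0$, and bound $T_k\le m-(D_k+1)+1$ to conclude that $D_k+1$ lies in the set defining $d_\infty$. Your added discussion of why the monotonicity holds for each of the two bands is a useful elaboration on what the paper simply asserts.
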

\begin{proof}
Because $L_k=1$, $\xi_{D_{k}+1}^{d_{0}}/T_{k}=\bar Q_k^{d_0}\le\alpha$ and $D_{k}+1\le d_{0}$ (otherwise $\bar Q_k^{d_0}=1$),
and because for a fixed $d\le d_{0}$, $\xi_{d}^{d_{0}}$ increases
with $d_{0}$ we have
\[
\frac{\xi_{D_{k}+1}^{D_{k}+1}}{m-(D_{k}+1)+1} \le \frac{\xi_{D_{k}+1}^{D_{k}+1}}{k-D_{k}} = 
	\frac{\xi_{D_{k}+1}^{D_{k}+1}}{T_{k}} \le \frac{\xi_{D_{k}+1}^{d_{0}}}{T_{k}} \le \alpha.
\]
It follows from the definition of $d_{\infty}$ (\ref{eq:d_max_FDP_control}) that $D_{k}+1\le d_{\infty}$.
\end{proof}
\begin{Corollary}
	\label{cor:dmax_FDP_control}
	Setting $\dmax=d_{\infty}$ when controlling the FDP using $k_0$ in \eqref{eq:FDP_control} guarantees it is large enough: $D_{k_0}+1\le \dmax$.
\end{Corollary}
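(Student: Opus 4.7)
The corollary is essentially an immediate consequence of the preceding lemma, so my plan is very short. I would set $\dmax = d_\infty$ throughout and let $k_0$ denote the cutoff defined by \eqref{eq:FDP_control} using this value of $\dmax$, so that the band $\bar Q_i$ appearing in that definition is $\bar Q_i^{d_\infty}$. The target inequality is $D_{k_0}+1 \le d_\infty$.

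Assuming $k_0 \ge 1$, the defining (maximality) property of $k_0$ inside \eqref{eq:FDP_control} supplies exactly the two hypotheses of the preceding lemma applied with $d_0 = d_\infty$ and $k = k_0$, namely $L_{k_0}=1$ and $\bar Q^{d_\infty}_{k_0} \le \alpha$. The lemma then delivers $D_{k_0}+1 \le d_\infty = \dmax$, which is the claim. The boundary case $k_0 = 0$ is degenerate: no discoveries are reported, so the inequality about the rejection cutoff is vacuous and nothing further needs to be checked (the only way this case is forced is when $d_\infty=0$, in which event $\bar Q_i \equiv 1$ trivially and $k_0=0$ is consistent).

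The main obstacle, such as it is, is purely notational: one must confirm that the $\bar Q^{d_0}_k$ appearing in the lemma --- explicitly built from \eqref{eq:band_V}--\eqref{eq:band_Q} with $\dmax$ set to $d_0$ --- coincides with the $\bar Q_i$ used inside \eqref{eq:FDP_control} when we plug in $\dmax = d_\infty$. This coincidence is by definition of the band, which is why the statement genuinely deserves to be labeled a corollary rather than a stand-alone lemma.
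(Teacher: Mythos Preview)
Your argument is correct and matches the paper's intent: the corollary is stated without proof there precisely because it follows at once from the preceding lemma by taking $d_0=d_\infty$ and $k=k_0$, exactly as you do. Your handling of the boundary case $k_0=0$ is adequate (the literal inequality $D_0+1\le d_\infty$ can fail only when $d_\infty=0$, and then no discoveries are reported so the ``large enough'' claim is vacuous), though your parenthetical could be phrased more carefully since $k_0=0$ can also occur when $d_\infty>0$.
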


We next show how to set \dmax\ for bounding the FDP while controlling the FDR using AS (recall that TDC and SSS+ are special cases of AS). Let
\begin{equation}
	\label{eq:dmax_TDC}
d_c \coloneqq \lfloor\alpha(m+1)/(\alpha + B)\rfloor.
\end{equation}
\begin{Lemma}
	With $k\subAS$ defined in \eqref{eq:AS} let $D\subAS\coloneqq D_{k\subAS}$, and $T\subAS\coloneqq T_{k\subAS}$.
	If $k\subAS>0$ then $D\subAS + 1\le d_c$.
\end{Lemma}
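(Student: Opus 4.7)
The plan is to exploit the defining inequality of $k\subAS$ together with the trivial bound $T\subAS + D\subAS \le m$, and then solve for $D\subAS$.

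First, because $k\subAS > 0$, the set in \eqref{eq:AS} is nonempty, and evaluating the inequality at $k = k\subAS$ yields
\[
\frac{D\subAS + 1}{T\subAS}\,B \;\le\; \alpha,
\]
i.e.\ $B(D\subAS + 1) \le \alpha\,T\subAS$, which rearranges to $T\subAS \ge B(D\subAS+1)/\alpha$.

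Next I would use the elementary observation that $T\subAS + D\subAS \le k\subAS \le m$. (For each index $i\le k\subAS$ we have $L_i \in \{+1,-1,0\}$, where $0$ encodes a discarded tie when $p_i \in (c,\lambda]$; target and decoy wins are disjoint contributions among the first $k\subAS$ hypotheses.) Substituting the lower bound on $T\subAS$ into this inequality gives
\[
\frac{B(D\subAS+1)}{\alpha} + D\subAS \;\le\; m,
\]
and multiplying through by $\alpha$ and rearranging yields $(\alpha + B)\,D\subAS \le \alpha m - B$, so that
\[
D\subAS + 1 \;\le\; \frac{\alpha m - B}{\alpha + B} + 1 \;=\; \frac{\alpha(m+1)}{\alpha + B}.
\]

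Finally, since $D\subAS + 1$ is an integer, applying the floor gives $D\subAS + 1 \le \lfloor \alpha(m+1)/(\alpha+B)\rfloor = d_c$, which is the claim. There is no real obstacle here — the only point that deserves a line of commentary is the handling of ties in the $T+D \le k$ bound, but since AS treats ties either by random breaking or by discarding, in both cases $T\subAS + D\subAS \le k\subAS$ holds by construction.
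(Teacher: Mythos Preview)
Your proof is correct and follows essentially the same route as the paper: combine the defining inequality $B(D\subAS+1)\le\alpha T\subAS$ with $T\subAS+D\subAS\le m$, solve for $D\subAS+1$, and take the floor. The only cosmetic difference is that the paper writes $T\subAS=k\subAS-D\subAS$ directly (implicitly assuming no discarded ties), whereas you use the weaker $T\subAS+D\subAS\le k\subAS$ and comment on the tie case; your version is, if anything, slightly more careful.
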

\begin{proof}
	If $k\subAS>0$ then by \eqref{eq:AS} $B(D\subAS+1)/T\subAS\le\alpha$ and hence 
\[
B(D\subAS+1)\le\alpha T\subAS=\alpha(k\subAS-D\subAS)\le\alpha(m-D\subAS).
\]
It follows that $D\subAS+1\le\alpha(m+1)/(\alpha + B)$, and as $D\subAS\in\N$,
$D\subAS+1\le d_c$ as claimed.
\end{proof}
\begin{Corollary}
	\label{cor:dmax_TDC_FDP}
	Setting $\dmax=d_c$ as in \eqref{eq:dmax_TDC} guarantees it is large enough: $D\subAS+1\le \dmax$. The same holds for TDC with $c=\lam=1/2$ and $B=1$,
	as well as for SSS+ with $c=\lam$ and $B=c/(1-c)$.
\end{Corollary}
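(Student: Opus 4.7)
The corollary is essentially a direct unpacking of the Lemma that precedes it, together with the observation that TDC and SSS+ are special cases of AS (as spelled out in the introduction, after equation \eqref{eq:AS}). So my plan is very short.

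First, for the AS statement itself, I would split on whether $k\subAS>0$. When $k\subAS>0$, the conclusion $D\subAS+1\le d_c=\dmax$ is exactly what the preceding Lemma delivers, so there is nothing to do. When $k\subAS=0$, AS reports no discoveries, hence $D\subAS=D_0=0$ and there is no list whose FDP we need to bound; in this degenerate case the inequality $D\subAS+1\le\dmax$ is either vacuous or holds whenever $d_c\ge1$ (the latter is the regime in which AS can possibly reject anything, since $k\subAS>0$ would itself force $d_c\ge1$ through the Lemma).

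Second, for TDC, I would substitute $c=\lambda=1/2$ into the definition $B=c/(1-\lambda)$ to get $B=1$, and note that with these parameter values AS reduces to TDC (as stated in the introduction right after the description of AS). Similarly, for SSS+ one sets $c=\lambda$, giving $B=c/(1-c)$, and AS reduces to SSS+. In both cases the result is then just an instance of the AS statement already established, with the corresponding value of $B$ plugged into \eqref{eq:dmax_TDC}.

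There is no real obstacle here: the only mild point to be careful about is that the Lemma is conditional on $k\subAS>0$, so I would explicitly flag the $k\subAS=0$ case and dispose of it trivially before invoking the Lemma. The TDC/SSS+ assertions require no separate argument beyond identifying the parameter substitutions.
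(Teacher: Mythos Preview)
Your proposal is correct and matches the paper's treatment: the corollary is stated without proof because it is immediate from the preceding Lemma together with the observation (made after \eqref{eq:AS}) that TDC and SSS+ are AS with the indicated parameter choices. Your handling of the $k\subAS=0$ case is also appropriate, since the paper defines $\bar Q_0=0$, making that case vacuous for the purpose of bounding the FDP.
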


\section{Interpolation}
All three considered bands for $V_k$ can be improved through what Goeman et al.~called ``interpolation''.
Specifically, this improvement uses the fact that the implied number of guaranteed discoveries (essentially $T_k-\bar V_k$)
should not decrease. More precisely, given a band $\bar V_i$ we define
\begin{equation*}
    \bar{G}_k \coloneqq \max_{i=1,\ldots,k} \lceil T_i - \bar V_i \rceil \vee 0 .
\end{equation*}
We then update our FDP band $\bar Q$ of \eqref{eq:band_Q} to
\begin{equation}
	\label{eq:band_iQ}
\bar Q_k =  \frac{T_k - \bar G_k}{T_k\vee 1}\wedge 1 .
\end{equation}

The latter interpolation step is included in our implementation of all three considered bands: our uniform (Theorem \ref{thm:ub}) and
standardized \eqref{eq:std_band} bands, and the KR band \eqref{eq:V-KR}. When controlling the FDP, we use those interpolated bands
with the threshold $k_0$ of \eqref{eq:FDP_control}, where our bands rely on \dmax\ as defined in Corollary \ref{cor:dmax_FDP_control}.
For bounding the FDP when controlling the FDR we provide in \supsec~\ref{supsec:algs} algorithmic descriptions of our
TDC-UB and TDC-SB that rely on \dmax\ as specified in Corollary \ref{cor:dmax_TDC_FDP}, and of TDC-KR that does not require defining $d_{\max}$.

In Section \ref{sec:inter_impact} we show that the KR band can benefit substantially
from interpolation, while the improvements seem to be marginal for the other two bands.

\section{Comparative analysis using Real and Simulated Data}
\label{sec:applications}

To evaluate the procedures presented here we looked at their performance on simulated and real data where competition-based FDR
control is already an established practice. We focus on the commonly used case of a single decoy/knockoff, where $c=\lambda=1/2$,
but we also examine the use of multiple decoys.
However, before engaging in the analysis of specific datasets that inevitably involves some random elements, it is 
instructive to deterministically compare the original, non-interpolated bands for $V_k$.

\subsection{Deterministic comparison of the (non-interpolated) bands for $V_k$}
\label{sec:det_comp}
Because we only report target wins we should focus on comparing the bands for $k$s that correspond to target wins ($L_k=1$).
Assuming further that \dmax\ is set sufficiently large, with $d=D_k$, the corresponding values of $\bar V_k\supUSKR$ are given by $\xi_{d+1}\supSB$
\eqref{eq:std_band}, $\xi_{d+1}\supUB$ (Theorem \ref{thm:ub}), and $\xi_d\supKR$ which we define as the right hand side of \eqref{eq:V-KR}
with $D_k$ replaced with $d$.

Expressed this way, as depending only on $d$, the bands can be compared deterministically. Figure \ref{fig:det_comp}
shows that for $B=1$, with the exception of very small values of $d$ (0 and 1), both our bands provide tighter bounds than the KR band,
and increasingly so as $d$ increases.
Overall the UB band offers tighter bounds than the SB band although this is reversed for very small values of $d$, and at any rate the differences are rather small.
\supfig~\ref{supfig:det_comp2} looks at the same comparison for two more values of $B$: $B=1/3$ (as explained below, this corresponds to using the ``max method''
with 3 decoys), and $B=1/7$ (max method with 7 decoys). As $B$ decreases, the differences between the methods become smaller but the trend is similar to what
we see with $B=1$: for very small values of $d$ the KR band is better, but then it does significantly worse than both SB and UB. Similarly, UB is somewhat better
for most values of $d$ but for a really small $d$ SB is slightly better.

\begin{figure}
\centering
\begin{tabular}{ll}
\includegraphics[width=3in]{{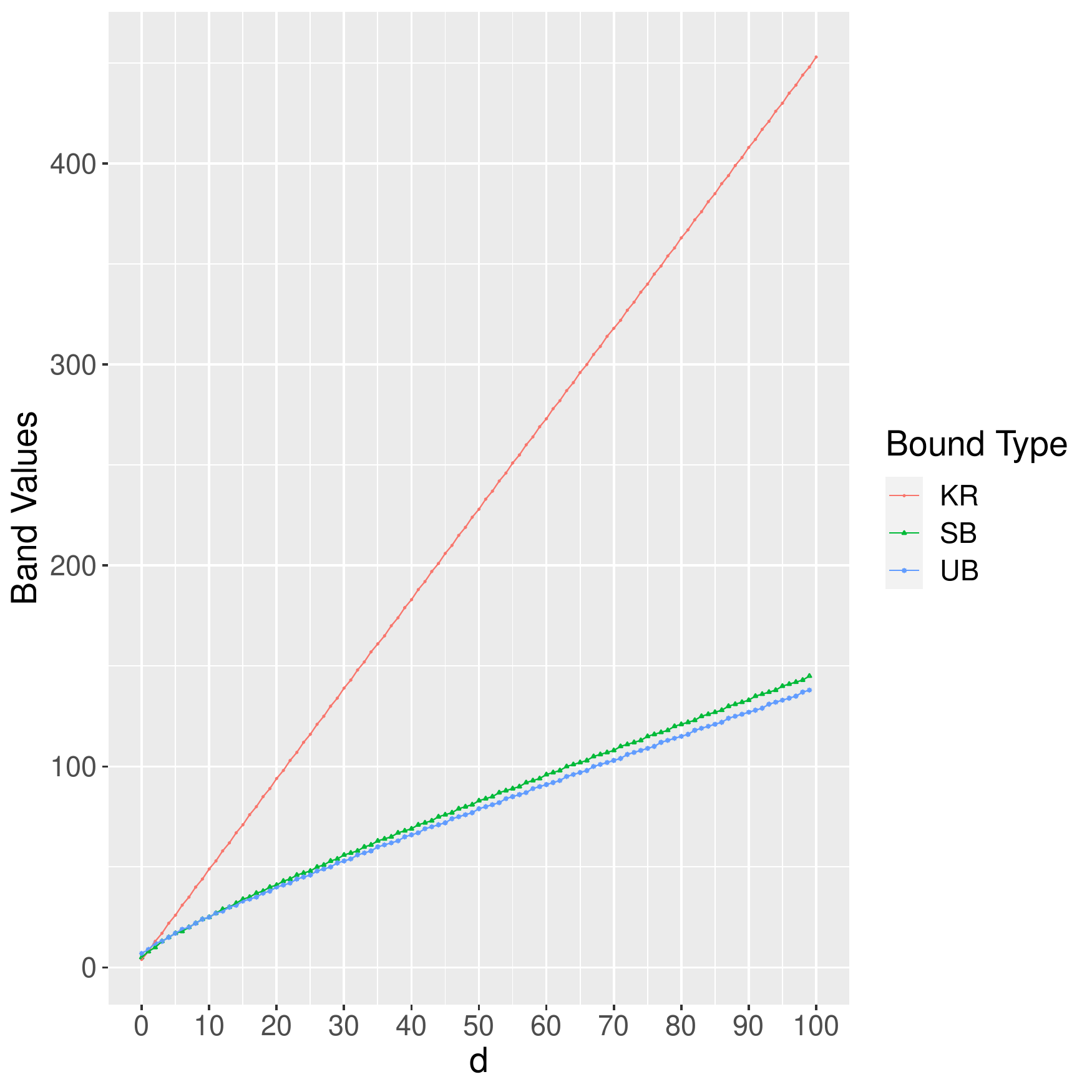}} &
\includegraphics[width=3in]{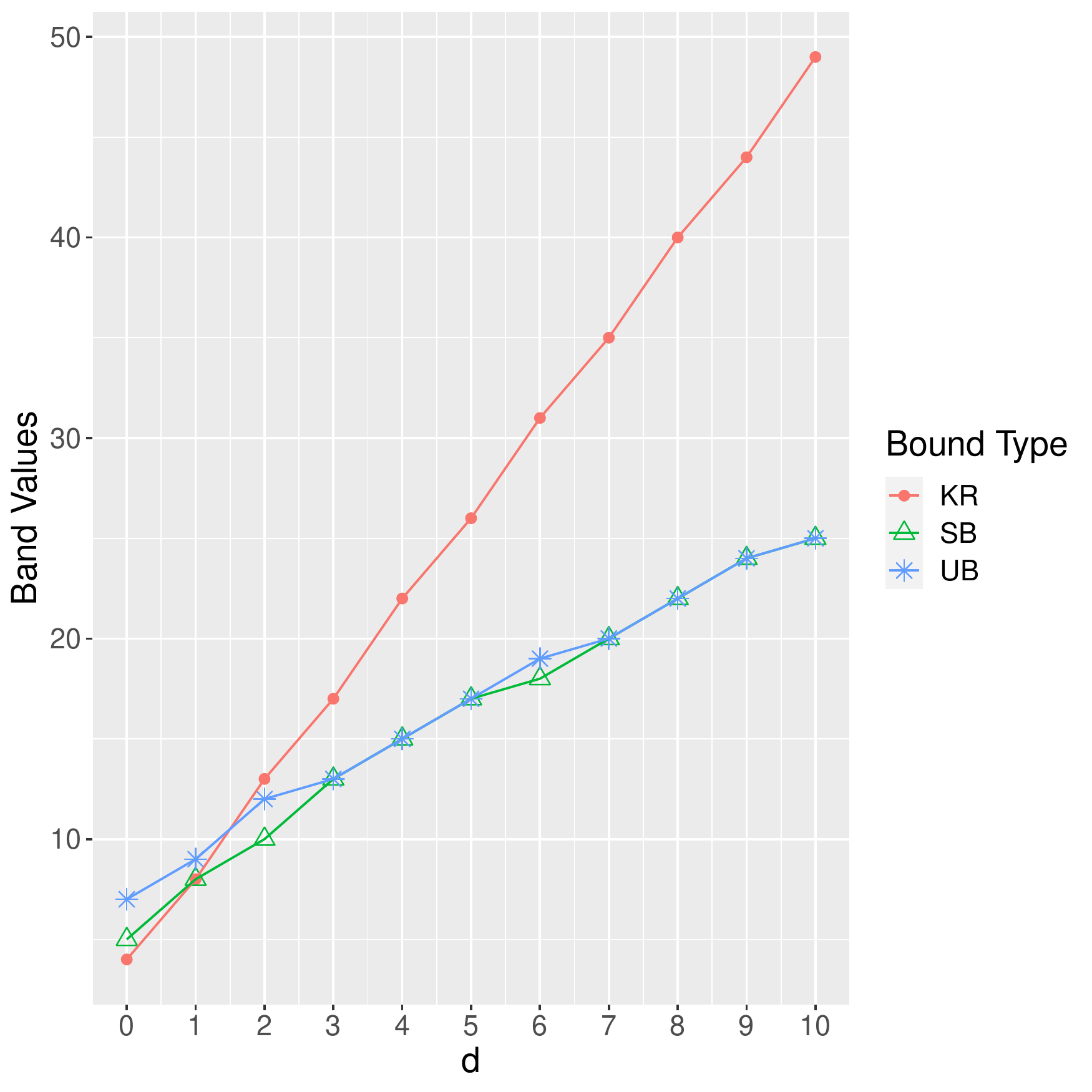}
\end{tabular}
\caption{\textbf{Deterministically comparing the $\bar V\supSB$, $\bar V\supUB$, and $\bar V\supKR$ bands.}\label{fig:det_comp} For $d \in \{1, \ldots, 100\}$ we computed the value of the upper prediction bands for the number of false discoveries as described in the text with $\dmax\coloneqq100$. The right figure is a zoomed-in version of the left figure for small values of $d$. \supfig~\ref{supfig:det_comp2} offers the same comparison for $B=1/3,1/7$.
}
\end{figure}

\subsection{Normal mixture model}
\label{sec:mixture}
We use datasets generated by the same mixture of normals model as in \cite{emery:multipleRECOMB,Luo:competition} to analyze the performance of the various
competition-based bands across a wide variety of controlled setups. 
Briefly, we drew decoy scores ($\til Z_i$) as well as true null scores ($Z_i$, $i\in N$) from a hypothesis-specific $N(\mu_i, \sigma_i)$ distribution,
and false null target scores ($Z_i$, $i\notin N$) from a shifted $N(\mu_i + \rho_i, \sigma_i^2)$.

Most of our analysis was done using simulated calibrated scores, where the null distribution does not vary
with the hypothesis, i.e., $\mu_i = \mu$ and $\sigma_i = \sigma$ for all $i$ (in this paper we used $\mu=0$, $\sigma=1$, as well as fixing $\rho_i\equiv3$,
except when noted otherwise). Where it is explicitly stated, we also used datasets generated simulating uncalibrated scores
where each $\mu_i$ is sampled from a $N(0, 1)$ distribution, $\sigma_i = 1 + \xi_i$ where $\xi_i$ is sampled from $\exp(1)$ (the
exponential distribution with rate $1$), and $\rho_i$ is sampled from a $1 + \exp(\nu)$ distribution, where $\nu$ is a hyperparameter that
determines the degree of separation between the false and true null target scores (we used $\nu=0.075$).
That said, because we generally saw little difference compared with using calibrated scores, we mostly used the latter.

We used the mixture model to randomly draw 20k sets of paired target/decoy scores for each considered parameter combination including:
varying the number of hypotheses $m\in\{500, 2\text{k}, 10\text{k}\}$, the proportion of true nulls $\pi_0\in\{0.2, 0.5, 0.8\}$,
the calibrated-scores separation parameter $\rho\in\{2.5, 3.0, 3.5\}$.

\subsection{The impact of interpolation}
\label{sec:inter_impact}
The penultimate section deterministically compared the bands on $V_k$ for which interpolation is irrelevant.
As our interest is in bounding the FDP, in this section we look explicitly at the effect of interpolation on those bounds.
Specifically, we used the above mixture of normals model to generate calibrated-score datasets using different data-parameter combination
by varying $m$, $\pi_0$, and $\rho$ as described above.

We then applied TDC (AS) to each simulated competition set, while varying the FDR threshold $\alpha$ from 1\% to 10\% and noted
the rejection threshold, $k\subTDC$ ($k\subAS$). For each considered band we found the difference between the upper prediction bound
on the FDP in TDC's list of target wins as given by the original band, and the corresponding value of the interpolated band.
We then plotted the median of those differences across the 20k drawn datasets for each of the parameter combinations
used here.

The results for one such combination ($m = 2000$, $\pi_0 = 0.5$, $\rho = 3$) are shown in Figure~\ref{fig:interpolation},
and all parameter combinations are presented in \supfig~\ref{supfig:interpolation}.
While the interpolated bands always offer an improvement over the non-interpolated versions, in the case of TDC-UB and TDC-SB the
difference is marginal: less than 0.01 across all parameter combinations. In contrast, the interpolation seems to drastically improve TDC-KR,
particularly for larger FDR thresholds.

While the impact of interpolation is significantly larger on the KR band, the analysis of Section \ref{sec:det_comp} above shows
it also has the lowest starting point. Indeed, the subsequent analyses below show that even with the significant interpolation-induced gains
the KR band is typically inferior to our bands (all considered bands are interpolated).

\begin{SCfigure}
\includegraphics[width=2.5in]{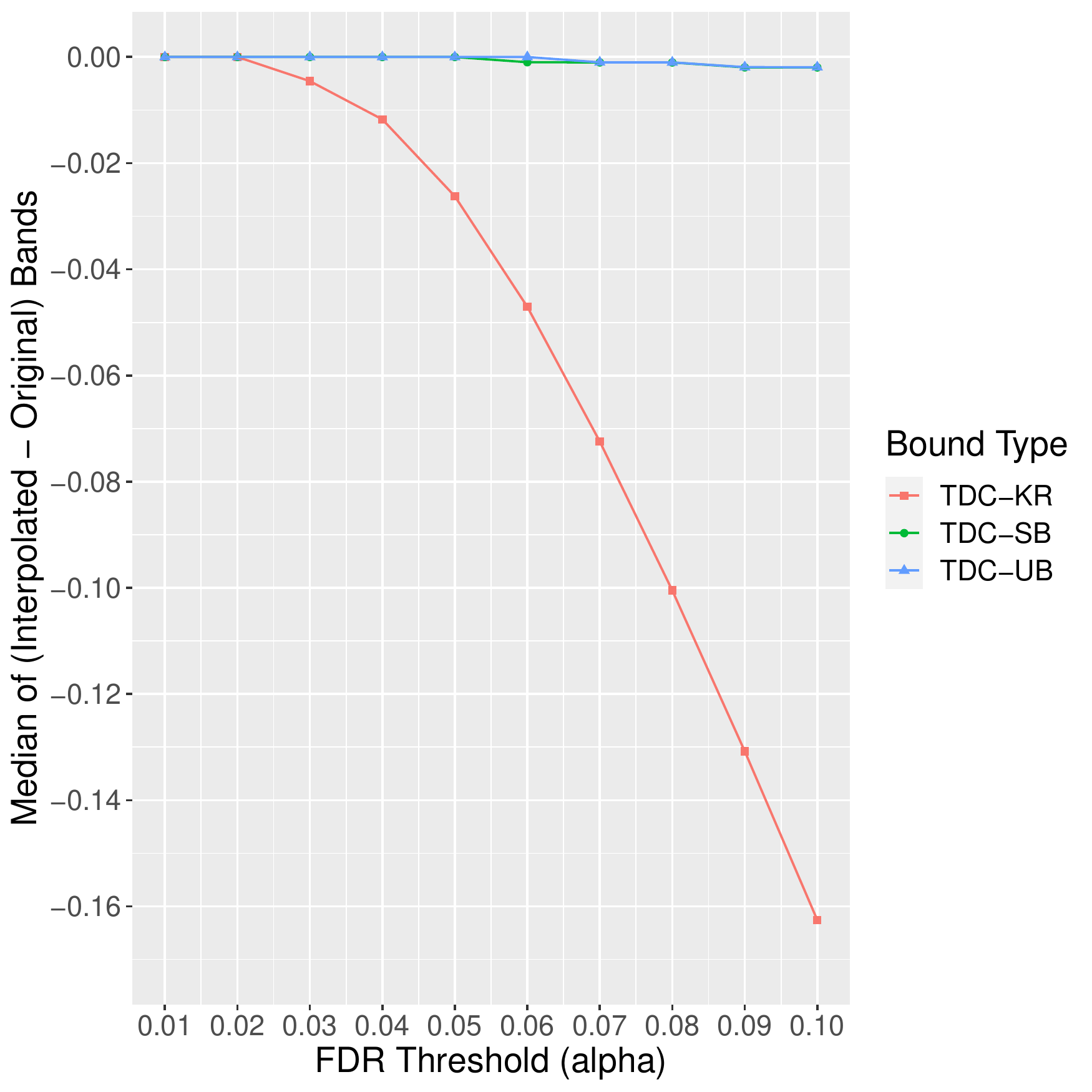}
\caption{\textbf{Comparing the interpolated and non-interpolated bands.} 
The median (over 20k datasets) of the difference between the non-interpolated and the interpolated bound on TDC's FDP
($m = 2000$, $\pi_0 = 0.5$, $\rho = 3$). Interpolation marginally improves both TDC-UB and TDC-SB here (they visually coincide), whereas 
it significantly boosts TDC-KR. More parameter combinations are examined in \supfig~\ref{supfig:interpolation}.
\label{fig:interpolation}}
\end{SCfigure}

\subsection{Examining the bounds on TDC's FDP}
\label{sec:FDP_bnd_resuls}

In this section we examine how well TDC-SB, TDC-UB, and TDC-KR (all interpolated) bound the FDP in TDC's list of discoveries.
We first look at the results in data generated by our mixture model, then look at variable selection in simulated linear
regression models, as well as genome-wide association studies (GWAS).

\subsubsection{In the mixture model setting}
\label{sec:FDP_bnd_normal}

We first randomly drew 20k datasets for each of the 18 parameter combinations of calibrated/uncalibrated scores with $m\in\{500, 2\text{k}, 10\text{k}\}$,
and $\pi_0\in\{0.2, 0.5, 0.8\}$. The hyperparameters used for the calibrated scores were $\mu = 0$, $\sigma = 1$ and $\rho = 3$, and
for the uncalibrated scores we used $\nu = 0.075$.

We then applied TDC with $\alpha\in\{0.01, 0.05, 0.1\}$ to each dataset, followed by two applications of each of TDC-SB/UB/KR,
once for each value of the confidence parameter $\gam\in\{0.01, 0.05\}$.
The upper prediction bounds were recorded and their median was calculated for each of the 108 different parameter combinations ($18\times3\times2$).
The boxplots of those medians for each of our three procedures are displayed in the left panel of Figure \ref{fig:sim_exp}.

In only 8 of those 108 cases TDC-KR's median bound was smaller than those of TDC-UB and TDC-SB, and typically it was substantially larger:
for $\gamma = 0.01$, the median of the 108 points were: 0.087 (TDC-UB), 0.094 (TDC-SB) and 0.243 (TDC-KR),
and for $\gamma = 0.05$: 0.079 (TDC-UB), 0.083 (TDC-SB) and 0.189 (TDC-KR).

The right panel of the figure looks at the median of the difference between the FDP bound returned by TDC-SB and TDC-UB.
Notably, TDC-UB generally offers tighter bounds, but the difference is not substantial.

\begin{figure}
\centering
\begin{tabular}{ll}
\hspace{-8ex}
\includegraphics[width=3in]{{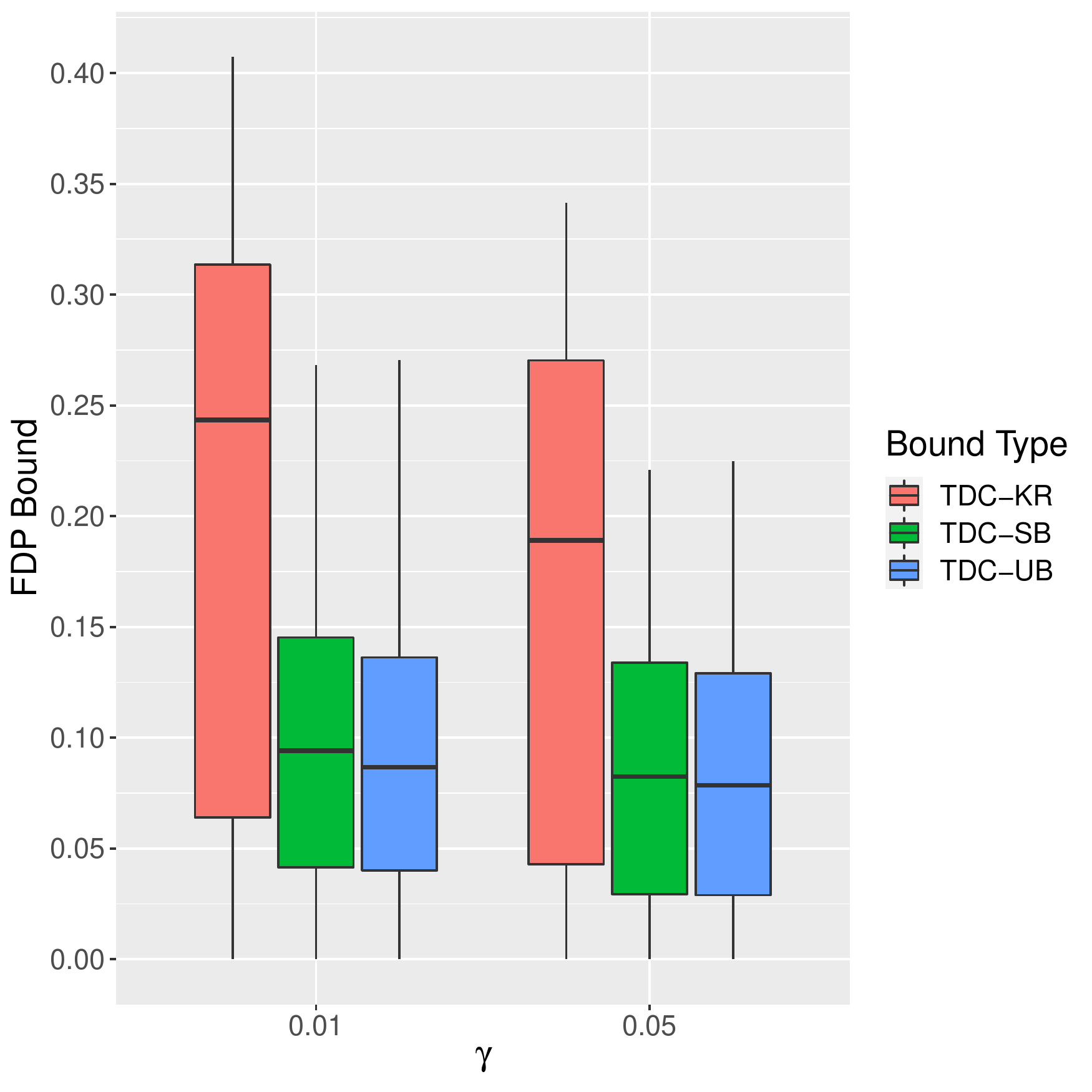}} &
\includegraphics[width=3in]{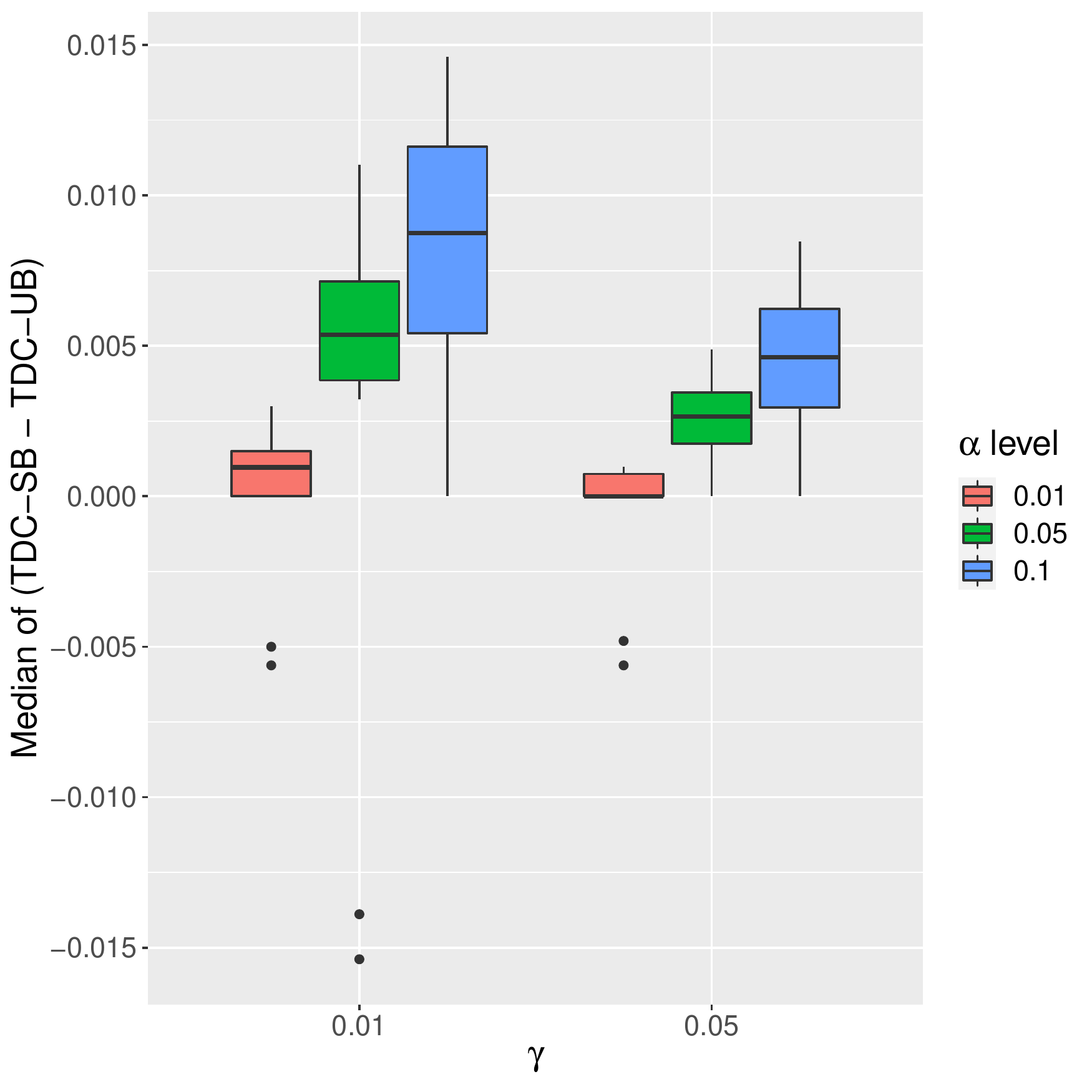}
\end{tabular}
\caption{\textbf{Comparing TDC-KR, TDC-SB and TDC-UB using the mixture model.}\label{fig:sim_exp}
	\textbf{Left:} For each of the 108 combinations of calibrated/uncalibrated scores with $m\in\{500, 2\text{k}, 10\text{k}\}$,
	$\pi_0\in\{0.2, 0.5, 0.8\}$, $\alpha\in\{0.01, 0.05, 0.1\}$, and  $\gam\in\{0.01, 0.05\}$,
	we computed the median of the FDP bound returned by TDC-KR, TDC-SB and TDC-UB.
	Each median was taken over 20K samples and the boxplot shown consists of those 108 medians.
	\textbf{Right:} using the same randomly generated data we noted the median of the difference between the FDP bound returned by TDC-SB and TDC-UB.
}
\end{figure}

To gain further insight we varied a single parameter of the mixture model at a time (with $\alp=0.05$ and $\gam=0.05$ throughout).
First, we varied $m$ keeping the ``signal-to-noise ratio'' parameters, $\pi_0$ and $\rho$ fixed.
\supfig s~\ref{supfig:vary_m}-\ref{supfig:vary_m_pi0_0.8} show that increasing $m$, which increases the number of discoveries (bottom rows),
have a mixed effect on the bounds: while the variability of all three bounds decreases, the median bound decreases for TDC-SB/UB but it
increases for TDC-KR (middle rows).

Surprisingly, when increasing the number of discoveries by decreasing $\pi_0$ (\supfig~\ref{supfig:vary_pi0}), or by increasing $\rho$
(\supfig~\ref{supfig:vary_rho}) the median bound of TDC-KR decreases (as do TDC-SB/UB's). Given the KR band is originally determined by
\eqref{eq:V-KR}, the only explanation is that this is due to the added interpolation step.

\subsubsection{In linear regression / GWAS}
\label{sec:FDP_bnd_lin}

As an example of how the FDP bounding procedures compare in the context of linear regression, we looked at the first example of Tutorial 1
of ``Controlled variable Selection with Model-X Knockoffs'' (
\href{http://web.stanford.edu/group/candes/knockoffs/software/knockoffs/tutorial-1-r.html}{``Variable Selection with Knockoffs''}) \cite{candes:panning}.
Specifically, we repeated the following sequence of operations 1000 times:
we randomly drew a normally-distributed $1000\times1000$ design matrix and generated a response vector using only 60 of the 1000
variables while keeping all other parameters the same as in the online example (amplitude=4.5, $\rho$=0.25, $\Sigma$ is a Toeplitz matrix whose
$d$th diagonal is $\rho^{d-1}$).
We then computed the model-X knockoff scores (taking a negative score as a decoy win and a positive score as a target win)
and applied TDC with $\alp\in\{0.1,0.2\}$ followed by TDC-SB/UB/KR with confidence levels of $1-\gam\in\{0.90, 0.95\}$.

Our GWAS example is taken from \cite{katsevich:simultaneous}, which in turn is based on data made publicly available by
\cite{sesia:multiresolution}. The goal of this analysis was to identify genomic loci (the features) that are significant factors in the expression of
each of the eight traits that were analyzed (the dependent variables). The raw data was taken from the UK Biobank \cite{bycroft:biobank}
and transformed to a regression problem by \cite{sesia:multiresolution}, who then created knockoff statistics.
We downloaded the scores using the functions \verb+download_KZ_data+ and \verb+read_KZ_data+ defined in \KR'
\href{https://raw.githubusercontent.com/ekatsevi/simultaneous-fdp/master/UKBB_utils.R}{UKBB\_utils.R}. Consistent with the latter, we
applied TDC with $\alp=0.1$, and the FDP bounding procedures using $\gamma = 0.05$.

Figure \ref{fig:lin_reg} shows that the trends we saw in our simulated mixture data also hold in both of the examples analyzed here:
there is very little that separates TDC-SB and TDC-UB and both overall provide substantially tighter upper bounds than TDC-KR's.

\begin{figure}
\centering
\begin{tabular}{ll}
\hspace{-8ex}
\includegraphics[width=3.5in]{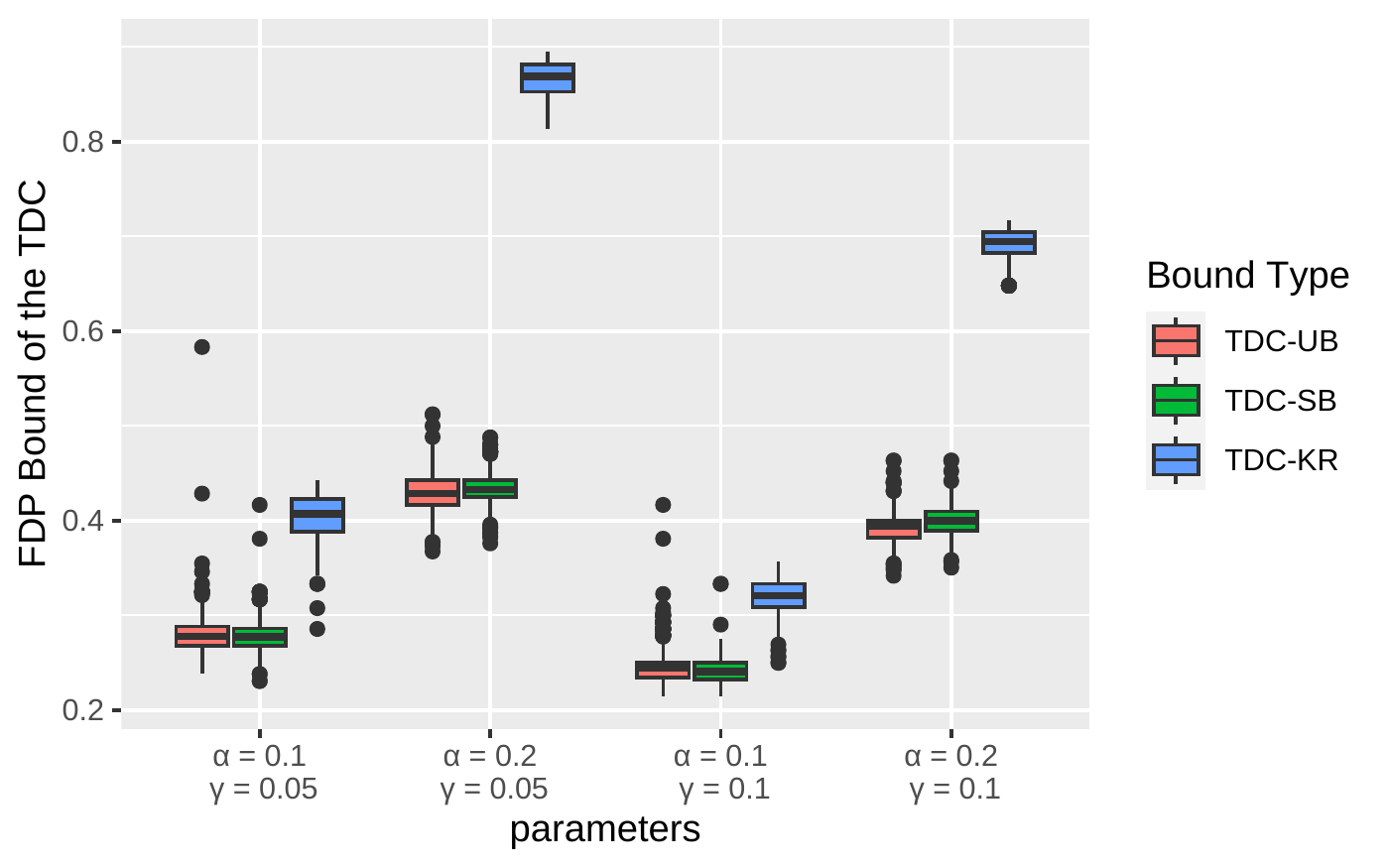} &
\includegraphics[width=3.5in]{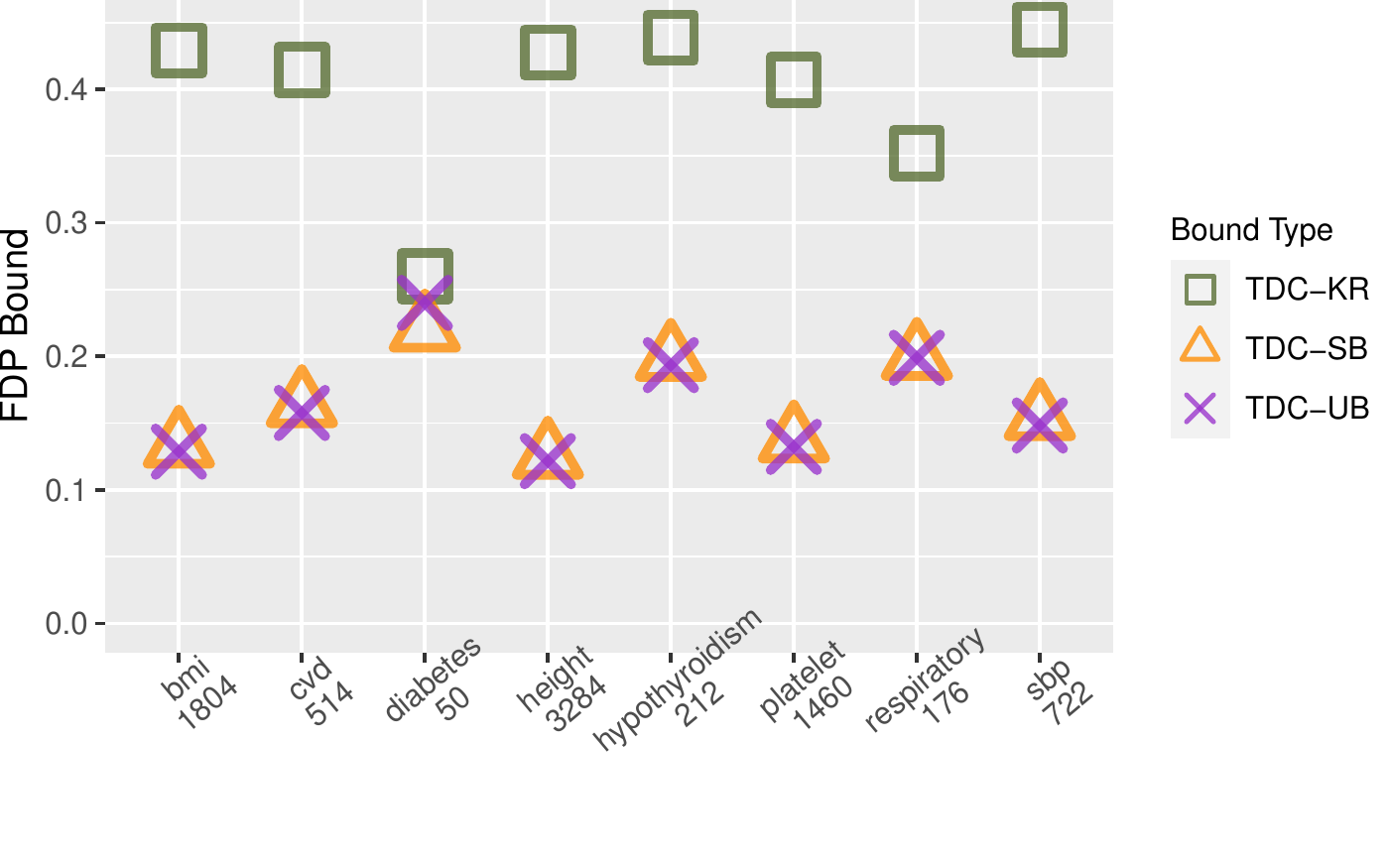}
\end{tabular}
\caption{\textbf{Comparing TDC-KR, TDC-SB and TDC-UB when selecting features in (i) simulated linear regression (ii) GWAS.}
\textbf{Left:} Each boxplot is made of the upper prediction bounds on the FDP in the selected list of variables as provided by TDC-KR, TDC-SB and TDC-UB.
Specifically, using the model-X knockoff scores, we selected variables in the linear regression problem described in the text,
while controlling the FDR using TDC/SSS+ with $\alp\in\{0.1,0.2\}$, and followed it by applying TDC-KR/SB/UB with confidence levels of $1-\gam\in\{0.90, 0.95\}$.
For each of the 8 investigated traits we compare the upper prediction bounds provided TDC-KR/SB/UB (all with $\gam=0.05$)
on TDC's FDP. TDC was run with $\alp=0.1$ and its number of discoveries is reported below the trait.  TDC-UB's upper prediction bound was averaged over 1K runs.
\textbf{Right:} For each of the 8 investigated traits we compare the upper prediction bounds provided by TDC-KR/SB/UB (all with $\gam=0.05$)
on TDC's FDP. TDC was run with $\alp=0.1$ and its number of discoveries is reported below the trait.  TDC-UB's upper prediction bound was averaged over 1K runs.
\label{fig:lin_reg}}
\end{figure}

\subsection{Controlling the FDP}
\label{sec:FDP_control}

FDP-SD is our recently published method for controlling the FDP in the competition setup. We showed it is generally more powerful 
compared with using the original (non-interpolated) KR band \cite{Luo:competition}.
Here we briefly revisit this problem, comparing FDP-SD with controlling the FDP with all three interpolated bands.

Specifically, we drew 20k datasets for each of the same parameter combinations of our normal mixture model as above.
We then applied FDP-SD, as well as the cutoff \eqref{eq:FDP_control} with each of the bands (SB, UB, and KR) to yield four FDP-controlled
lists of discoveries with a fixed confidence $1-\gamma = 0.95$ for each $\alpha \in \{0.01, 0.02, \ldots, 0.1\}$.

Figure \ref{fig:fdp-control} show the median power of each method over the 20k datasets for one parameter combination ($m = 2000$, $\pi_0 = 0.5$, $\rho = 3$),
and \supfig~\ref{supfig:fdp-control} shows the results for all combinations. The results are consistent: FDP-SD generally delivers the most
power, using the KR band typically delivers the fewest discoveries, and there is very little difference between using SB and UB.

\begin{SCfigure}
\includegraphics[width=2.5in]{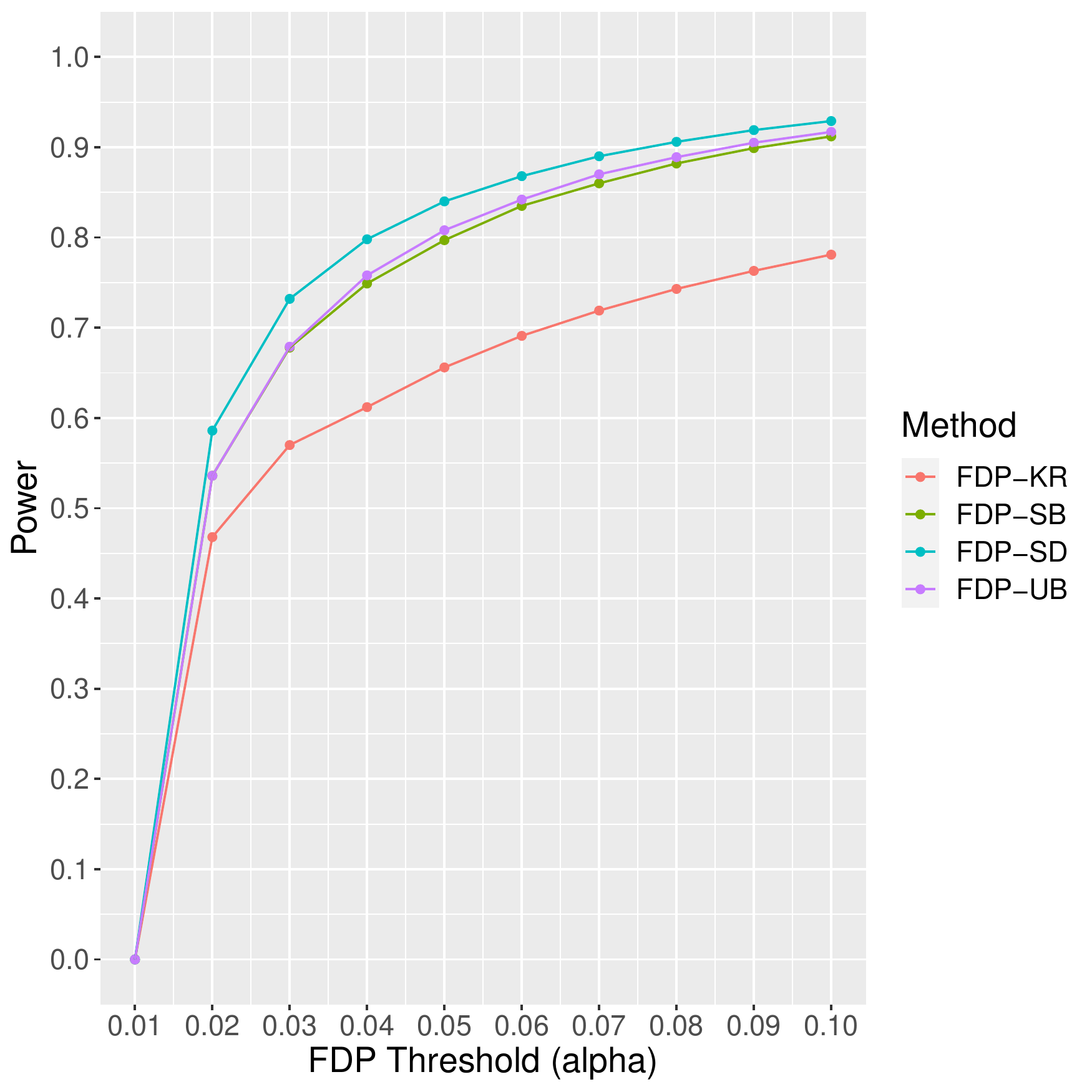}
\caption{\textbf{Median power of FDP controlling procedures.} 
Plotted are the median power (over 20k datasets) of FDP controlling procedures with  $\alpha \in \{0.01, 0.02, \ldots, 0.1\}$
and a fixed confidence level of $1-\gamma = 0.95$. The datasets were generated using our normal mixture model with $m = 2000$, $\pi_0 = 0.5$, $\rho = 3$.
More parameter combinations are examined in \supfig~\ref{supfig:fdp-control}.
\label{fig:fdp-control}}
\end{SCfigure}

\subsection{Bounding the FDP in peptide detection}
\label{sec:peptide}

We next report some results using real tandem mass spectrometry (MS/MS) data --- a technology that currently provides the most efficient
means of studying proteins in a high-throughput fashion.
In a ``shotgun proteomics'' MS/MS experiment, the proteins that are extracted from a complex biological sample 
are not measured directly. For technical reasons, the proteins are first digested into shorter
chains of amino acids called ``peptides.''  The peptides are then run through the mass spectrometer, in which distinct peptide sequences generate
corresponding spectra. A typical 30-minute MS/MS experiment will generate approximately 18,000 such spectra.
Thus, one of the first goals of the downstream analysis is to determine which peptides were present in the sample.
The list of discoveries is canonically generated by controlling the FDR using TDC as briefly explained below.

The peptide detection is done relative to an appropriate reference (``target'') peptide database: only target peptides can be detected.
In order to control the FDR, a ``decoy'' peptide database is generated by reversing, or randomly shuffling each target peptide.
Pioneered by SEQUEST \cite{eng:approach}, a search engine then uses an elaborate score function to assign to each spectrum its
optimally matching peptide in the concatenated target-decoy database --- forming the (optimal) peptide-spectrum match, or PSM \cite{nesvizhskii:survey}.

The problem with those PSMs is that in practice, many expected fragment ions will fail to be observed for any given spectrum,
and the spectrum is also likely to contain a variety of additional, unexplained peaks \cite{noble:computational}.
Hence, sometimes the PSM is correct --- the peptide assigned to the spectrum was
present in the mass spectrometer when the spectrum was generated --- and sometimes the PSM is incorrect.
This uncertainty carries over to the peptide level, hence the need to control the FDR.

We next assign each target database peptide two scores: a target score $Z_i$, which is the maximal score of all PSMs associated with it,
and a decoy score, $\til Z_{i}$, which is the maximal score of all PSMs associated with the target's paired decoy peptide
(assume for simplicity that a PSM score is $>0$ so we assign a score of 0 if no PSM is associated with the target/decoy peptide).
Finally, we apply TDC, reporting all top scoring target winning peptides ($Z_i>\til Z_i$) with the score cutoff determined by \eqref{eq:AS}.
The underlying assumption justifying the use of TDC is that for a true null hypothesis (the peptide is not in the sample) the winning score is equally 
likely to be the target or the decoy (independently of all the winning scores and of all other peptides) \cite{lin:improving}.

We examined the bounds provided by TDC-SB/UB/KR on the FDP in TDC's list of detected peptides in 10 essentially randomly selected MS/MS datasets.
To increase the confidence in our analysis we repeated the process 20 times for each dataset using that many randomly drawn decoy databases.
Then, for each of the 10 datasets and each $\alpha \in \{0.01, 0.02, \ldots, 0.1\}$ we averaged each method's FDP bound over its 20 applications
to this dataset, each with its own decoy database and a fixed confidence level of $1-\gamma = 0.95$. Additional details of the process are specified in
\supsec~\ref{supsec:PXIDs}.

Figure \ref{fig:peptides}A summarizes those sets of 10 averages, one for each FDR threshold $\alp$ and FDP-bounding method
in the form of boxplots. Clearly, the picture is consistent with the other applications we looked at: TDC-UB generally offers
marginally tighter bounds than TDC-SB, and both bounds are significantly tighter than TDC-KR's.

\begin{figure}
\centering
\begin{tabular}{cc}
\hspace{-8ex}
\includegraphics[width=3.5in]{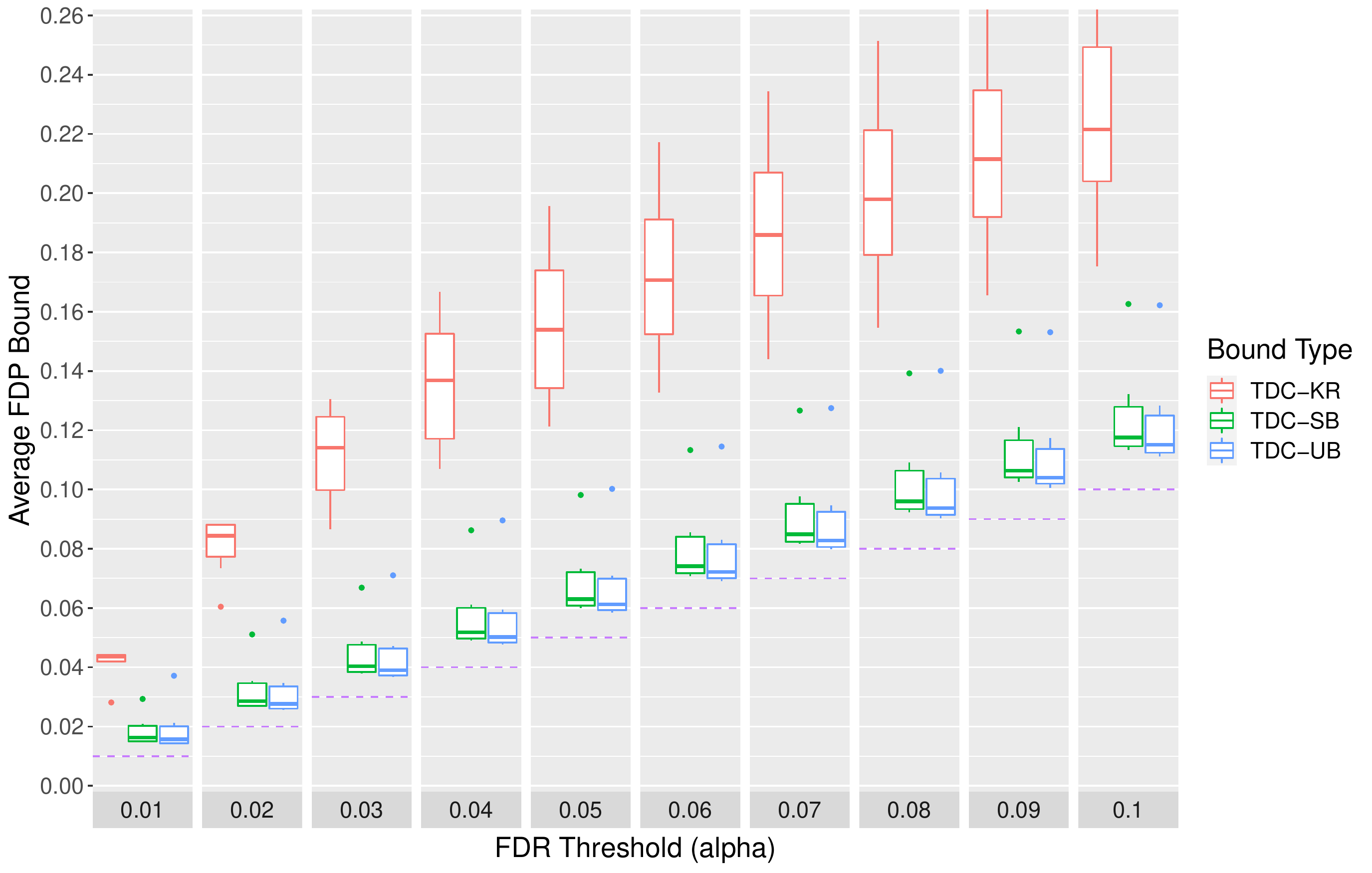} &
\includegraphics[width=3.5in]{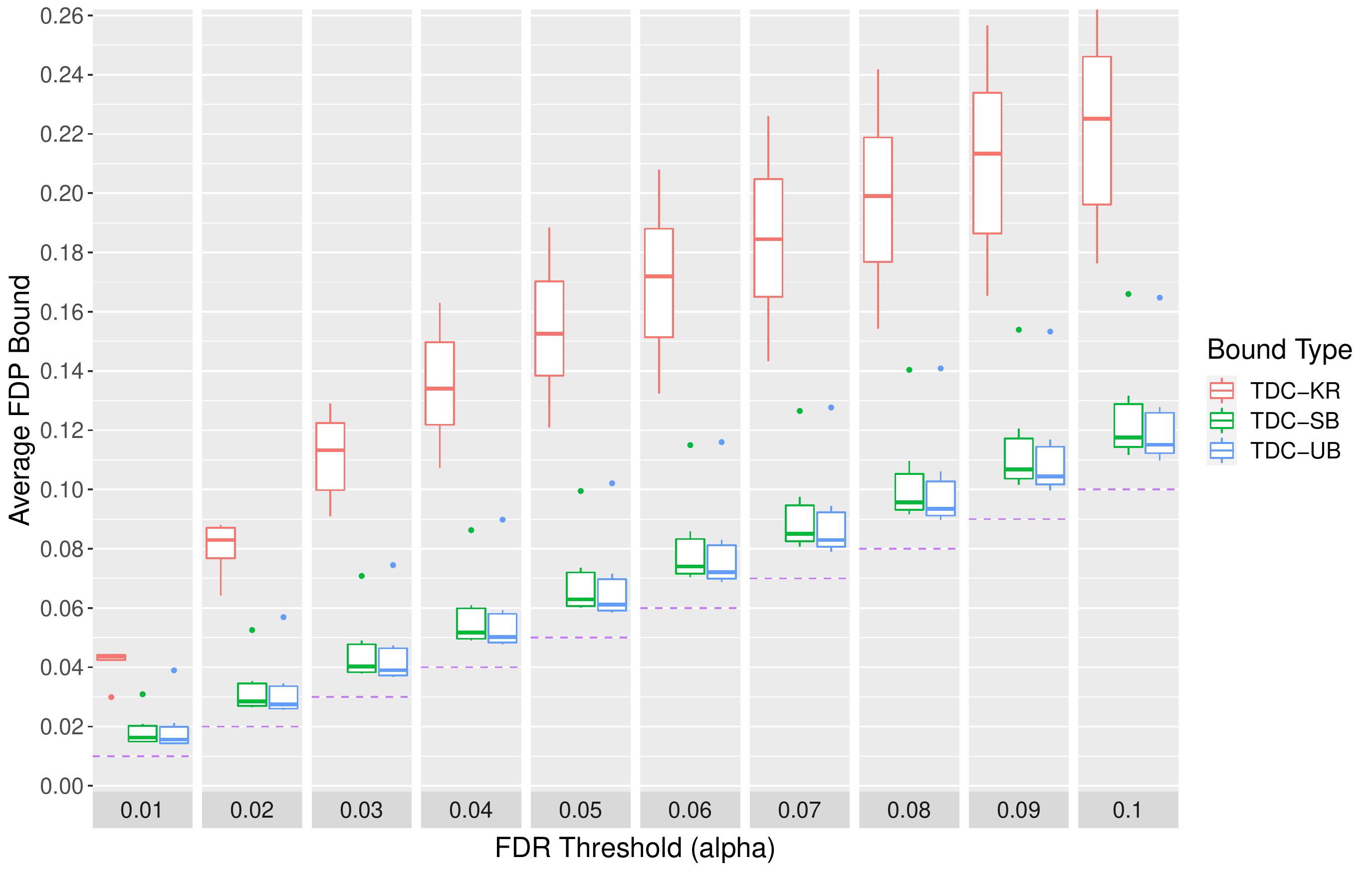} \\
\hspace{-8ex}
A & B
\end{tabular}
\caption{\textbf{Applications of TDC-KR, TDC-SB and TDC-UB in peptide detection.}
Each boxplot is made of 10 averages, one for each of our 10 datasets, of the upper prediction bounds on the FDP in TDC/AS's
reported list of peptides as provided by TDC-KR, TDC-SB and TDC-UB.
Each average is taken over 20 applications, each with a different set of randomly drawn decoy databases (single for A, three for B), the prescribed FDR level $\alp$,
and a fixed confidence level of $1-\gamma = 0.95$. The purple dashed line corresponds to $y = \alpha$.
\textbf{A:} using a single decoy (standard TDC).
\textbf{B:} using the mirror method with 3 decoys. \supfig\ref{supfig:upband-boxplot} includes more settings.
\label{fig:peptides}}
\end{figure}

We further used this peptide detection setup to examine the results of TDC-SB/UB/KR when controlling the FDR 
using multiple decoys \cite{emery:multipleRECOMB}. In this setup each target peptide is associated with $d$
decoys generated by random shuffles of the target peptide. Each spectrum is searched against the concatenated
database of all target and decoy peptides for its best matching PSM. Similarly to using a single decoy, we
associate with $H_i$ (the $i$th target peptide is not in the sample) a target score $Z_i$ (maximal score of all PSMs associated with it),
and $d$ decoy scores $Z^i_j$ $j=1,\dots d$ (maximal score of all PSMs associated with each corresponding decoy).

The p-value associated with $H_i$ is the rank p-value of the target $Z_i$ in the combined list of $d+1$
target and decoy scores (generalizing the $d=1$ case). Given the chosen $c$ and $\lam$ parameters, the rank
p-value determines whether $H_i$ corresponds to a target ($p_i\le c$, $L_i=1$) or a decoy win ($p_i>\lam$, $L_i=-1$) as in AS.
Emery et al.'s mirandom procedure can then assign winning scores $W_i$ so that if the decoys are independently generated
(or satisfy a weaker exchangeability condition), then we can sort the hypotheses in decreasing order of $W_i$
and the ($W_i$-sorted) true null rank p-values are iid that stochastically dominate the uniform $(0,1)$ distribution,
independently of the false nulls \cite[\supsec~6.13 of that paper]{emery:multipleRECOMB}.
In other words, AS' conditions of the sequential hypothesis testing hold.

Here we only considered the mirror method ($c=\lam=1/2$ for an odd $d$) and the max method ($c=\lam=1/(d+1)$).
Specifically, we applied the above process to the same 10 sets of real spectra with $d=3$ and $d=7$ decoys
controlling the FDR at varying level $\alpha \in \{0.01, 0.02, \ldots, 0.1\}$ using both the mirror and the max methods.
We then applied to the reported lists of discovered peptides the FDP-bounding procedures with confidence $1-\gam=0.95$.
Again, this process was repeated 20 times, randomly drawing $d$ out of 20 decoy databases in each,
and the FDP bounds were averaged over those 20 repetitions.

Figure \ref{fig:peptides}B presents boxplots each made of 10 of those averages. Interestingly, there is very
little difference from using the mirror method with a single decoy (A), or seven decoys (\supfig~\ref{supfig:upband-boxplot}).
This is further born out in \supfig~\ref{supfig:mult_method}, which examines the
same problem using our mixture model: using the mirror method with 1, 3, or 7 decoys the bounds provided by
TDC-UB are almost identical. Nevertheless, the advantage of using multiple decoys becomes obvious when comparing 
the power: using the mirror method to control the FDR with $d=3,7$ decoys increases the number of discoveries (\supfig~\ref{supfig:mult_power}).

The max method behaves somewhat differently: while TDC-KR still delivers substantially larger bounds than
TDC-SB/UB do, the bounds seem to decrease as $d$ increases. However, the power of FDR control using the
max method no longer seems to be monotone (same \supfig s).

\section{Discussion}

TDC-SB/UB were developed to improve the bounds on the FDP when controlling the FDR using the competition-based
TDC, or its generalizations to sequential hypothesis testing: AS and SSS+.
The bounds, derived from our novel SB/UB upper prediction bands on the FDP, are generally substantially tighter than bounds
derived analogously from the recently published KR band, even after improving the latter through interpolation.

When seeking tighter control over the FDP, the user should still use our recently published FDP-SD, which
overall delivers more power compared with controlling the FDP using any of the above bands.
Instead, TDC-SB/UB are designed for the typical case where the user would be reluctant to pay the price associated 
with controlling the FDP, allowing them to gauge how large can the FDP in their FDR-controlled list
of discoveries be.

While we focused on bounding the FDP in the FDR-controlled discovery list, our bands more generally provide simultaneous
FDP bounds for competition-based analysis of the multiple testing problem. As such, they are useful for post-hoc analyses,
including where we apply additional domain-knowledge to prioritize a subset of the hypotheses (\cite{katsevich:simultaneous,goeman:only}).

Notably, even TDC-UB, which generally delivers the tightest bounds, is often overly conservative suggesting 
further improvements can be made. One such improvement might be gained by trying to optimize the set of decoy-win indices $\Delta$ on which we bound
$N_d$.

Assuming the necessary quantiles are precomputed (\supsec~\ref{supsec:c_star_simulations}) all the procedures presented here require
sorted data, but beyond that their complexity is linear; hence, they all share the runtime complexity of $O(m\log m)$.
An R implementation of our bands (with precomputed quantiles for most practical problems) is available at 
\url{https://github.com/uni-Arya/bandsfdp}.

\bibliography{refs}
\bibliographystyle{plain}

\clearpage

\section{Supplementary Material}

\begin{table}[h!]
\centering
\scriptsize
\begin{tabular}{cp{5in}}
\hline 
Variable & Definition \\
\hline
$m$ & the number of hypotheses (e.g., PSMs, features, peptides)\\
$\alp$ & the FDR/FDP threshold\\
$\gam$ & for a $1-\gam$ confidence level\\
$N$ & the set of indices of the true null hypotheses (unobserved, could be a random set)\\
$Z_i$ & the target/observed score (the higher the score the less likely $H_i$)\\
$\til Z_{i}$ & decoy/knockoff score (generated by the user)\\
$c$ & parameter of AS (determines the target win threshold in TDC terminology)\\
$\lam$ & parameter of AS (determines the decoy win threshold in TDC terminology)\\
$B$ & $c/(1-\lambda)$ (the ratio of probabilities of target to decoy wins for true nulls)\\
$R$ & $(1-\lambda)/(c+1-\lambda)$ (for a true null, the probability of a decoy win given that it was not discarded)\\
$L_i$ & with values in $\{-1,1\}$ the target/decoy win labels (assigned, ties are randomly broken or the corresponding hypotheses are dropped)\\
$W_i$ & the winning score (assigned, WLOG assumed in decreasing order)\\
$D_i$ & the number of decoy wins in the top $i$ scores\\
$T_i$ & the number of target wins in the top $i$ scores\\
$V_i$ & the number of true null target wins in the top $i$ scores\\
$\bar V_i$ & upper prediction band for $V_i$\\
$Q_i$ & the FDP among the target wins in the top $i$ scores\\
$\bar Q_i$ & upper prediction band for $Q_i$\\
$N_d$ & the number of true null target wins before the $d$th decoy win\\
$\xi_d$ & upper prediction band for $N_d$\\
$\bar G_i$ & lower prediction band for the number of guaranteed discoveries (used for interpolation)\\
$U_d$ & a process that stochastically dominates $N_d$ with negative binomial marginal distributions\\
$\hat U_d$ & the standardized version of $U_d$: $\hat U_d\coloneqq (U_d-d)/\sqrt{2d}$\\
$\tilde{U}_d$ & the uniform (confidence) version of $U_d$: $\til U_d\coloneqq G_d(U_d)$, where $G_d(k)\coloneqq P(NB(d,1/2)\ge k)$\\ 
$\Delta$ & $\Delta\coloneqq\{1,2,\dots,d_{\max}\}$\\
$z_\Delta^{1-\gamma}$ & a $(1-\gamma)$-quantile of $\max_{d\in\Delta} \hat U_d$ \\
$u_\gamma(\Delta)$ & a $\gamma$-quantile of $\min_{d\in\Delta}\tilde{U}_d$\\
$k_0$ & rejection threshold for FDP control using upper prediction bands\\
$k_{\text{AS}}$ & rejection threshold for FDR control using AS\\
\hline 
\end{tabular}
\caption{\textbf{Commonly used notations and their definitions.}
  \label{suptable:definitions}}
\end{table}

\begin{table}[h!]
\centering
\scriptsize
\begin{tabular}{lp{5in}}
\hline 
Abbreviation & Definition \\
\hline
MS/MS & Tandem Mass Spectrometry\\
PSM & Peptide-Spectrum Match (the match between a spectrum and its best matching database peptide)\\
RV & Random Variable\\
NB & Negative Binomial (distribution)\\
FDP & False Discovery Proportion (the proportion of discoveries which are false)\\
FDX & False Discovery Exceedance (an alternative term for FDP-control used in the literature)\\
FDR & False Discovery Rate (the expected value of the FDP taken over the true nulls conditional on the false nulls)\\
TDC & Target Decoy Competition (canonical approach to FDR control in mass spectrometry)\\
AS & Adaptive SeqStep (FDR control in sequential hypothesis testing, generalizing TDC and SSS+)\\
SSS+ & Selective SequentialStep+ (FDR control in sequential hypothesis testing, a special case of AS)\\
FDP-SD & FDP-Stepdown (recommended approach to FDP control using competition)\\
FDP-UB & FDP-Uniform Band (an alternative approach to FDP control based on the uniform band)\\
FDP-SB & FDP-Standardized Band (same as FDP-UB but based on the standardized band)\\
FDP-KR & FDP-\KR\ Band (same as FDP-UB but based on the \KR\ band)\\
TDC-UB & TDC-Uniform (Upper Prediction) Bound (a bound on TDC's FDP derived from the uniform band)\\
TDC-SB & TDC-Standardized (Upper Prediction) Bound (same as TDC-UB but using the standardized band)\\
TDC-KR & TDC-\KR\ (Upper Prediction) Bound (same as TDC-UB but using the KR-band)\\
GWAS & Genome-Wide Association Studies (here referring to a specific analysis of 8 traits using Biobank data)\\
\hline 
\end{tabular}
\caption{\textbf{Commonly used abbreviations/names and their definitions.}
  \label{suptable:abbreviations}}
\end{table}

\clearpage

\subsection{Procedures in algorithmic format}
\label{supsec:algs}

\begin{algorithm}
\textbf{\caption{AS (also referred to as TDC)}\label{algorithm:TDC}}
\SetKwFor{For}{For}{let:}{endfor}
\SetKwIF{If}{ElseIf}{Else}{If}{then:}{else if:}{else:}{endif} 
\KwIn{
    \begin{alglist}
        \item an FDR threshold $\alpha$\;
        \item the number of hypotheses $m$\;
        \item competition parameters $c$ and $\lambda$;
        \item a list of labels $L_i \in \{1, -1\}$ where $1$ indicates a target win and $-1$ a decoy win (sorted so that the corresponding scores $W_i$ are in decreasing order: $W_1\ge W_2\ge\dots\ge W_m$)\;
    \end{alglist}
}
\KwOut{
    \begin{alglist}
        \vspace{-0.35em}
        \item an index $k_{\text{AS}}$ specifying that target wins in the top $k_{\text{AS}}$ hypotheses are discoveries\;
    \end{alglist}
}
\vspace{-0.5em}
\dottedhfill \\
\setstretch{1.4}
\For{$i=1$ \KwTo $m$} {
    $D_i$ be the number of $-1$'s in $\{L_1,\ldots,L_i\}$\;
    $T_i$ be the number of $1$'s in $\{L_1,\ldots,L_i\}$\;
}
$B := c/(1-\lambda)$\;
$M := \{k \in \{1, \ldots, m\} : B(D_k+1)/T_k \leq \alpha\}$\;
\eIf{$M = \emptyset$}{
    \Return $k_{\text{AS}} := 0$\;
}{
    \Return $k_{\text{AS}} := \max(M)$\;
}
\end{algorithm}

\begin{algorithm}
\textbf{\caption{TDC-UB}\label{algorithm:TDC-UBm}}
\SetKwFor{For}{For}{let:}{endfor}
\SetKwIF{If}{ElseIf}{Else}{If}{then:}{else if:}{else:}{endif} 
\KwIn{
    \begin{alglist}
        \item an FDR threshold $\alpha$\;
        \item a confidence parameter $\gamma$\;
        \item the number of hypotheses $m$\;
        \item the threshold $\tau := k_\text{AS}$ returned by AS\;
        \item competition parameters $c$ and $\lambda$;
        \item a list of labels $L_i \in \{1, -1, 0\}$ where $1$ indicates a target win, $-1$ a decoy win and $0$ an uncounted hypothesis (sorted so that the corresponding scores $W_i$ are decreasing: $W_1\ge W_2\ge\dots\ge W_m$)\;
    \end{alglist}
}
\KwOut{
    \begin{alglist}
        \vspace{-0.35em}
        \item a $1-\gamma$ upper prediction bound $\bar{Q}_\tau$ on the FDP in the list of discoveries returned by AS\;
    \end{alglist}
}
\vspace{-0.5em}
\dottedhfill \\
\setstretch{1.4}
\If{$\tau = 0$}{
    \Return $\bar{Q}_\tau = 0$\;
}
$B := c/(1-\lambda)$\;
$R := 1/(1+B)$\;
$d_{\text{max}} := \left\lfloor  \frac{\alpha(m + 1)}{\left(\alpha + B\right)} \right\rfloor$\;
Compute $u := u_\gamma(\Delta_{d_{\max}})$ (in practice we typically draw $u\in\{\rho_d,\sig_d$\}, where $\rho_d$ and $\sigma_d$ are pre-computed using MC simulations; see Section \ref{supsec:c_star_simulations})\;
\For{$i=1$ \KwTo $m$} {
    $D_i$ be the number of $-1$'s in $\{L_1,\ldots,L_i\}$\;
    $T_i$ be the number of $1$'s in $\{L_1,\ldots,L_i\}$\;
    $\bar{V}_i$ be defined as in \eqref{eq:band_V} using $\xi_d := \min\{i\,:\,F_{\text{NB}(d, R)}(i) \ge 1-u\}$, the $1-u$ quantile of the negative binomial $\text{NB}(d, R)$\;
    $\bar{G}_i = (\max_{j=1,\ldots,i}\lceil T_j - \bar{V}_j \rceil) \vee 0$\;
}
\eIf{$T_{\tau} = 0$}{
    \Return $\bar{Q}_\tau = 0$\;
}{
    \Return $\bar{Q}_\tau = (T_\tau - \bar{G}_\tau) / (T_\tau \vee 1)$;
}
\end{algorithm}

\begin{algorithm}
\textbf{\caption{TDC-SB}\label{algorithm:TDC-SBm}}
\SetKwFor{For}{For}{let:}{endfor}
\SetKwIF{If}{ElseIf}{Else}{If}{then:}{else if:}{else:}{endif} 
\KwIn{
    \begin{alglist}
        \item an FDR threshold $\alpha$\;
        \item a confidence parameter $\gamma$\;
        \item the number of hypotheses $m$\;
        \item the threshold $\tau := k_\text{AS}$ returned by AS\;
        \item competition parameters $c$ and $\lambda$;
        \item a list of labels $L_i \in \{1, -1, 0\}$ where $1$ indicates a target win, $-1$ a decoy win and $0$ an uncounted hypothesis (sorted so that the corresponding scores $W_i$ are decreasing: $W_1\ge W_2\ge\dots\ge W_m$)\;
    \end{alglist}
}
\KwOut{
    \begin{alglist}
        \vspace{-0.35em}
        \item a $1-\gamma$ upper prediction bound $\bar{Q}_\tau$ on the FDP in the list of discoveries returned by AS.
    \end{alglist}
}
\vspace{-0.5em}
\dottedhfill \\
\setstretch{1.4}
\If{$\tau = 0$}{
    \Return $\bar{Q}_\tau = 0$\;
}
$B := c/(1-\lambda)$\;
$d_{\text{max}} := \left\lfloor  \frac{\alpha(m + 1)}{\left(\alpha + B\right)} \right\rfloor$\;
Compute $z := z(\gamma)$, an approximated $1-\gamma$ quantile of $\max_{d\in\Delta_{d_{\max}}}\hat U_d$, where $\hat U_d\coloneqq (U_d-Bd)/\sqrt{B(1+B)d}$ (in practice, the quantile would be pre-computed using MC simulations rather than being computed on demand)\;
\For{$i=1$ \KwTo $m$} {
    $D_i$ be the number of $-1$'s in $\{L_1,\ldots,L_i\}$\;
    $T_i$ be the number of $1$'s in $\{L_1,\ldots,L_i\}$\;
    $\bar{V}_i$ be defined as in \eqref{eq:band_V} using $\xi_d := \lfloor z\sqrt{B(1+B)d}+Bd \rfloor$\;
    $\bar{G}_i = (\max_{j=1,\ldots,i}\lceil T_j - \bar{V}_j \rceil) \vee 0$\;
}
\eIf{$T_{\tau} = 0$}{
    \Return $\bar{Q}_\tau = 0$\;
}{
    \Return $\bar{Q}_\tau = (T_\tau - \bar{G}_\tau) / (T_\tau \vee 1)$;
}
\end{algorithm}

\vspace*{\fill}

\begin{algorithm}
\textbf{\caption{TDC-KRB}\label{algorithm:TDC-KRBm}}
\SetKwFor{For}{For}{let:}{endfor}
\SetKwIF{If}{ElseIf}{Else}{If}{then:}{else if:}{else:}{endif} 
\KwIn{
    \begin{alglist}
        \item an FDR threshold $\alpha$\;
        \item a confidence parameter $\gamma$\;
        \item the threshold $\tau := k_\text{AS}$ returned by AS\;
        \item competition parameters $c$ and $\lambda$;
        \item a list of labels $L_i \in \{1, -1, 0\}$ where $1$ indicates a target win, $-1$ a decoy win and $0$ an uncounted hypothesis (sorted so that the corresponding scores $W_i$ are decreasing: $W_1\ge W_2\ge\dots\ge W_m$)\;
    \end{alglist}
}
\KwOut{
    \begin{alglist}
        \vspace{-0.35em}
        \item a $1-\gamma$ upper prediction bound $\bar{Q}_{\tau}$ on the FDP in the list of discoveries returned by AS.
    \end{alglist}
}
\vspace{-0.5em}
\dottedhfill \\
\setstretch{1.4}
\If{$\tau = 0$}{
    \Return $\bar{Q}_\tau = 0$\;
}
$B := c/(1-\lambda)$\;
$C := -\log(\gamma) / \log\left(1 + \frac{1-\gamma^{B}}{B}\right)$\;
\For{$i=1$ \KwTo $m$} {
    $D_i$ be the number of $-1$'s in $\{L_1,\ldots,L_i\}$\;
    $T_i$ be the number of $1$'s in $\{L_1,\ldots,L_i\}$\;
    $\bar{V}_i = \lfloor C(1+BD_i)\rfloor$\;
    $\bar{G}_i = (\max_{j=1,\ldots,i}\lceil T_j - \bar{V}_j \rceil) \vee 0$\;
}
\eIf{$T_{\tau} = 0$}{
    \Return $\bar{Q}_\tau = 0$\;
}{
    \Return $\bar{Q}_\tau = (T_\tau - \bar{G}_\tau) / (T_\tau \vee 1)$;
}
\end{algorithm}

\clearpage

\subsection{A Monte-Carlo Approximation of the confidence parameter}
\label{supsec:c_star_simulations}

We next present an efficient algorithm to approximate the $u_\gamma(\Delta)$, as defined in \eqref{eq:c_star}.

For simplicity, we consider only the single-decoy case $R = 1/2$, but note that the following results generalise to multiple decoys by using $R = (1-\lambda)/(c + 1 - \lambda)$. Consider a sequence $B_i$ of iid Bernoulli(1/2) RVs
and define $X_i:=\sum_{j=1}^i B_j$, $Y_i := i-X_i$, and $i_d=\inf\{i\,:\,Y_i=d\}$.
Note that $P(i_d<\infty)=1$ so we can also define $U_d := X_{i_d}$ and it is clear that 
$$U_d\sim NB(d,1/2).$$
For $d\in\N$ define the function $G_d:\N\mapsto[0,1]$ as
$$ G_d(i) := 1 - F_{NB(d,1/2)}(i-1),$$
where $F_{NB(d,1/2)}$ denotes the CDF of a $NB(d,1/2)$ RV, and let
$$
\tilde{U}_d := G_d(U_d) .
$$
Denote $\Delta = \Delta(d_{\max}) := \{1,\ldots,d_{\max}\}$.
We are interested in numerically approximating $u_\gam$ defined as
\[
u_\gamma = u_\gamma(\Delta) := \max_{u\in R_\Delta} P\left(\min_{d\in\Delta} \tilde{U}_d \le u \right)\le \gamma ,
\]
where $R_\Delta=\{P(NB(d,1/2) \ge k)\,:\,k\in\mathbb{N},d\in\Delta\}$ is the range of values $\{\tilde{ U}_d\}_{d\in\Delta}$ can attain.
With
$$
\mathcal{M}_d := \min_{k:k\leq d}\tilde{U}_k ,
$$
\[
u_\gamma(\Delta) = \max_{u\in R_\Delta} P\left(\mathcal{M}_{d_{\max}} \le u \right)\le \gamma ,
\]
showing that $u_\gamma(\Delta)$ is essentially a $\gamma$-quantile of $\mathcal{M}_{d_{\max}}$.

The following procedure uses Monte-Carlo simulations to simultaneously approximate these quantiles
for all values $d=d_{\max}\le d_0$ (in practice we used $d_0=50000$) and for commonly used confidence levels $1-\gam$
(we used 50\%, 80\%, 90\%, 95\%, 97.5\% and 99\%).

For $j$ in $1,\ldots,N$, where $N$ is the total number of Monte-Carlo simulations (we used $N=2 \cdot 10^6$), 
we draw a sequence of iid $\operatorname{Bernoulli}(1/2)$ RVs
$B_1^j,B_2^j,\ldots,B_n^j$ until we have $d_0$ failures, i.e., until the first $n=n(j)$ for which
$n - \sum_{k=1}^n B_k^j = d_0$.
For $d=1,2,\dots,d_0$ we define
$$
i_d^j = \min\{n\in\mathbb{N}:n-\sum_{k=1}^n B_k^j = d\}
$$
as a realization of the RV $i_d$, and we compute a ``path" $U_1^j, U_2^j,\ldots, U_{d_0}^j$ as
\[
U_d^j = \sum_{i=1}^{i_d^j}B_i^j \qquad d=1,2,\dots d_0.
\]
Then, for each such sampled path we inductively compute the cumulative minimum $\mathcal{M}_d^j$ using
$$
\mathcal{M}_d^j = \min\big\{\tilde{U}_d,\mathcal{M}_{d-1}\big\} = \min\big\{G_d(U_d^j),\mathcal{M}_{d-1}^j\big\}  \qquad d=1,2,\dots d_0,
$$
where $\mathcal{M}_0^j\coloneqq1$.

Let $[N]\coloneqq\{1,2,\dots,N\}$, and for $d=d_{\max}\in\{1,2,\dots,d_0\}$ let $S_d\coloneqq\left\{\mathcal{M}_d^j:j\in[N]\right\}$ be the set of all
observed values of $\mathcal{M}_d^j$ across our $N$ MC samples. If $N$ is large enough then practically (i.e., with probability $\approx1$)
there are values $\rho_d,\sig_d\in S_d$ such that
\begin{itemize}
    \item $\rho_d < \sig_d$,
    \item $(\rho_d , \sig_d) \cap S_d = \emptyset$, and
    \item $\left|\{j\in[N]:\mathcal{M}_d^j\leq \rho_d\}\right| \le \gamma N < \left|\{j\in[N]:\mathcal{M}_d^j\leq \sig_d \}\right|$ .
\end{itemize}

Ignoring the discrete effect, we may conservatively take $\tilde{u}_\gamma(\Delta_d)= \rho_d$ as an estimate for $u_\gamma(\Delta_d)$.
Alternatively, we can again introduce randomization to get a small power boost. Specifically, let
$r_d = |\{j\in[N]:\mathcal{M}_d^{j}\leq \rho_d\}| / N$ and  $s_d = |\{j\in[N]:\mathcal{M}_d^{j}\leq \sig_d\}| / N$.
With $w^j_d = (\gam-r_d)/(s_d-r_d)$, we have $w_d r_d+(1-w_d)s_d=\gam$. 
Then, given a sample, we flip a coin to determine whether to use $\tilde{u}_\gamma(\Delta(d=d_{\max}))= \rho_d$ (with probability $w_d$)
or $\tilde{u}_\gamma(\Delta(d=d_{\max}))= \sig_d$ (with probability $1-w_d$). All the results in the paper were obtained using this
randomized version.

\subsection{Peptide detection}
\label{supsec:PXIDs}

We downloaded 10 MS/MS spectrum files from the Proteomics Identifications Database, PRIDE \cite{martens:pride}.
Each spectrum file was obtained by iteratively and randomly selecting PRIDE projects that were submitted no later than 2018,
and then randomly selecting an \texttt{mgf} file from each of these projects. If an \texttt{mgf} file was found,
param-medic was then used to check whether the selected file was high-resolution and that no variable modifications were detected for simplicity \cite{may:detecting}.
If not, the next project was selected until 10 such spectrum files were acquired.
For each spectrum file, a protein \texttt{FASTA} file was obtained from the associated project in PRIDE with the exception of human data,
which we used the UniProt database UP000005640 (downloaded 9/11/2021). Table \ref{suptable:PRIDE} reports the list of spectrum files used.

For each of the 10 MS/MS spectrum files, we used Tide-index to digest the corresponding \texttt{FASTA} files
and to generate 20 randomly shuffled decoy databases using the default settings.
For each spectrum file, we used Tide-search to conduct separate searches of the target database, and each of the 20 decoy databases with the options
\texttt{--auto-precursor-window warn --auto-mz-bin-width warn}.
Only the top XCorr scoring PSM for each scan in the output search files was considered. All other options were set to their default values.
Tide was implemented in Crux v4.1.decd99ff \cite{park:rapid,mcilwain:crux,diament:faster}.

In the averaging process we randomly selected 20 sets of $d$ decoy search files, out of the 20 available files, making sure each set of $d$ files
is unique (so for $d=1$ each search file was selected exactly once). Given the target search file, and a selected set of $d$ decoy search files
we kept for each spectrum only its best matching PSM across the $d+1$ search files. We then assigned each target or decoy peptide a score, which
is that of the maximal (of the remaining) PSMs associated with that peptide (or $-\infty$ if there was no such PSM).
We then generated a list of discovered peptides by applying the max and the mirror methods for controlling the FDR on the resulting
set of target and $d$ decoy scores per target peptide. Next, the three FDP bounding procedures TDC-SB/UB/KRB were applied at confidence
level $1-\gam=0.95$, and finally we averaged the computed bounds over the 20 selected sets of $d$ decoys.

\begin{center}
\begin{table}
\begin{tabular}{|p{3cm}|p{13cm}|}
\hline
\multicolumn{1}{|c|}{Project ID} & \multicolumn{1}{|c|}{Spectrum file}\\
\hline
PXD008920 & QEP1\_ZADAM\_deg1\_2\_1\_170615.mzid\_QEP1\_ZADAM\_deg1\_2\_1\_
170615.mgf\\
PXD008996 & A431\_01uM\_DON\_3\_2.mgf  \\
PXD010504 & QX05437.mgf  \\
PXD014277 & Q06965\_MS18-017\_37\_J9(55).mgf  \\
PXD016274 & 190322-SI-0149-F5-01.mgf  \\
PXD019354 & Progenesis-Trophoplast-top5-140416.mgf  \\
PXD024284 & p2830\_NaClvsHank\_mascot.mgf  \\
PXD025130 & Q26431\_MS20-025\_Virus-purif\_1.mgf  \\
PXD029319 & MP\_16072020\_LvN\_S\_layer\_gps\_5\_DDA01.mgf  \\
PXD030118 & Q26756\_MS20-013\_C15.mgf  \\
\hline
\end{tabular}
\caption{\textbf{The PRIDE data} The list of 10 spectrum files used and their associated project IDs.}
\label{suptable:PRIDE}
\end{table}
\end{center}


\vspace{2em}
\subsection{Supplementary Figures}

\begin{figure}[h]
\centering
\begin{tabular}{ll}
\includegraphics[width=3in]{{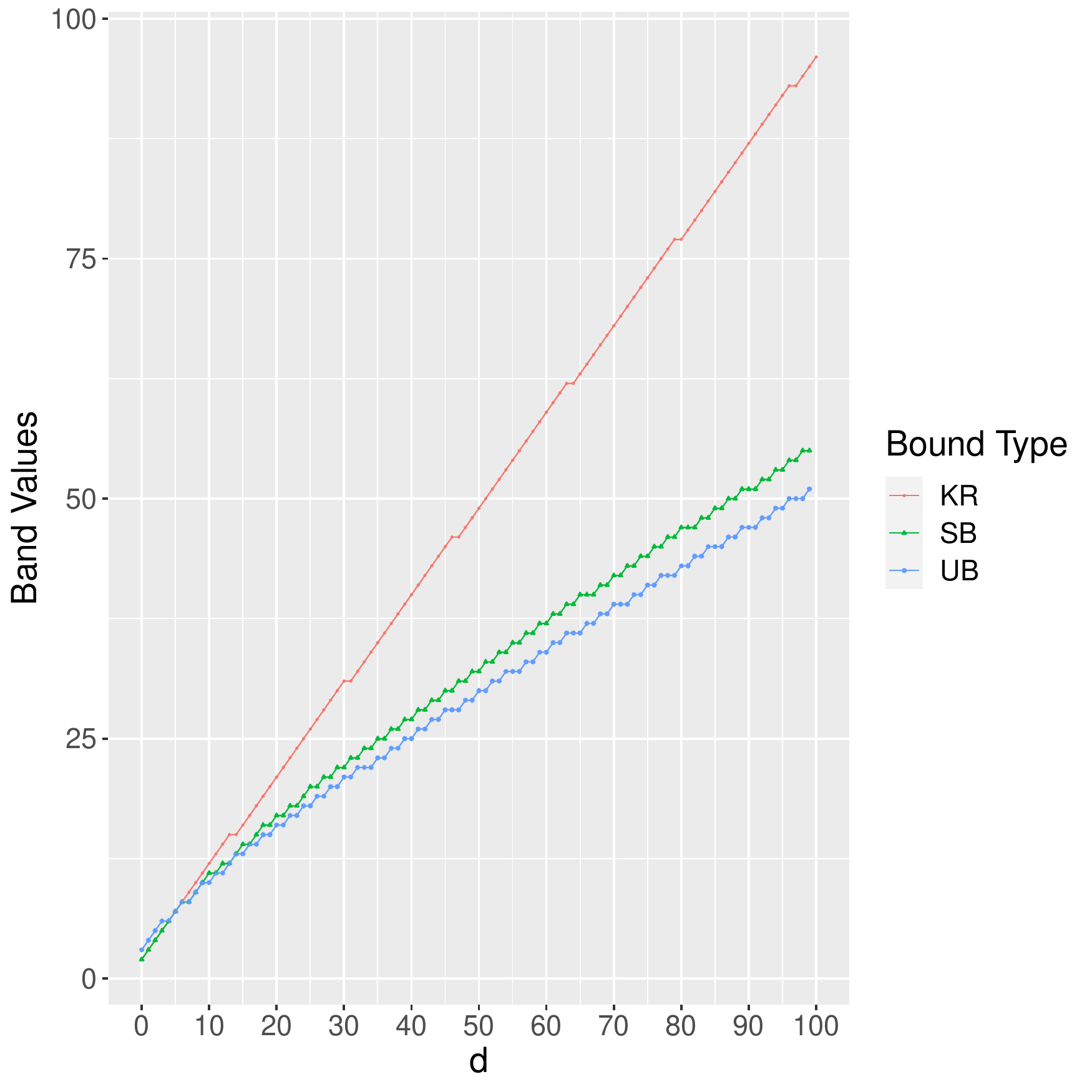}} &
\includegraphics[width=3in]{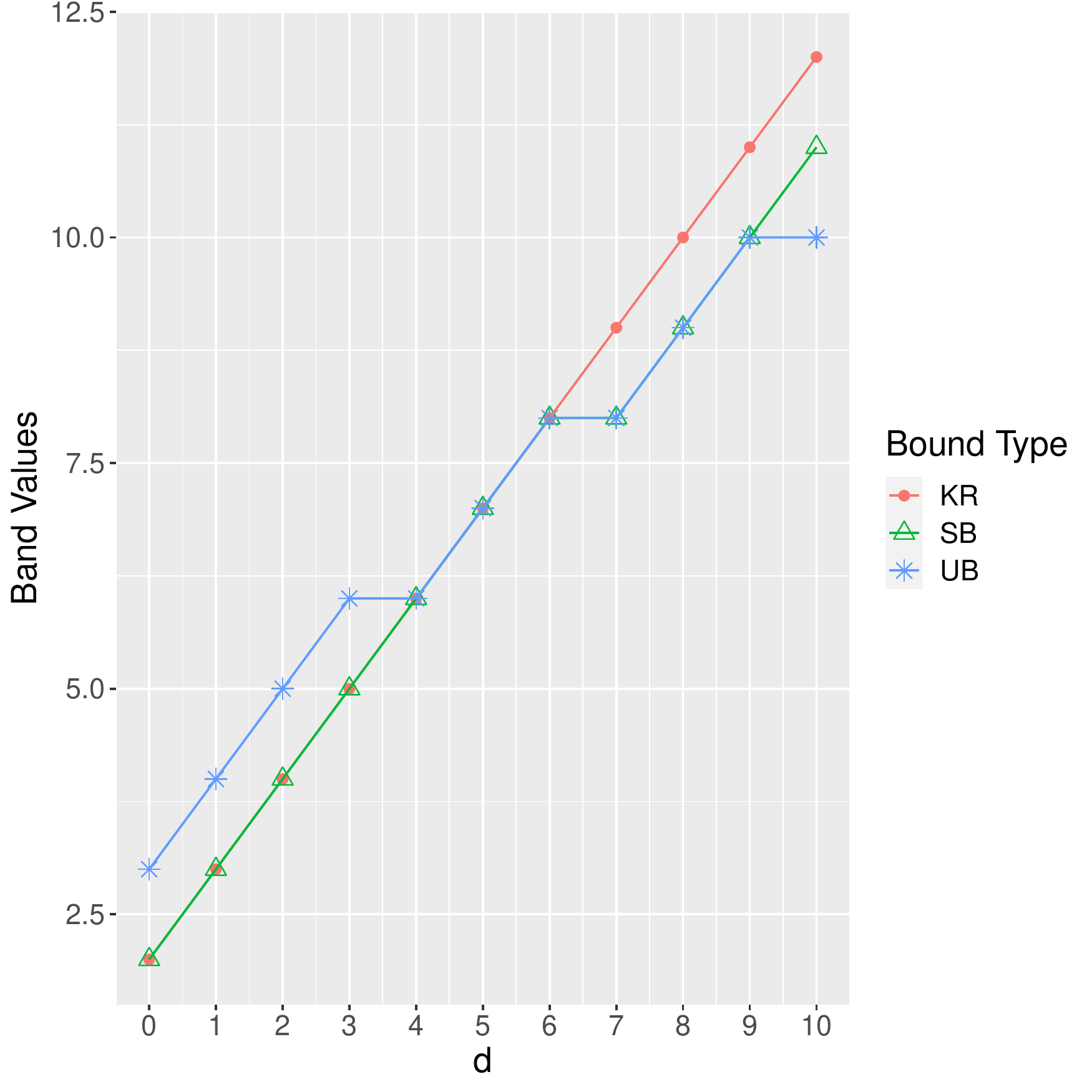} \\
\includegraphics[width=3in]{{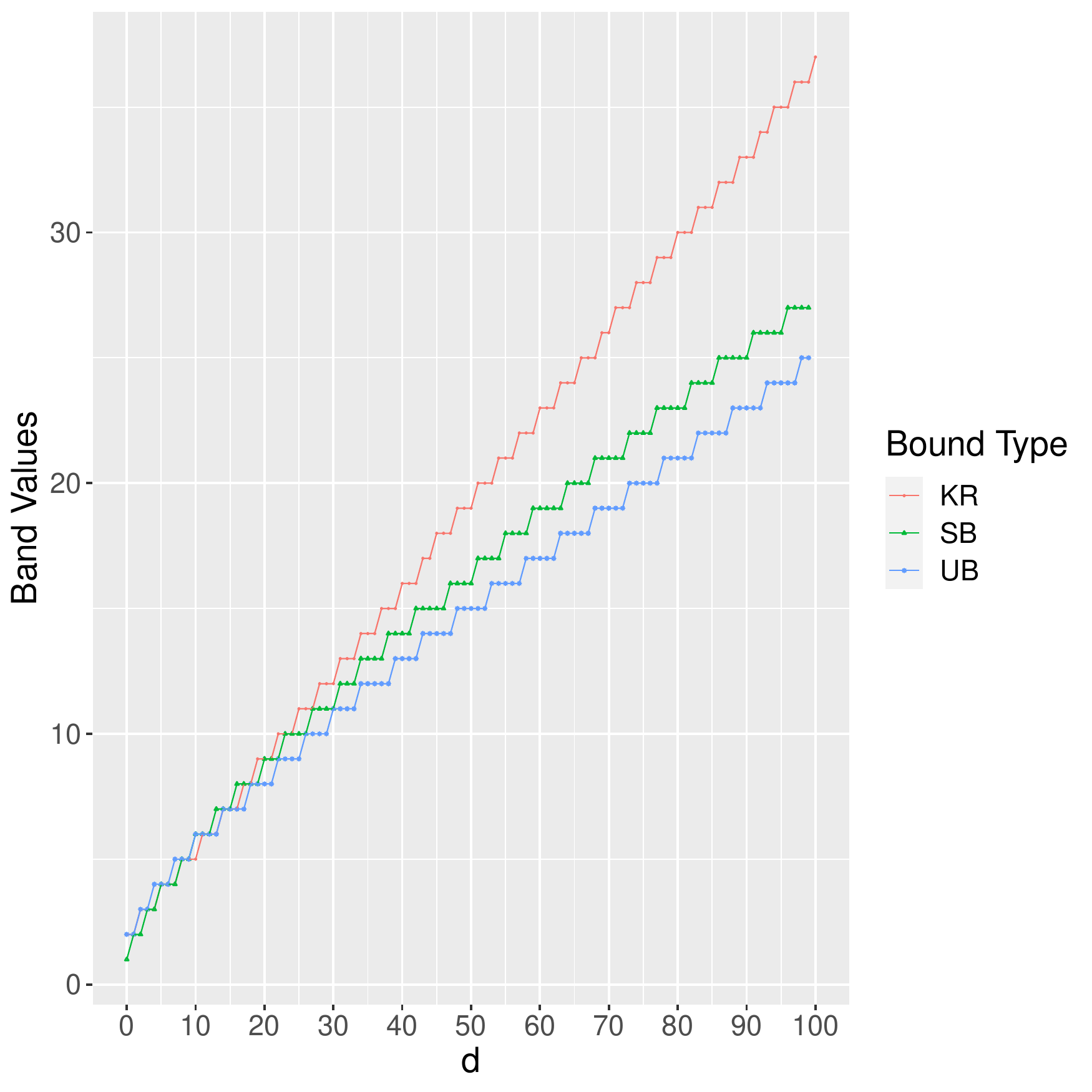}} &
\includegraphics[width=3in]{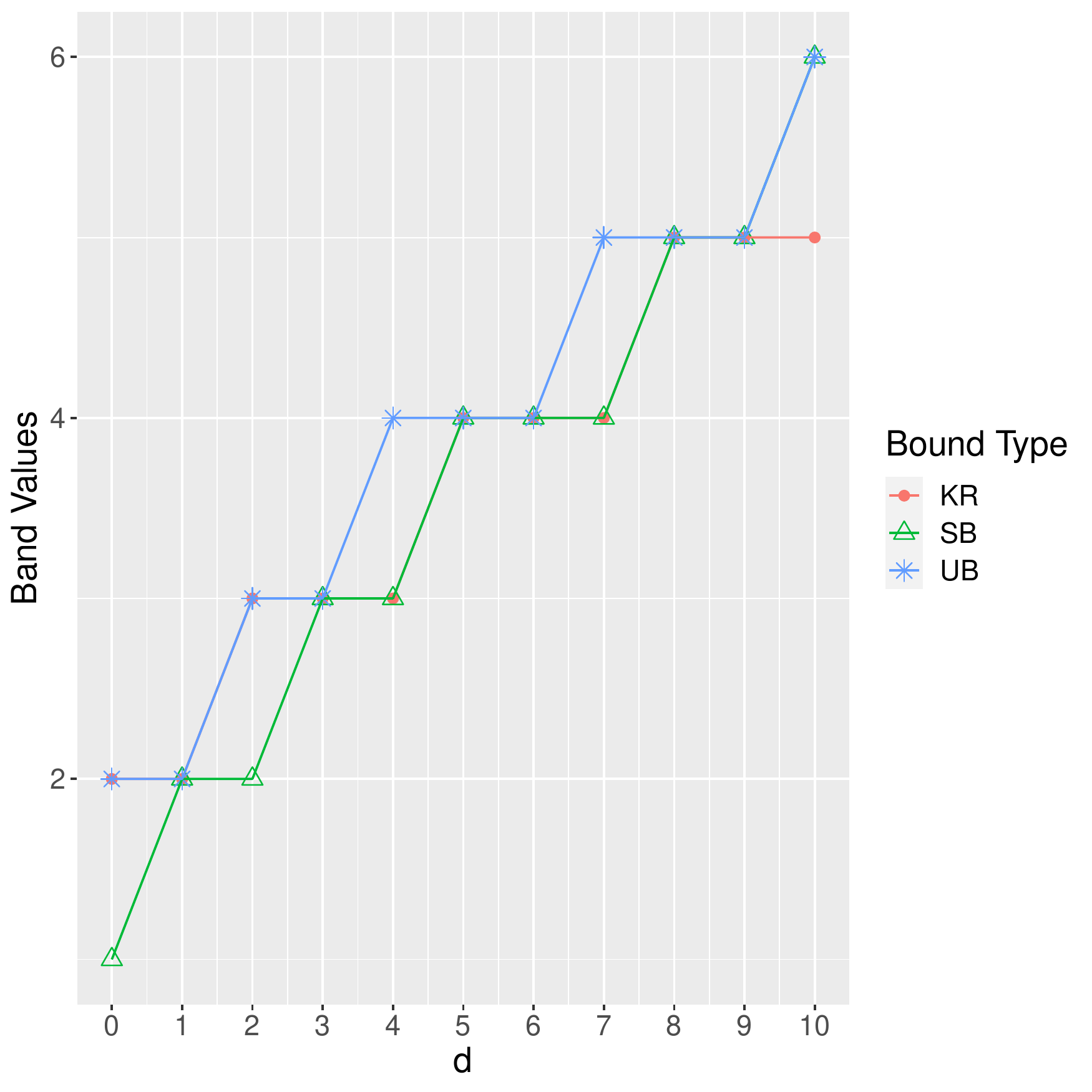}
\end{tabular}
\caption{\textbf{Deterministically comparing the $\bar V\supSB$, $\bar V\supUB$, and $\bar V\supKR$ bands with $\mathbf{B = 1/3,1/7}$.}
\label{supfig:det_comp2} For $d \in \{1, \ldots, 100\}$ we computed the value of the upper prediction bands for the number of false discoveries as described in the text with $\dmax\coloneqq100$. We set $B=1/3$ in the top row (max method with 3 decoys), and $B=1/7$ in the bottom row (max method with 7 decoys). The right figures are zoomed-in version of the left figures for small values of $d$.}
\end{figure}

\clearpage

\begin{figure}
\centering %
\begin{tabular}{ccc}
\hspace{-10ex}
$\color{blue}m = 500$, $\pi_0 = 0.5$, $\rho = 3$  & $\color{blue}m = 2000$, $\pi_0 = 0.5$, $\rho = 3$ & $\color{blue}m = 10000$, $\pi_0 = 0.5$, $\rho = 3$ \tabularnewline
\hspace{-10ex}
\includegraphics[width=2.5in]{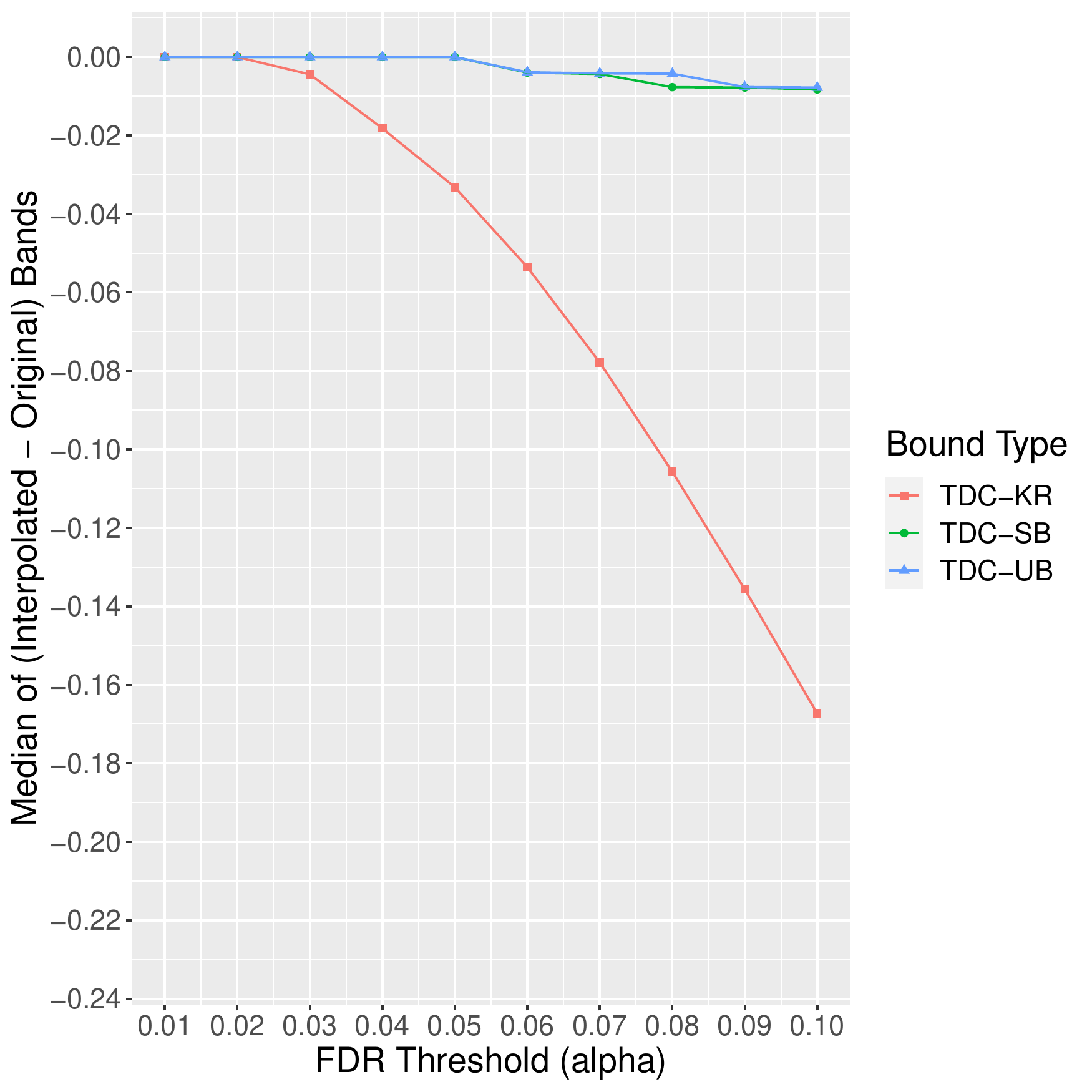}  & \includegraphics[width=2.5in]{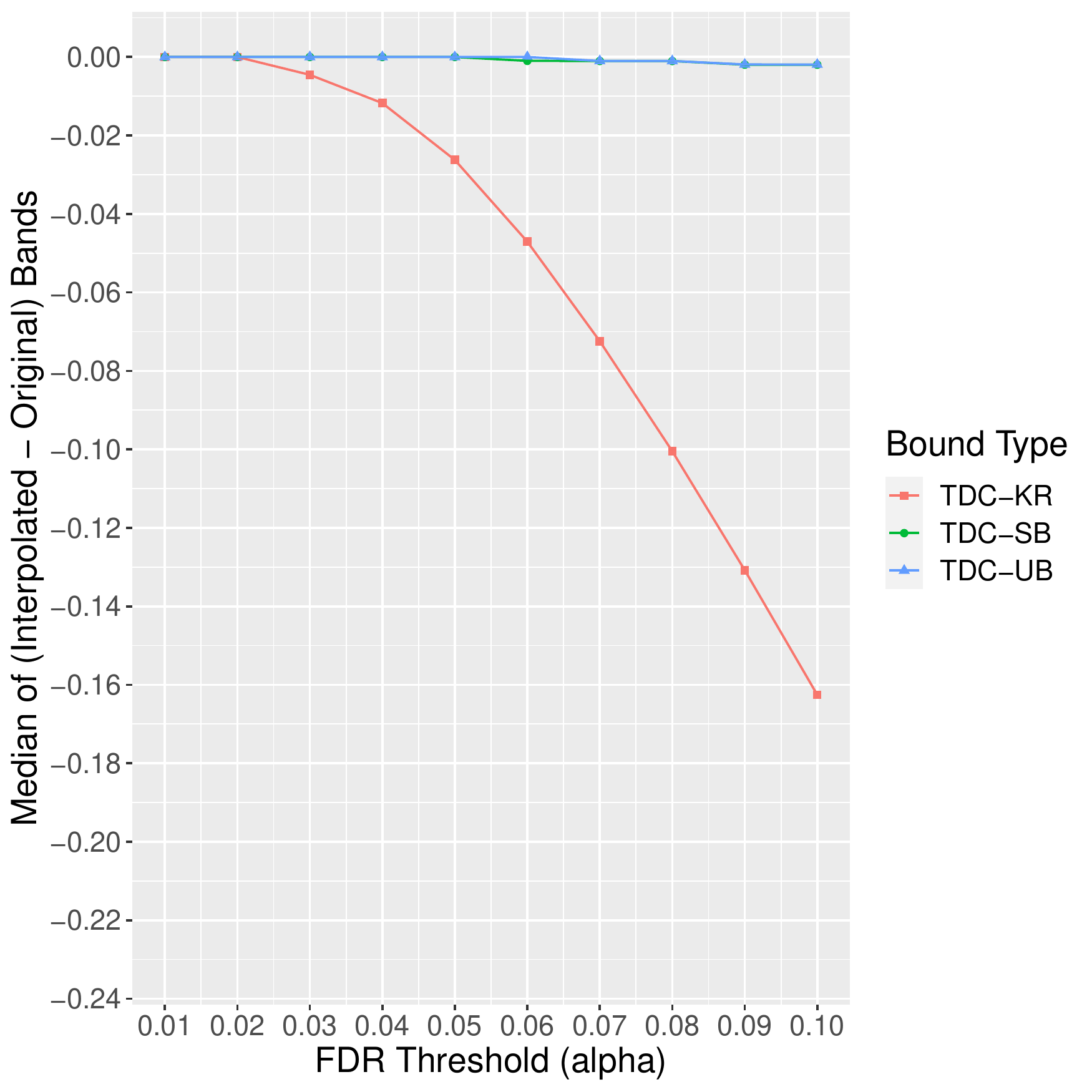}  & \includegraphics[width=2.5in]{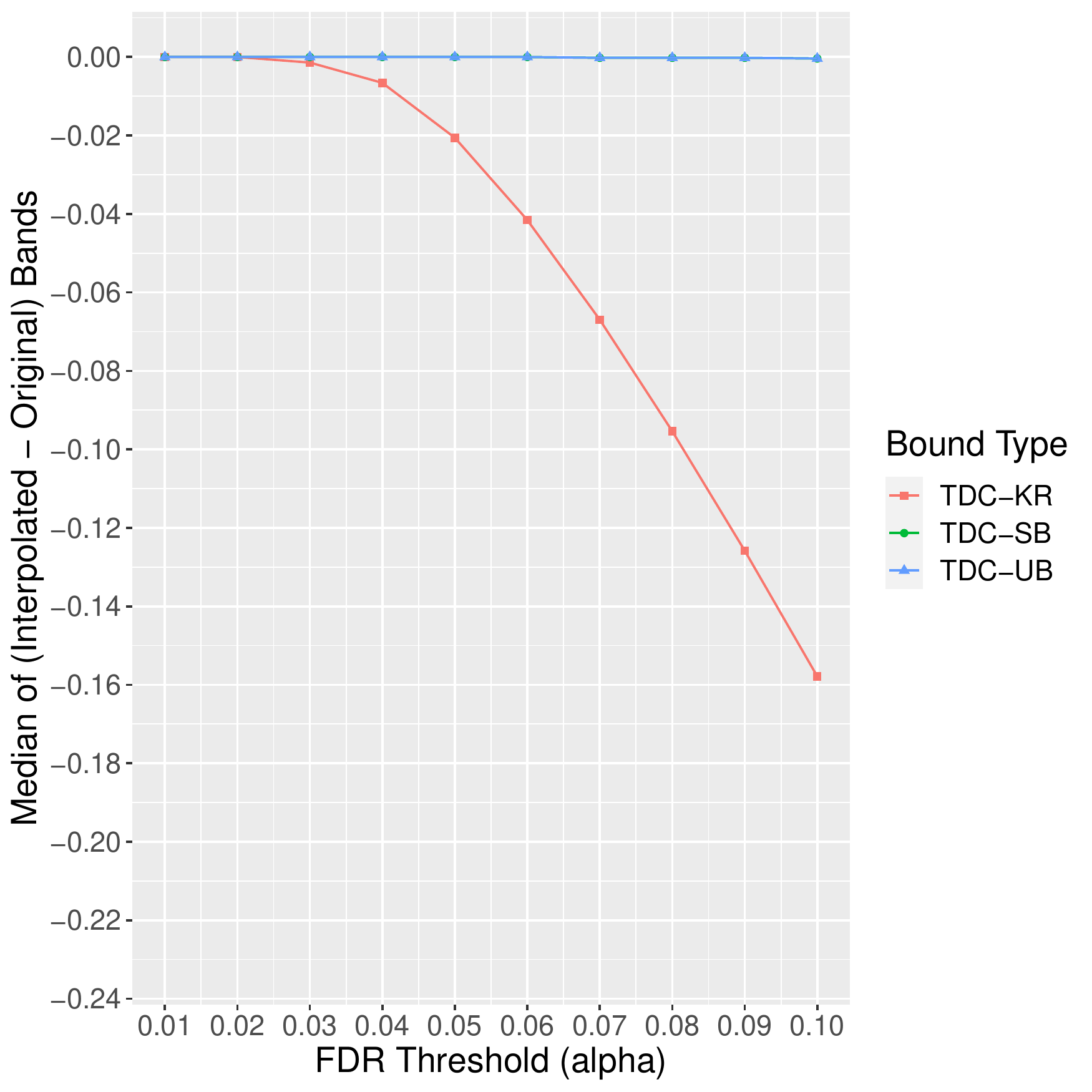} \tabularnewline
\hspace{-10ex}
$m = 2000$, $\color{blue}\pi_0 = 0.8$, $\rho = 3$  & $m = 2000$, $\color{blue}\pi_0 = 0.5$, $\rho = 3$ & $m = 2000$, $\color{blue}\pi_0 = 0.2$, $\rho = 3$ \tabularnewline
\hspace{-10ex}
\includegraphics[width=2.5in]{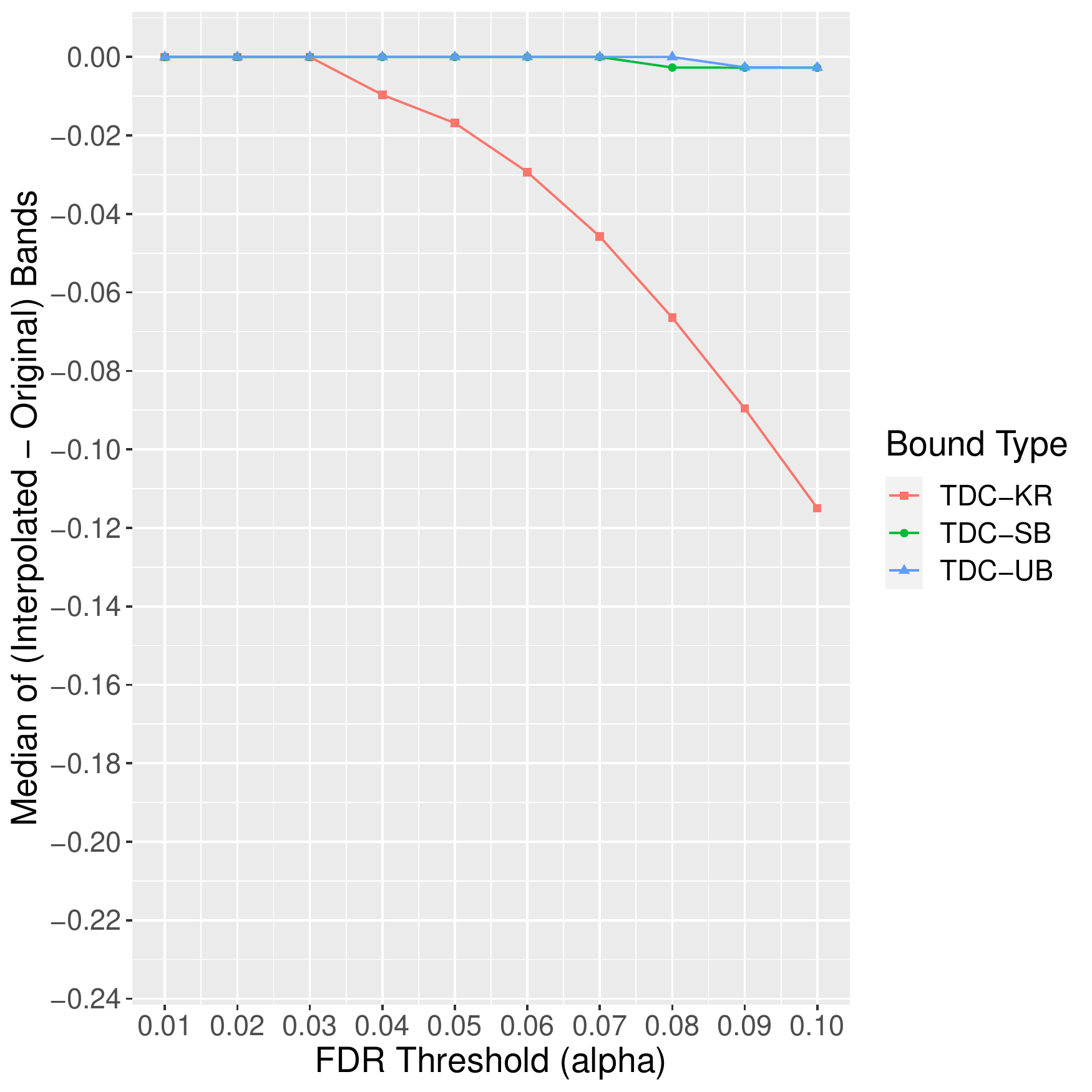}  & \includegraphics[width=2.5in]{{figures_Arya/interpolation_figure_m2000_pi0.5_calTRUE_sep3_alpha0.05_conf0.05}.pdf}  & \includegraphics[width=2.5in]{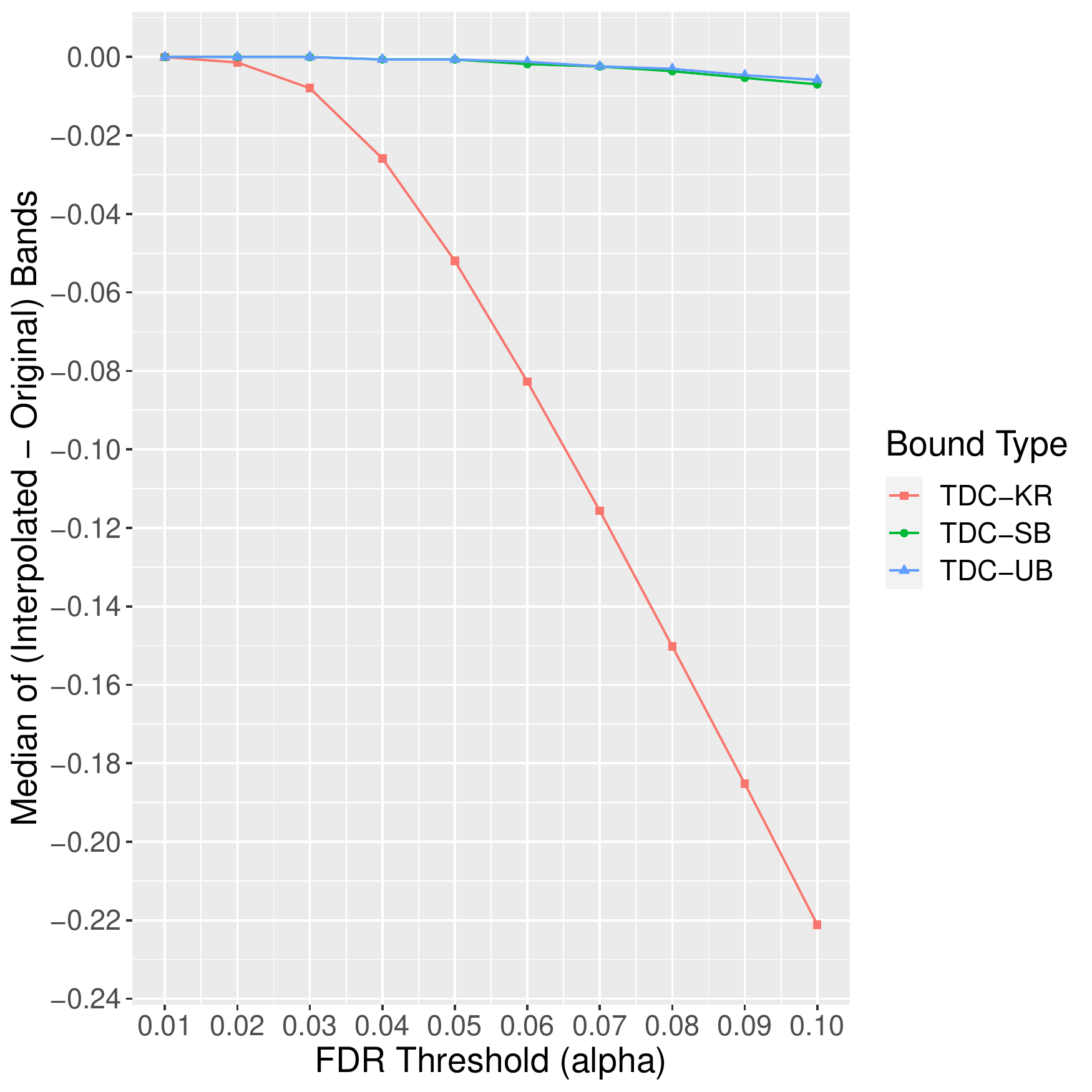} \tabularnewline
\hspace{-10ex}
$m = 2000$, $\pi_0 = 0.5$, $\color{blue}{\rho = 2.5}$  & $m = 2000$, $\pi_0 = 0.5$, $\color{blue}{\rho = 3}$ & $m = 2000$, $\pi_0 = 0.5$, $\color{blue}{\rho = 3.5}$ \tabularnewline
\hspace{-10ex}
\includegraphics[width=2.5in]{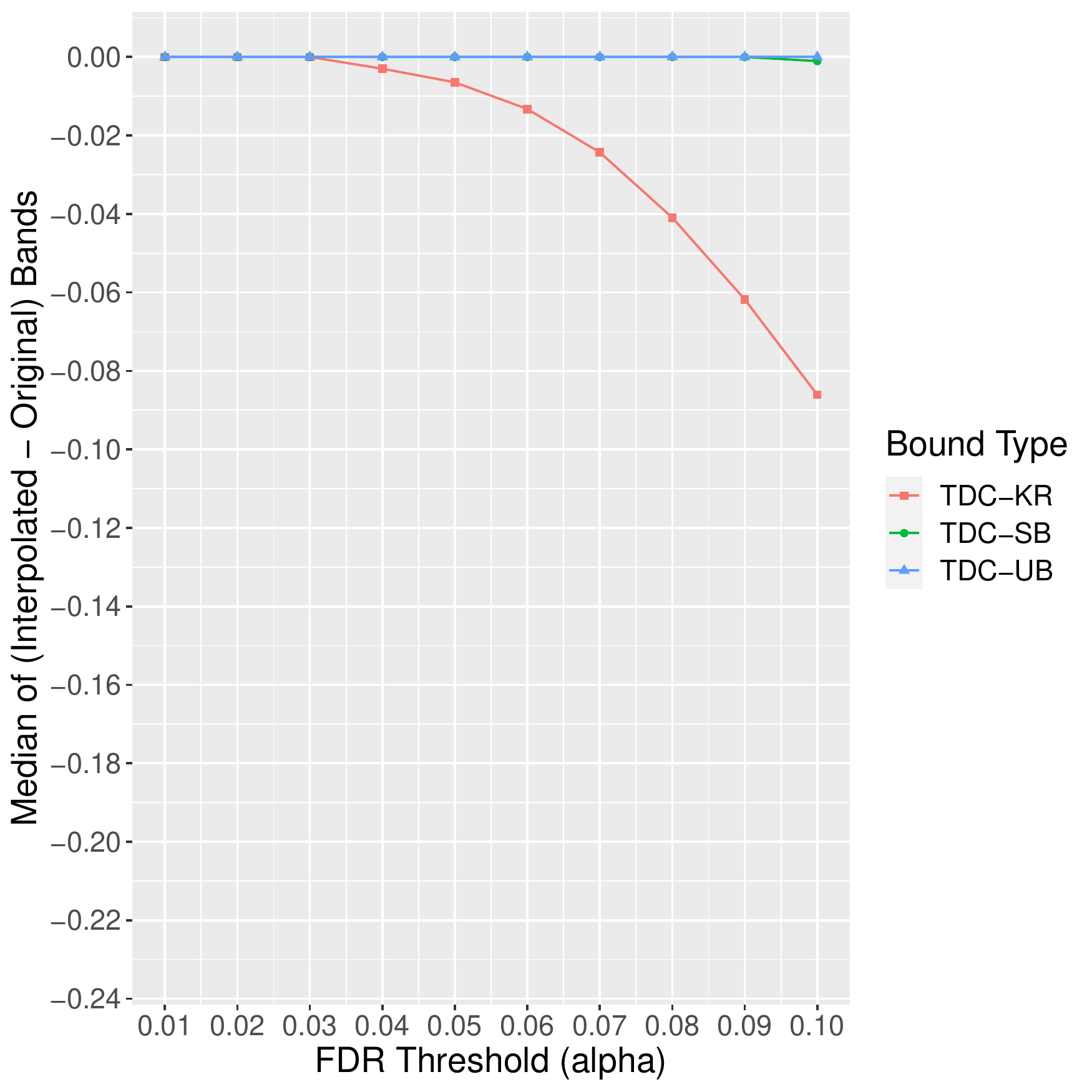}  & \includegraphics[width=2.5in]{{figures_Arya/interpolation_figure_m2000_pi0.5_calTRUE_sep3_alpha0.05_conf0.05}.pdf}  & \includegraphics[width=2.5in]{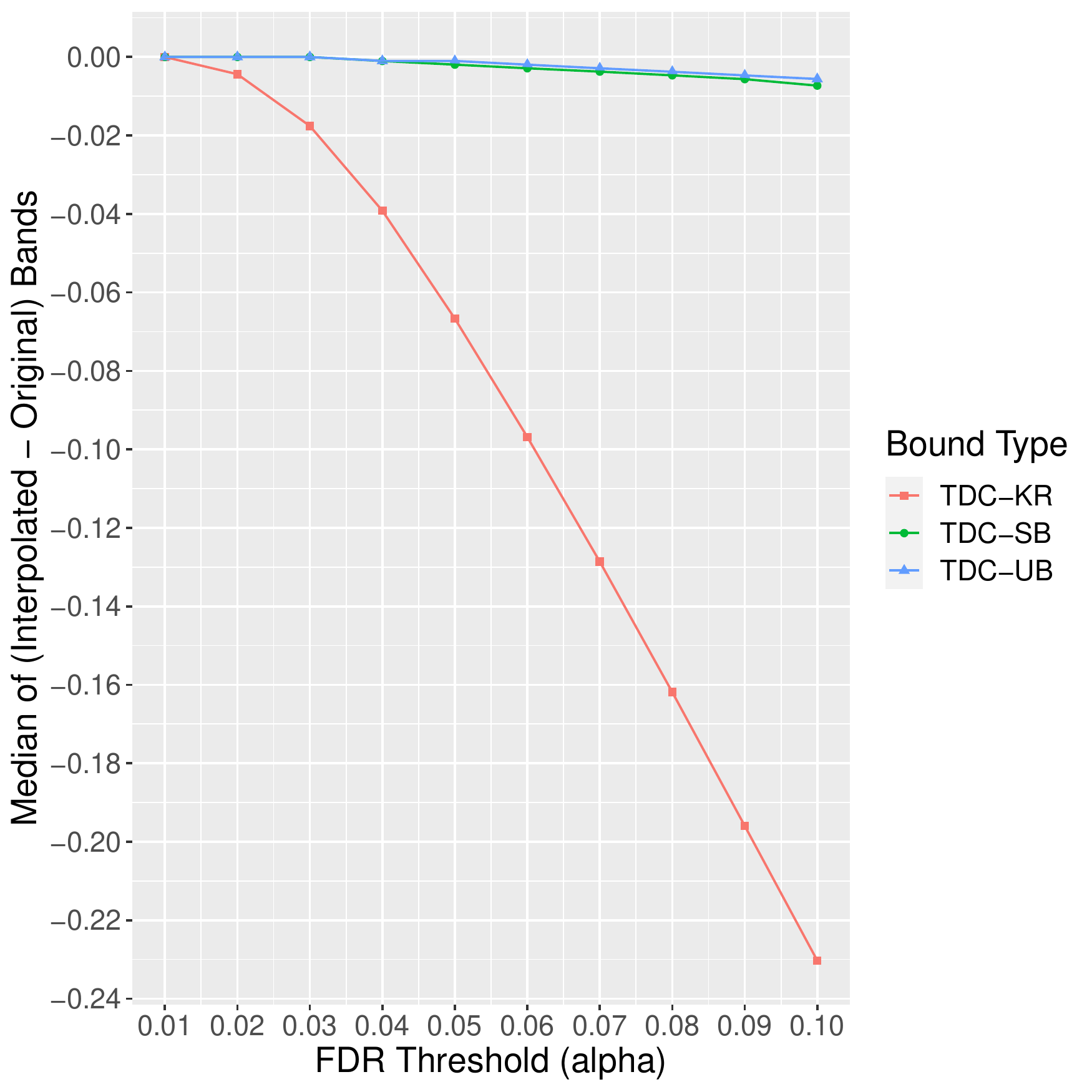} \tabularnewline
\end{tabular}
\caption{\textbf{Comparing the interpolated and non-interpolated bands.} 
Using several different parameter combinations (each specified on top of its panel) we looked at the median (over 20k datasets) of the difference between
the interpolated and non-interpolated bound on TDC's FDP.
\label{supfig:interpolation}}
\end{figure}

\clearpage

\begin{figure}
\centering %
\begin{tabular}{ccc}
\hspace{-10ex}
$m = 500$  & $m = 2000$ & $m = 10000$ \tabularnewline
\hspace{-10ex}
\includegraphics[width=2.5in]{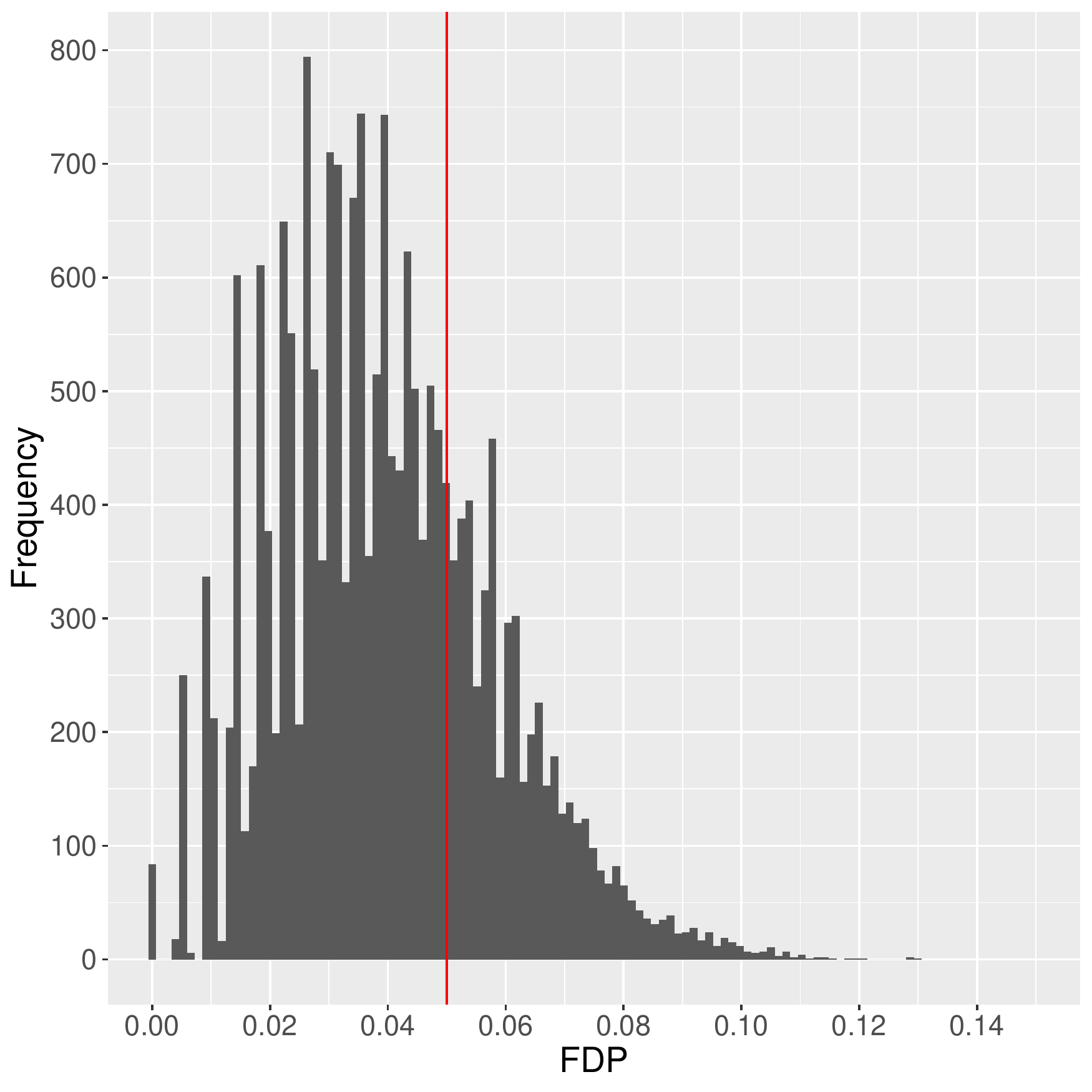}  & \includegraphics[width=2.5in]{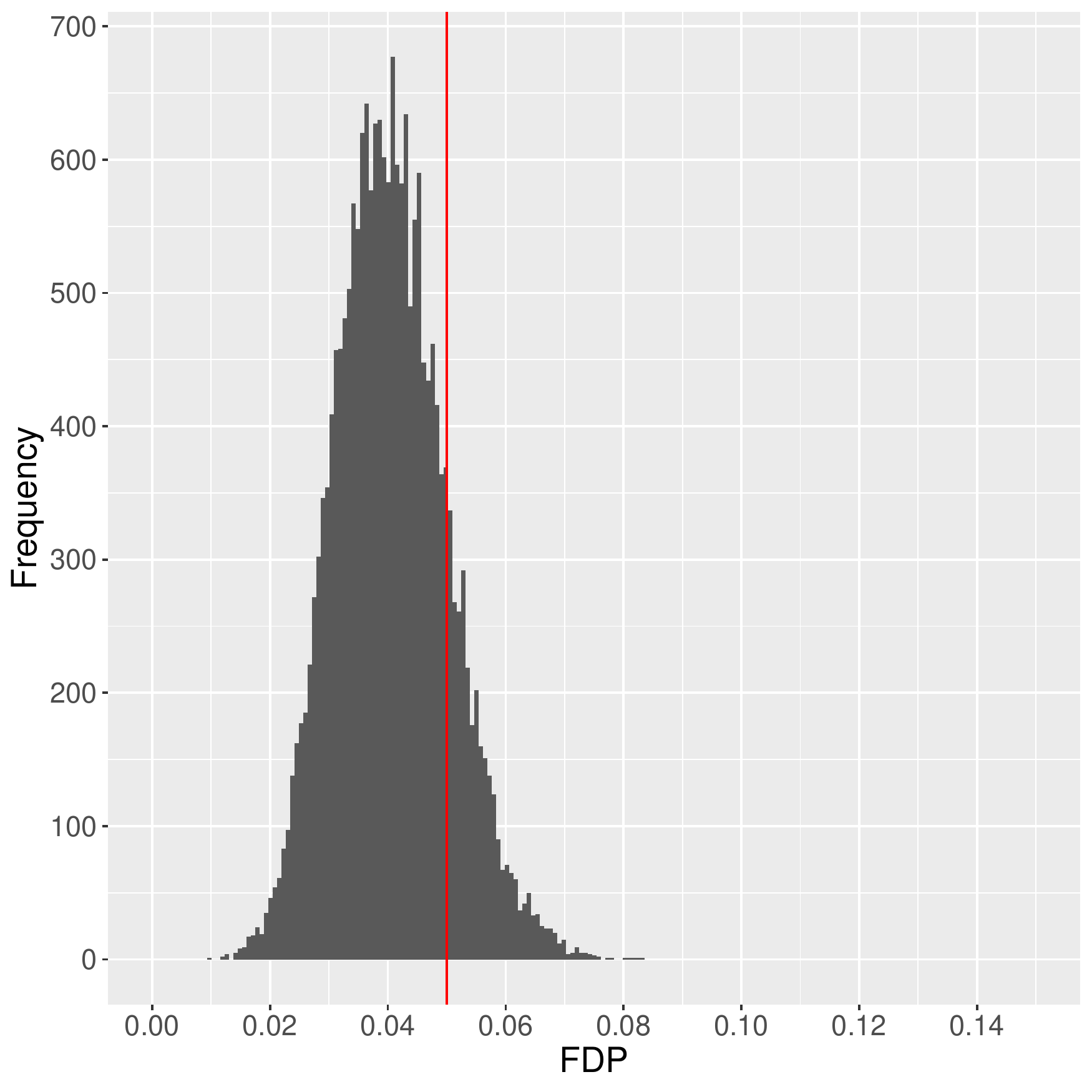}  & \includegraphics[width=2.5in]{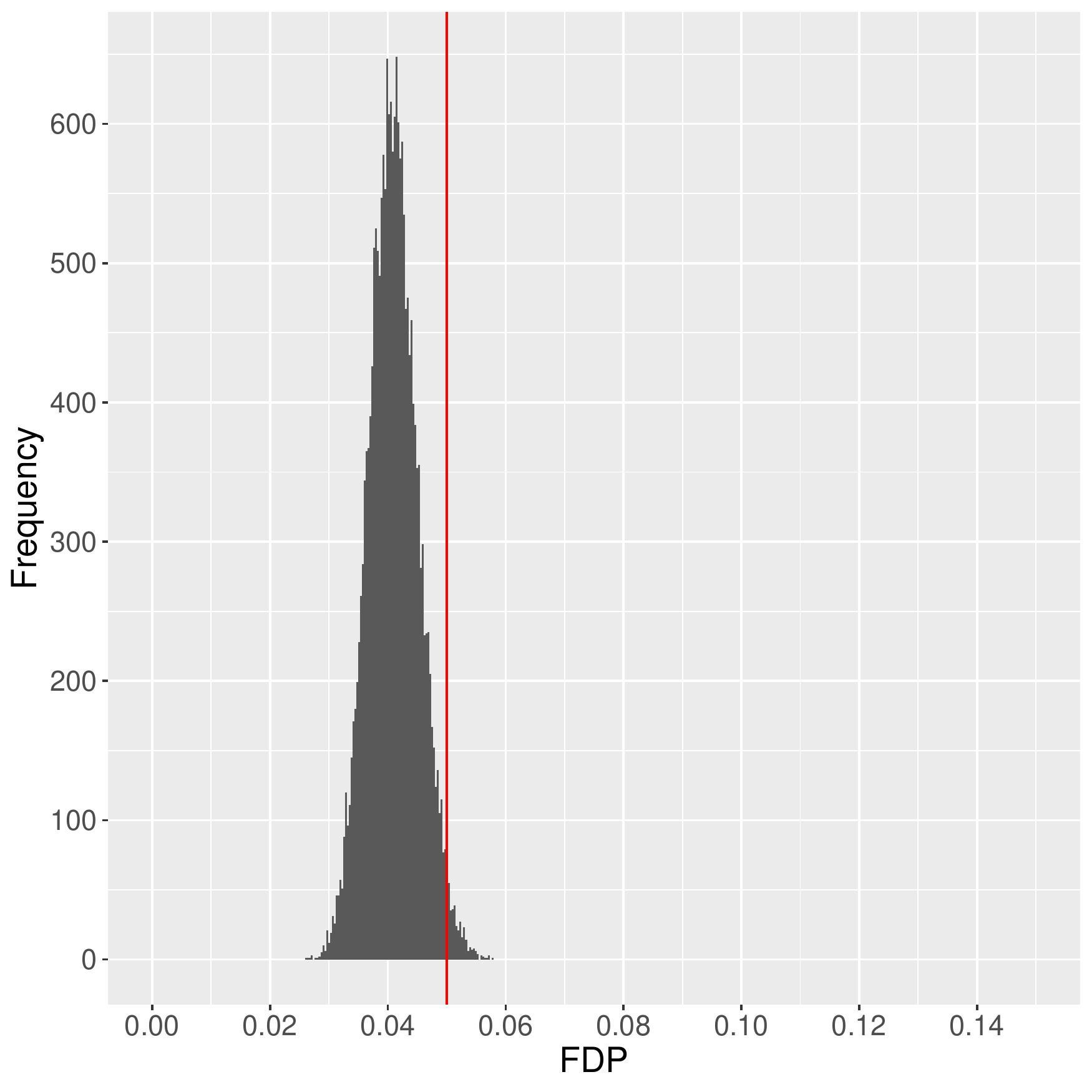} \tabularnewline
\hspace{-10ex}
\includegraphics[width=2.5in]{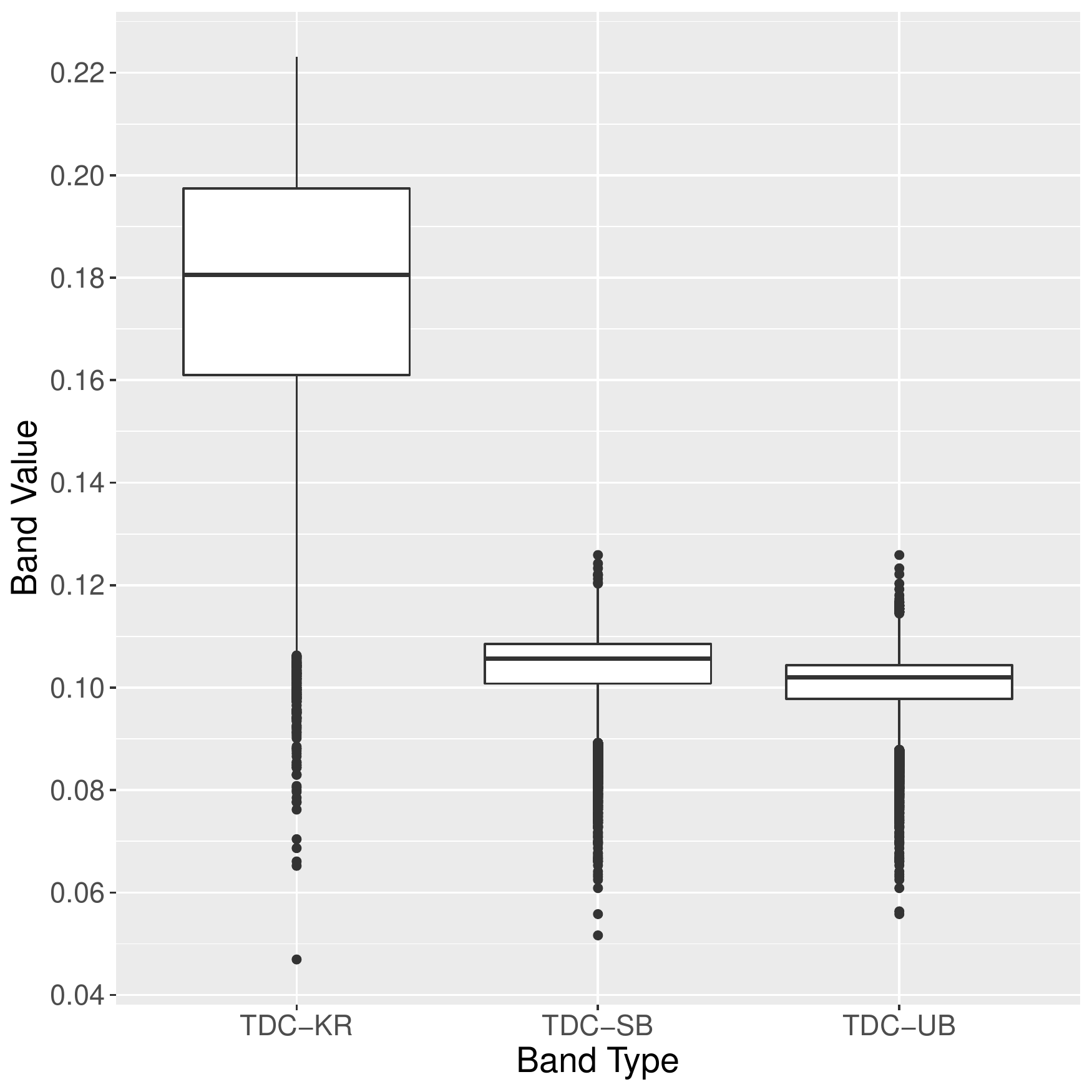}  & \includegraphics[width=2.5in]{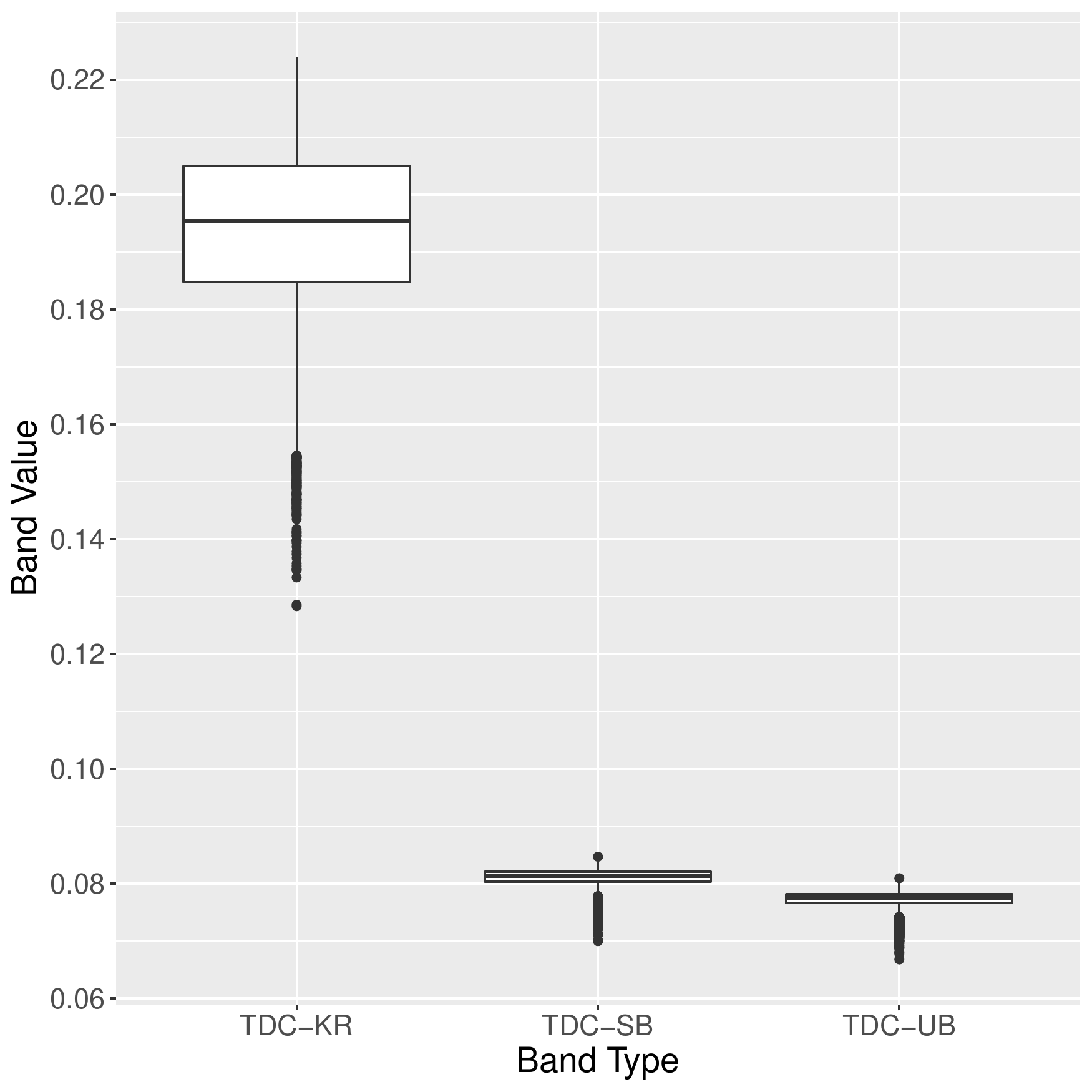}  & \includegraphics[width=2.5in]{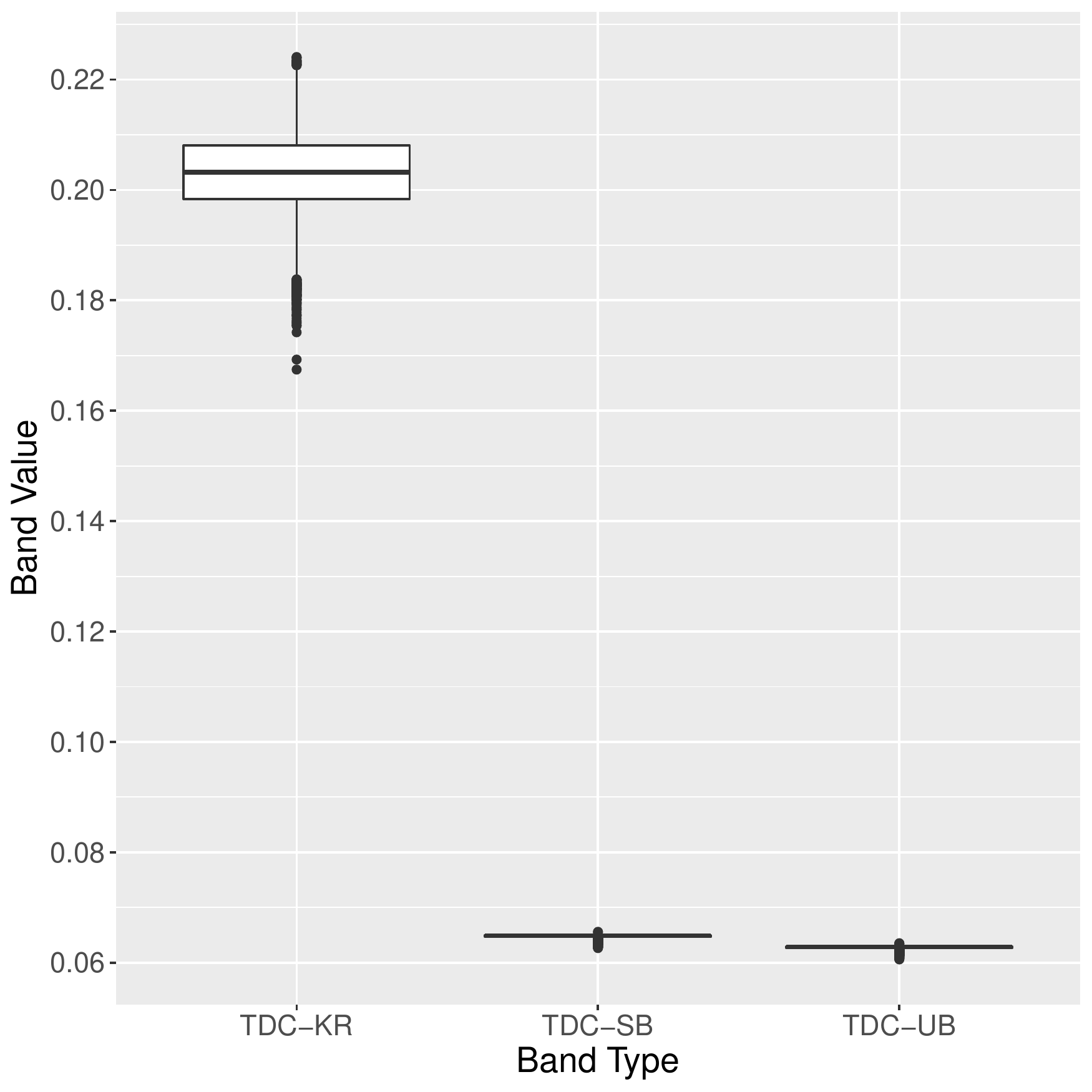} \tabularnewline
\hspace{-10ex}
\includegraphics[width=2.5in]{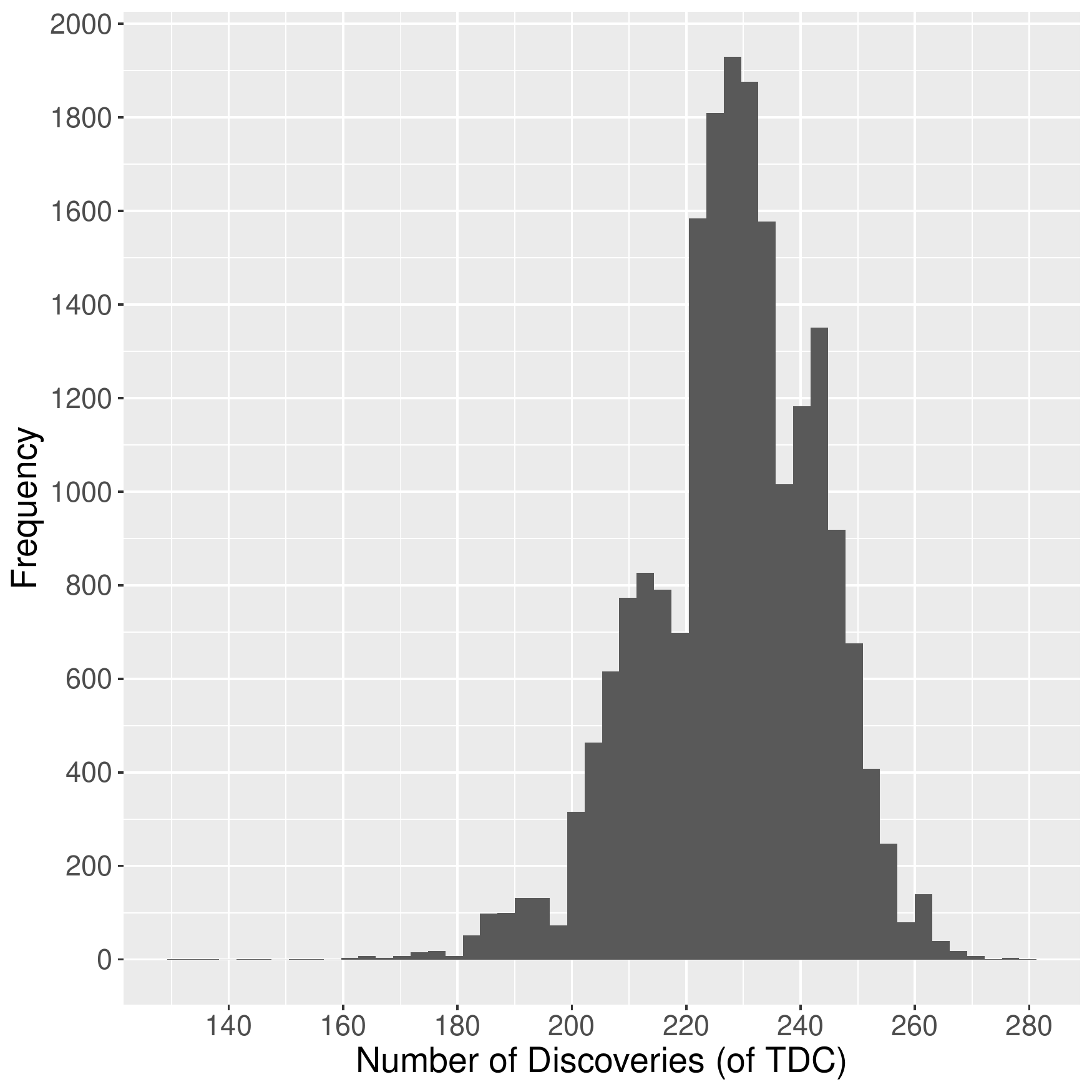}  & \includegraphics[width=2.5in]{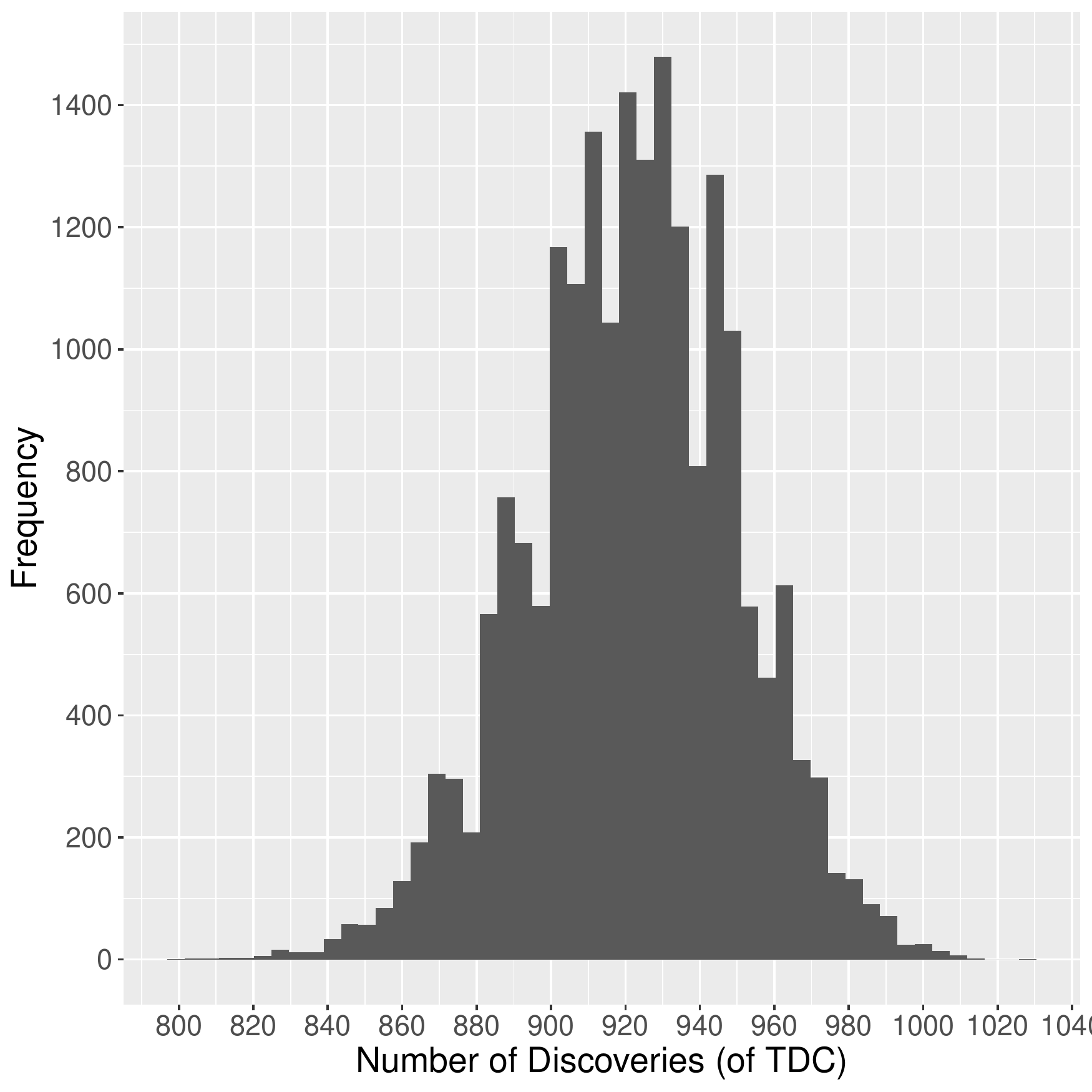}  & \includegraphics[width=2.5in]{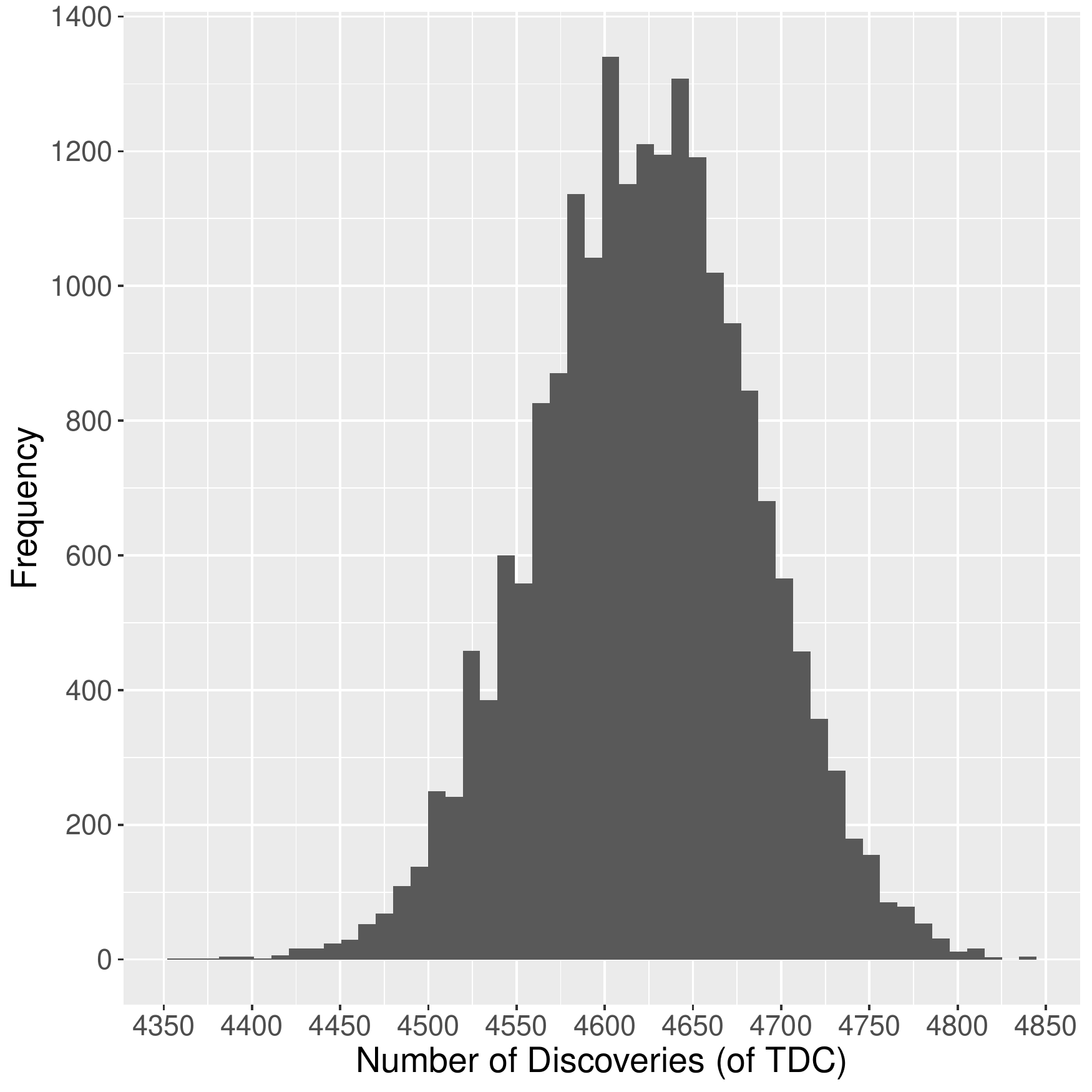} \tabularnewline
\end{tabular}
\caption{\textbf{Varying $m$ in simulated experiments.} Using 20K calibrated scores, we increase $m$ from 500 (left column) through 2K (middle column)
to 10K (right column) while looking at TDC's FDP (top row), the value of the FDP bound returned by the interpolated versions of TDC-KR, TDC-SB and TDC-UB (middle row) and the number of discoveries returned by TDC (bottom row).
The parameters $\alp=0.05,\gam=0.05$, $\rho = 3$ and $\pi_0=0.5$ were kept constant.  We observe the same trend of increasing TDC-KR bounds
and decreasing TDC-SB/UB bounds for other values of $\pi_0$ and $\rho$, including those studied in subsequent Figures \ref{supfig:vary_m_pi0_0.2} and \ref{supfig:vary_m_pi0_0.8}.
\label{supfig:vary_m}}
\end{figure}

\clearpage

\begin{figure}
\centering %
\begin{tabular}{ccc}
\hspace{-10ex}
$m = 500$  & $m = 2000$ & $m = 10000$ \tabularnewline
\hspace{-10ex}
\includegraphics[width=2.5in]{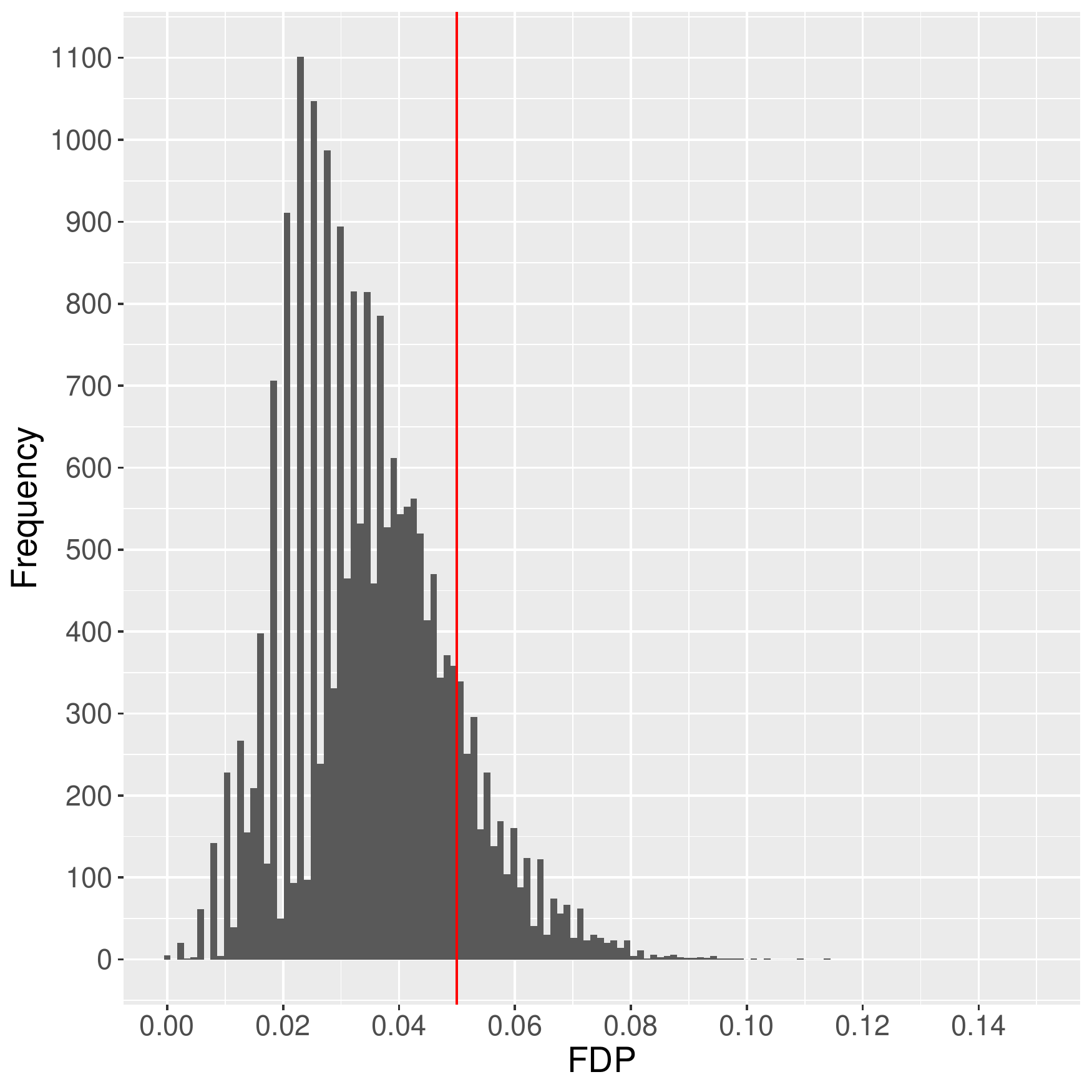}  & \includegraphics[width=2.5in]{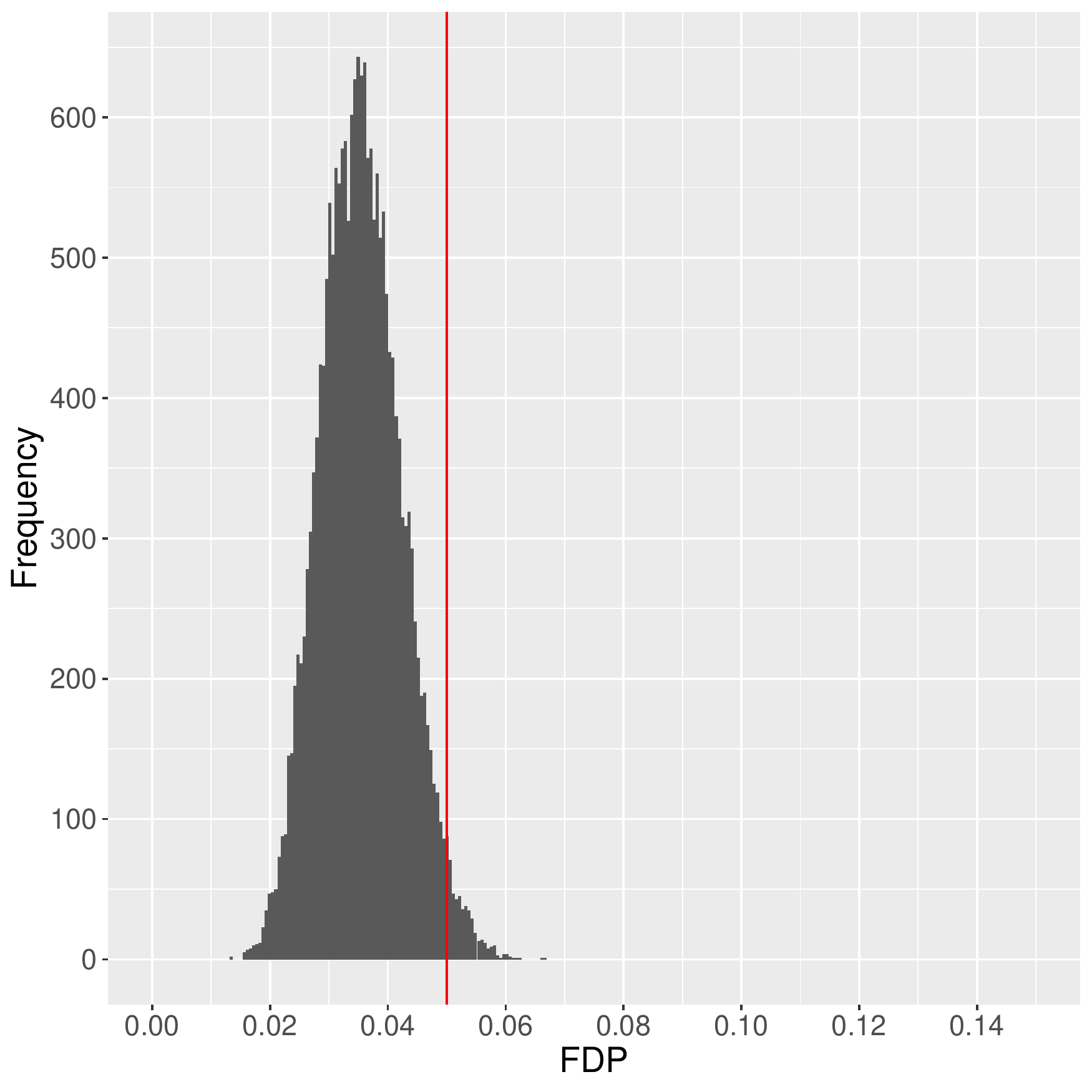}  & \includegraphics[width=2.5in]{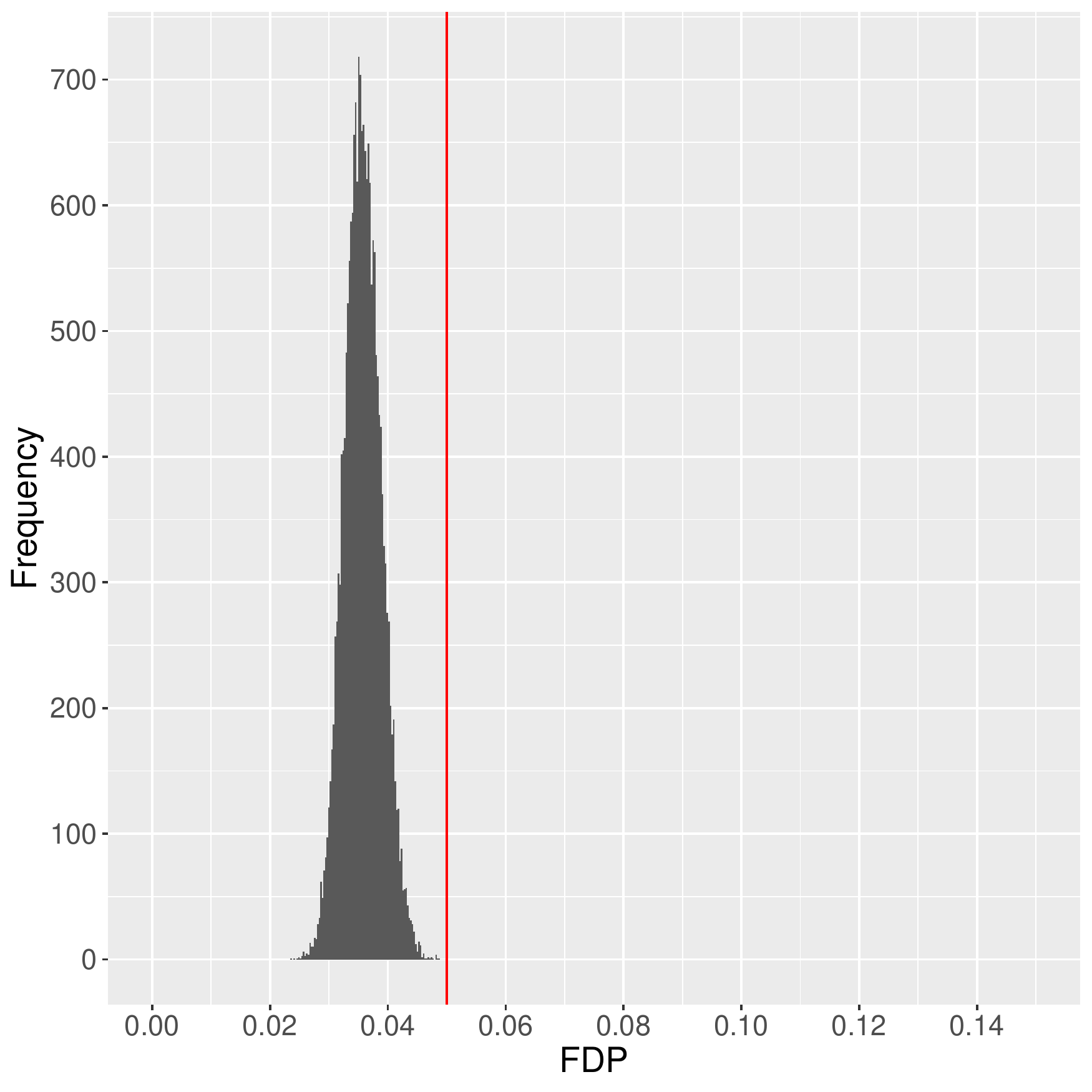} \tabularnewline
\hspace{-10ex}
\includegraphics[width=2.5in]{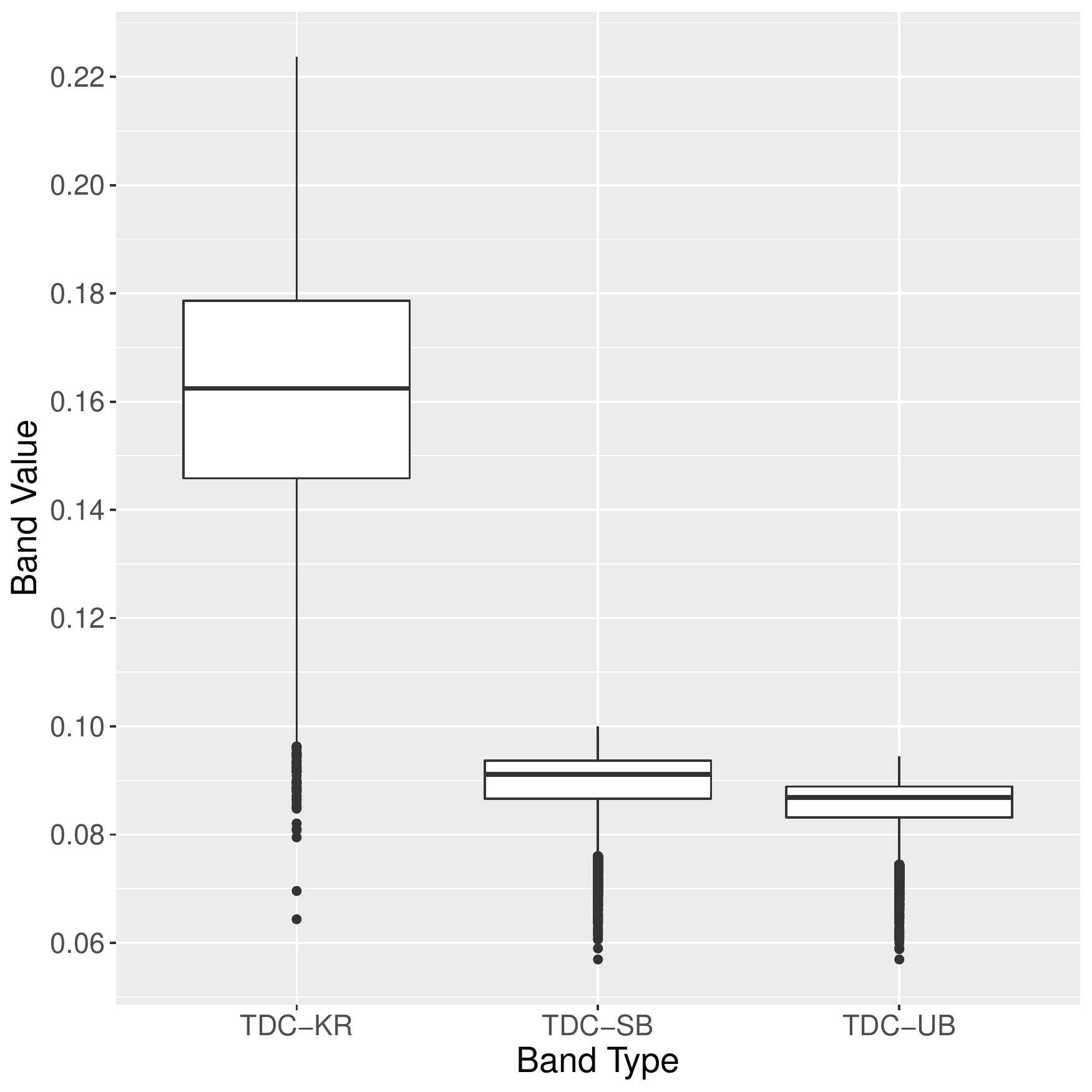}  & \includegraphics[width=2.5in]{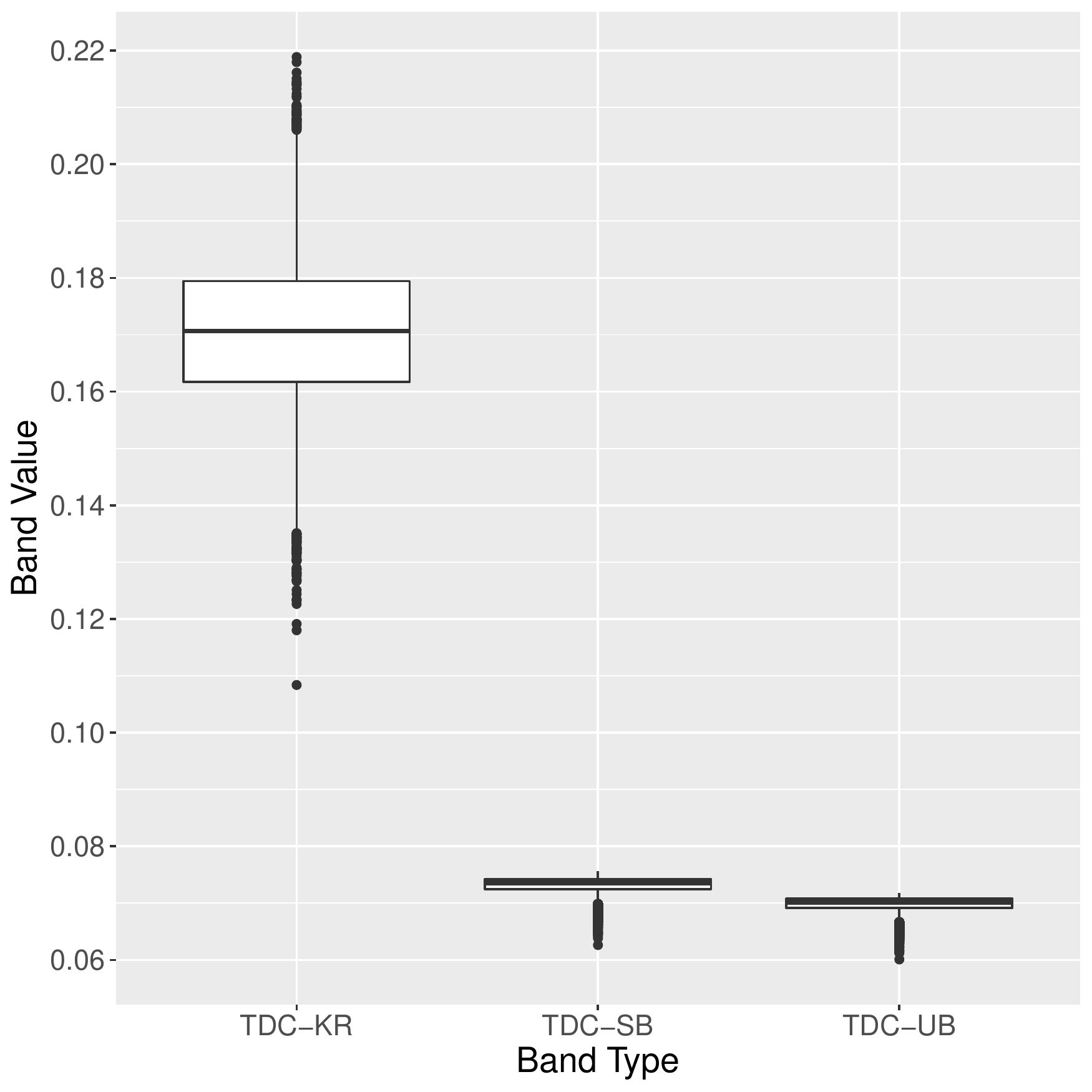}  & \includegraphics[width=2.5in]{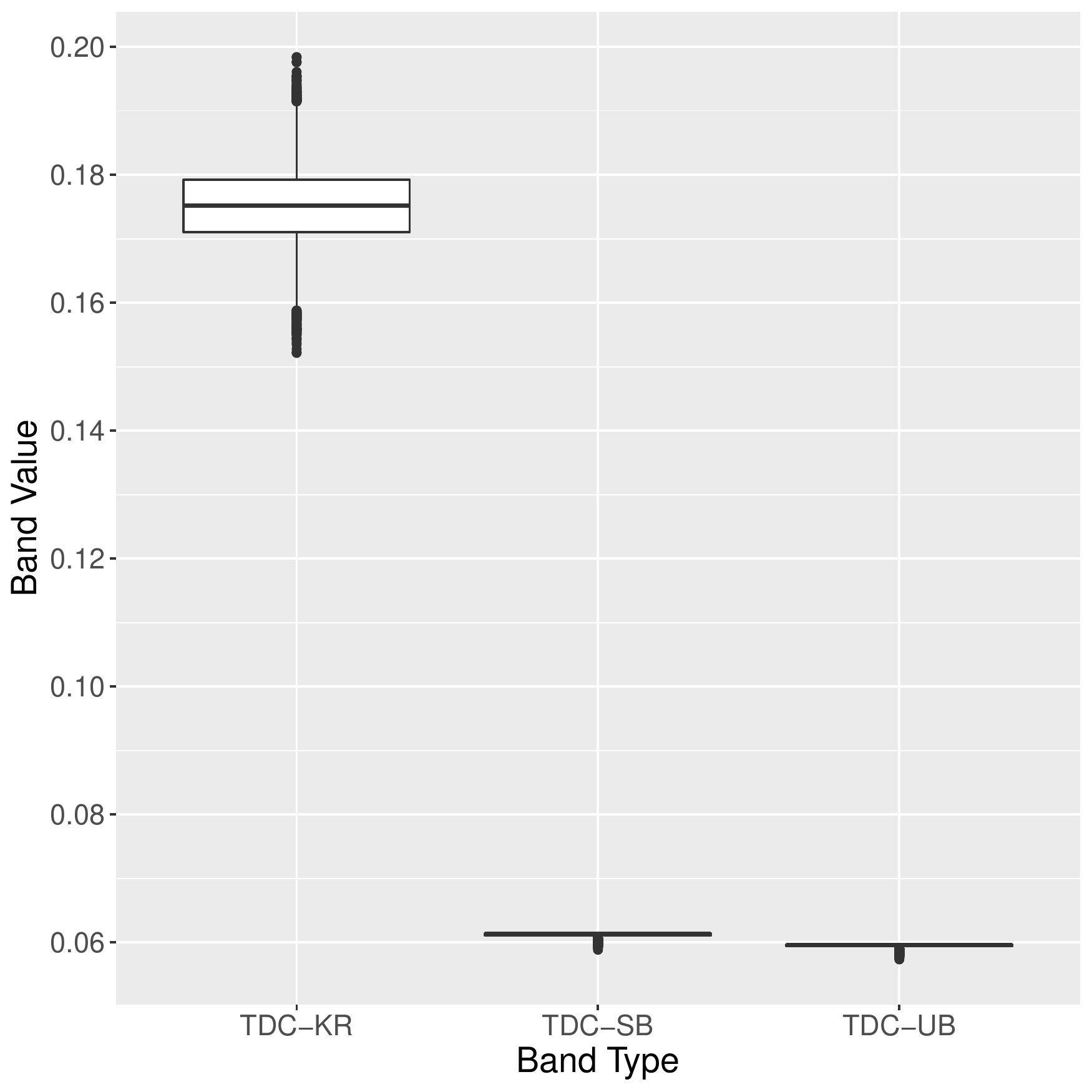} \tabularnewline
\hspace{-10ex}
\includegraphics[width=2.5in]{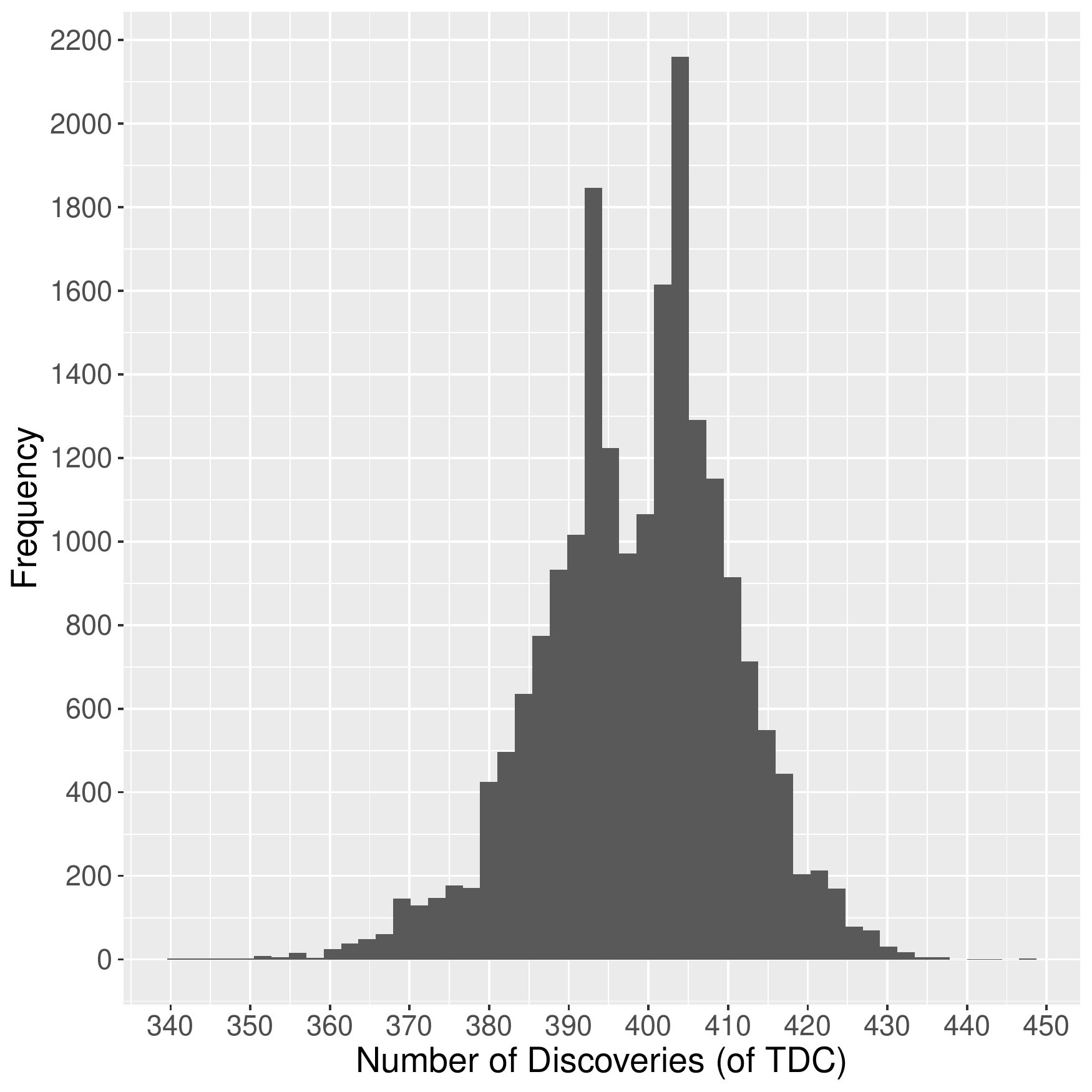}  & \includegraphics[width=2.5in]{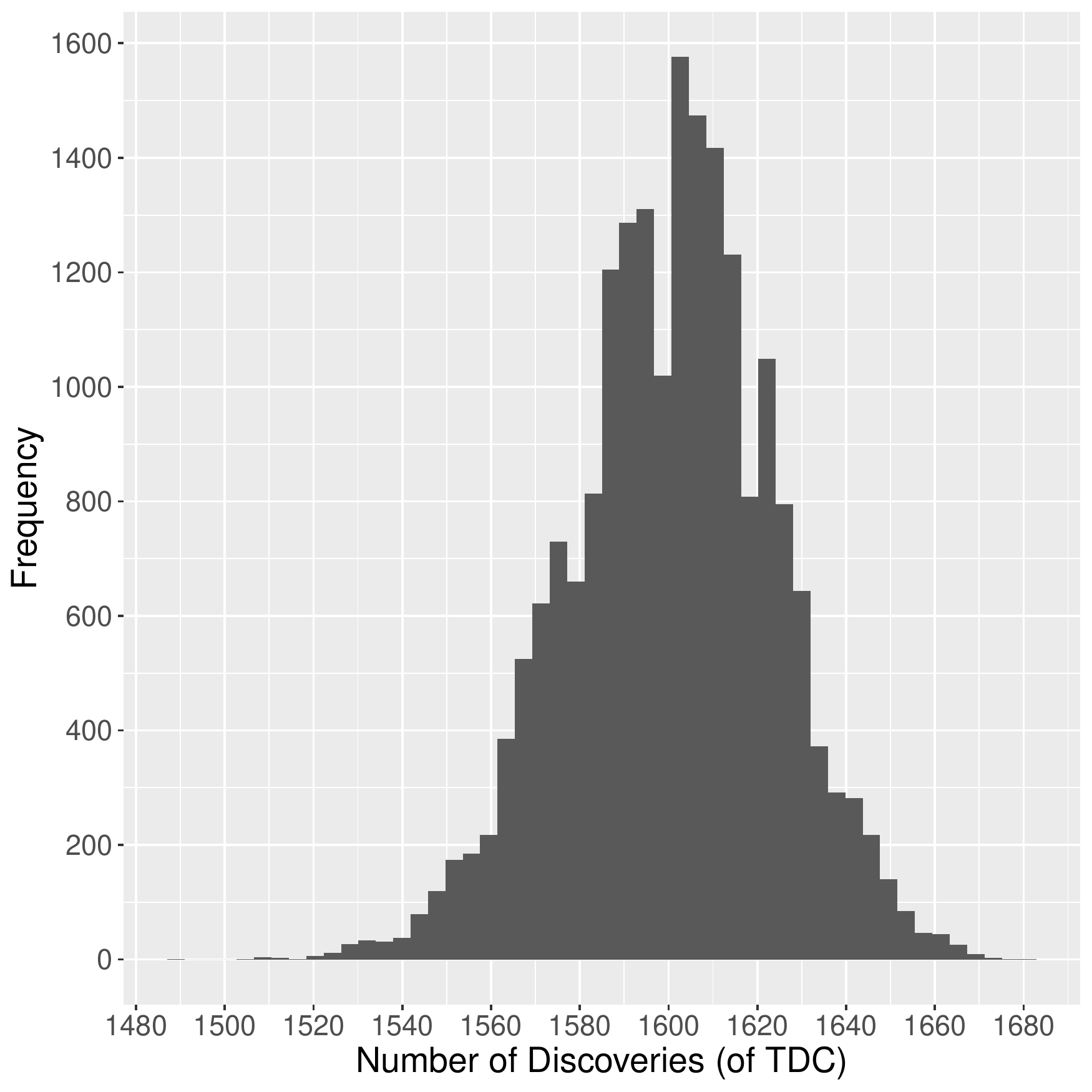}  & \includegraphics[width=2.5in]{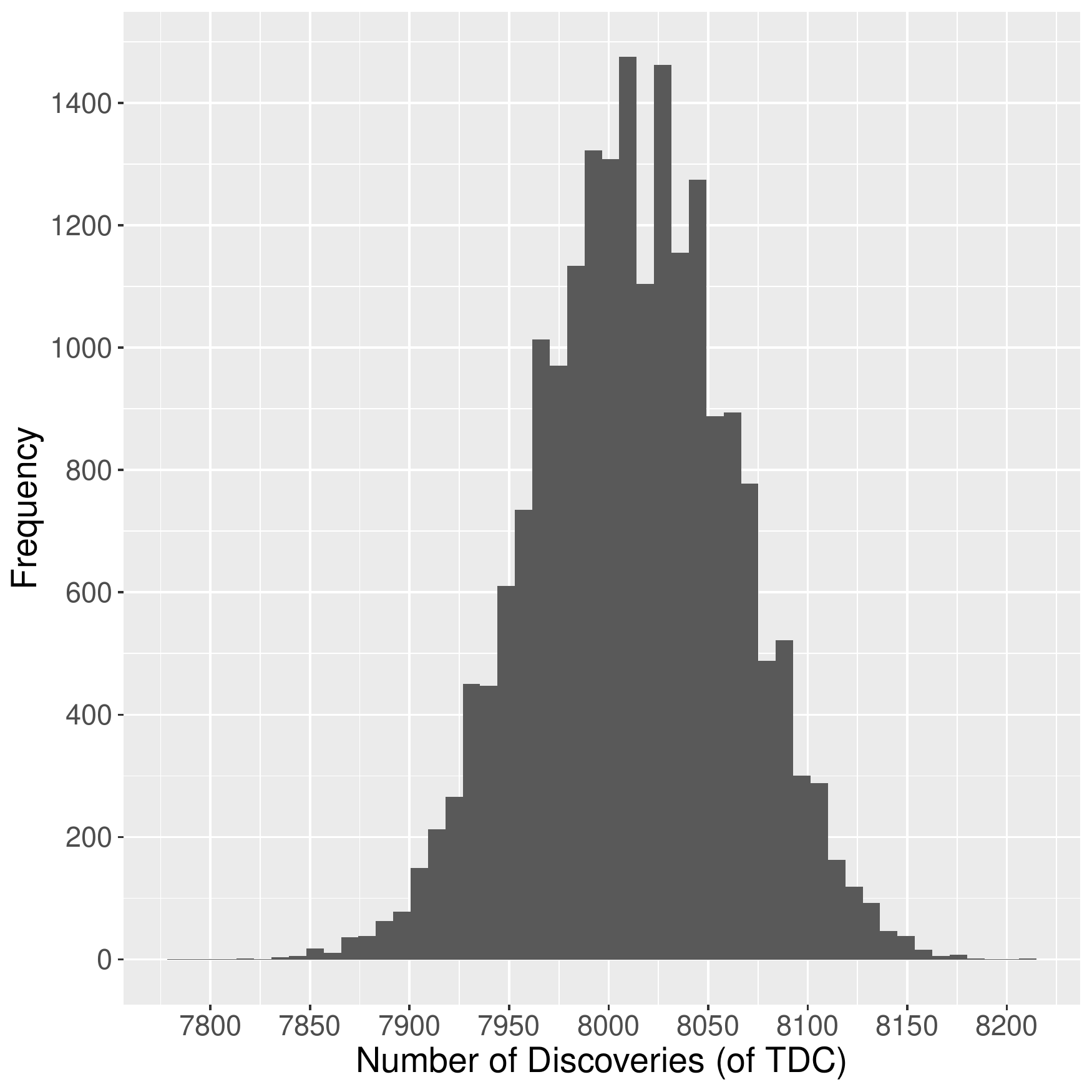} \tabularnewline
\end{tabular}
\caption{\textbf{The same as Figure 1, using $\rho = 3$, $\pi_0 = 0.2$.}
\label{supfig:vary_m_pi0_0.2}}
\end{figure}

\clearpage

\begin{figure}
\centering %
\begin{tabular}{ccc}
\hspace{-10ex}
$m = 500$  & $m = 2000$ & $m = 10000$ \tabularnewline
\hspace{-10ex}
\includegraphics[width=2.5in]{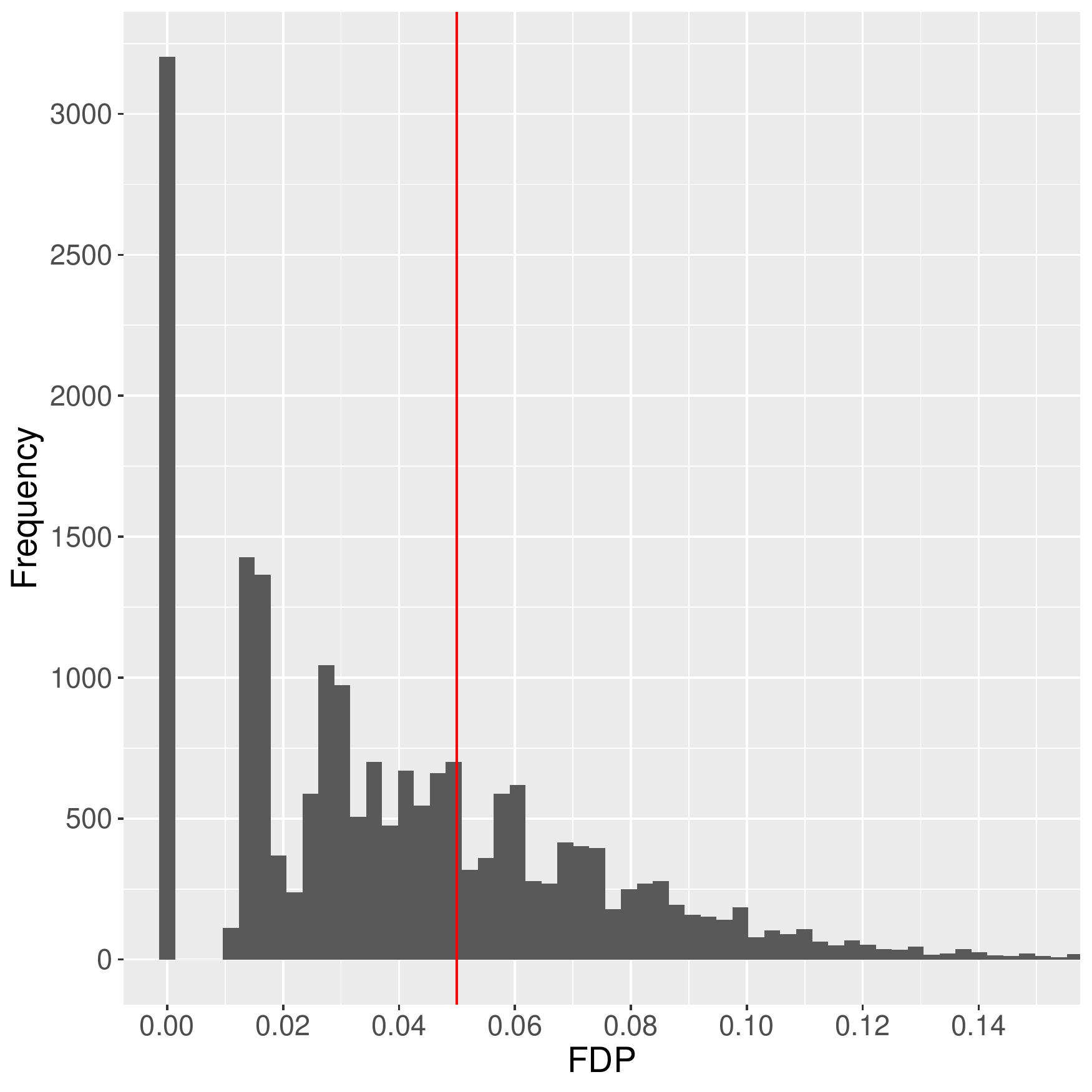}  & \includegraphics[width=2.5in]{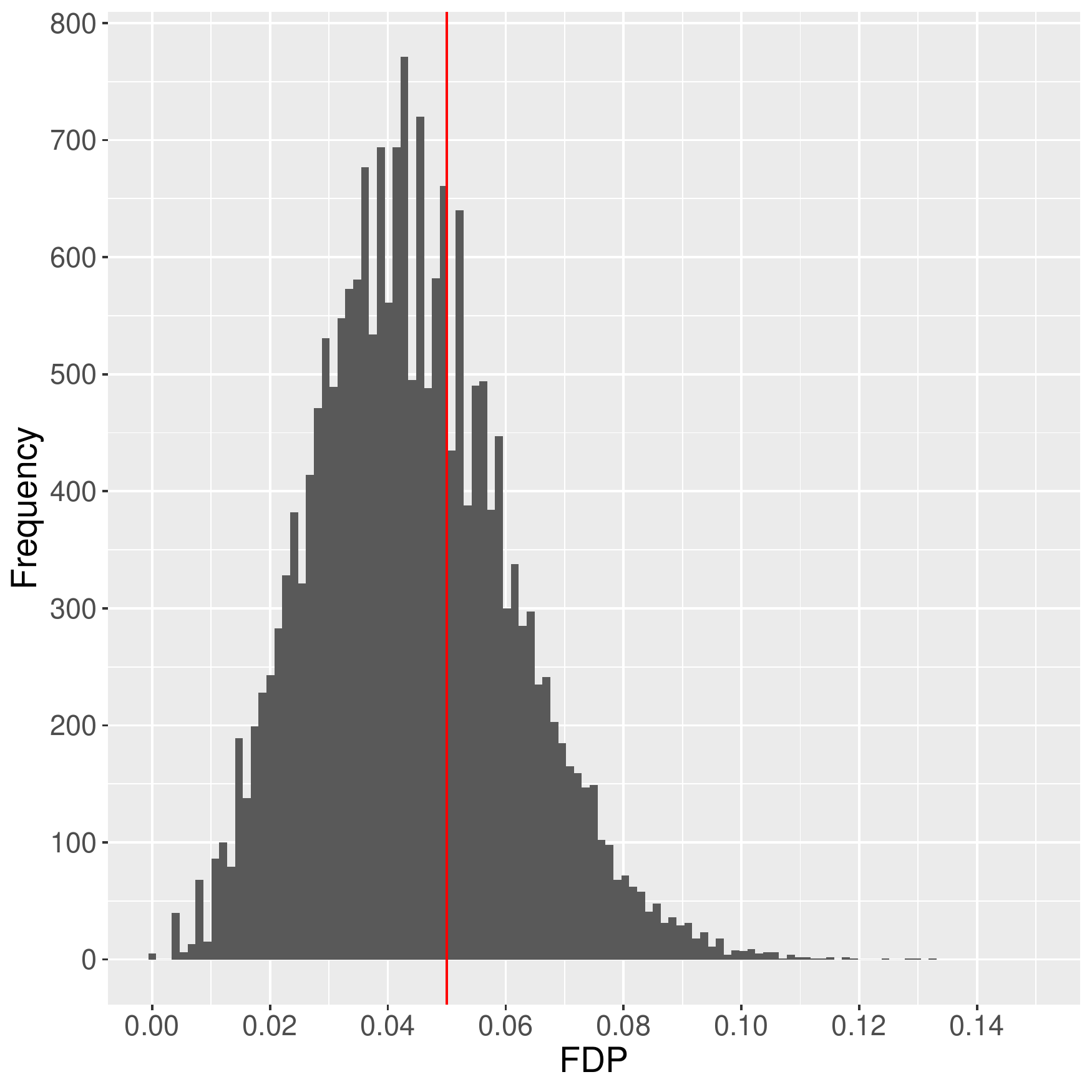}  & \includegraphics[width=2.5in]{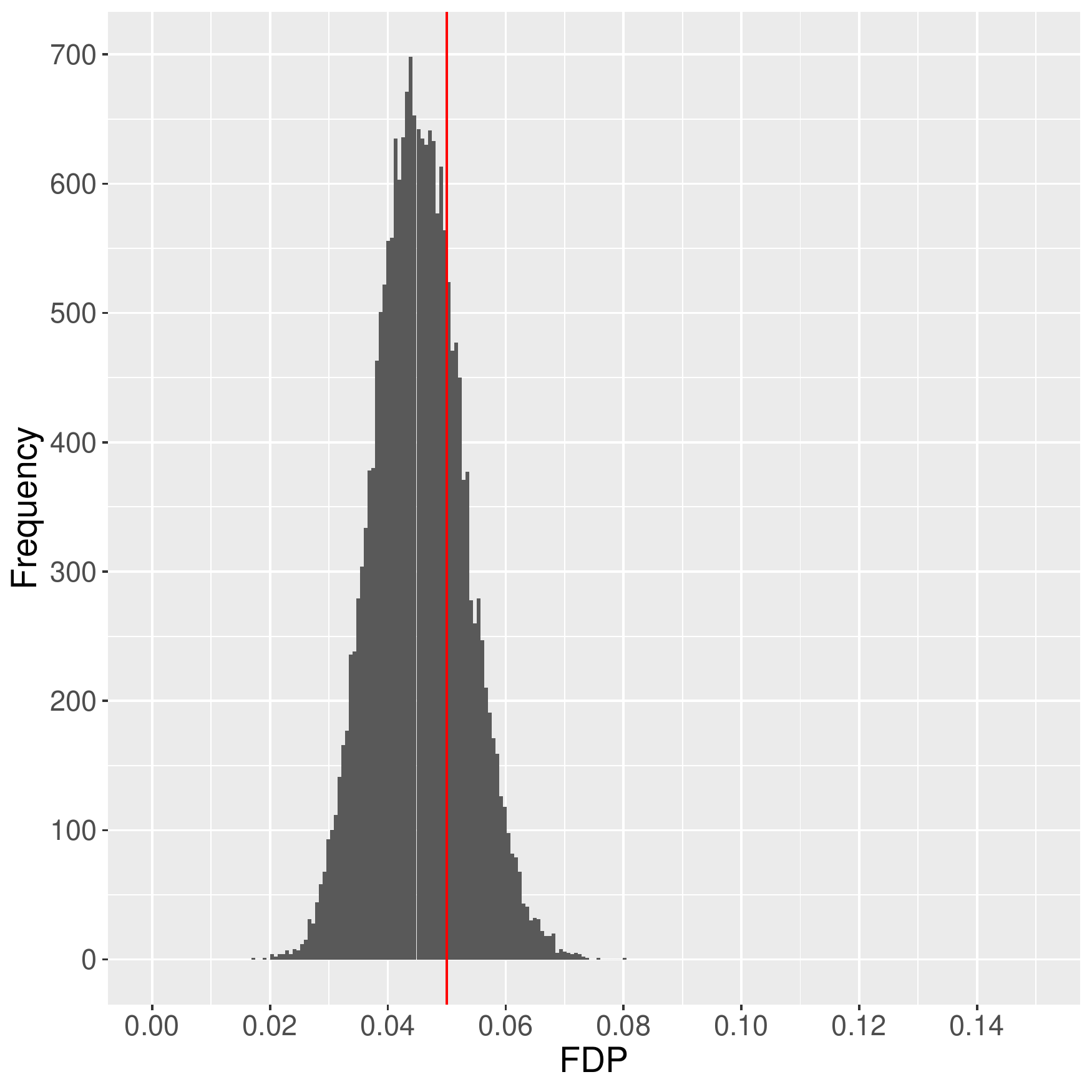} \tabularnewline
\hspace{-10ex}
\includegraphics[width=2.5in]{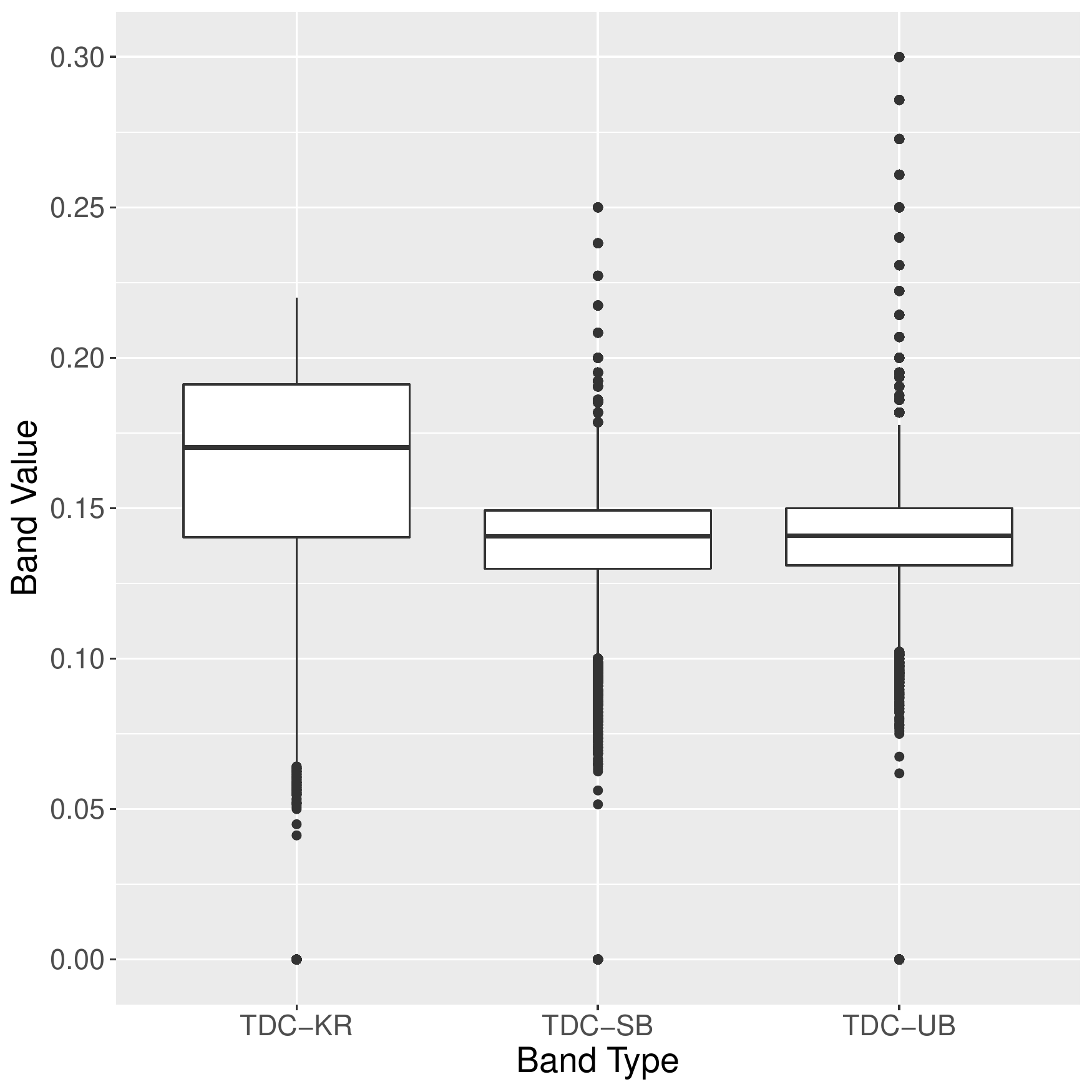}  & \includegraphics[width=2.5in]{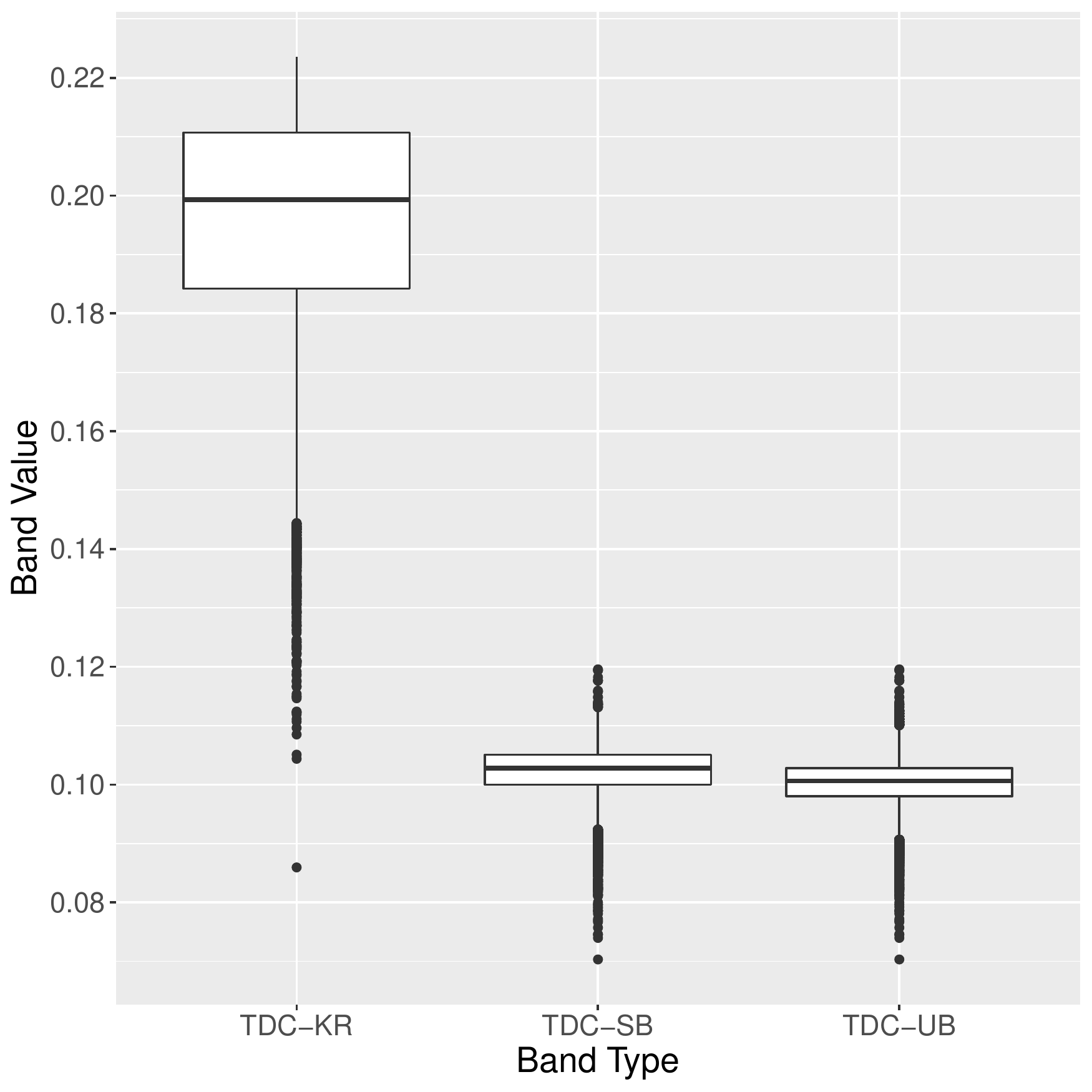}  & \includegraphics[width=2.5in]{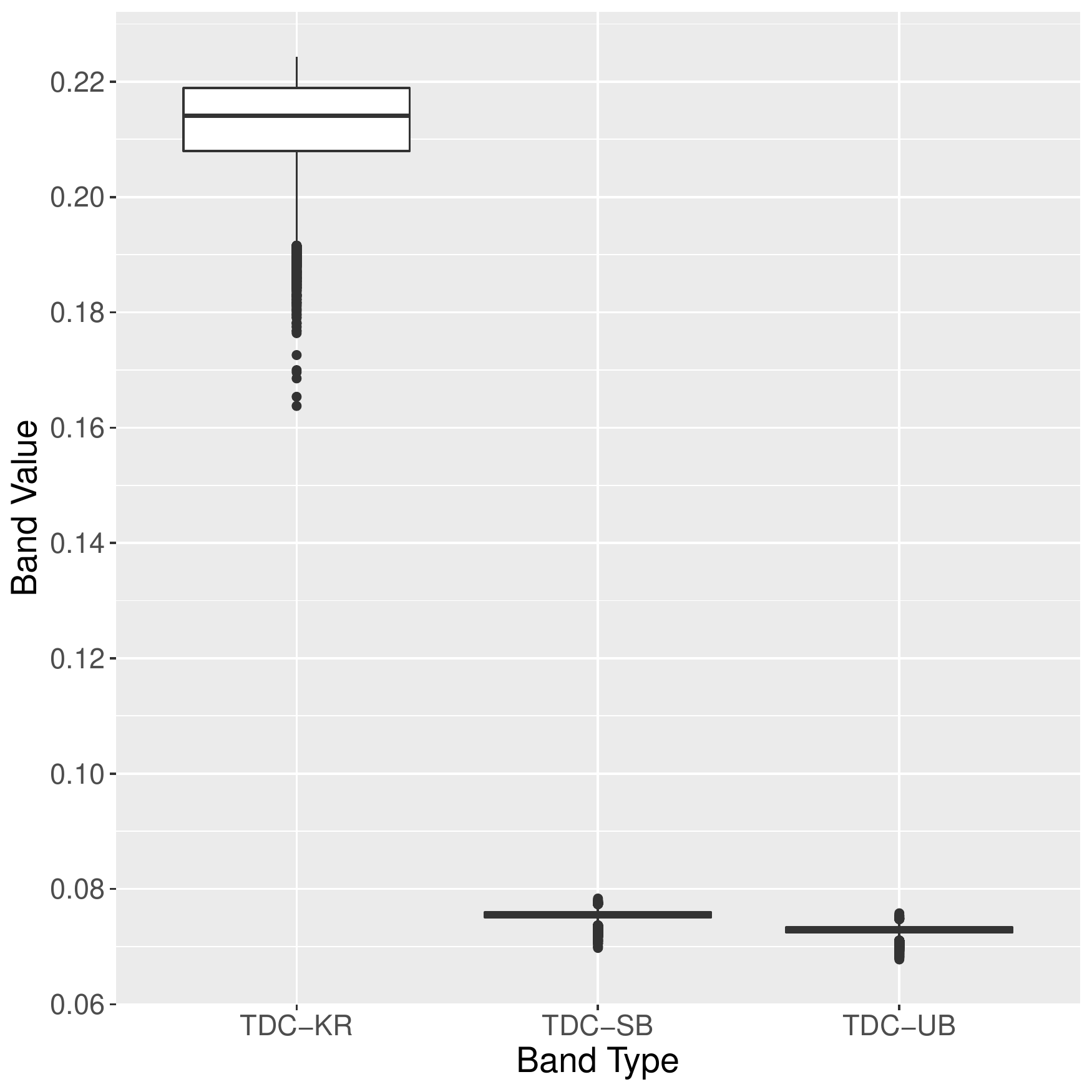} \tabularnewline
\hspace{-10ex}
\includegraphics[width=2.5in]{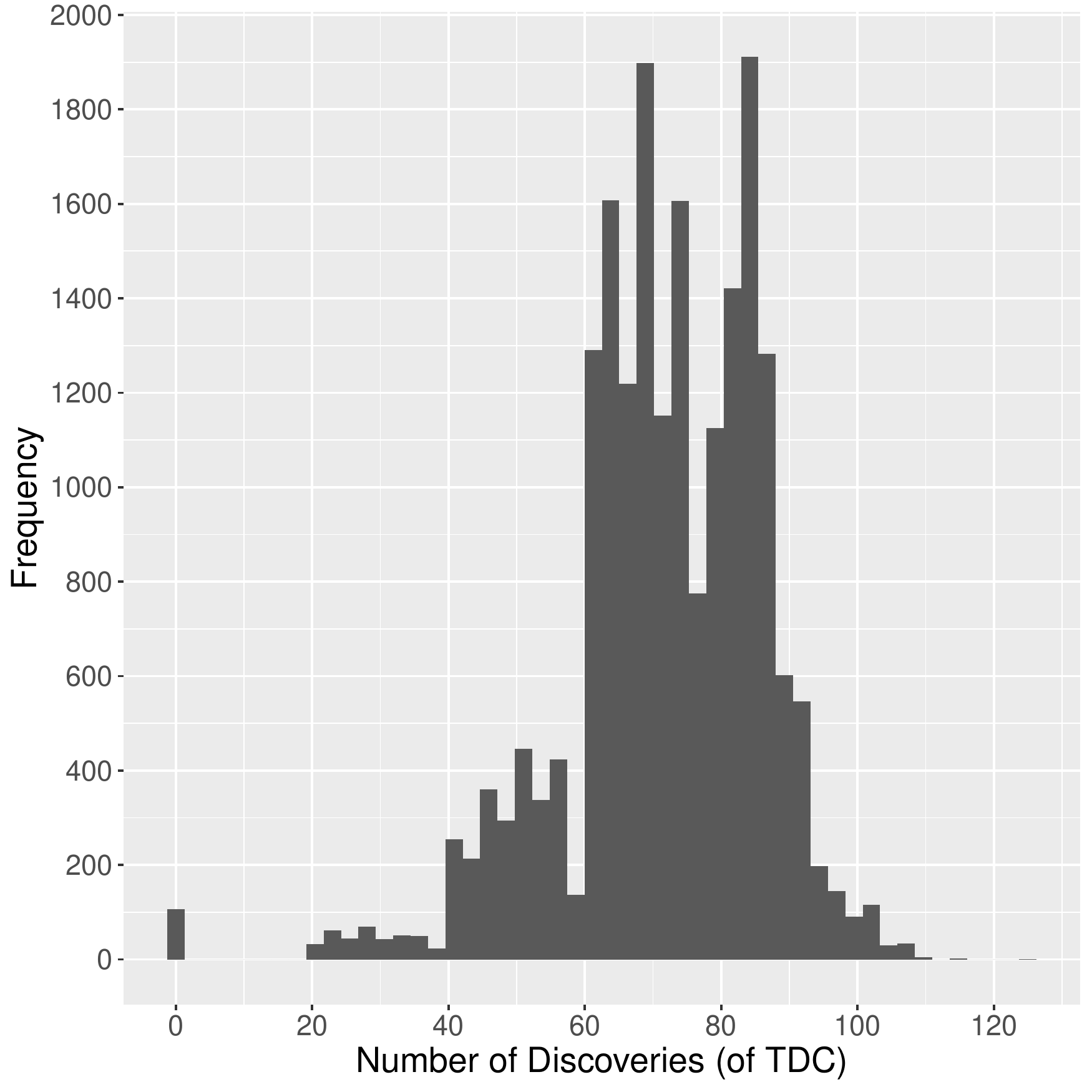}  & \includegraphics[width=2.5in]{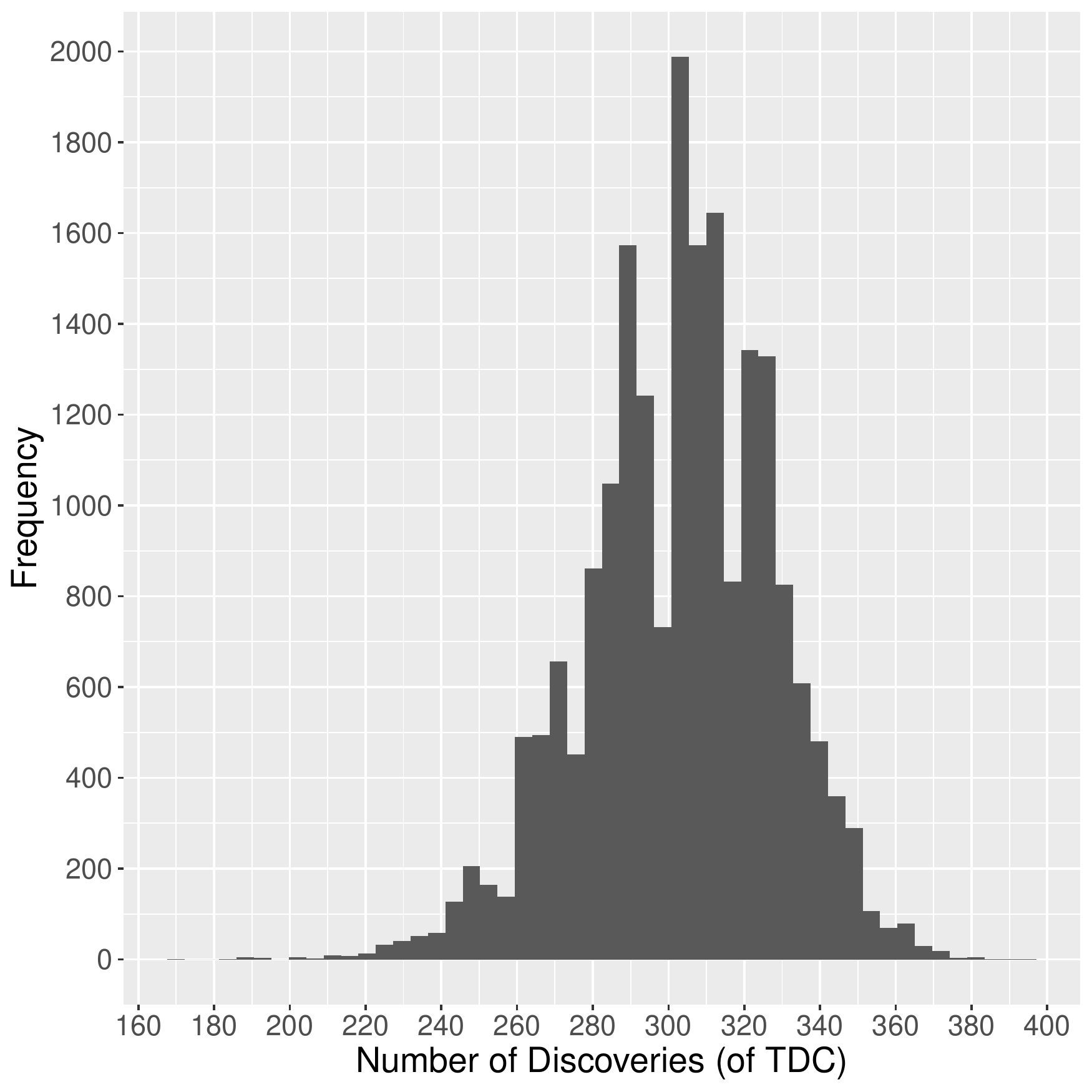}  & \includegraphics[width=2.5in]{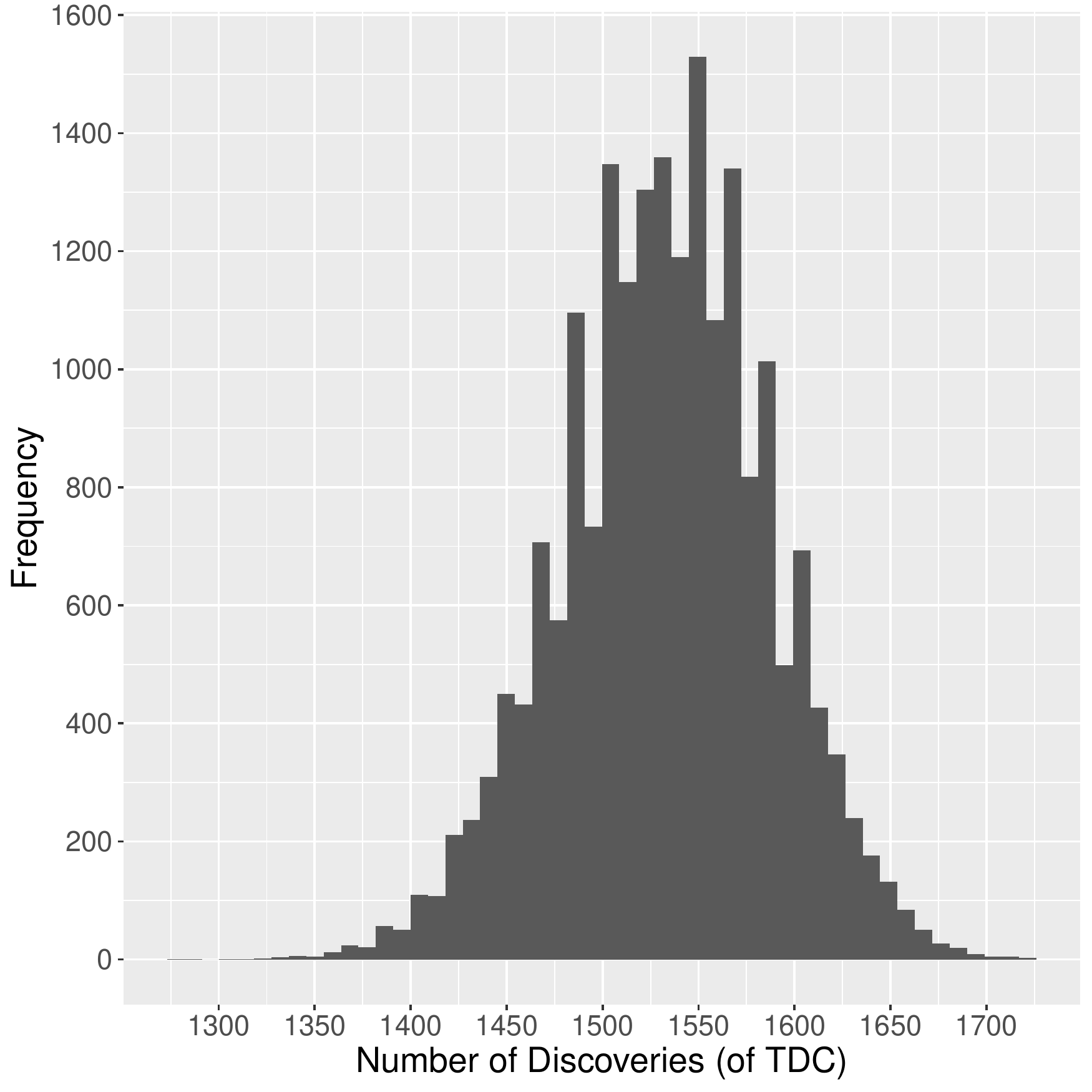} \tabularnewline
\end{tabular}
\caption{\textbf{The same as Figure 1, using $\rho = 3$, $\pi_0 = 0.8$.}
\label{supfig:vary_m_pi0_0.8}}
\end{figure}

\clearpage

\begin{figure}
\centering %
\begin{tabular}{ccc}
\hspace{-10ex}
$\pi_0=0.8$  & $\pi_0=0.5$ & $\pi_0=0.2$ \tabularnewline
\hspace{-10ex}
\includegraphics[width=2.5in]{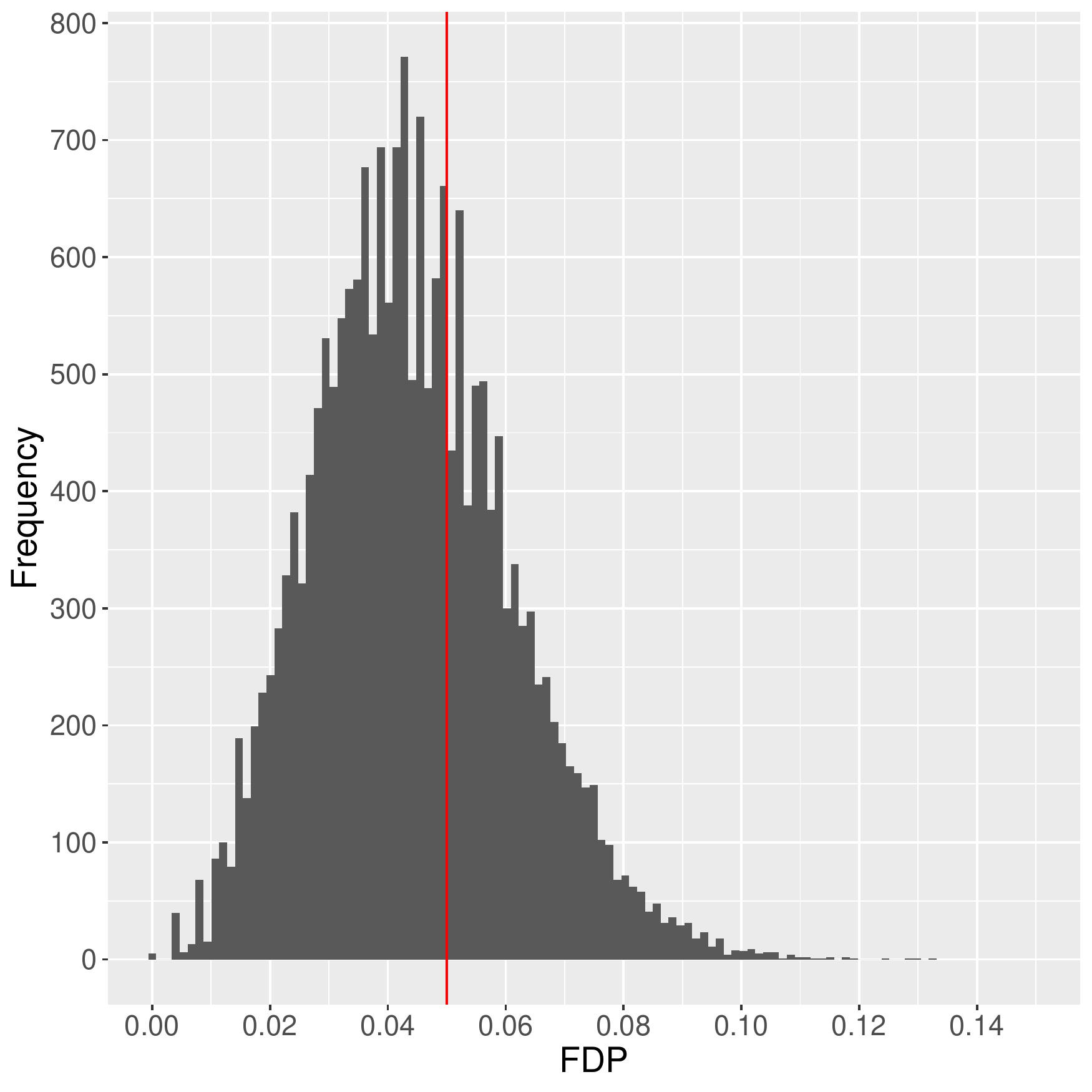}  & \includegraphics[width=2.5in]{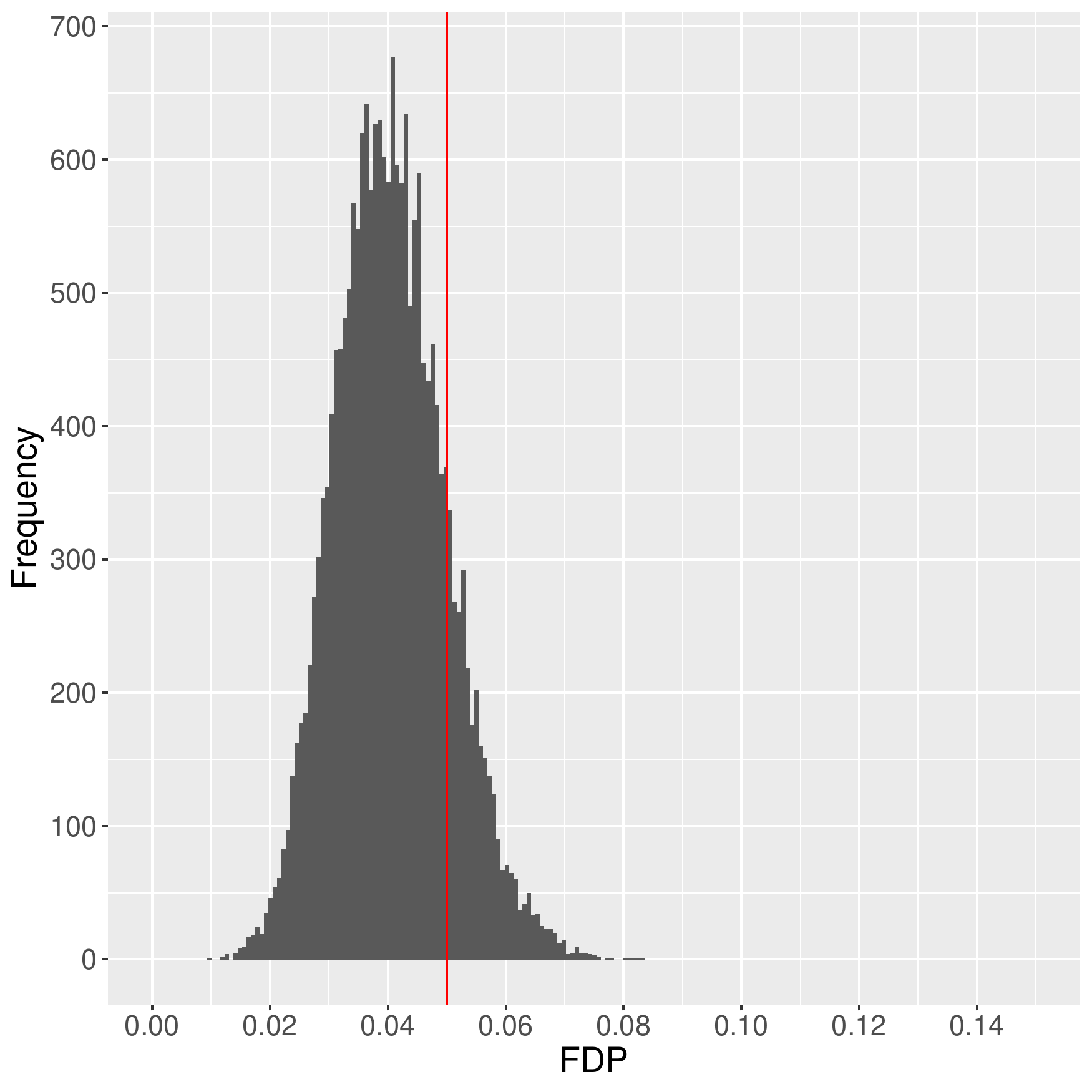}  & \includegraphics[width=2.5in]{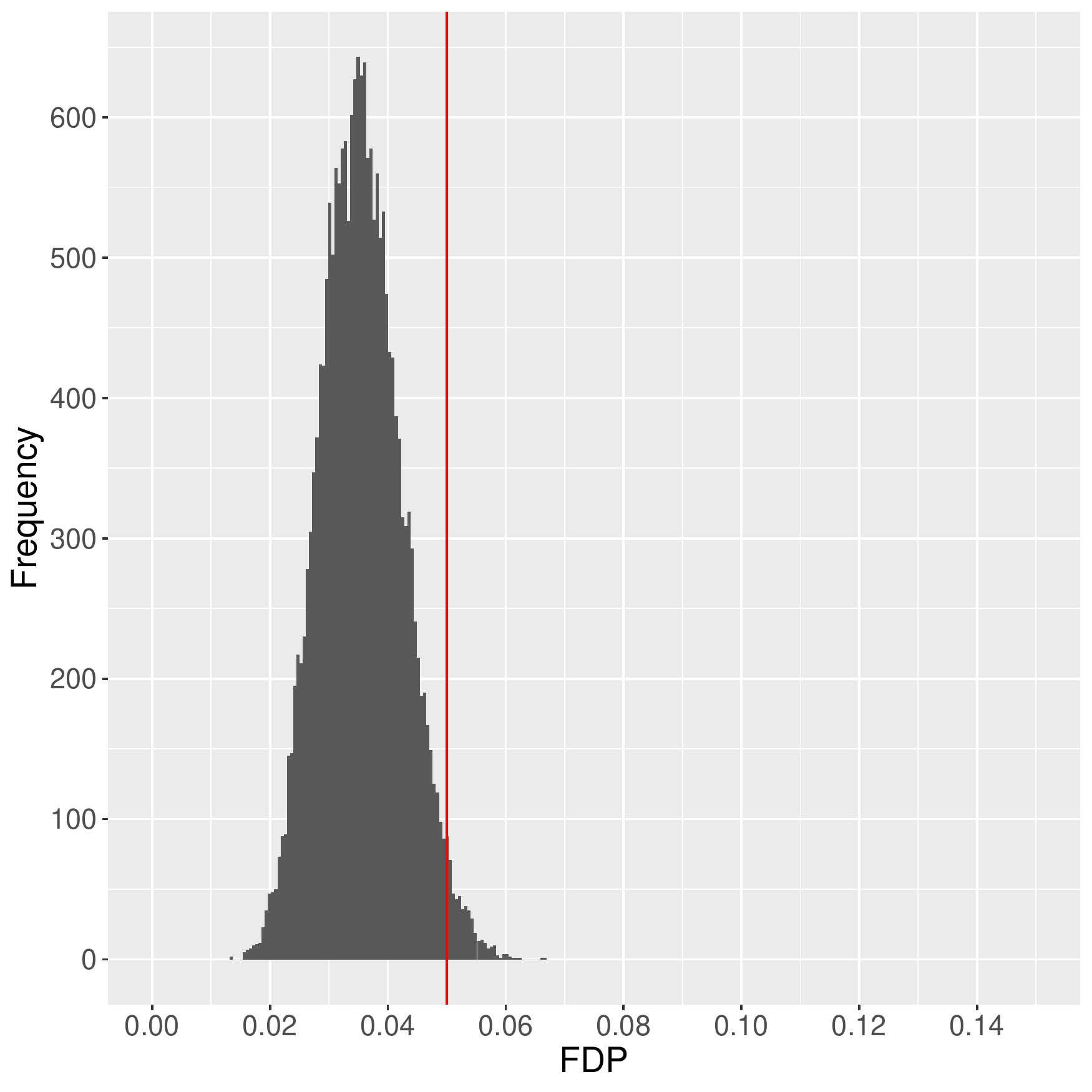} \tabularnewline
\hspace{-10ex}
\includegraphics[width=2.5in]{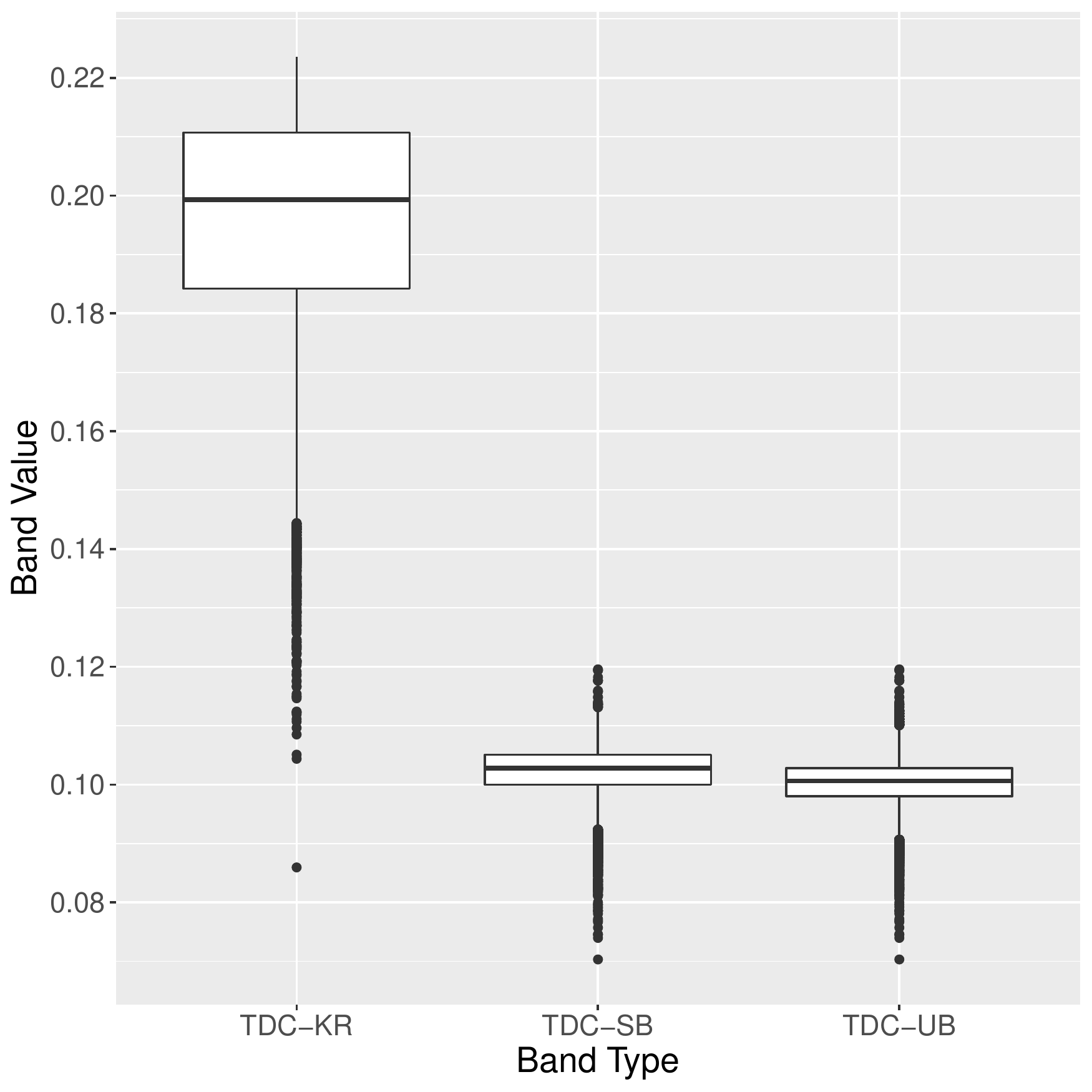}  & \includegraphics[width=2.5in]{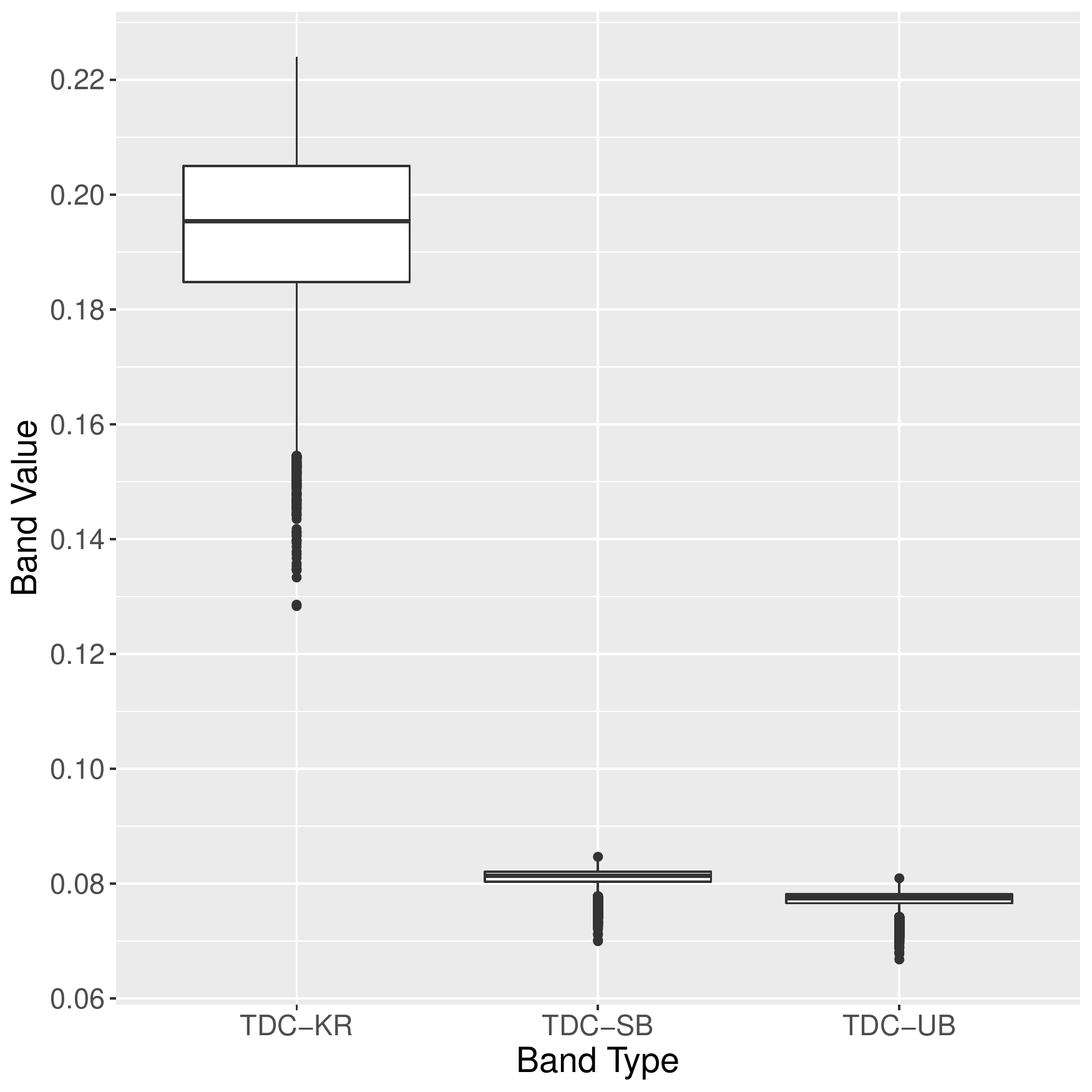}  & \includegraphics[width=2.5in]{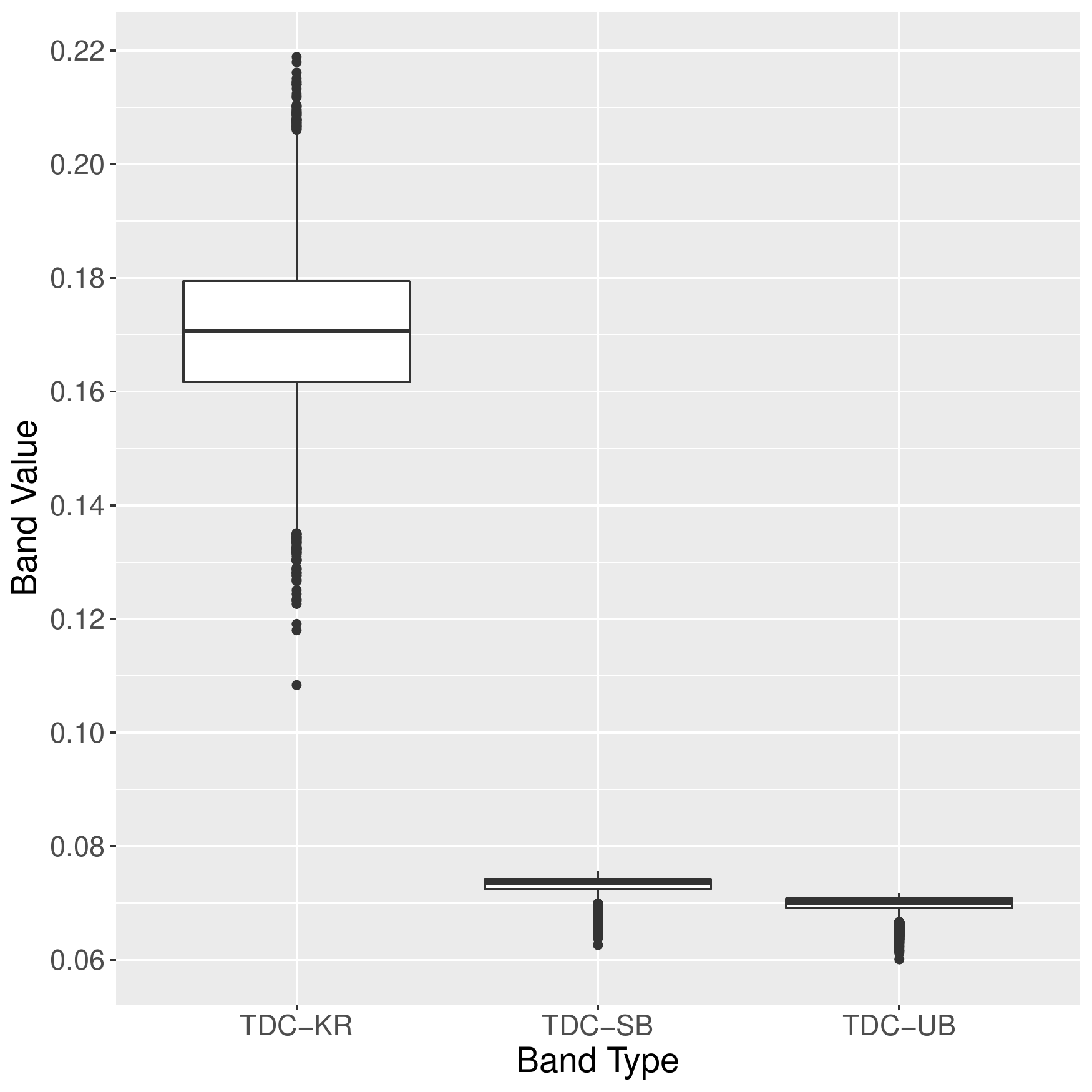} \tabularnewline
\hspace{-10ex}
\includegraphics[width=2.5in]{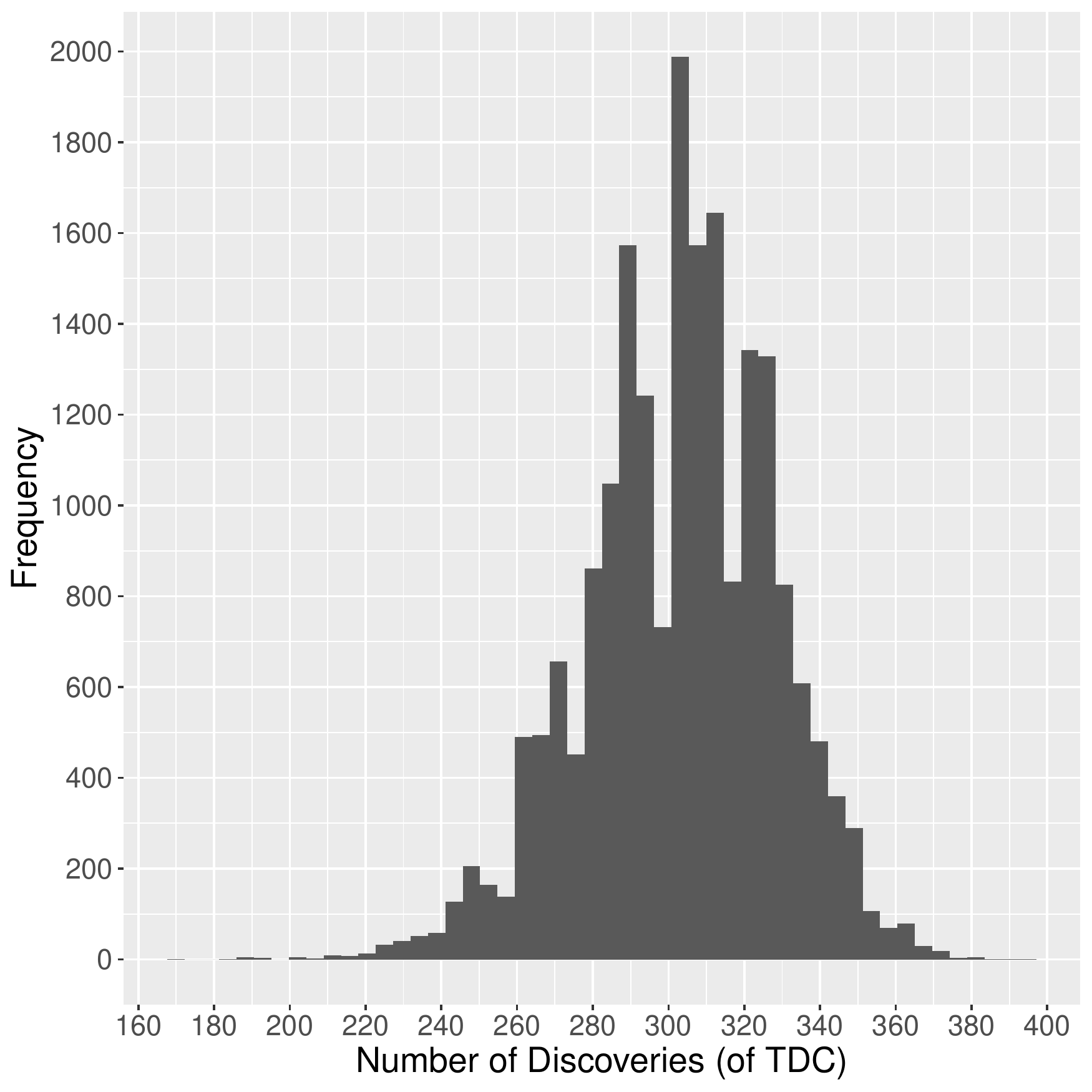}  & \includegraphics[width=2.5in]{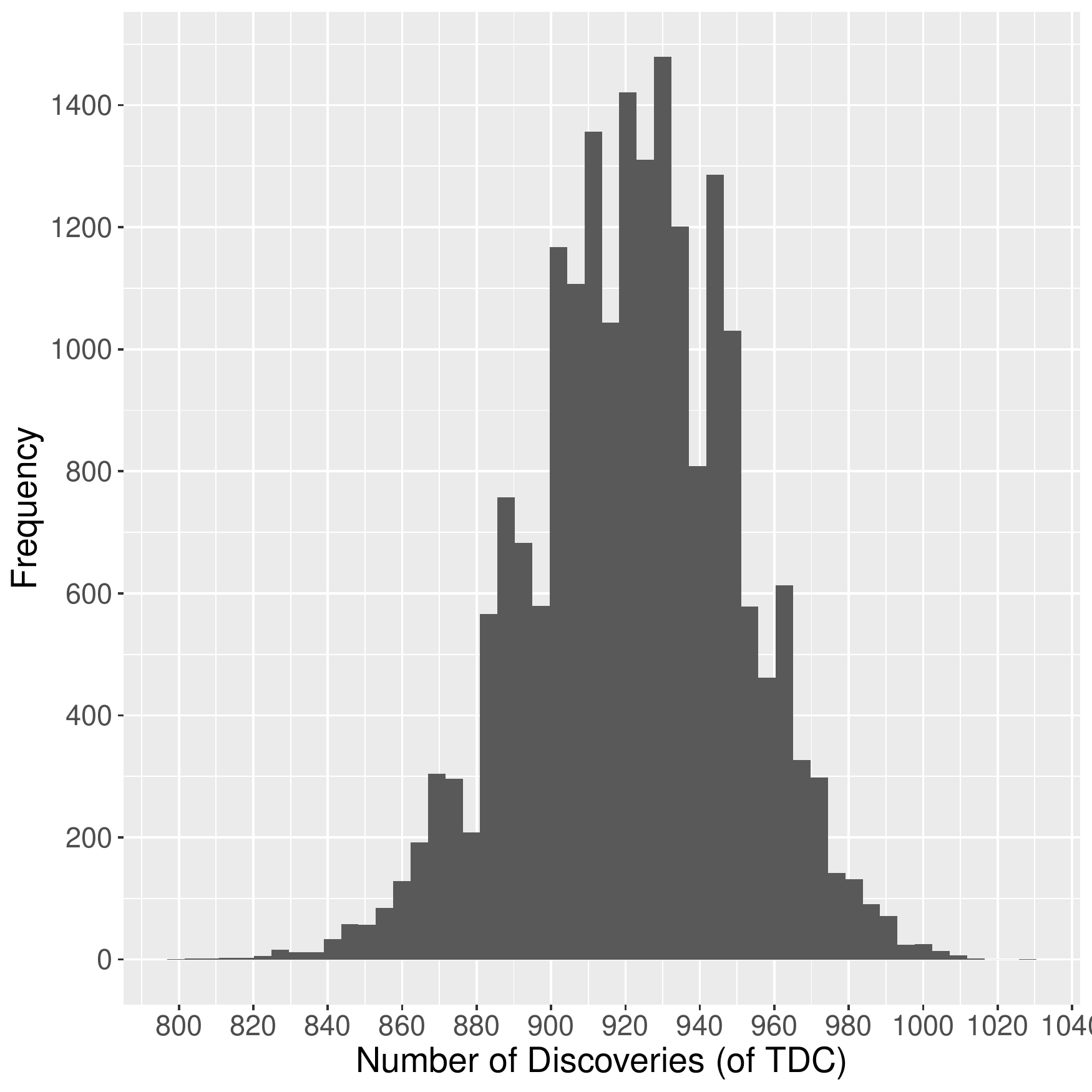}  & \includegraphics[width=2.5in]{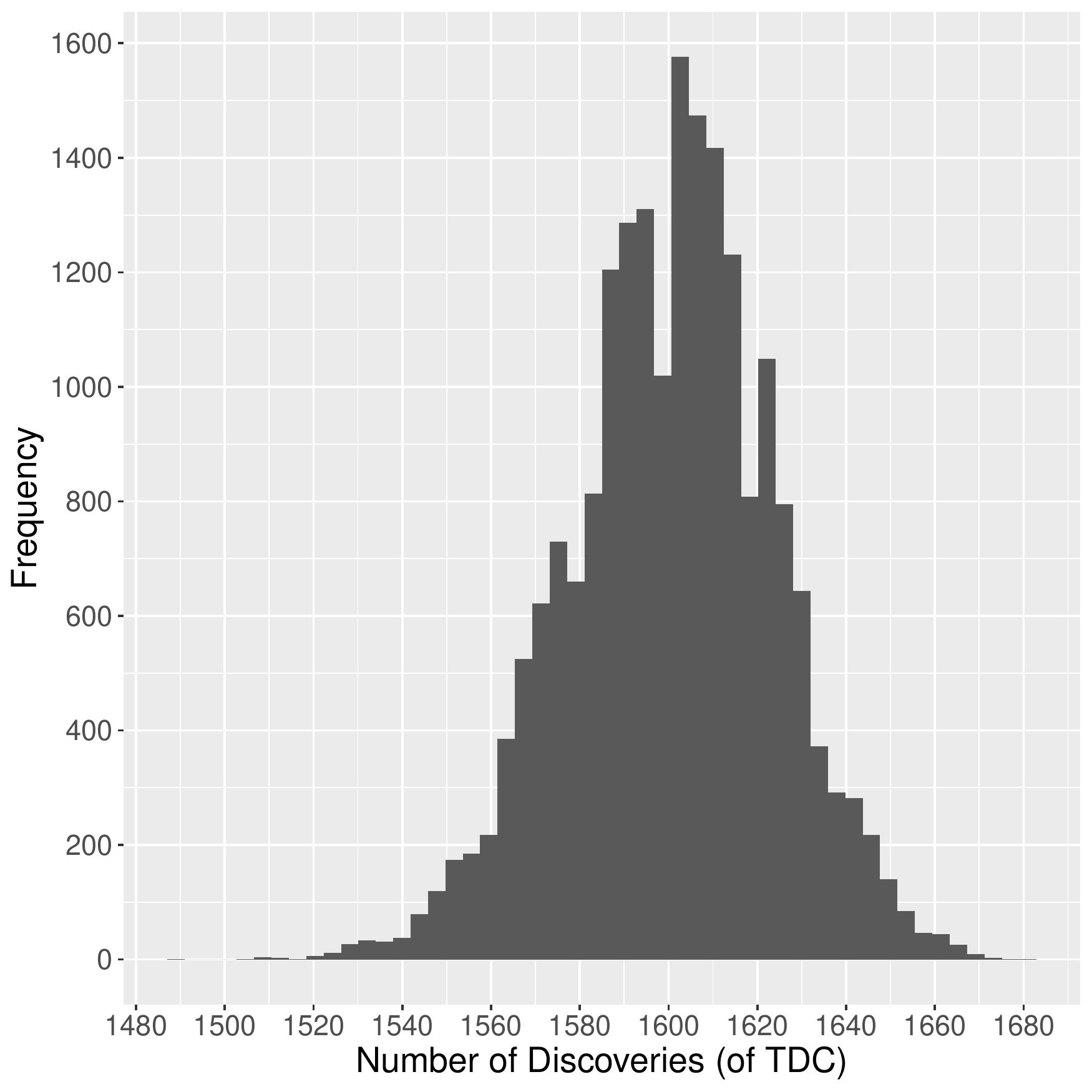} \tabularnewline
\end{tabular}
\caption{\textbf{Varying $\pi_0$ in simulated experiments.} Similar to \supfig~\ref{supfig:vary_m} only here we decrease $\pi_0$,
keeping $\alp=0.05$, $\gam=0.05$, and $m=$ 2K.  TDC's FDP (top row), the comparison of the three interpolated upper prediction bands (middle row) and the number of discoveries of TDC (bottom row).
\label{supfig:vary_pi0}}
\end{figure}

\clearpage

\begin{figure}
\centering %
\begin{tabular}{ccc}
\hspace{-10ex}
$\rho = 2.5$  & $\rho = 3$ & $\rho = 3.5$ \tabularnewline
\hspace{-10ex}
\includegraphics[width=2.5in]{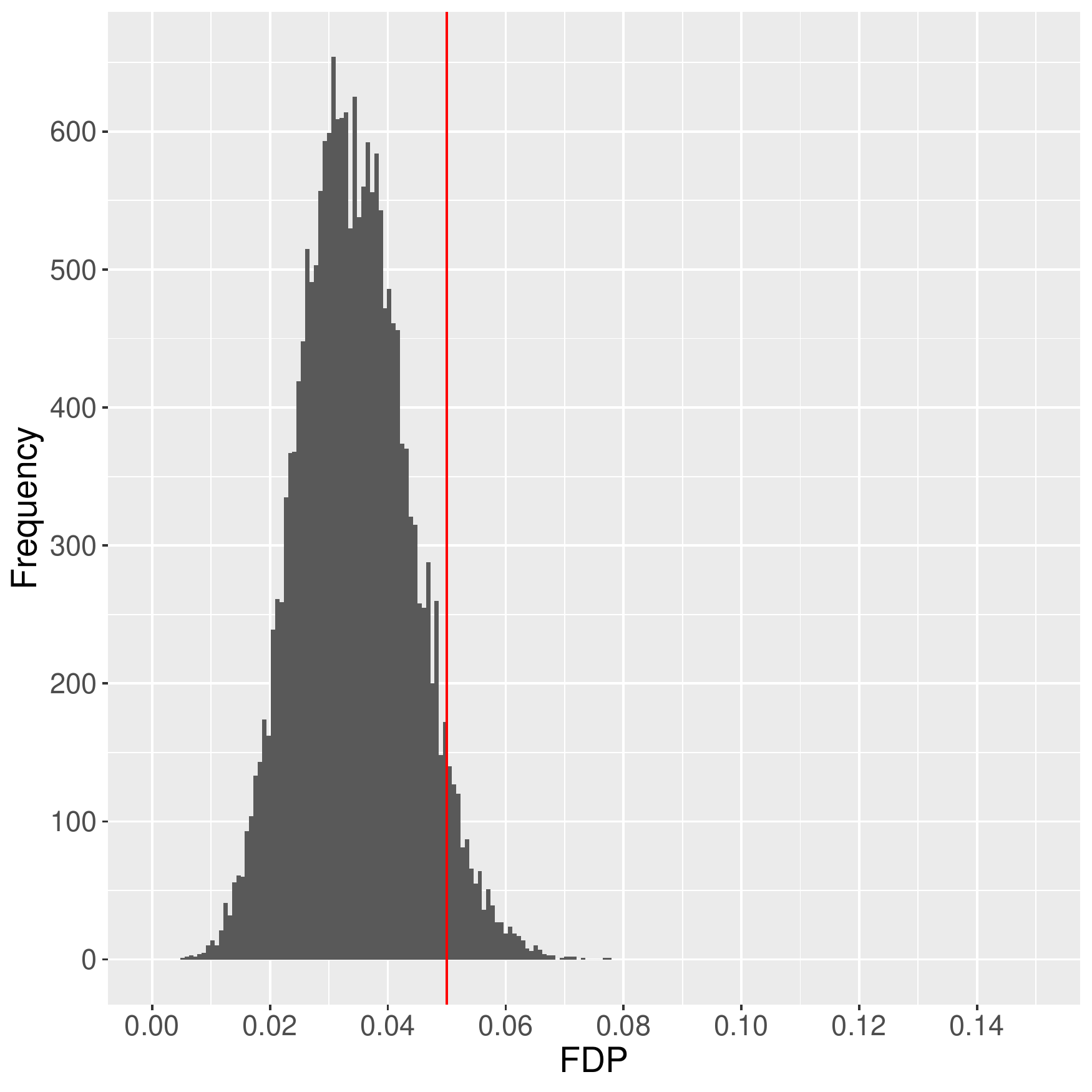}  & \includegraphics[width=2.5in]{{figures_Arya/grid_figure_fdp_d1_c0.5_lambda0.5_m2000_pi0.5_calTRUE_sep3_conf0.05}.pdf}  & \includegraphics[width=2.5in]{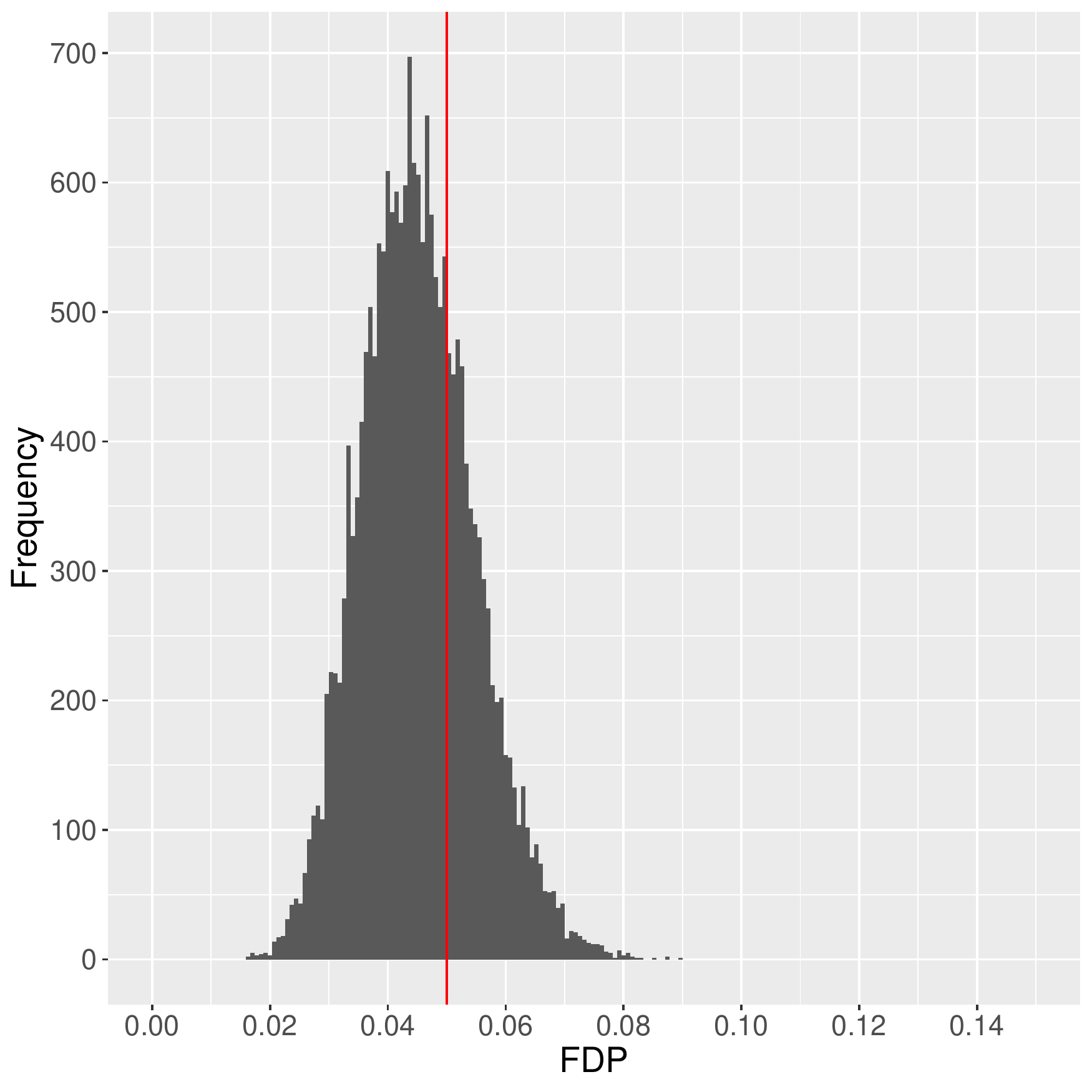} \tabularnewline
\hspace{-10ex}
\includegraphics[width=2.5in]{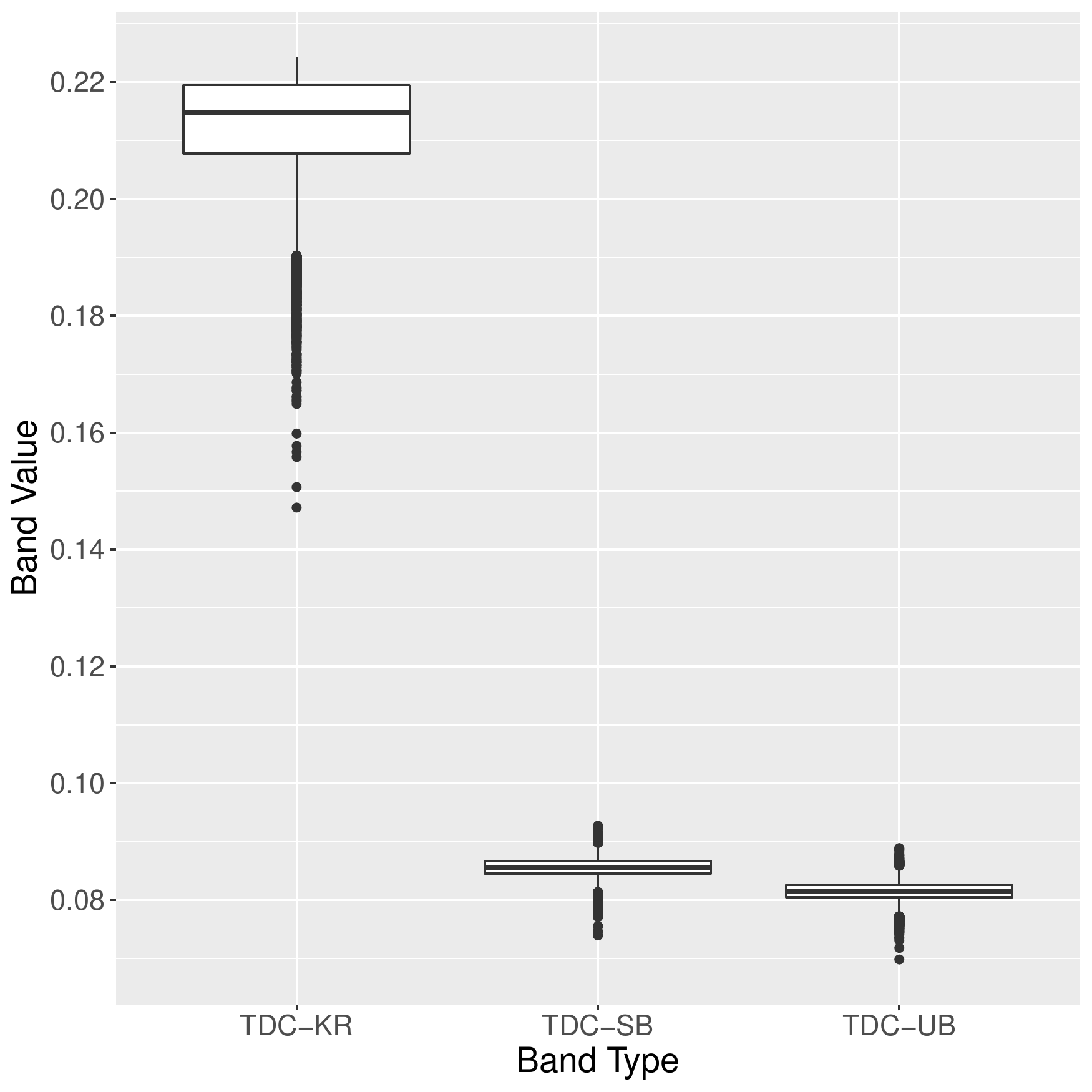}  & \includegraphics[width=2.5in]{{figures_Arya/grid_figure_box_d1_c0.5_lambda0.5_m2000_pi0.5_calTRUE_sep3_conf0.05}.pdf}  & \includegraphics[width=2.5in]{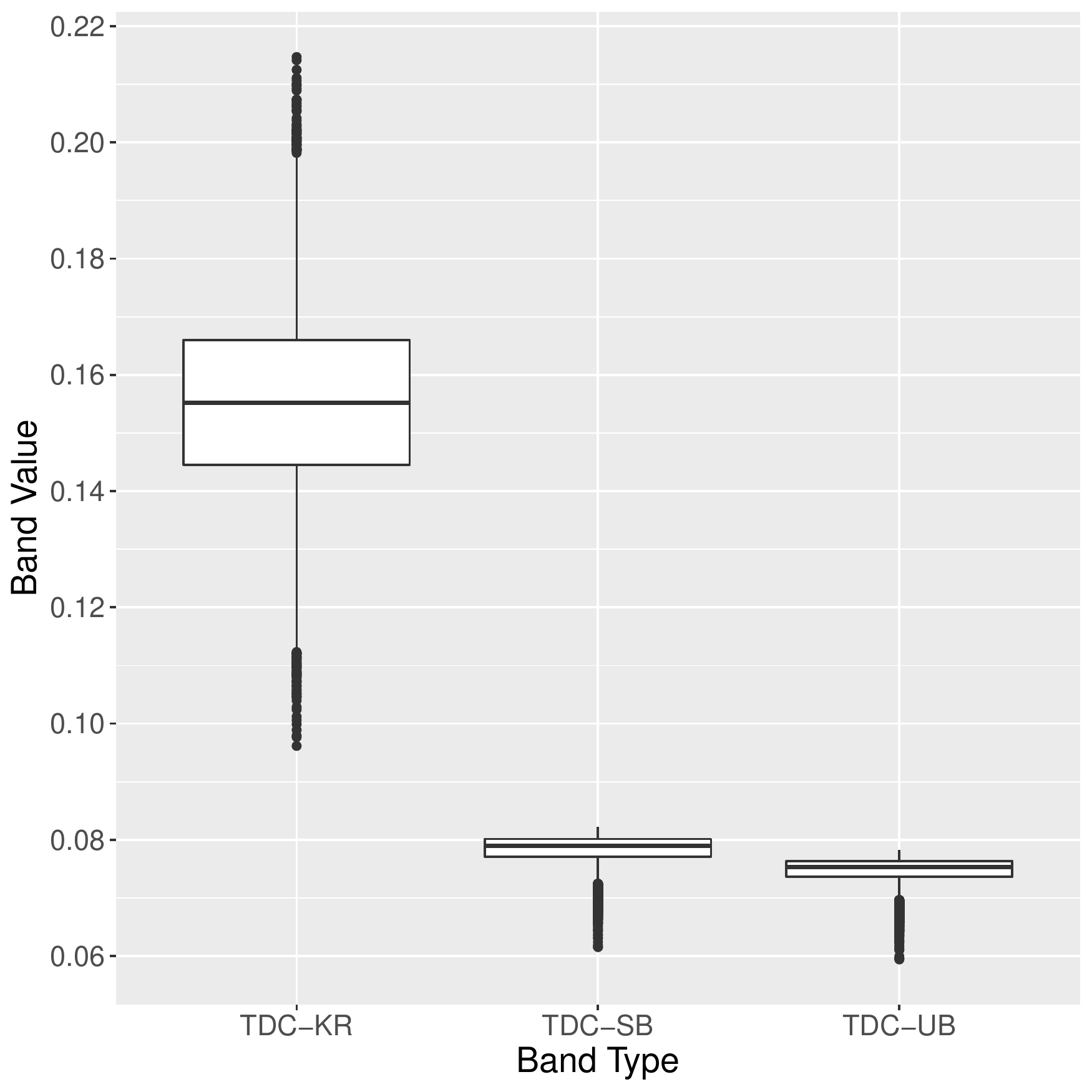} \tabularnewline
\hspace{-10ex}
\includegraphics[width=2.5in]{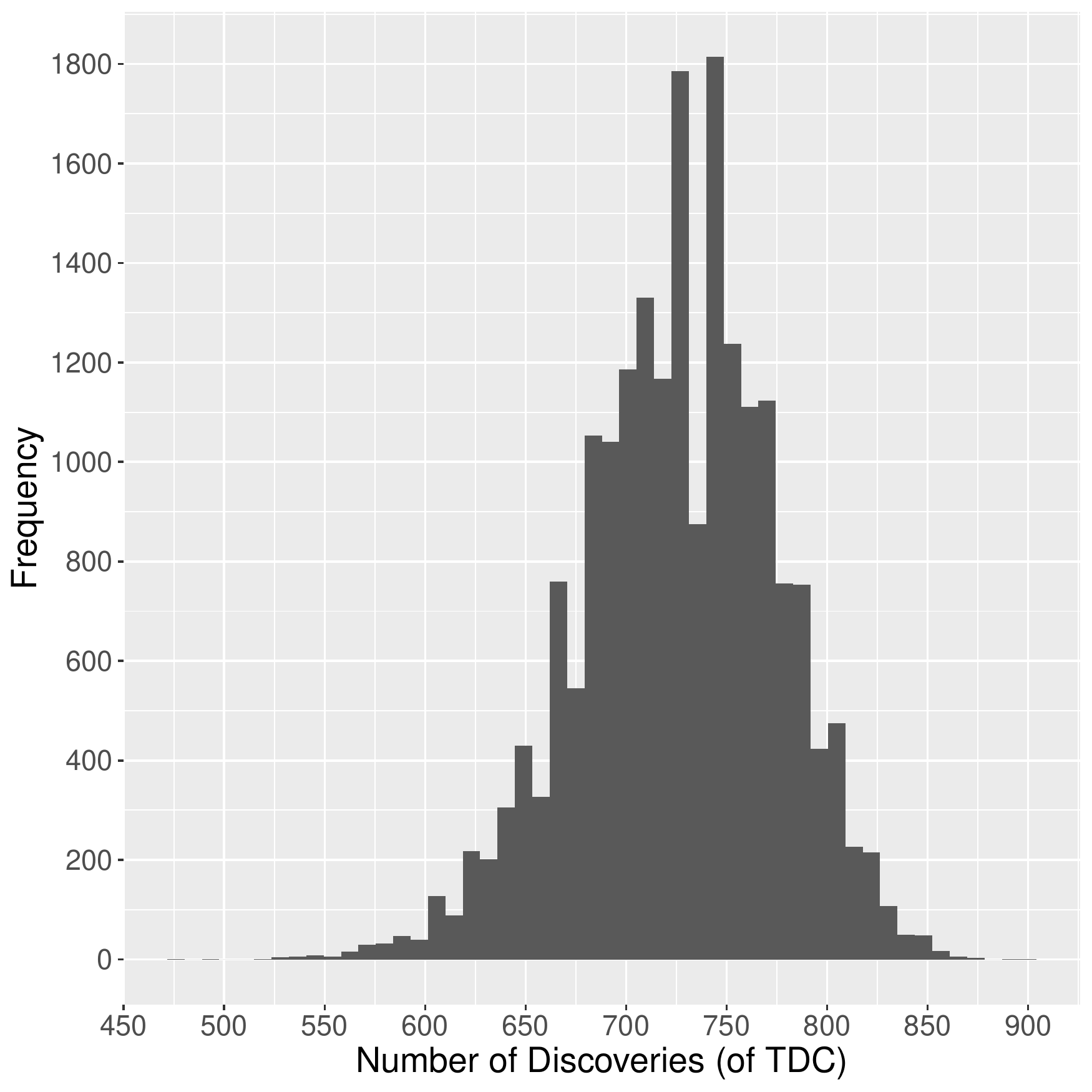}  & \includegraphics[width=2.5in]{{figures_Arya/grid_figure_disc_d1_c0.5_lambda0.5_m2000_pi0.5_calTRUE_sep3_conf0.05}.pdf}  & \includegraphics[width=2.5in]{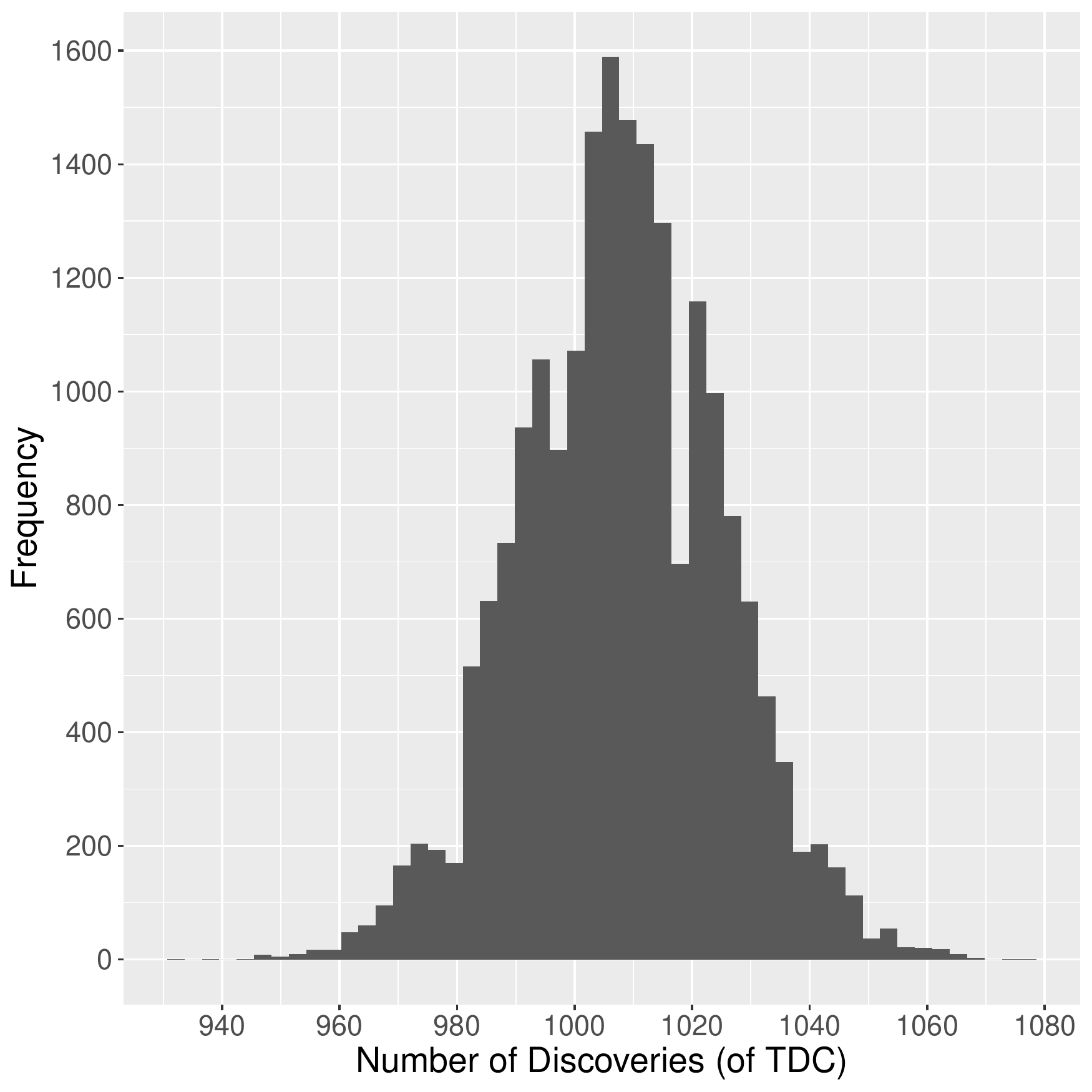} \tabularnewline
\end{tabular}
\caption{\textbf{Varying $\rho$ in simulated experiments.} Similar to \supfig~\ref{supfig:vary_m} only here we increase $\rho$
keeping $\alp=0.05$, $\gam=0.05$, $\pi_0 = 0.5$ and $m=$ 2K.  TDC's FDP (top row), the comparison of the three interpolated upper prediction bands (middle row) and the number of discoveries of TDC (bottom row).
\label{supfig:vary_rho}}
\end{figure}

\clearpage

\begin{figure}
\centering %
\begin{tabular}{ccc}
\hspace{-10ex}
$\color{blue}m = 500$, $\pi_0 = 0.5$, $\rho = 3$  & $\color{blue}m = 2000$, $\pi_0 = 0.5$, $\rho = 3$ & $\color{blue}m = 10000$, $\pi_0 = 0.5$, $\rho = 3$ \tabularnewline
\hspace{-10ex}
\includegraphics[width=2.5in]{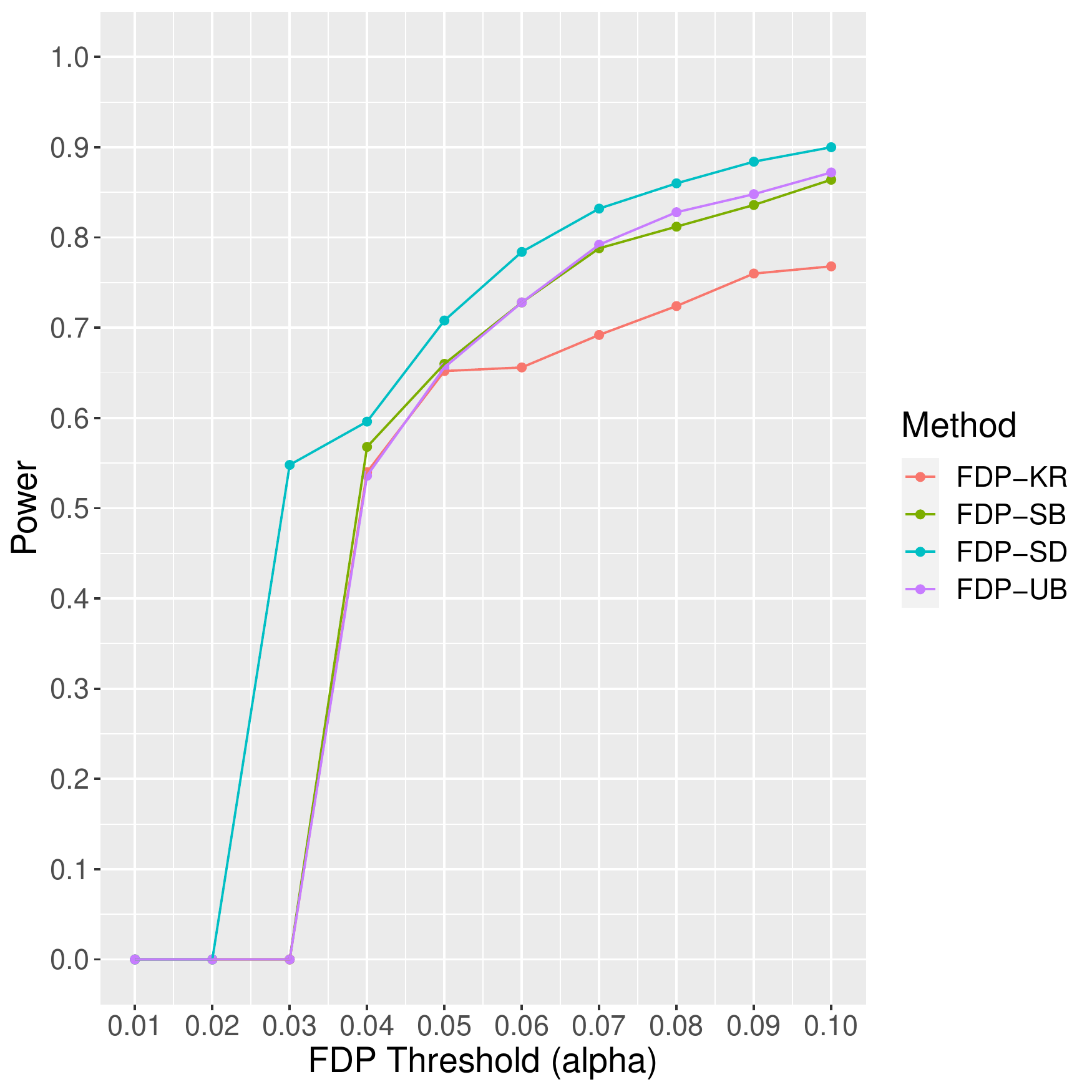}  & \includegraphics[width=2.5in]{{figures_Arya/power_figure_m2000_pi0.5_calTRUE_sep3_conf0.05}.pdf}  & \includegraphics[width=2.5in]{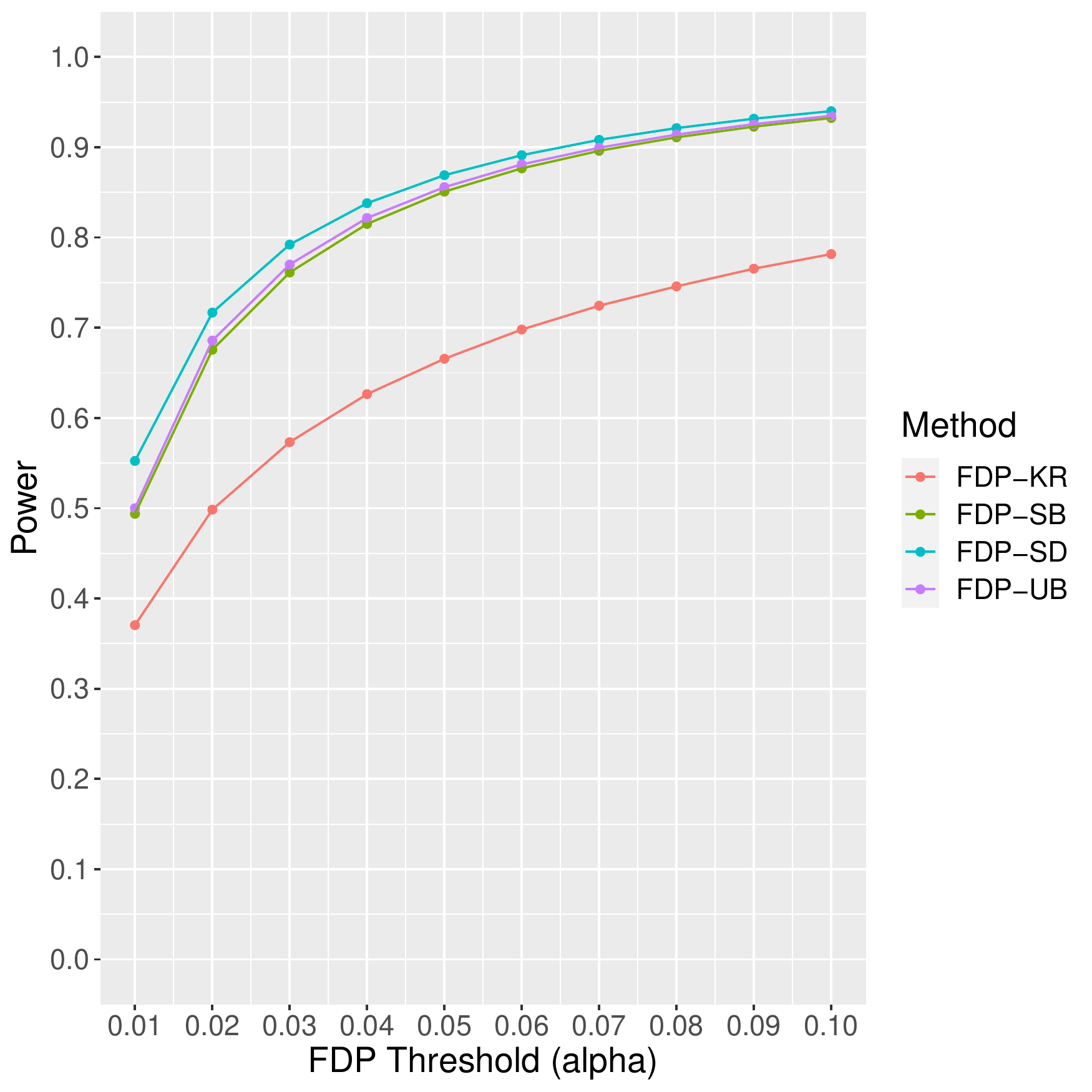} \tabularnewline
\hspace{-10ex}
$m = 2000$, $\color{blue}\pi_0 = 0.8$, $\rho = 3$  & $m = 2000$, $\color{blue}\pi_0 = 0.5$, $\rho = 3$ & $m = 2000$, $\color{blue}\pi_0 = 0.2$, $\rho = 3$ \tabularnewline
\hspace{-10ex}
\includegraphics[width=2.5in]{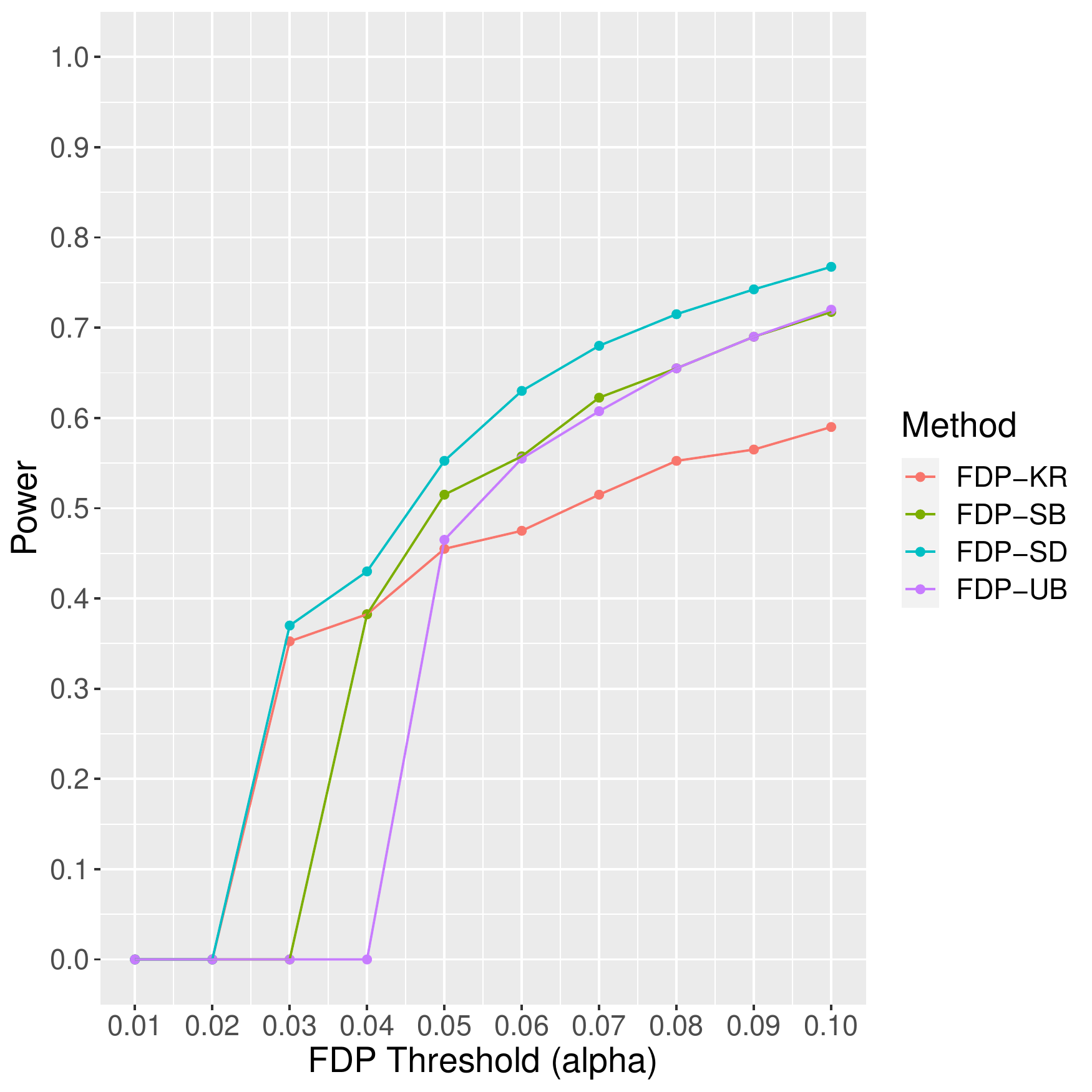}  & \includegraphics[width=2.5in]{{figures_Arya/power_figure_m2000_pi0.5_calTRUE_sep3_conf0.05}.pdf}  & \includegraphics[width=2.5in]{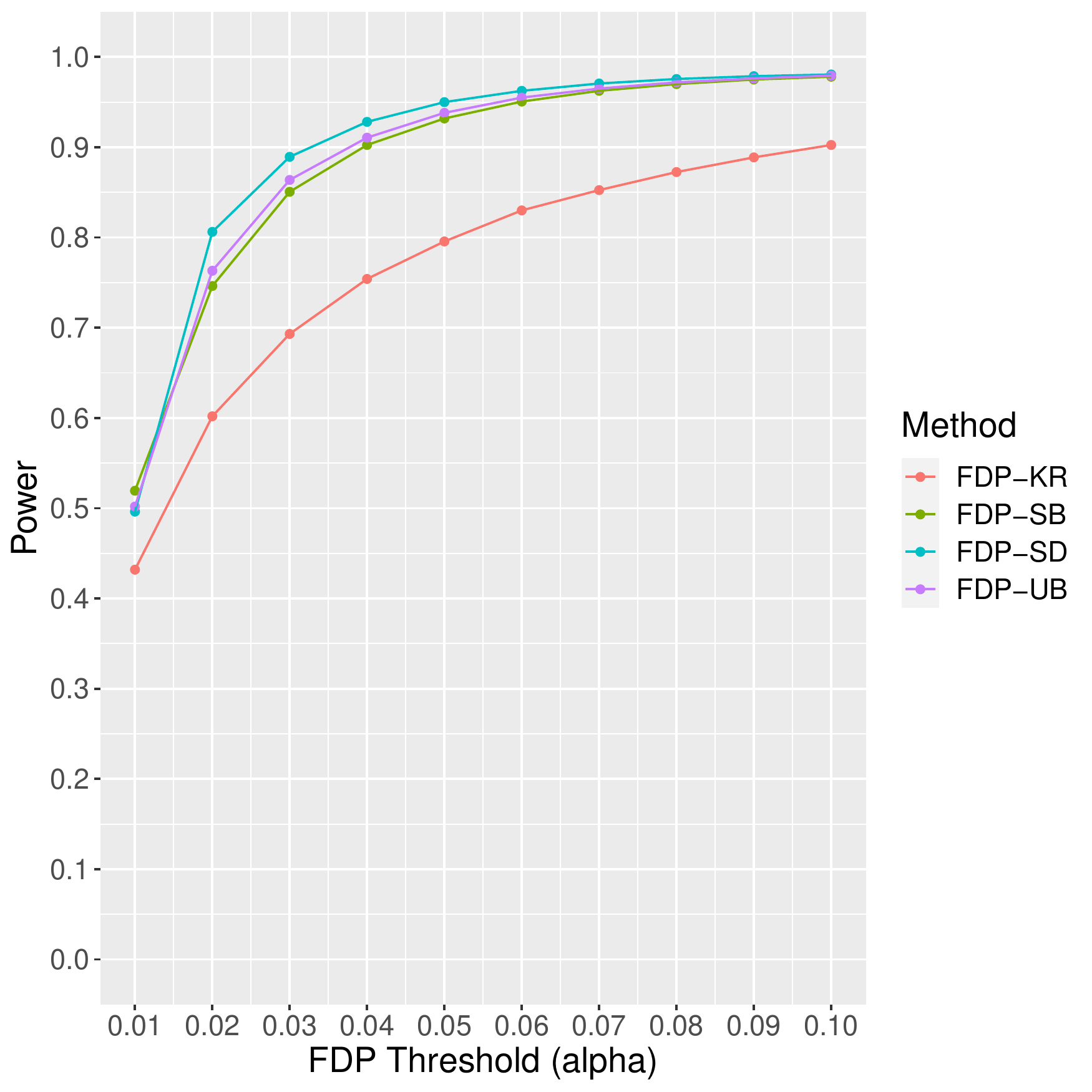} \tabularnewline
\hspace{-10ex}
$m = 2000$, $\pi_0 = 0.5$, $\color{blue}{\rho = 2.5}$  & $m = 2000$, $\pi_0 = 0.5$, $\color{blue}{\rho = 3}$ & $m = 2000$, $\pi_0 = 0.5$, $\color{blue}{\rho = 3.5}$ \tabularnewline
\hspace{-10ex}
\includegraphics[width=2.5in]{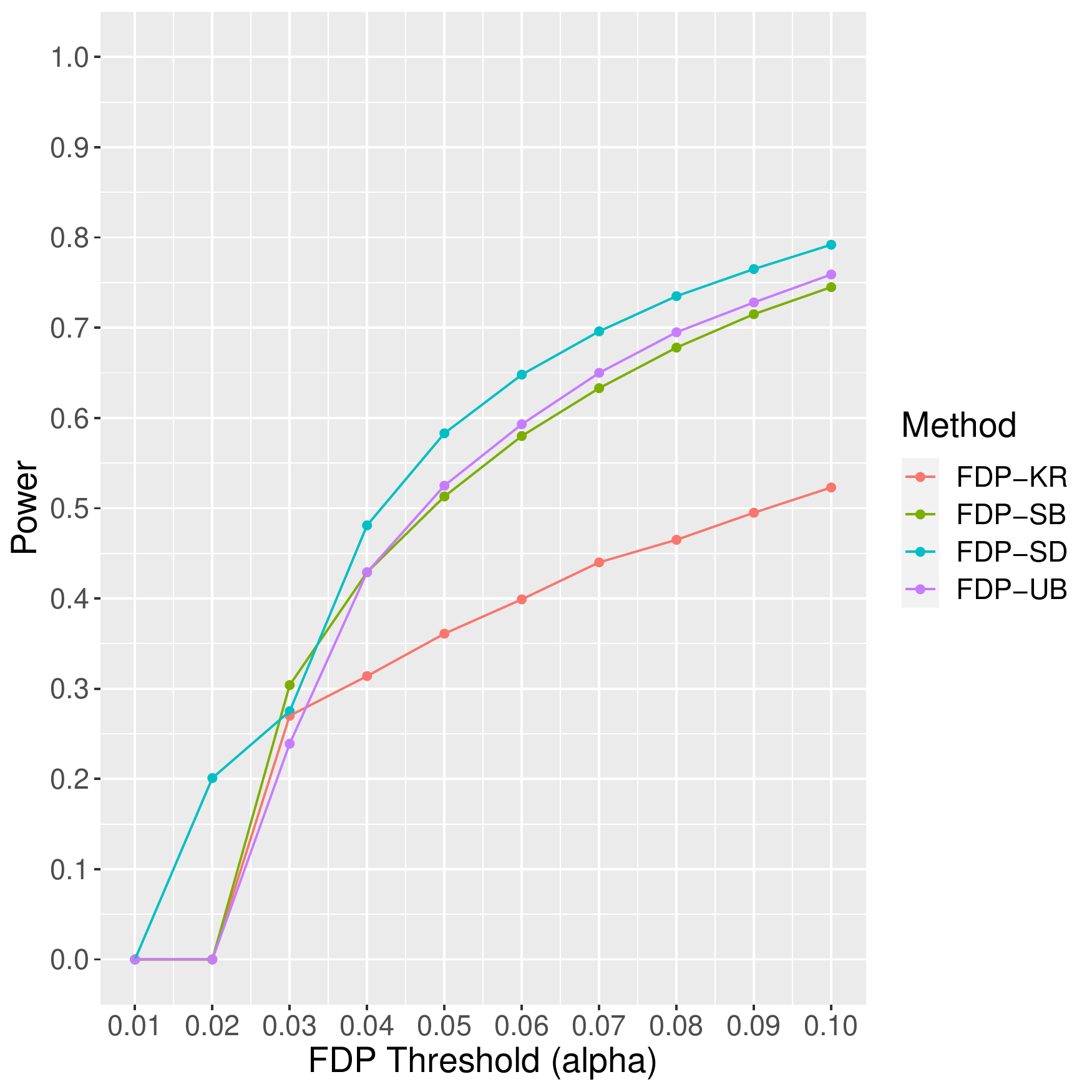}  & \includegraphics[width=2.5in]{{figures_Arya/power_figure_m2000_pi0.5_calTRUE_sep3_conf0.05}.pdf}  & \includegraphics[width=2.5in]{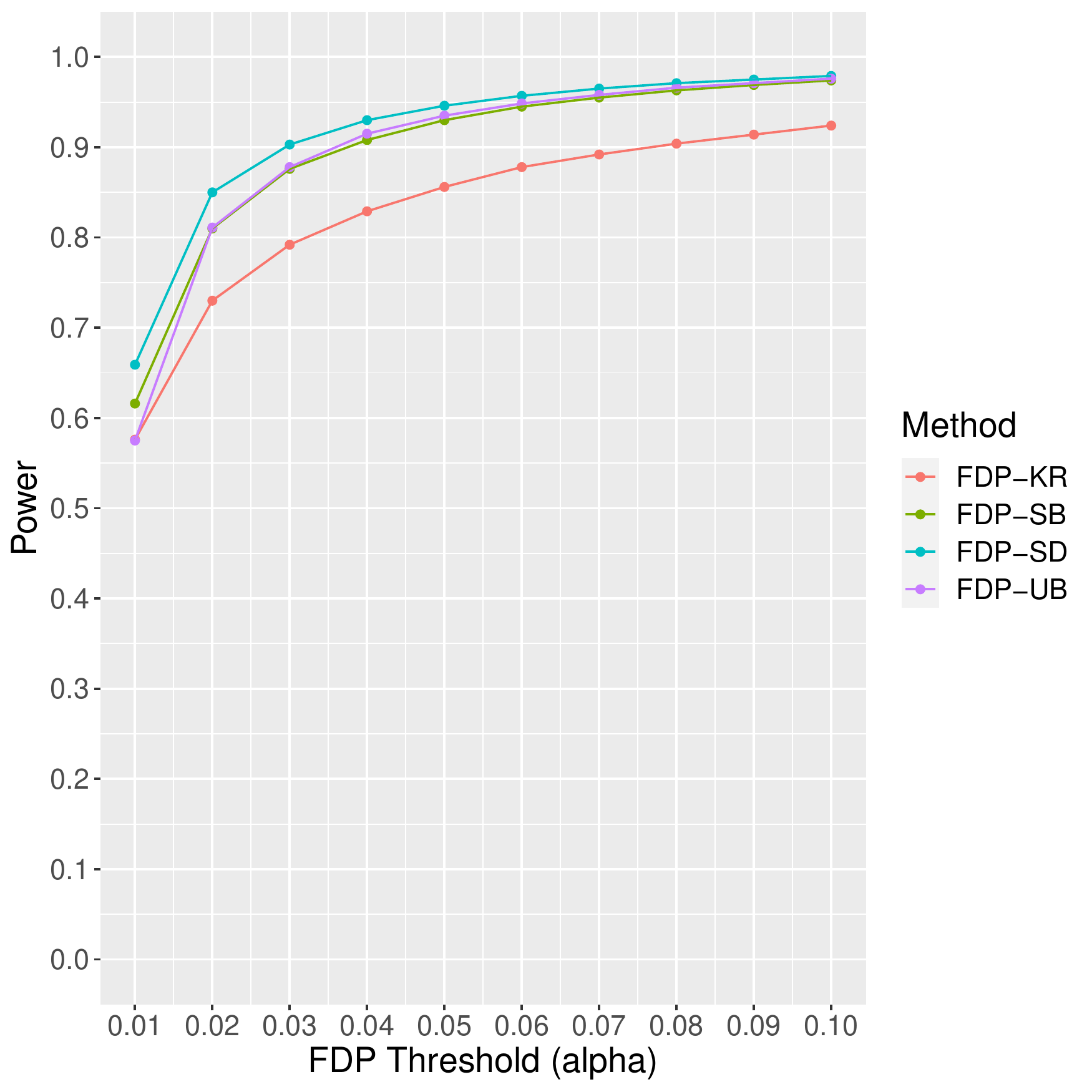} \tabularnewline
\end{tabular}
\caption{\textbf{Median power of FDP-controlling procedures.} 
Plotted are the median power (over 20k datasets) of FDP-controlling procedures with  $\alpha \in \{0.01, 0.02, \ldots, 0.1\}$
and a fixed confidence level of $1-\gamma = 0.95$. 
The datasets were generated using our normal mixture model with parameters that are displayed above the plots (the one highlighted in blue varies across each row).
\label{supfig:fdp-control}}
\end{figure}

\clearpage

\begin{figure}
\centering
\begin{tabular}{ccc}
\includegraphics[width=0.45\textwidth]{figures_Arya/pride_upbands_d1_c0.5_lambda0.5.pdf} \\
\textbf{1 decoy} \\[6pt]
\end{tabular} \tabularnewline
\begin{tabular}{cccc}
\includegraphics[width=0.45\textwidth]{figures_Arya/pride_upbands_d3_c0.5_lambda0.5.pdf} &
\includegraphics[width=0.45\textwidth]{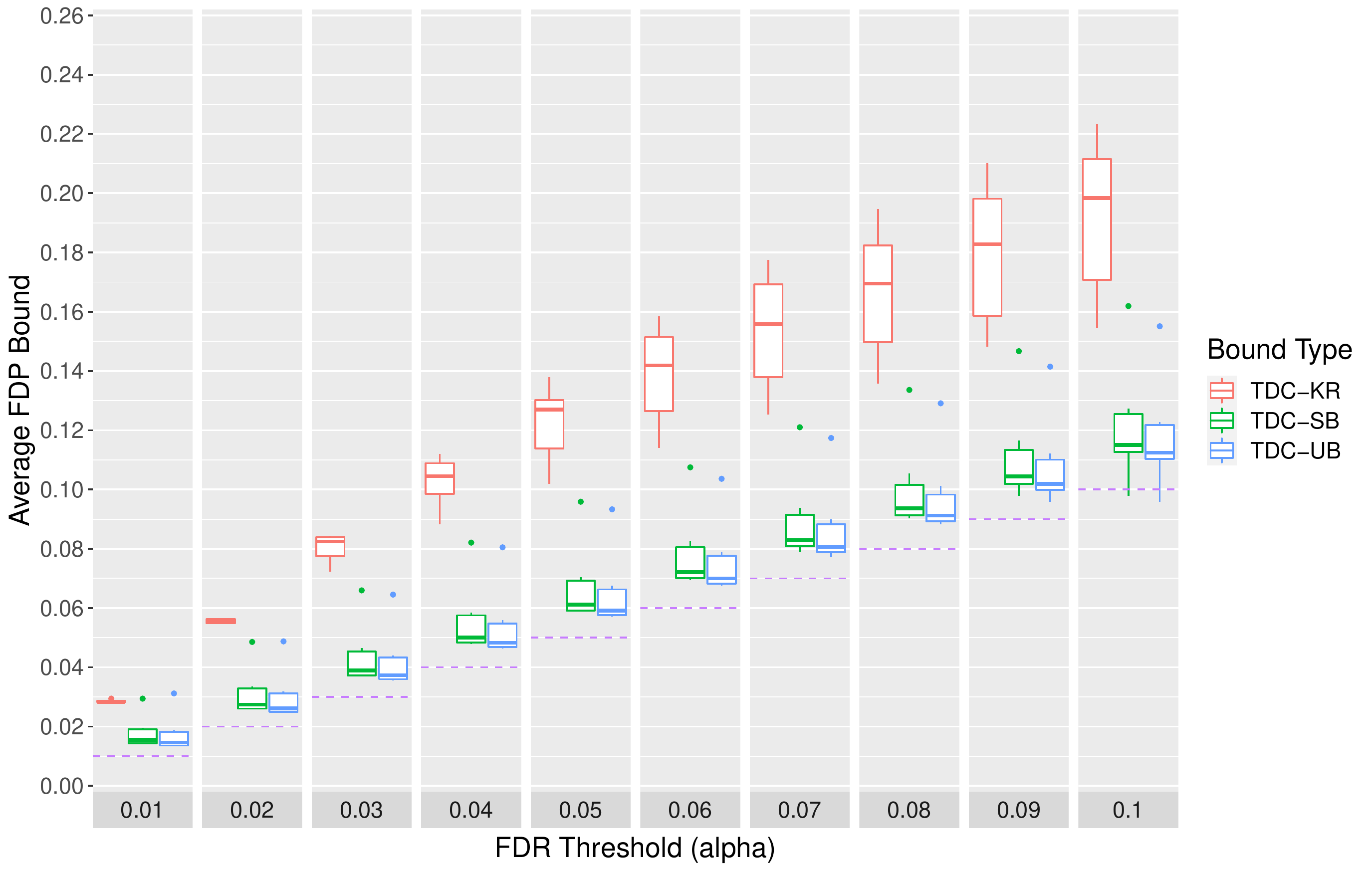} \\
\textbf{3 decoys (mirror)}  & \textbf{3 decoys (max)}  \\[6pt]
\end{tabular}
\begin{tabular}{cccc}
\includegraphics[width=0.45\textwidth]{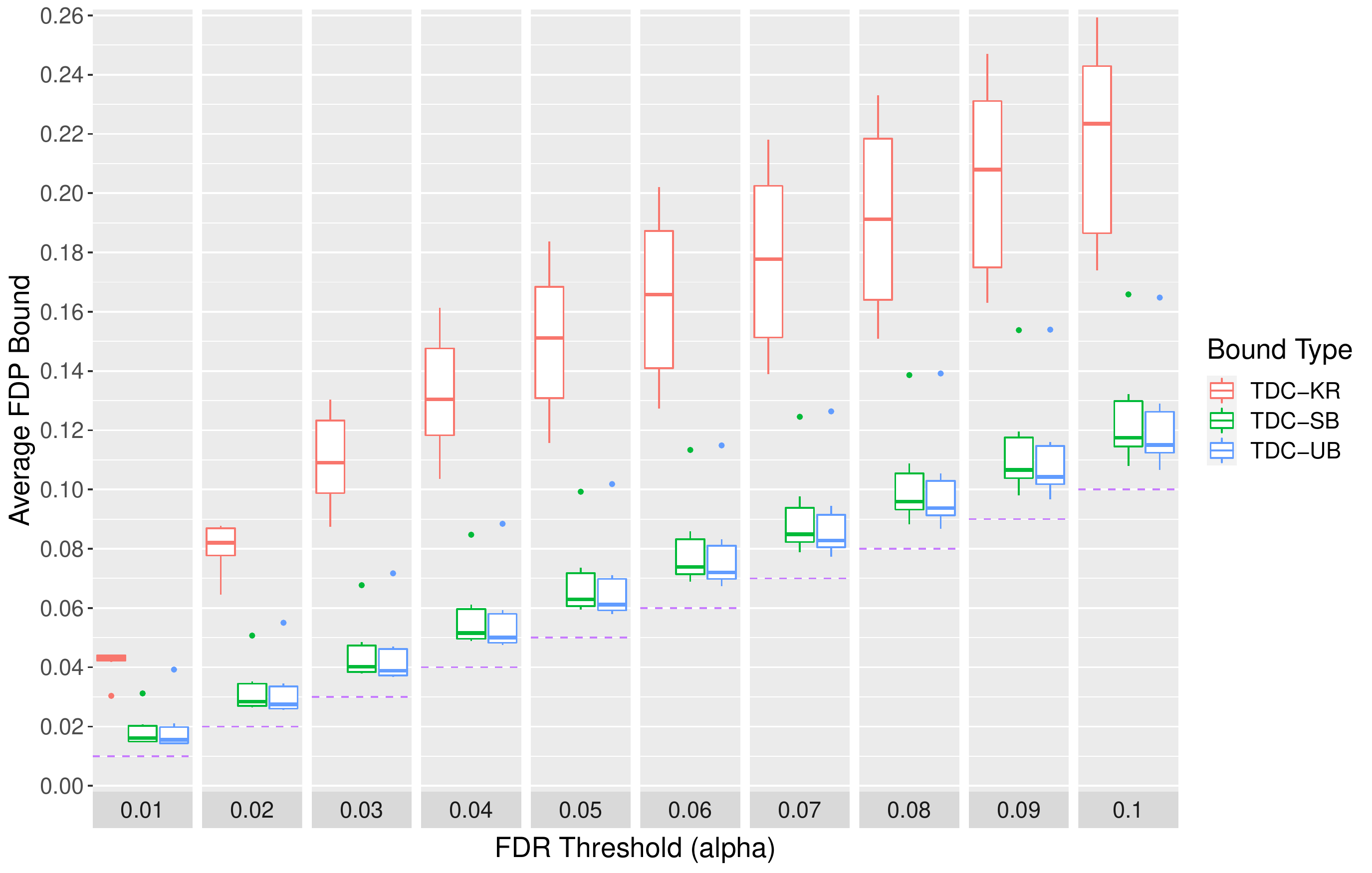} &
\includegraphics[width=0.45\textwidth]{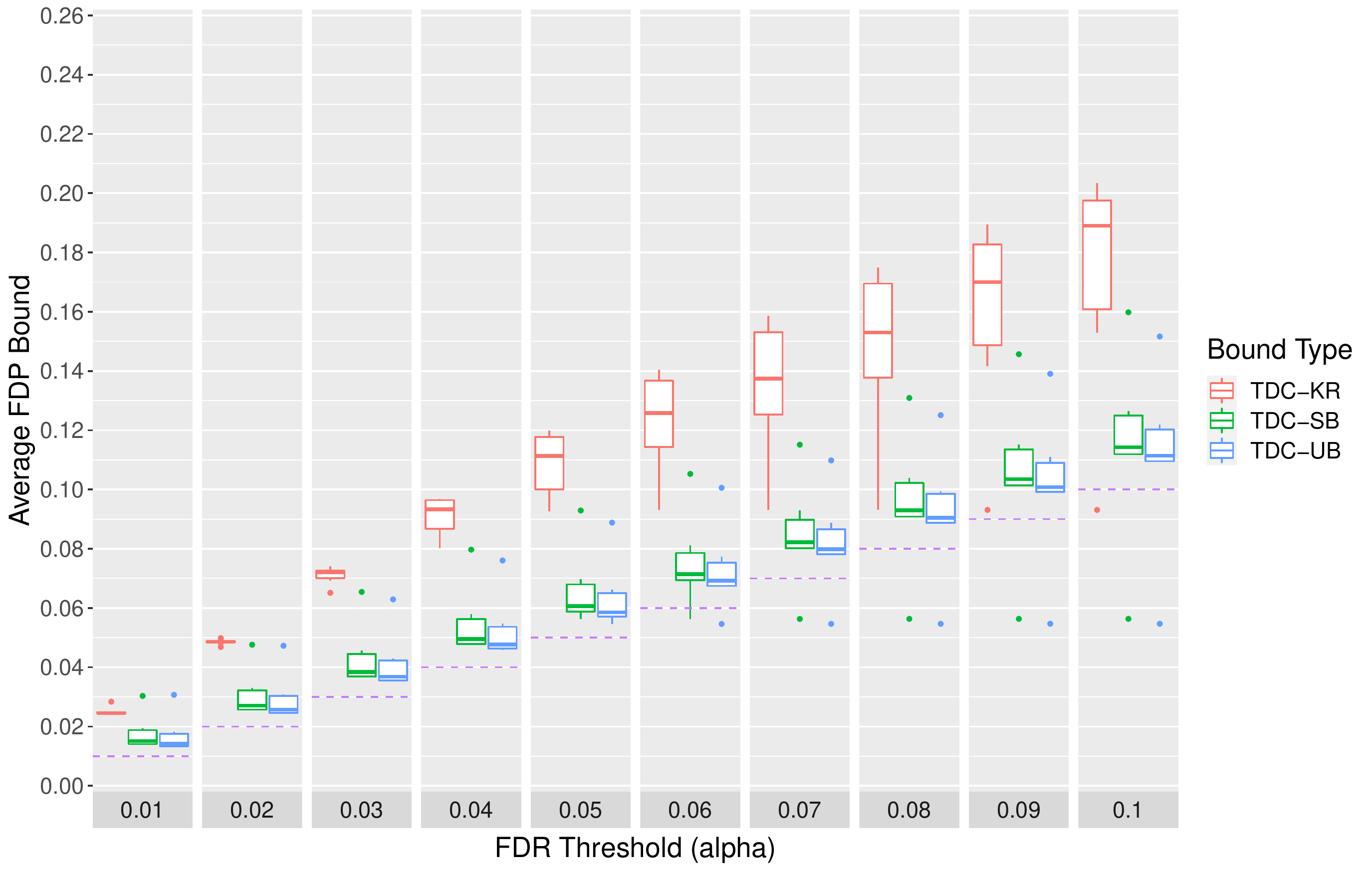} \\
\textbf{7 decoys (mirror)}  & \textbf{7 decoys (max)}  \\[6pt]
\end{tabular}
\caption{\textbf{Applications of TDC-KR, TDC-SB and TDC-UB in peptide detection.} For each of the 10 PRIDE datasets, we computed the average, across 20 decoys, of the upper prediction bound on FDP the list of peptides discovered by TDC or its multiple-decoy version. The number of decoys and the method used are written below each figure. Shown are boxplots of the 10 averages, across FDR tolerances $\alpha \in \{0.01, \ldots, 0.1\}$, as well as a purple dashed line corresponding to $y = \alpha$. Here we use the bands TDC-KR (red), TDC-SB (green) and TDC-UB (blue). We note that, except in cases where there are few discoveries to be made (in particular, PXD029319), TDC-SB and TDC-UB offer significantly tighter bounds, with TDC-UB generally the tightest.}
\label{supfig:upband-boxplot}
\end{figure}

\clearpage

\begin{figure}
\centering %
\begin{tabular}{ccc}
\hspace{-10ex}
$\color{blue}m = 500$, $\pi_0 = 0.5$, $\rho = 3$  & $\color{blue}m = 2000$, $\pi_0 = 0.5$, $\rho = 3$ & $\color{blue}m = 10000$, $\pi_0 = 0.5$, $\rho = 3$ \tabularnewline
\hspace{-10ex}
\includegraphics[width=2.5in]{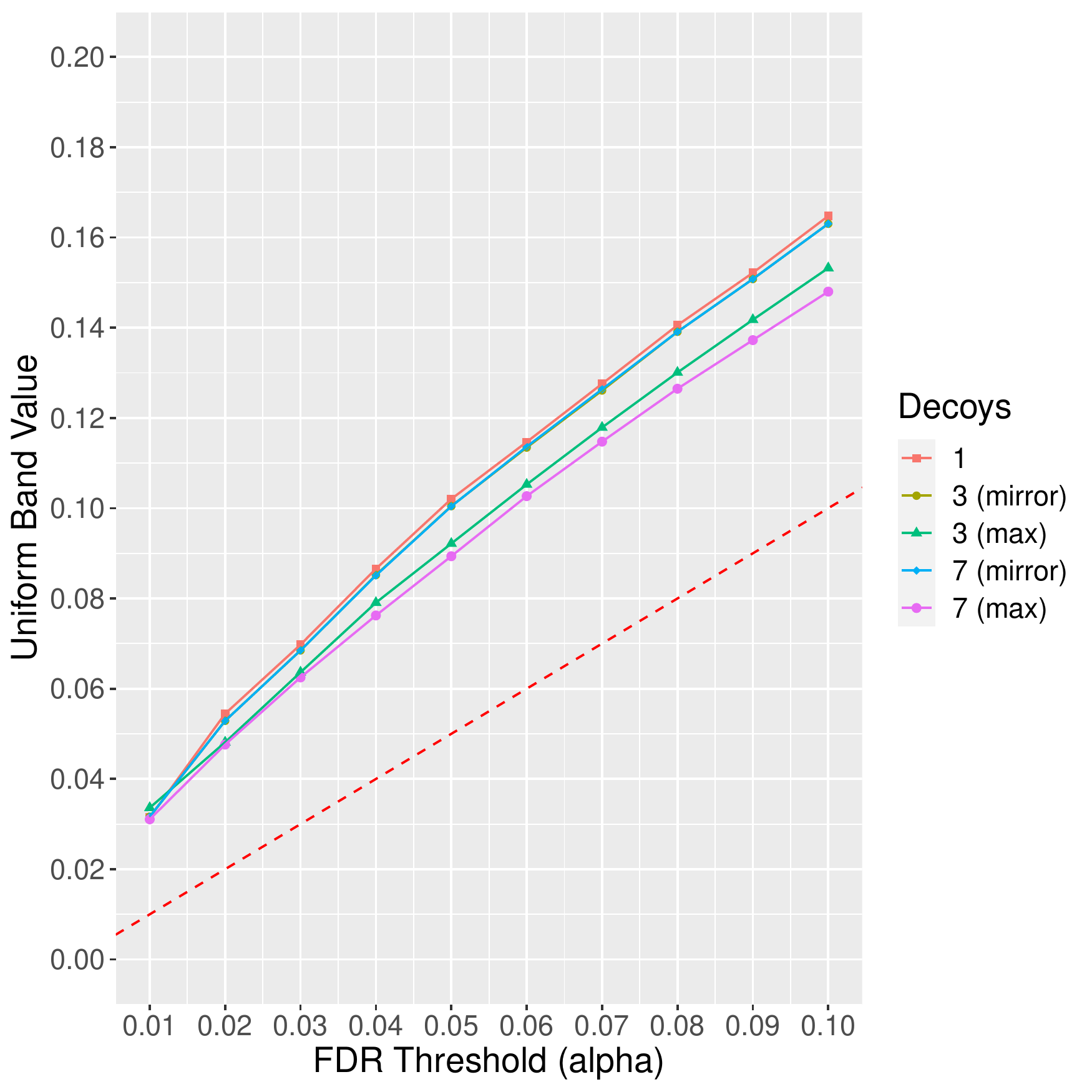}  & \includegraphics[width=2.5in]{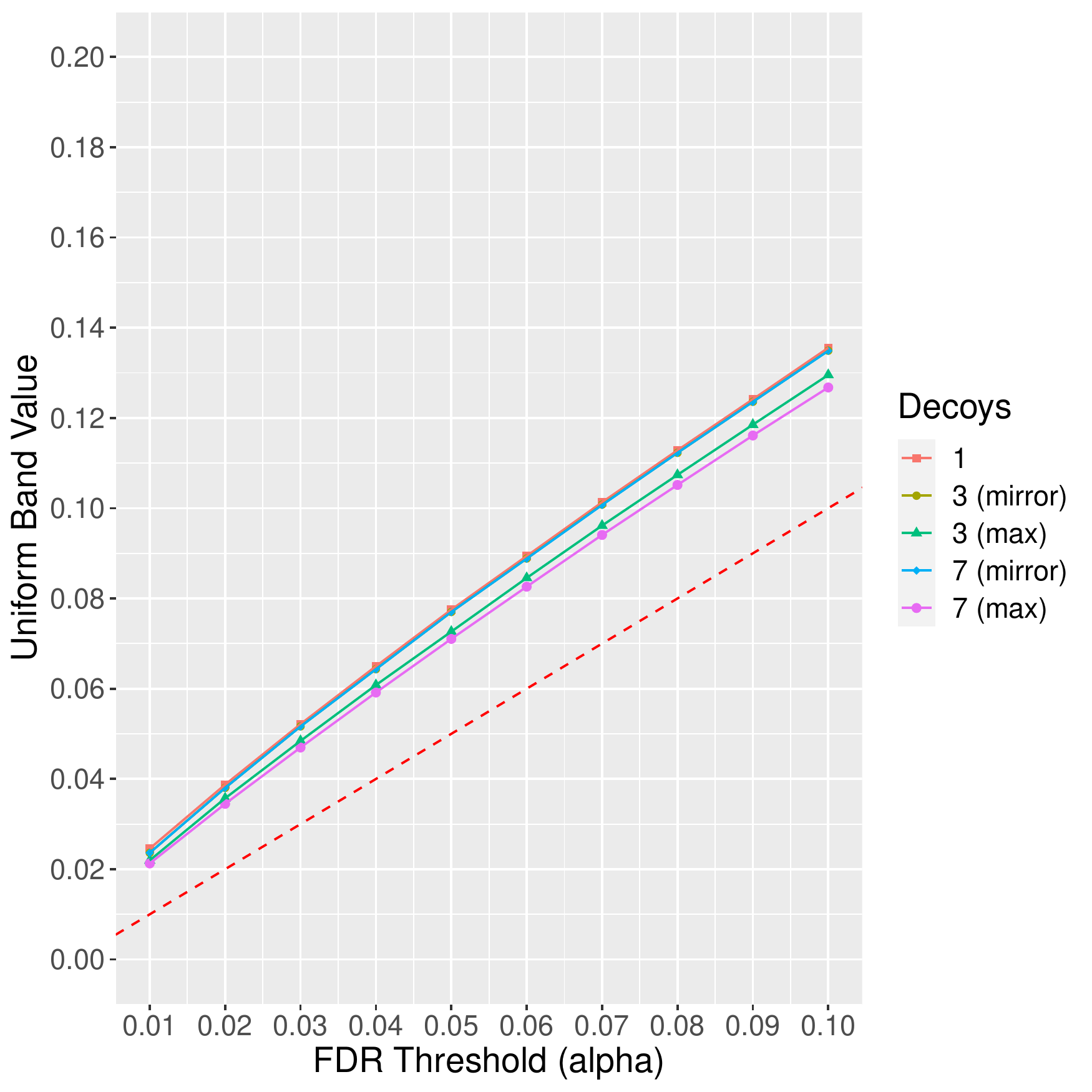}  & \includegraphics[width=2.5in]{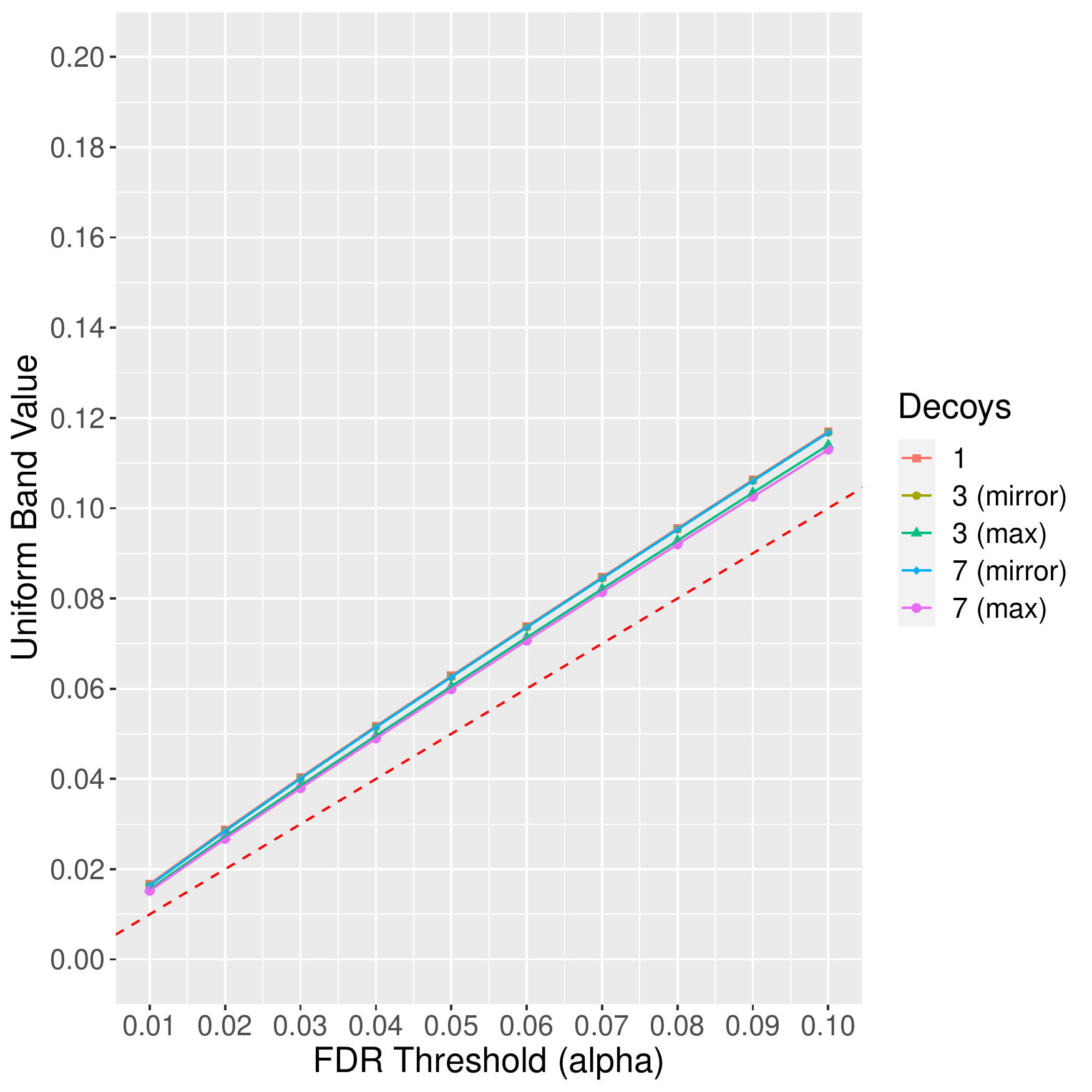} \tabularnewline
\hspace{-10ex}
$m = 2000$, $\color{blue}\pi_0 = 0.8$, $\rho = 3$  & $m = 2000$, $\color{blue}\pi_0 = 0.5$, $\rho = 3$ & $m = 2000$, $\color{blue}\pi_0 = 0.2$, $\rho = 3$ \tabularnewline
\hspace{-10ex}
\includegraphics[width=2.5in]{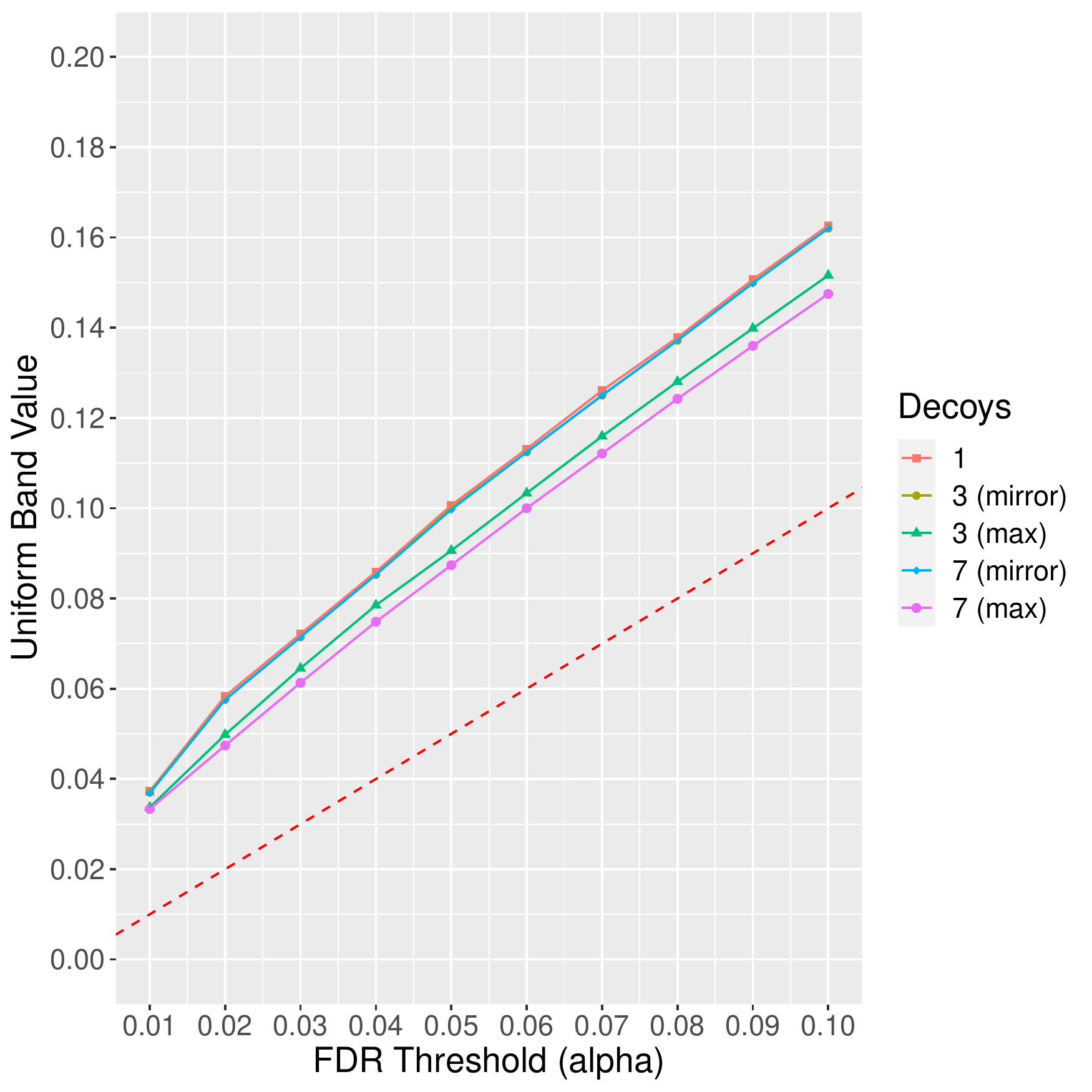}  & \includegraphics[width=2.5in]{{figures_Arya/multiple_decoys_method_figure_m2000_pi0.5_calTRUE_sep3_alpha0.05_conf0.05}.pdf}  & \includegraphics[width=2.5in]{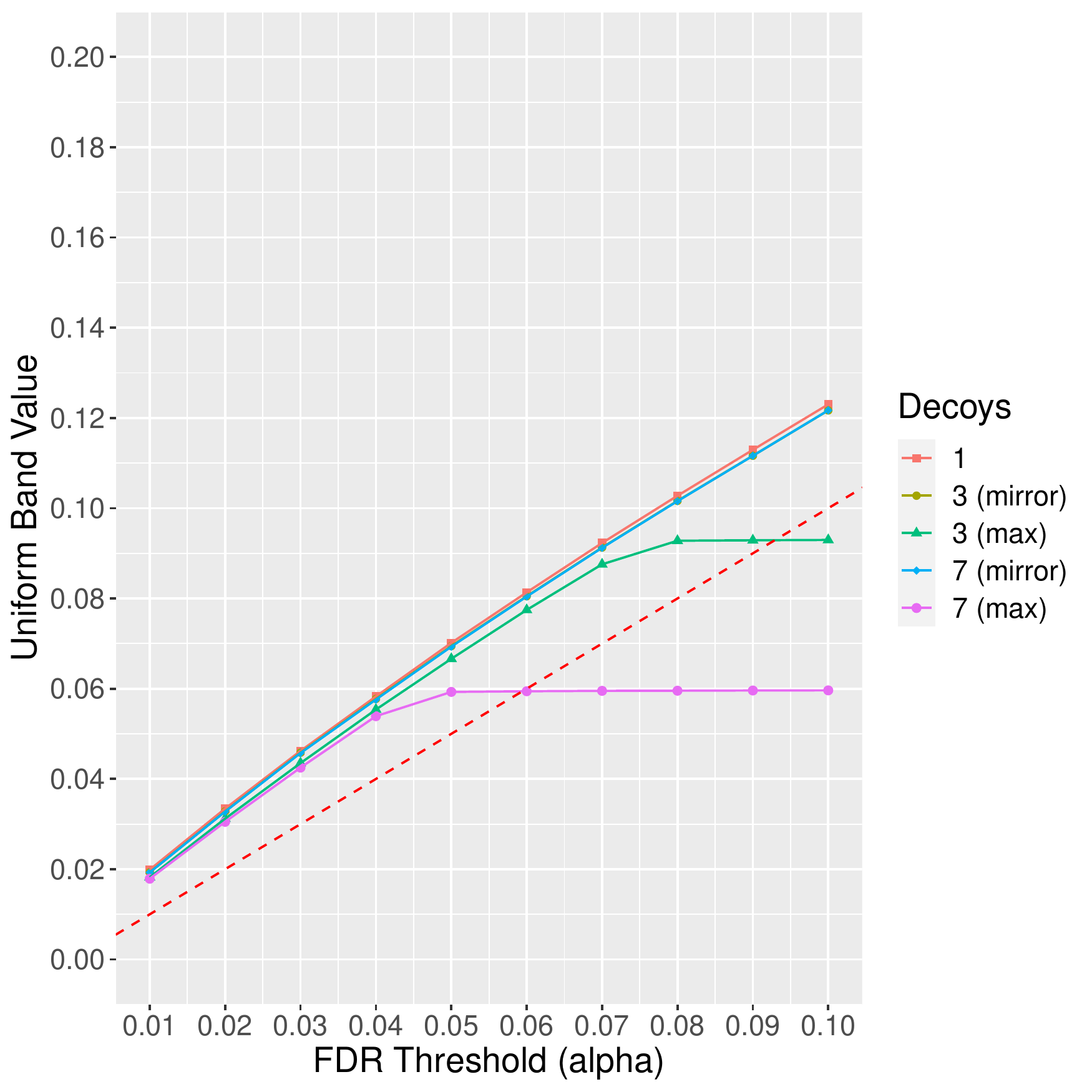} \tabularnewline
\hspace{-10ex}
$m = 2000$, $\pi_0 = 0.5$, $\color{blue}{\rho = 2.5}$  & $m = 2000$, $\pi_0 = 0.5$, $\color{blue}{\rho = 3}$ & $m = 2000$, $\pi_0 = 0.5$, $\color{blue}{\rho = 3.5}$ \tabularnewline
\hspace{-10ex}
\includegraphics[width=2.5in]{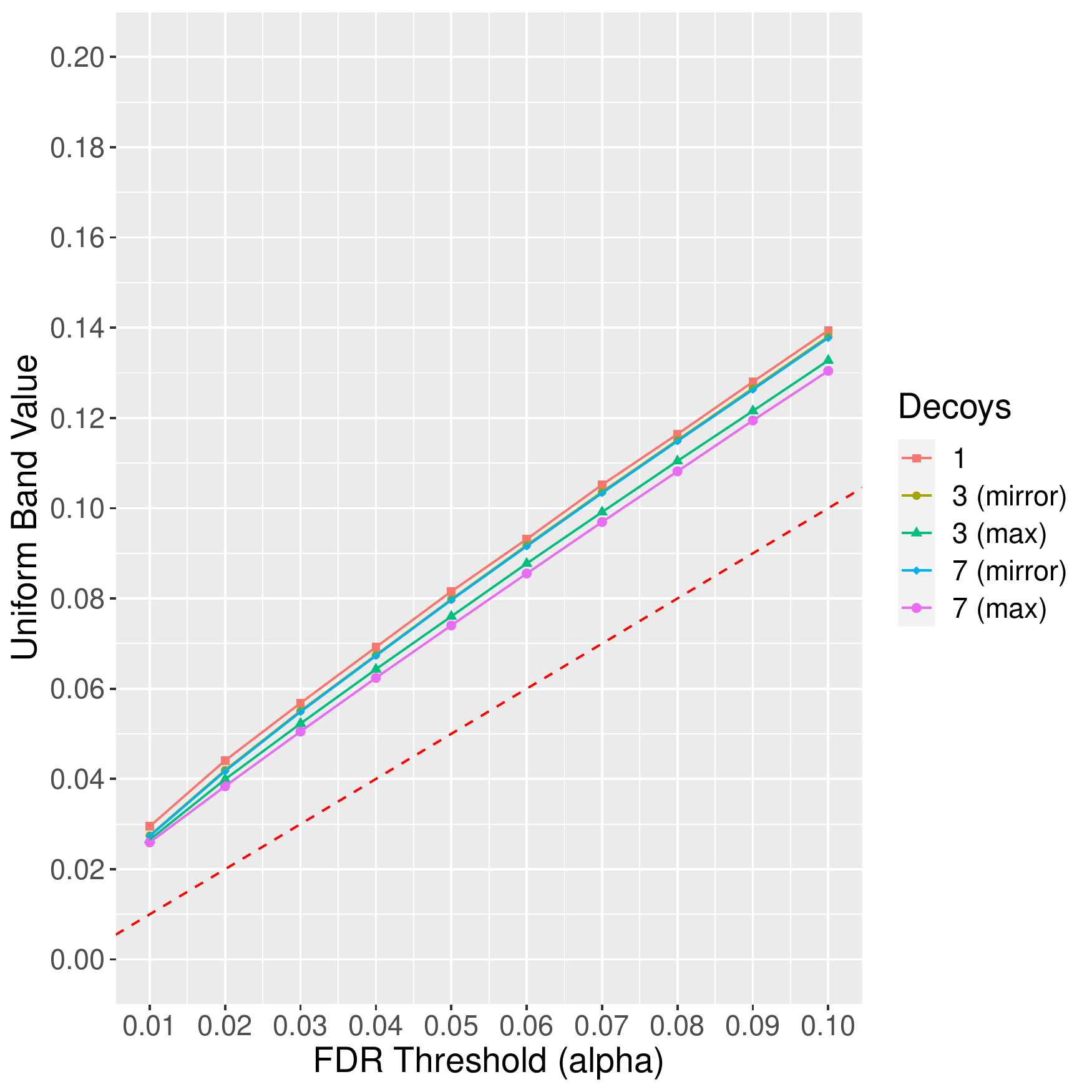}  & \includegraphics[width=2.5in]{{figures_Arya/multiple_decoys_method_figure_m2000_pi0.5_calTRUE_sep3_alpha0.05_conf0.05}.pdf}  & \includegraphics[width=2.5in]{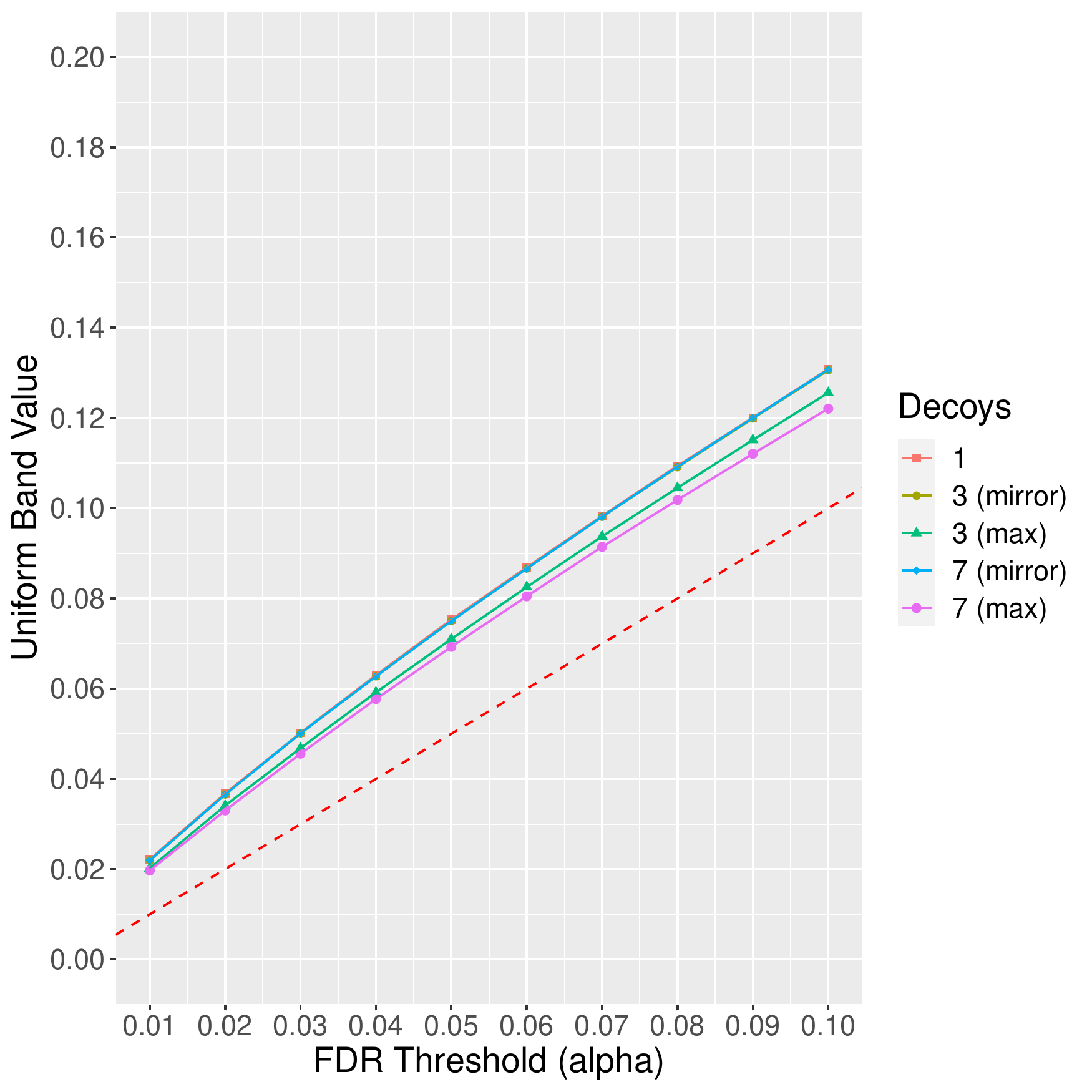} \tabularnewline
\end{tabular}
\caption{\textbf{Comparing the performance of TDC-UB using multiple decoys.} We plot the FDP bound returned by TDC-UB when using 1, 3 and 7 decoys. For 3 and 7 decoys we considered the max and mirror methods separately. Each point on the figure is the median on the same simulated set of 20K calibrated scores, using $\gamma = 0.05$ and the remaining parameters shown on top of each figure. The parameter in blue is that which varies across a given row. Included in the figure is the red-dashed line $x = y$.
\label{supfig:mult_method}}
\end{figure}

\clearpage

\begin{figure}
\centering %
\begin{tabular}{ccc}
\hspace{-10ex}
$\color{blue}m = 500$, $\pi_0 = 0.5$, $\rho = 3$  & $\color{blue}m = 2000$, $\pi_0 = 0.5$, $\rho = 3$ & $\color{blue}m = 10000$, $\pi_0 = 0.5$, $\rho = 3$ \tabularnewline
\hspace{-10ex}
\includegraphics[width=2.5in]{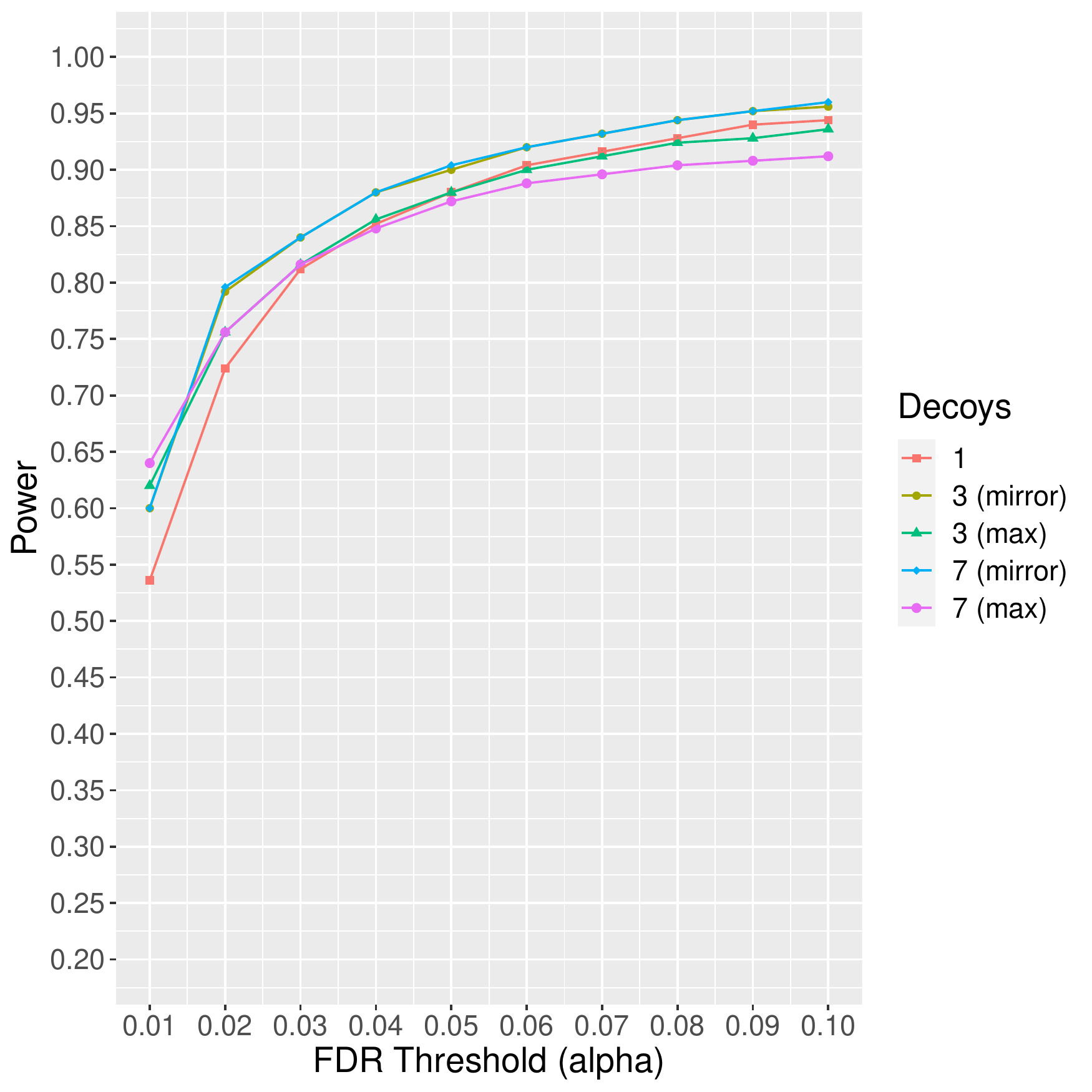}  & \includegraphics[width=2.5in]{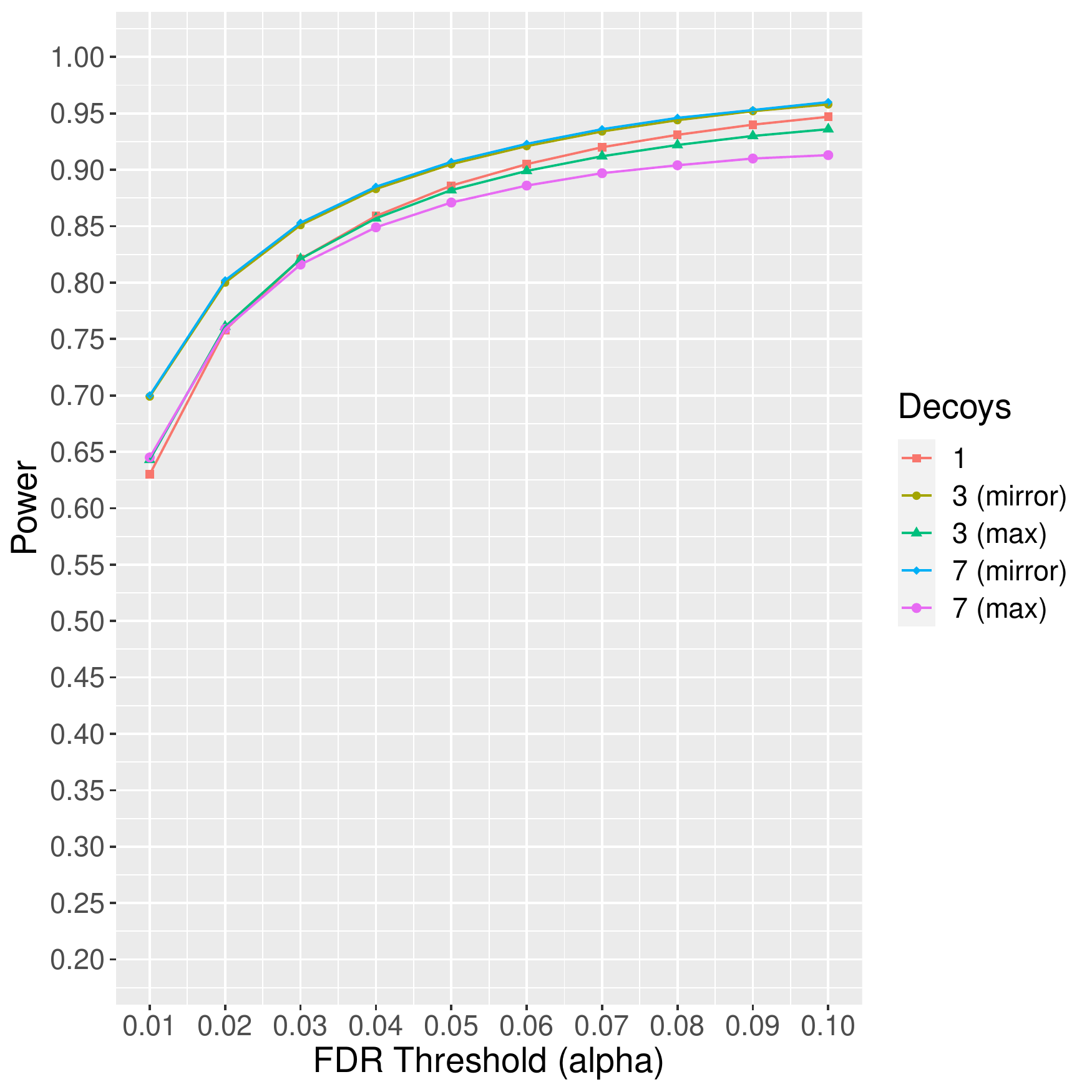}  & \includegraphics[width=2.5in]{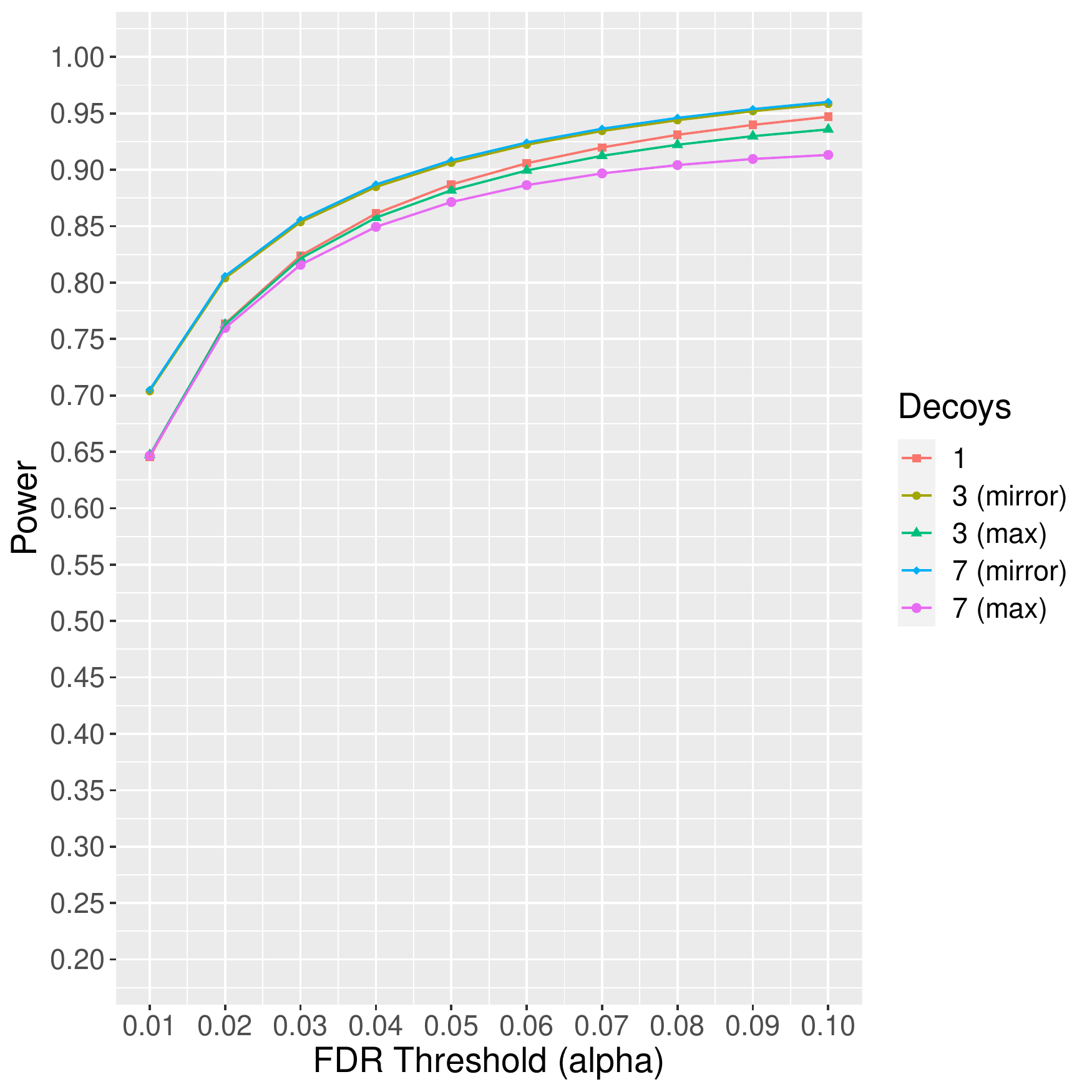} \tabularnewline
\hspace{-10ex}
$m = 2000$, $\color{blue}\pi_0 = 0.8$, $\rho = 3$  & $m = 2000$, $\color{blue}\pi_0 = 0.5$, $\rho = 3$ & $m = 2000$, $\color{blue}\pi_0 = 0.2$, $\rho = 3$ \tabularnewline
\hspace{-10ex}
\includegraphics[width=2.5in]{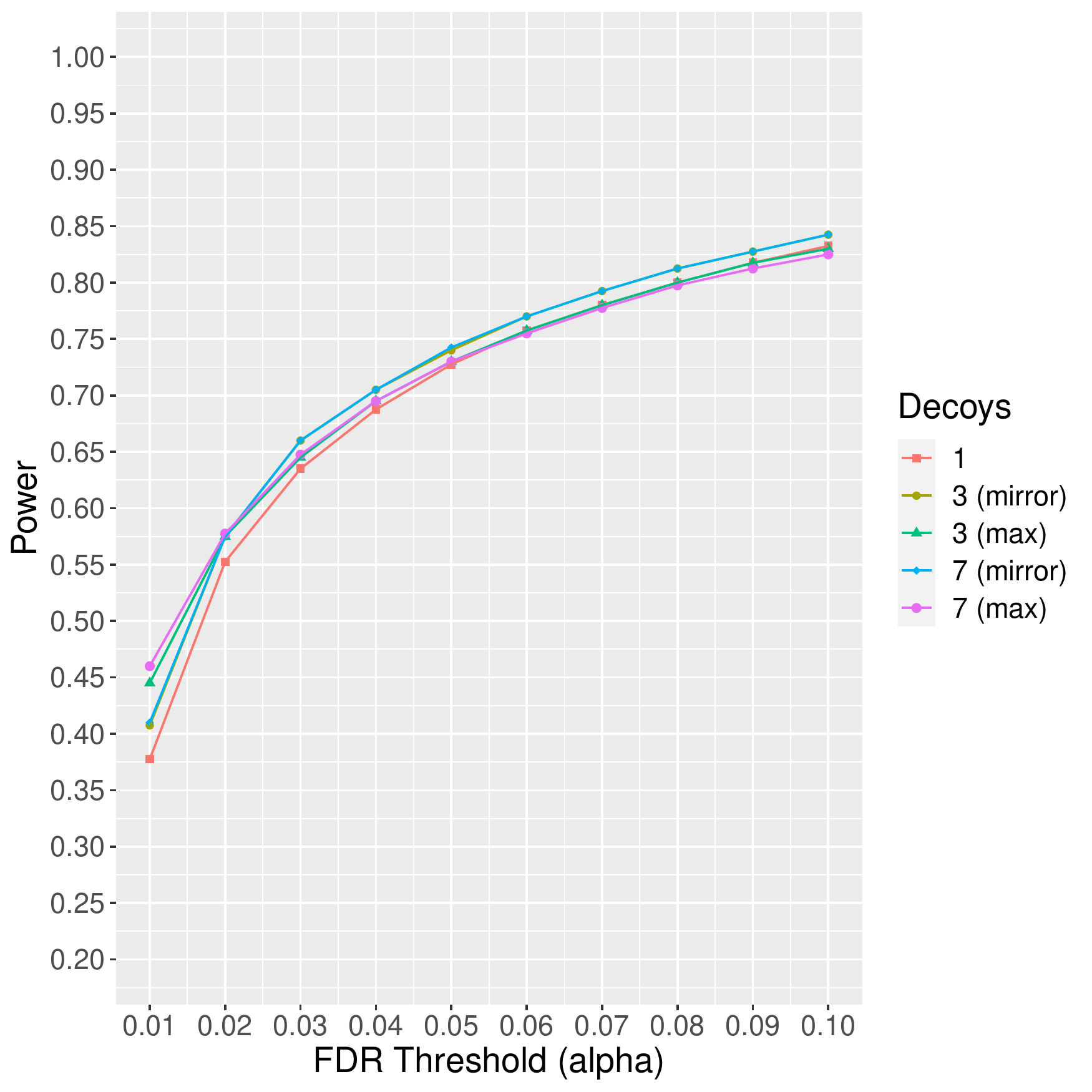}  & \includegraphics[width=2.5in]{{figures_Arya/multiple_decoys_power_figure_m2000_pi0.5_calTRUE_sep3_alpha0.05_conf0.05}.pdf}  & \includegraphics[width=2.5in]{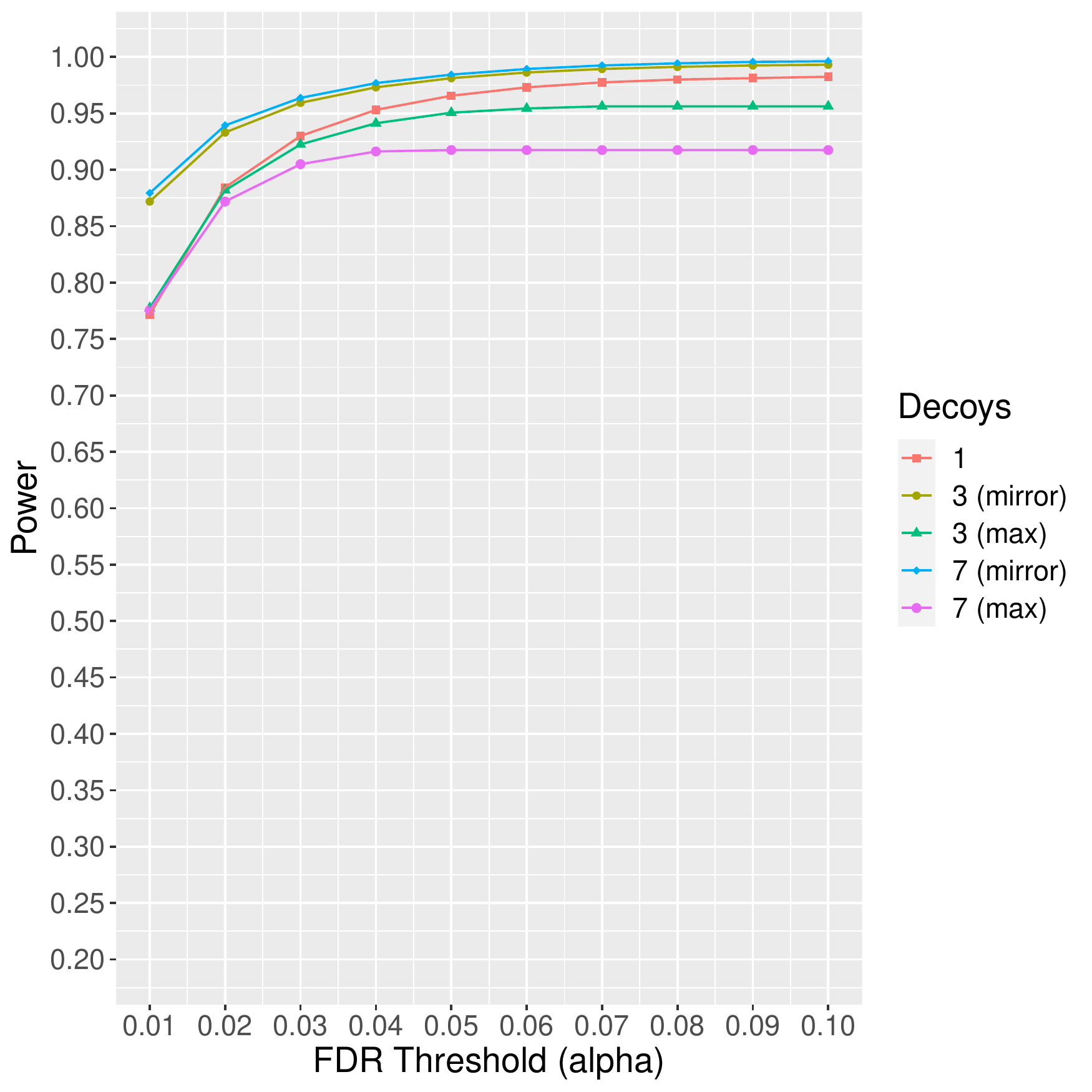} \tabularnewline
\hspace{-10ex}
$m = 2000$, $\pi_0 = 0.5$, $\color{blue}{\rho = 2.5}$  & $m = 2000$, $\pi_0 = 0.5$, $\color{blue}{\rho = 3}$ & $m = 2000$, $\pi_0 = 0.5$, $\color{blue}{\rho = 3.5}$ \tabularnewline
\hspace{-10ex}
\includegraphics[width=2.5in]{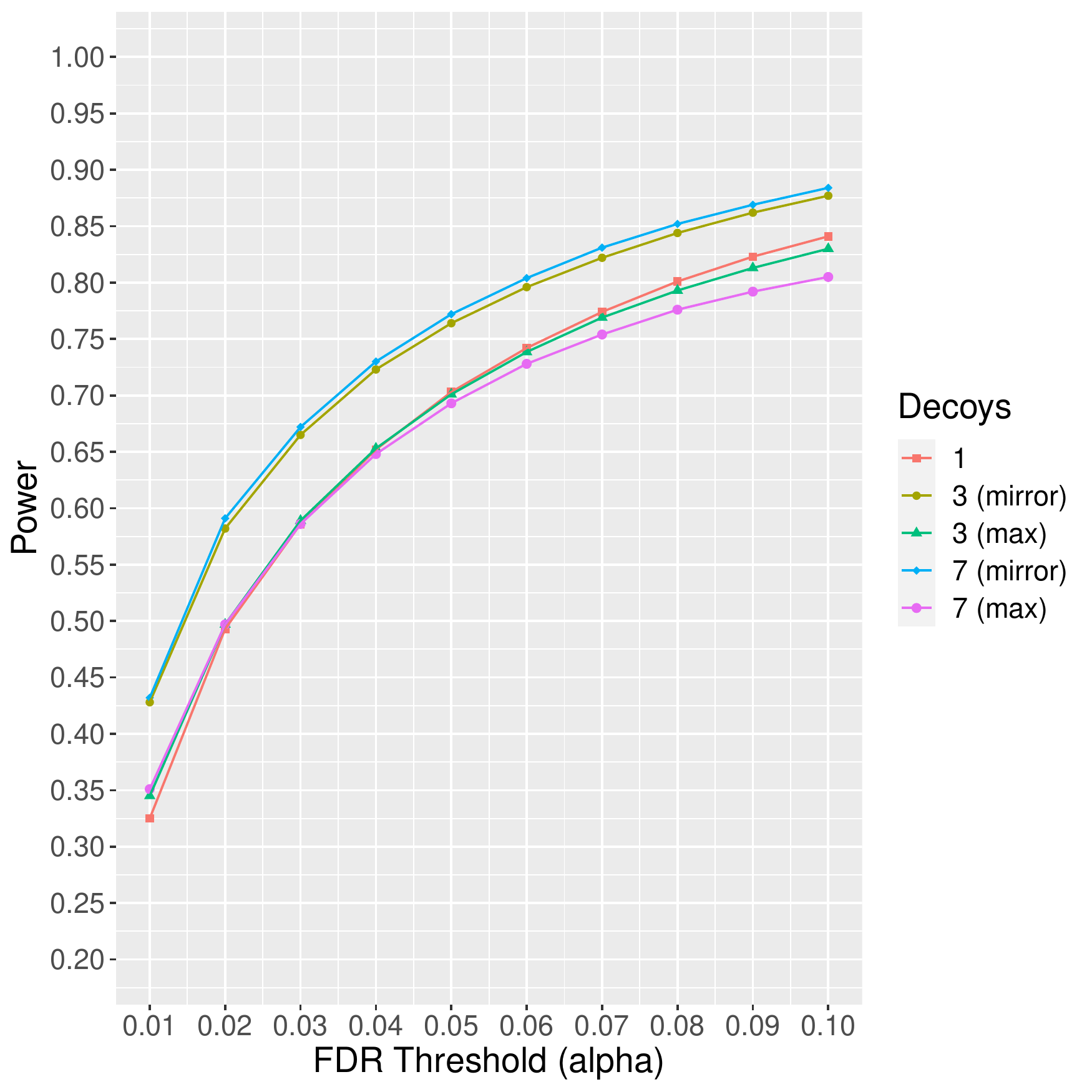}  & \includegraphics[width=2.5in]{{figures_Arya/multiple_decoys_power_figure_m2000_pi0.5_calTRUE_sep3_alpha0.05_conf0.05}.pdf}  & \includegraphics[width=2.5in]{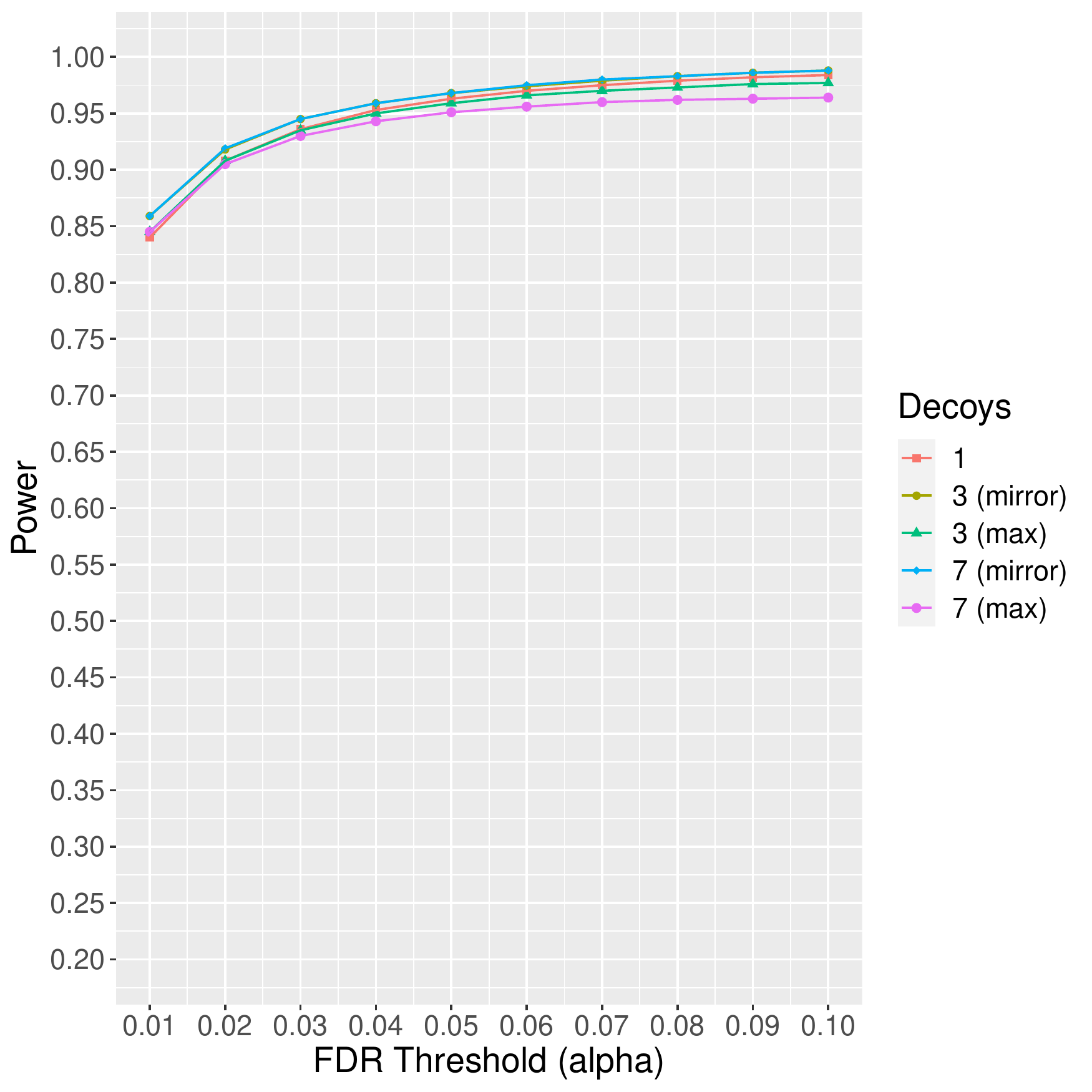} \tabularnewline
\end{tabular}
\caption{\textbf{Comparing the power of TDC using multiple decoys.} Similar to Figure \ref{supfig:mult_method}, except now we plot the power of TDC using 1, 3 and 7 decoys. Note, in particular, that even though Figure \ref{supfig:mult_method} shows that the max method generally offers tighter bounds, it also generally results in the least power across all methods. Moreover, there is a clear increase in power as we increase the number of decoys, using the mirror method. Compare this to Figure \ref{supfig:mult_method}, where the FDP bound is almost unchanged when increasing the number of decoys and using the mirror method.
\label{supfig:mult_power}}
\end{figure}

\clearpage

\end{document}